\newtheorem{definition}{\textbf{Definition}}
\newtheorem{theorem}[definition]{\textbf{Theorem}}
\newtheorem{lemma}[definition]{\textbf{Lemma}}
\newtheorem{corollary}[definition]{\textbf{Corollary}}
\newtheorem{proposition}[definition]{\textbf{Proposition}}
\newtheorem{example}[definition]{\textbf{Example}}
\newtheorem{observation}[definition]{\textbf{Observation}}
\newtheorem{notation}[definition]{\textbf{Notation}}
\newtheorem*{proof*}{\textbf{Proof}}
\newcommand{\Succc}{\mathsf{Succ}}
\newcommand{\Ag}{\msf{Ag}}
\newcommand{\fanBr}[1]{\langle\langle#1\rangle\rangle}
\newcommand{\Tr}{\ensuremath{\msf{True}}}
\newcommand{\Fl}{\ensuremath{\msf{False}}}
\newcommand{\bool}{\ensuremath{\mathbb{B}}}
\newcommand{\Op}[2]{\mathsf{Op}_{\msf{#1}}^{\msf{#2}}}
\newcommand{\Ut}{\mathsf{U}^\mathsf{t}}
\newcommand{\Bt}{\mathsf{B}^\mathsf{t}}
\newcommand{\Bl}{\mathsf{B}^\mathsf{l}}
\newcommand{\Size}[1]{|#1|}
\newcommand{\N}{\mathbb{N}}
\newcommand{\Z}{\mathbb{Z}}
\newcommand{\msf}[1]{\mathsf{#1}}
\newcommand{\LTL}{\msf{LTL}}
\newcommand{\CTL}{\msf{CTL}}
\newcommand{\ATL}{\msf{ATL}}
\newcommand{\CL}{\msf{C}\text{-}\msf{L}}
\newcommand{\LC}{\msf{L}\text{-}\msf{C}}
\newcommand{\prop}{\mathsf{Prop}}
\DeclareMathOperator{\lF}{\mathbf{F}}
\DeclareMathOperator{\lG}{\mathbf{G}}
\DeclareMathOperator{\lU}{\mathbf{U}}
\DeclareMathOperator{\lX}{\mathbf{X}}
\DeclareMathOperator{\lR}{\mathbf{R}}
\DeclareMathOperator{\lW}{\mathbf{W}}
\DeclareMathOperator{\lM}{\mathbf{M}}
\author[1,2]{Benjamin Bordais}
\author[1,2]{Daniel Neider}
\author[3]{Rajarshi Roy}
\affil[1]{TU Dortmund University, Dortmund, Germany}
\affil[2]{Center for Trustworthy Data Science and Security, University Alliance Ruhr, Dortmund, Germany}
\affil[3]{Department of Computer Science, University of Oxford, Oxford, UK}
\date{}
\newcolumntype{M}[1]{>{\centering\arraybackslash}m{#1}}
\begin{document}
	\title{The Complexity of Learning Temporal Properties}
	\maketitle	
	
	\begin{abstract}
		We consider the problem of learning temporal logic formulas from examples of system behavior. Learning temporal properties has crystallized as an effective mean to explain complex temporal behaviors. Several efficient algorithms have been designed for learning temporal formulas. However, the theoretical understanding of the complexity of the learning decision problems remains largely unexplored. To address this, we study the complexity of the passive learning problems of three prominent temporal logics, Linear Temporal Logic (LTL), Computation Tree Logic (CTL) and Alternating-time Temporal Logic (ATL) and several of their fragments. 
		We show that learning formulas using an unbounded amount of occurrences of binary operators is NP-complete for all of these logics. On the other hand, when investigating the complexity of learning formulas with bounded amount of occurrences of binary operators, we exhibit discrepancies between the complexity of learning LTL, CTL and ATL formulas (with a varying number of agents). 
	\end{abstract}

\section{Introduction}

Temporal logics are the de-facto standard for expressing temporal properties for software and cyber-physical systems.
Originally introduced in the context of program verification~\cite{pnueli77,ClarkeE81}, temporal logics are now well-established in numerous areas, including reinforcement learning~\cite{SadighKCSS14,LiVB17,CamachoIKVM19}, motion planning~\cite{FainekosKP05,DBLP:conf/aaai/CamachoTMBM17}, process mining~\cite{CecconiGCMM22}, and countless others.
The popularity of temporal logics can be attributed to their unique blend of mathematical rigor and resemblance to natural language.

Until recently, formulating properties in temporal logics has been a manual task, requiring human intuition and expertise~\cite{BjornerH14,Rozier16}.
To circumvent this step, in the past ten years, there have been numerous works to automatically learn (i.e., generate) properties in temporal logic.
Among them, a substantial number of works~\cite{flie,CamachoM19,scarlet,DBLP:conf/aaai/LuoLDWPZ22,DBLP:journals/corr/abs-2402-12373} target Linear Temporal Logic (LTL)~\cite{pnueli77}.
There is now a growing interest~\cite{DBLP:journals/corr/abs-2310-13778,DBLP:journals/corr/abs-2402-06366} in learning formulas in Computation Tree Logic (CTL)~\cite{ClarkeE81} and Alternating-time Temporal Logic (ATL)~\cite{AlurATL} due to their ability to express branching-time properties of multi-agent systems.

While existing approaches for learning temporal properties demonstrate impressive empirical performance, their computational complexity remains largely unexplored.
The only noteworthy related works are~\cite{FijalkowL21} and the follow-up work~\cite{arXivFijalkow}.
They present $\msf{NP}$-completeness results for learning formulas in LTL and several of its fragments.

In this work, we extend the existing results to encompass a wider range of LTL operators
.
Moreover, we extend the study 
to learning CTL and ATL formulas.


To elaborate on our contributions, we describe the precise problem that we consider, the fundamental \emph{passive learning} problem~\cite{Gold78}.
Its decision version asks the following question: given two sets $\mathcal{P}$, $\mathcal{N}$ of positive and negative examples of system behavior 
and a size bound $B$, does there exist a formula of size at most $B$ satisfied by the positive examples and violated by the negative examples.

Our instantiation of the above problem varies slightly depending on the considered logic
. Indeed, LTL-formulas 
express linear-time properties, CTL-formulas express branching-time properties, and ATL-formulas express properties involving on multi-agent systems
.
Accordingly, the input examples for learning LTL, CTL and ATL are linear structures (equivalently infinite words), Kripke structures and concurrent game structures, respectively. 
We refer to Section~\ref{sec:Def} for formal definitions and other prerequisites, 

\textbf{We summarize our contributions in Table~\ref{tab:summary}.}
Our first result, illustrated in the left column, shows that without any restriction on the use of binary operators, the learning problem for any logic is $\msf{NP}$-complete, regardless of the binary operators allowed. The $\msf{NP}$-hardness results are (heavily) inspired by the proofs by~\cite{arXivFijalkow} and (mostly) use reductions from the hitting set problem---one of Karp's 21 $\msf{NP}$-complete problem. The details of the proofs are given in Section~\ref{sec:hard_binary_op}.

\begin{table}[t]
	\centering
	\resizebox{0.8\linewidth}{!}{
		\begin{tabular}{|c|c|c|c|c|}
			\hline
			\multirow{ 3}{*}{} & Unbounded 
			& \multicolumn{3}{|c|}{Bounded use of binary operators} \\ \cline{3-5}
			& use of binary & \multirow{ 2}{*}{$\lX \in \Ut$} & \multicolumn{2}{|c|}{$\lX \notin \Ut$} \\ \cline{4-5}
			& operators
			& & $\{\lF,\lG\} \subseteq \Ut$ & $\Ut = \{\lF\},\{\lG\}$ \\ \hline
			$\LTL$ & \multirow{5}{*}{$\msf{NP}$-c} & \multicolumn{3}{|c|}{$\msf{L}$} \\ \cline{1-0}\cline{3-5}
			$\CTL$ &  & $\msf{NP}$-c & \multicolumn{2}{|c|}{$\msf{NL}$-c} \\ \cline{1-0} \cline{3-5} 
			$\ATL(2)$ &  & \multicolumn{2}{|c|}{$\msf{NP}$-c} & \multicolumn{1}{|c|}{$\msf{P}$-c} \\ \cline{1-0}\cline{3-5}
			$\ATL(p)$ &  & \multicolumn{3}{|c|}{$\msf{NP}$-c} \\ \cline{1-5}
	\end{tabular}}
	\caption{Summary of the complexity results for learning LTL, CTL and ATL. $\ATL(k)$ corresponds to learning ATL with $k$ agents, while $\ATL$ refers to learning ATL with the set of agents as input. $\Ut$ refers to the set of unary operators allowed.}
	\label{tab:summary}
\end{table} 

In the search for logic fragments with lower complexities, we turn to formulas using only a bounded amount of binary operators, and unary operators in a set $\Ut \subseteq \{ \neg,\lX,\lF,\lG\}$, depicted in the middle column. Note that the bound on the number of binary operators is fixed beforehand (i.e. it is part of the learning problem itself, not of the input). In this case, the complexity of the learning problems varies between different logics and unary operators. Importantly, we exhibit fragments where the learning problem is decidable in polynomial time. This is handled in Section~\ref{sec:unary}. 







Note that this work largely extends a preliminary version \cite{DBLP:journals/corr/abs-2312-11403}. 

\paragraph{Related Works.}
The closest related works are~\cite{FijalkowL21} and~\cite{arXivFijalkow}.
Both works consider learning problems in several fragments of LTL, especially involving boolean operators such as $\lor$ and $\land$, and temporal operators such as $\lX$, $\lF$ and $\lG$ and prove their $\msf{NP}$-completeness.
We extend part of their work by categorizing fragments based on the arity of the operators and studying which type of operators contribute to the hardness.
Moreover, there are several differences in the parameters considered for the learning problem.
For instance, the above works consider the size upper bound $B$ to be in binary, while we assume $B$ given in unary.
Considering size bound in unary is often justified since one may want to output a unary-sized formula anyway.
We discuss more thoroughly such differences in Section~\ref{subsubsec:discussion}.
Nonetheless, in addition to LTL, we widen the scope of the complexity results to CTL and ATL.

In the past, complexity analysis of passive learning has been studied for formalisms other than temporal logics.
For instance,~\cite{DBLP:journals/iandc/Gold78} and~\cite{DBLP:journals/iandc/Angluin78} proved $\msf{NP}$-completeness of the passive learning problems of deterministic finite automata (DFAs) and regular expressions (REs).


When considering temporal logics, most related works focus on devising efficient algorithms for learning temporal logic.
Several works learn LTL (or its important fragments) by either exploiting constraint solving~\cite{flie,CamachoM19,Riener19} or efficient enumerative search~\cite{scarlet,DBLP:journals/corr/abs-2402-12373}.
Some recent works rely on neuro-symbolic approaches to learn LTL formulas from noisy data~\cite{DBLP:conf/aaai/LuoLDWPZ22,DBLP:conf/aaai/WanLDLYP24}.
For CTL, many works resort to handcrafted templates~\cite{Chan00,WasylkowskiZ11} for simple enumerative search, while others learn formulas of aribtrary structure through constraint solving~\cite{DBLP:journals/corr/abs-2310-13778,DBLP:journals/corr/abs-2402-06366}.

There are also works on learning 
other logics such as Signal Temporal Logic~\cite{dtbombara,MohammadinejadD20}, Metric Temporal Logic~\cite{DBLP:conf/vmcai/RahaRFNP24}, Past LTL~\cite{ArifLERCT20}, Property Specification Language~\cite{0002FN20}, etc.
	
	\section*{Acknowledgement} Rajarshi Roy acknowledges partial funding by the ERC under the European
	Union's Horizon 2020 research and innovation programme
	(grant agreement No.834115, FUN2MODEL).
	
	\setcounter{tocdepth}{3}
	\tableofcontents
	
	\section{Definitions}
	\label{sec:Def}
	\paragraph{Complexity classes}
	In this paper, we are going to show several completeness results. As can be seen in Table~\ref{tab:summary}, the complexity classes that we will consider are $\msf{L}$ (logspace), $\msf{NL}$ (non-deterministic logspace), $\msf{P}$ (polynomial time), and $\msf{NP}$ (non-deterministic polynomial time). Note that all the reductions that we will define are logspace reductions.
	
	\paragraph{Some notations}
	We let $\N$ denote the set of all integers and $\N_1$ denote the set of all positive integers. Furthermore, for all $i \leq j \in \N$, we let $[i,\ldots,j] \subseteq \N$ denote the set of integers $\{ i,i+1,\ldots,j \}$.
	
	Given any non-empty set $Q$, we let $Q^*,Q^+$ and $Q^\omega$ denote the sets of finite, non-empty finite and infinite sequences of elements in $Q$, respectively. For all $\rho \in Q^+$, we denote by $|\rho| \in \N$ the length of $\rho$, i.e. its number of elements.
	
	Furthermore, for all $\bullet \in \{ +,\omega \}$, $\rho \in Q^\bullet$ and $i \in \N_1$, if $\rho$ has at least $i$ elements, we let:
	\begin{itemize}
		\item $\rho[i] \in Q$ denote the $i$-th element in $\rho$, in particular $\rho[1] \in Q$ denotes the first element of $\rho$;
		\item $\rho[:i] \in Q^+$ denote the non-empty finite sequence $\rho_1 \cdots \rho_i \in Q^+$;
		\item $\rho[i:] \in Q^\bullet$ denote the non-empty sequence $\rho_i \cdot \rho_{i+1} \cdots \in Q^\bullet$, in particular we have $\rho[1:] = \rho$.
	\end{itemize}
	
	For the remainder of this section, we fix a non-empty set of propositions $\prop$.
	
	\subsection{$\LTL$: syntax and semantics}
	Before introducing $\LTL$-formulas \cite{DBLP:conf/focs/Pnueli77}, let us first introduce the objects on which these formulas will be interpreted: infinite words. 
	
	\paragraph{Infinite and ultimately periodic words.}
	Given a set of propositions $\prop$, an infinite word is an element of the set $(2^\prop)^\omega$. Furthermore, we will be particularly interested in ultimately periodic words, and in size-1 ultimately periodic words. They are formally defined below.
	\begin{definition}
		Consider a set of propositions $\prop$. An ultimately periodic word $w \in (2^\prop)^\omega$ is such that $w = u \cdot v^\omega$ for some finite words $u \in (2^\prop)^*,v \in (2^\prop)^+$. In that case, we set the size $|w|$ of $w$ to be equal to $|w| := |u| + |v|$. Then, for all sets $S$ of ultimately periodic words, we set $|S| := \sum_{w \in S} |w|$.
		
		An ultimately periodic word $w$ is of size-1 if $|w| = 1$. That is, $w = \alpha^\omega$ for some $\alpha \in 2^\prop$. 
	\end{definition}

	\paragraph{Set of operators}
	The $\LTL$, $\CTL$ and $\ATL$-formulas that we will consider in the following will use the following temporal operators: $\lX$ (neXt), $\lF$ (Future), $\lG$ (Globally), $\lU$ (Until), $\lR$ (Release), $\lW$ (Weak until), $\lM$ (Mighty release).
	
	We let $\Op{Un}{} := \{ \neg,\lX,\lF,\lG\}$ and $\Op{Bin}{tp} := \{\lU,\lR,\lW,\lM\}$ denote the sets of unary and binary operators respectively. We also let $\Op{Bin}{lg}$ denote the set of all logical binary operators, i.e. the classical operators along with their negations: $\Op{Bin}{lg} := \{ \lor,\wedge,\Rightarrow,\Leftarrow,\Leftrightarrow,\prescript{\neg}{}{\lor},\prescript{\neg}{}{\wedge},\prescript{\neg}{}{\Rightarrow},\prescript{\neg}{}{\Leftarrow},\prescript{\neg}{}{\Leftrightarrow} \}$. 
	
	\paragraph{Syntax.}
	For all $\Ut \subseteq \Op{Un}{}$, $\Bt \subseteq \Op{Bin}{tp}$ and $\Bl \subseteq \Op{Bin}{lg}$, we denote by $\LTL(\prop,\Ut,\Bt,\Bl)$ the set of $\LTL$-formulas defined inductively as follows:
	\begin{align*}
		\varphi \Coloneqq p \mid \neg \varphi \mid *_1 \varphi \mid \varphi *_2 \varphi
	\end{align*}
	where $p\in \prop$, $*_1 \: \in \Ut$ and $*_2 \in \Bt \cup \Bl$. 
	
	
	We define the size $\Size{\varphi}$ of an $\LTL$-formula $\varphi$ to be its number of sub-formulas. That set of sub-formulas $\msf{SubF}(\varphi)$ is defined inductively as follows:
	\begin{itemize}
		\item $\msf{SubF}(p) := \{ p \}$ for all $p \in \prop$;
		\item $\msf{SubF}(\bullet \varphi) := \msf{SubF}(\varphi) \cup \{ \bullet \varphi \}$ for all unary operators $\bullet \in \Op{Un}{}$;
		\item $\msf{SubF}(\varphi_1 \bullet \varphi_2) := \msf{SubF}(\varphi_1) \cup  \msf{SubF}(\varphi_2) \cup \{ \varphi_1 \bullet \varphi_2 \}$ for all binary operators $\bullet \in \Op{Bin}{tp} \cup \Op{Bin}{lg}$.
	\end{itemize}
	We also denote by $\prop(\varphi)$ the set $\prop \cap \msf{SubF}(\varphi)$. Furthermore, we say that a set of propositions $Y \subseteq \prop$ \emph{occurs} in a formula $\varphi$ if $Y \cap \prop(\varphi) \neq \emptyset$. (This notation will also be used with $\CTL$ and $\ATL$-formulas.)
	
	Finally, we let $|\varphi|_{\msf{Bin}} \in \N$ denote the number of usage of binary operators in $\varphi$, i.e.:
	\begin{equation*}
		|\varphi|_{\msf{Bin}} := |\msf{SubBin}(\varphi)|
	\end{equation*}
	with 
	\begin{equation*}
		\msf{SubBin}(\varphi) := \{ \varphi_1 \bullet \varphi_2 \in \msf{SubF}(\varphi) \mid \varphi_1,\varphi_2 \in \msf{SubF}(\varphi), \bullet \in \Op{Bin}{tp} \cup \Op{Bin}{lg} \}
	\end{equation*}
	
	\paragraph{Semantics.}
		We define the semantics of $\LTL$-formulas. That is, given an $\LTL$-formula $\varphi$ and an infinite word $w \in (2^\prop)^\omega$, we define when $w \models \varphi$, i.e. when $\varphi$ accepts $w$, (otherwise it rejects it). In order to give a semantics to all binary logical operators, when $w \models \varphi$ is true, it is seen as the boolean value $\Tr$, and when $w \models \varphi$ is false, it is seen as the boolean value $\Fl$. Then, for all $w \in (2^\prop)^\omega$, we have (for $* \in \Bl$):
	\begin{align*}
		w \models p & \text{ iif } p \in w[0], \\
		w \models \neg \varphi & \text{ iif } w \not\models \varphi; \\
		w \models \varphi_1 * \varphi_2 & \text{ iif } (w \models \varphi_1) * (w \models \varphi_2) = \Tr; \\
		w \models \lX \varphi & \text{ iif } w[2:] \models \varphi; \\
		w \models \lF \varphi & \text{ iif } \exists i \in \N_1,\; w[i:] \models \varphi; \\
		w \models \lG \varphi & \text{ iif } \forall i \in \N_1,\; w[i:] \models \varphi; \\
		w \models \varphi_1 \lU \varphi_2 & \text{ iif } \exists i \in \N_1,\; w[i:] \models \varphi_2 \text{ and } \forall 1 \leq j \leq i-1,\; w[j:] \models \varphi_1; \\
		w \models \varphi_1 \lR \varphi_2 & \text{ iif } w \models \neg (\neg \varphi_1 \lU \neg \varphi_2) \\
		w \models \varphi_1 \lW \varphi_2 & \text{ iif } w \models (\varphi_1 \lU \varphi_2) \vee \lG\varphi_1; \\
		w \models \varphi_1 \lM \varphi_2 & \text{ iif } w \models (\varphi_1 \lR \varphi_2) \wedge \lF \varphi_1
	\end{align*}
	
	Given two $\LTL$-formulas $\varphi,\varphi'$, we write $\varphi \implies \varphi'$ when, for all ultimately periodic words $w$, we have that if $w \models \varphi$, then $v \models \varphi'$. We write $\varphi \equiv \varphi'$ if $\varphi \implies \varphi'$ and $\varphi' \implies \varphi$.
	
	Given any set $S \subseteq (2^\prop)^\omega$ of infinite words, we say that an $\LTL$-formula $\varphi$ accepts $S$ if it accepts all words in $S$, and we say that $\varphi$ rejects $S$ if it rejects all words in $S$. Furthermore, such an $\LTL$-formula $\varphi$ distinguishes two sets of infinite words $S \subseteq (2^\prop)^\omega$ and $S' \subseteq (2^\prop)^\omega$ if it accepts $S$ and rejects $S'$, or if it accepts $S'$ and rejects $S$. 
	
	\subsection{$\ATL$ and $\CTL$: syntax and semantics}
	Let us first introduce the notion of concurrent game structures (CGS), on which $\ATL$-formulas are evaluated. We then introduce Kripke structures, a special kind of concurrent game structure, on which $\CTL$-formulas are evaluated.
	\begin{definition}
		A concurrent game structure (CGS for short) is a the tuple $C = \langle Q,I,k,\prop, \pi, d,\delta \rangle$ where,
		\begin{itemize}
			\item $Q$ is the finite set of states;
			\item $I \subseteq Q$ is the set of initial states;
			\item $k \in \N$ denotes the number of agents, we denote by $\Ag := [1,\ldots,k]$ the set of $k$ agents;
			\item $\pi: Q\mapsto 2^{\prop}$ maps each state $s\in Q$ to the set of propositions that hold in $s$;
			\item $d: Q \times \Ag \rightarrow \mathbb{N}^+$ maps each state and agent to the number of actions available to that agent at that state;
			\item $\delta: Q_\msf{Act} \rightarrow Q$ is the function mapping every state and tuple of one action per agent to the next state, where $Q_\msf{Act} := \{ (q,\alpha_1,\ldots,\alpha_k) \mid q \in Q,\; \forall a \in \Ag,\; \alpha_a \in [1,\ldots,d(q,a)] \}$. 
		\end{itemize}
		For all states $q \in Q$ and coalitions of agents $A \subseteq \Ag$, we let $\msf{Act}_A(q) := \{ \alpha = (\alpha_a)_{a \in A} \mid \forall a \in \Ag,\; \alpha_a \in [1,\ldots,d(q,a)] \}$. Then, for all tuple $\alpha = (\alpha_a)_{a \in A} \in \msf{Act}_A(q)$ of one action per agent in $A$, we let: 
		\begin{equation*}
			\msf{Succ}(q,\alpha) := \{ q' \in Q \mid \exists \alpha' = (\alpha'_a)_{a \in \Ag \setminus A} \in \msf{Act}_{\Ag \setminus A}(q),\; \delta(q,(\alpha,\alpha')) = q' \}
		\end{equation*}
		Finally, the size $|C|$ of the concurrent structure $C$ is equal to: $|C| = |Q_\msf{Act}| + |\prop|$.
	\end{definition}
	Unless otherwise stated, a concurrent game structure $C$ will always refer to the tuple $\langle Q,I,k,\prop, \pi, d,\delta \rangle$. 

	A Kripke structure is then a concurrent game structure where there is only one agent.
	\begin{definition}
		A \emph{Kripke structure} is a concurrent game structure $C$ where $k = 1$ and $d$ and $\delta$ are replaced by subsets of successor states $\emptyset \neq \msf{Succ}(q) \subseteq Q$ for all states $q \in Q$.
	\end{definition}
	Unless otherwise state, a Krikpe structure $C$ will will always refer to the tuple $\langle Q,I,k,\prop, \pi, \msf{Succ} \rangle$. 
	
	In a concurrent game structure, a strategy for an agent is a function that prescribes to the agent what to do as a function of the history of the game, i.e. of the finite sequence of states seen so far. Furthermore, given a coalition of agents and a tuple of one strategy per agent in the coalition, we define the set of infinite sequences of states that can occur with this tuple of strategies, from any state. This is formally defined below.
	\begin{definition}
		Consider a concurrent game structure $C$ and a agent $a \in [1,\ldots,k]$. A \emph{strategy} for agent $a$ is a function $s_a: Q^+ \rightarrow \N_1$ such that, for all $\rho = \rho_0 \dots \rho_n \in Q^+$, we have $s_a(\rho) \leq d(\rho_n,a)$. We denote by $\msf{S}_a$ the set of strategies available to Agent $a \in [1,\ldots,k]$.
		
		Given any coalition (or subset) of agents $A \subseteq [1,\ldots,k]$, a \emph{strategy profile} for the coalition $A$ is a tuple $s = (s_a)_{a \in A}$ of one strategy per agent in $A$. We denote by $\msf{S}_A$ the set of strategy profiles for the coalition $A$. Given any such strategy profile $s$, for all states $q \in Q$, we let $\msf{Out}^Q(q,s) \subseteq Q^\omega$ denote the set of infinite paths $\rho$ that are compatible with the strategy profile $s$ from $q$, i.e.:
		\begin{equation*}
			\msf{Out}^Q(q,s) := \{ \rho \in Q^\omega \mid \rho[1] = q,\; \forall i \in \N_1, \rho[i+1] \in \msf{Succ}(\rho[i],(s_a(\rho[:i]))_{a \in A}) \}
		\end{equation*}
	
		In a Kripke structure, we simply consider the set of all infinite paths $\msf{Out}^Q(q)$ that can occur from a specific state $q \in Q$, regardless of strategies: 
		\begin{equation*}
			\msf{Out}^Q(q) := \{ \rho \in Q^\omega \mid \rho[1] = q,\; \forall i \in \N_1, \rho[i+1] \in \msf{Succ}(\rho[i]) \}
		\end{equation*}
	\end{definition}
	
	\paragraph{Syntax.}
	To define the syntax of $\ATL$ and $\CTL$-formulas, we introduce two types of formulas: state formulas and path formulas. Intuitively, state formulas express properties of states, where the strategic quantifier occurs, whereas path formulas express temporal properties of paths. For ease of notation, we denote state formulas and path formulas with the Greek letter $\phi$ and the Greek letter $\psi$, respectively. Consider some $\Ut \subseteq \Op{Un}{}$, $\Bt \subseteq \Op{Bin}{tp}$ and $\Bl \subseteq \Op{Bin}{lg}$ and $k \in \N_1$. Then, we denote by $\ATL^k(\prop,\Ut,\Bt,\Bl)$ the set of $\ATL$-state formulas defined by the grammar:
	\begin{align*}
		\phi \Coloneqq p \mid \neg \phi \mid \phi * \phi \mid \fanBr{A} \psi
	\end{align*}
	where $p\in \prop$, $* \in \Bl$, $A \subseteq \Ag := \{ 1,\ldots,k\}$ is a subset of agents, and $\psi$ is an $\ATL^k(\prop,\Ut,\Bt,\Bl)$-path formula. Note that, for $\CTL$-formulas, we have $k = 1$. Hence, there are only two possible subsets: $\emptyset$ and $\Ag$ itself. Usually, in the $\CTL$ syntax, $\fanBr{\emptyset} \psi$ is denoted $\forall \psi$ whereas $\fanBr{\Ag} \psi$ is denoted $\exists \psi$. 
		
	Next, $\ATL^k(\prop,\Ut,\Bt,\Bl)$-path formulas are given by the grammar
	\begin{align*}
		\psi \Coloneqq *_1 \phi \mid \phi *_2 \phi 
	\end{align*}
	where $*_1 \in \Ut \setminus \{\neg\}$ and $*_2 \in \Bt$. We denote by $\ATL^k$ the set of all $\ATL^k$-formulas, with $\CTL$ referring to $\ATL^1$. The set of sub-formulas $\msf{SubF}(\phi)$ of a formula $\phi$ is then defined inductively as follows:
	\begin{itemize}
		\item $\msf{SubF}(p) := \{ p \}$ for all $p \in \prop$;
		\item $\msf{SubF}(\neg \phi) := \{ \neg \phi \} \cup \msf{SubF}(\phi)$;
		\item $\msf{SubF}(\fanBr{A} (\bullet \phi)) := \{ \fanBr{A} (\bullet \phi) \} \cup \msf{SubF}(\phi)$ for all $\bullet \in \Op{Un}{}$ and $A \subseteq \Ag$;
		\item $\msf{SubF}(\phi_1 \bullet \phi_2) := \{ \phi_1 \bullet \phi_2 \} \cup \msf{SubF}(\phi_1) \cup \msf{SubF}(\phi_2)$ for all $\bullet \in \Op{Bin}{lg}$;
		\item $\msf{SubF}(\fanBr{A} (\phi_1 \bullet \phi_2)) := \{ \fanBr{A} (\phi_1 \bullet \phi_2) \} \cup \msf{SubF}(\phi_1) \cup \msf{SubF}(\phi_2)$ for all $\bullet \in \Op{Bin}{tp}$  and $A \subseteq \Ag$.
	\end{itemize}
	
	The size $|\phi|$ of a $\ATL$-formula is defined as its number of sub-formulas: $|\phi| := |\msf{SubF}(\phi)|$. Finally, we also let $\msf{NbBin}(\phi) \in \N$ denote the number of usage of binary operators in $\phi$, i.e.:
	\begin{equation*}
		|\phi|_{\msf{Bin}} := |\msf{SubBin}(\phi)|
	\end{equation*} 
	with 
	\begin{equation*}
		\msf{SubBin}(\phi) := \{ \phi_1 \bullet \phi_2 \in \msf{SubF}(\phi) \mid \phi_1,\phi_2 \in \msf{SubF}(\phi), \bullet \in \Op{Bin}{tp} \cup \Op{Bin}{lg} \}
	\end{equation*}
	
	\paragraph{Semantics.}
	As mentioned above, we interpret 
	$\ATL^k$-formulas over CGS with the set of agents $\Ag := \{1,\ldots,k\}$ using the standard definitions 
	\cite{AlurATL}. Given a state $s$ and a state formula~$\phi$, we define when $\phi$ holds in state $s$, denoted using $s\models \phi$, inductively as follows:
	\begin{align*}
		s \models p & \text{ iif } p \in \pi(s), \\
		s \models \neg \phi & \text{ iif } s \not\models \phi, \\
		s\models \phi_1 * \phi_2 & \text{ iif } (s\models\phi_1) * (s\models\phi_2) = \Tr, \\
		s\models \fanBr{A} \psi & \text{ iif }
		\exists s \in \msf{S}_A,\; \forall \pi\in \msf{Out}^Q(q,s),\; \pi\models\psi
	\end{align*}
	where $* \in \Op{Bin}{lg}$. In a Kripke structure, this last line cam be rewritten as follows (where the first line corresponds to $A = \emptyset$ and the second to $A = \Ag$):
	\begin{align*}
		s\models \exists \psi & \text{ iif }
		\exists \pi\in \msf{Out}^Q(q),\; \pi\models\psi \\
		s\models \forall \psi & \text{ iif }
		\forall \pi\in \msf{Out}^Q(q),\; \pi\models\psi
	\end{align*}

	Furthermore, given a path $\pi$ and a path formula $\psi$, we define when $\psi$ holds for the path $\pi$, also denoted using $\pi\models \phi$, inductively as follows:
	\begin{align*}
		\pi \models \lX \phi & \text{ iif } \pi[2:] \models \phi; \\
		\pi \models \lF \phi & \text{ iif } \exists i \in \N_1,\; \pi[i:] \models \phi; \\
		\pi \models \lG \phi & \text{ iif } \forall i \in \N_1,\; \pi[i:] \models \phi; \\
		\pi \models \phi_1 \lU \phi_2 & \text{ iif } \exists i \in \N_1,\; \pi[i:] \models \phi_2 \text{ and } \forall 1 \leq j \leq i-1,\; \pi[j:] \models \phi_1; \\
		\pi \models \phi_1 \lR \phi_2 & \text{ iif } \pi \models \neg (\neg \phi_1 \lU \neg \phi_2) \\
		\pi \models \phi_1 \lW \phi_2 & \text{ iif } \pi \models (\phi_1 \lU \phi_2) \vee \lG\phi_1; \\
		\pi \models \phi_1 \lM \phi_2 & \text{ iif } \pi \models (\phi_1 \lR \phi_2) \wedge \lF \phi_1
	\end{align*}
	We now say that an $\ATL$-formula $\phi$ holds on a CGS $C$, denoted by $C\models \phi$, if $s\models\phi$ for all initial states $s\in I$ of $C$.
	
	We use the notations $\implies$ and $\equiv$ as for $\LTL$-formulas.
	
	\subsection{Decision problems}
	We define the $\LTL$, $\CTL$ and $\ATL^k$ learning problems below (for $k \in \N_1$), where a model for $\LTL$ is an ultimately periodic word, a model for $\CTL$ is a Kripke structure and a model for $\ATL^1$ is a concurrent game structure on the set of agents $\Ag := \{1,\ldots,k\}$.
	\begin{definition}
		\label{def:learning_problems}
		Let $\msf{L} \in \{ \LTL,\CTL,\ATL^k \mid k \in \N_1 \}$ and consider some sets of operators $\Ut \subseteq \Op{Un}{}$, $\Bt \subseteq \Op{Bin}{tp}$ and $\Bl \subseteq \Op{Bin}{lg}$. For $n \in \N \cup \{ \infty \}$, we denote by $\msf{L}_\msf{Learn}(\Ut,\Bt,\Bl,n)$ the following decision problem:
		\begin{itemize}
			\item Input: $(\prop,\mathcal{P},\mathcal{N},B)$ where $\prop$ is a set of propositions, $\mathcal{P},\mathcal{N}$ are two finite sets of models for $\msf{L}$, and $B \in \N$.
			\item Output: yes iff there exists an $\msf{L}$-formula $\varphi \in \msf{L}(\prop,\Ut,\Bt,\Bl)$ such that $|\varphi| \leq B$, $|\varphi|_{\msf{Bin}} \leq n$ and $\varphi$ separates $\mathcal{P}$ and $\mathcal{N}$, i.e. such that:
			\begin{itemize}
				\item for all $X \in \mathcal{P}$, we have $X \models \varphi$;
				\item for all $X \in \mathcal{N}$, we have $X \not\models \varphi$.
			\end{itemize}
		\end{itemize}
		The size of the input is equal to $|\prop| + |\mathcal{P}| + |\mathcal{N}| + B$ (i.e. $B$ is written in unary).

	\end{definition}

	As mentioned in the introduction), since the model checking problems for $\LTL$, $\CTL$ and $\ATL$ can be decided in polynomial time \cite{AlurATL}
	, 
	the problems $\LTL_\msf{Learn}$, $\CTL_\msf{Learn}$ and $\ATL^k_\msf{Learn}$ are all in $\msf{NP}$, using a straightforward guess-and-check subroutine. 
	
	\begin{proposition}
		\label{prop:easy}
		For all $\Ut \subseteq \Op{Un}{}$, $\Bt \subseteq \Op{Bin}{tp}$, $\Bl \subseteq \Op{Bin}{lg}$ and $n \in \N \cup \{\infty\}$, the decision problems $\LTL_\msf{Learn}(\Ut,\Bt,\Bl)$,  $\CTL_\msf{Learn}(\Ut,\Bt,\Bl)$, and
		$\ATL^k_\msf{Learn}(\Ut,\Bt,\Bl,k)$ for all $k \in \N_1$ are in $\msf{NP}$.
	\end{proposition}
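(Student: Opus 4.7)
The plan is a standard nondeterministic guess-and-check argument. Given an input $(\prop,\mathcal{P},\mathcal{N},B)$, I would guess a candidate formula $\varphi$ whose syntax tree has at most $B$ nodes, then verify deterministically that (a) $|\varphi|\leq B$ and $|\varphi|_\msf{Bin}\leq n$, and (b) $X\models\varphi$ for every $X\in\mathcal{P}$ and $X\not\models\varphi$ for every $X\in\mathcal{N}$.

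For the guessing phase to fit in nondeterministic polynomial time, the candidate formula must admit a polynomial-size encoding. Each node of the syntax tree carries a label that is either a proposition from $\prop$, a unary operator from the constant-size set $\Ut$, a binary operator from the constant-size set $\Bt\cup\Bl$, or, in the $\ATL^k$ case, a strategic quantifier $\fanBr{A}$ for some $A\subseteq\{1,\ldots,k\}$. Since $k$ is fixed as a parameter of the problem $\ATL^k_\msf{Learn}$, the set of possible coalitions has constant size, and each node label is therefore describable in $O(\log|\prop|)$ bits. The whole tree thus takes $O(B\cdot\log|\prop|)$ bits to write down, which is polynomial in the input size precisely because $B$ is given in unary.

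The syntactic checks $|\varphi|\leq B$ and $|\varphi|_\msf{Bin}\leq n$ are immediate from a walk over the tree. For the semantic check, for each $X\in\mathcal{P}\cup\mathcal{N}$ I would invoke the model-checking algorithm for the relevant logic: $\LTL$ model checking over ultimately periodic words and $\CTL$ model checking over Kripke structures are classically in $\msf{P}$, and $\ATL$ model checking over concurrent game structures was shown polynomial in \cite{AlurATL}. Each call runs in time polynomial in $|X|+|\varphi|$, so summing over all examples in $\mathcal{P}\cup\mathcal{N}$ yields a polynomial-time deterministic verifier.

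I do not foresee a genuine obstacle. The argument is standard and only hinges on two ingredients already available: polynomial-time model checking for each of the three logics, and the unary encoding of $B$, which is what makes the nondeterministic guess short enough to write down in polynomial time.
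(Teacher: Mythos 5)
Your proposal is correct and follows essentially the same route as the paper, which justifies membership in $\msf{NP}$ by a straightforward guess-and-check subroutine relying on polynomial-time model checking for $\LTL$, $\CTL$ and $\ATL$ together with the unary encoding of $B$. Your write-up merely makes explicit the encoding details (polynomial-size description of the guessed formula, constant-size coalition labels for fixed $k$) that the paper leaves implicit.
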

	
	\section{Learning with non-unary binary operators is $\msf{NP}$-hard}
	\label{sec:hard_binary_op}
	In this section, we study the complexity of the learning decision problems in the case where the number of occurrence of binary operators is unbounded. 
	We first show that, in this setting, $\LTL$ learning is $\msf{NP}$-hard. We then show that $\CTL$ learning is at least as hard as $\LTL$ learning, which, in turn, implies that $\CTL$ learning with unbounded occurrence of binary operators is $\msf{NP}$-hard. This holds as well for $\ATL^k$ learning, for all $k \in \N$. 
	
	\subsection{$\LTL$ learning}
	Let us consider $\LTL$ learning. Before we proceed to our contributions, let us discuss the very important related work \cite{arXivFijalkow}.
	
	\subsubsection{What is done in \cite{arXivFijalkow}}
	\label{subsubsec:discussion}
	In \cite{arXivFijalkow}, the authors study $\LTL$ learning. However, the setting that they consider differs in several ways with the setting that we consider in this paper. We list the main differences below.
	\begin{itemize}
		\item The letters of the words considered in \cite{arXivFijalkow} are propositions (i.e. elements of $\prop$) whereas the letters that we consider are subsets of propositions (i.e. elements of $2^\prop$).
		\item The words that we consider are infinite whereas the words considered in \cite{arXivFijalkow} are finite.
		\item Crucially for complexity questions, the alphabet (as it is referred to in \cite{arXivFijalkow}, i.e. the set of propositions $\prop$) that we consider is part of the input, it is not fixed beforehand.
		\item The bound $B$ that we consider is written in unary instead of binary.
	\end{itemize}
	We discuss below the motivations behind our choices, and the implications they have on the complexity of the learning problems.
	
	\paragraph{Letters.}
	Using subsets of proposition as letters is not unusual, this is done for instance in the seminal book \cite{DBLP:books/daglib/0020348}. Furthermore, in this paper, we focus on comparing $\LTL$, $\CTL$ and $\ATL$ learning. In that regard, we aim at having settings for all of these logics as close as possible. Hence, we use subsets of propositions as letters, since the states of Kripke structures are usually labeled with subsets of propositions (and the same goes for concurrent game structures). Note that this choice has significant impact on the complexity of the learning problem considered. Indeed, in our setting, letters are subsets of propositions. In combination with having the set of propositions as part of the input --- which we discuss below --- we show, in particular, that $\LTL$ learning with only operator $\lor$ is $\msf{NP}$-hard. On the other hand, in the setting of \cite{arXivFijalkow}, the authors show that with only the operator $\lor$, the $\LTL$ learning problem can de decided in polynomial time (Proposition 7). 
	
	\paragraph{Word length.}
	In this paper, we consider infinite words. For representation issues, we focus on ultimately periodic ones. As above, this is more closely related to $\CTL$ and $\ATL$ semantics. It has little influence on complexity questions, although some differences may arise. For instance, 
	as stated in \cite[Proposition 8]{arXivFijalkow}, an $\LTL$-formula of the shape $\varphi := \lX^k \varphi'$ for some $k \in \N$ is always false when evaluated on words of size at most $k-1$, which is irrelevant for us since all words have infinite size. 
	
	\paragraph{Alphabet.}
	Let us now consider this more complicated issue: is it better to have the alphabet (or set of propositions) fixed a priori or to have it part of the input? As is mentioned in \cite{arXivFijalkow}, it is much more usual to have the alphabet fixed a priori, as in the classical examples of automata learning (
	\cite{Gold78}). However, we believe that in a learning setting, it makes sense not to know a priori the propositions occurring in the model. That way, the set of propositions could be learned by looking at the models, positive or negative, that we need to separate\footnote{Note that in our setting the set of propositions is given as an explicit part of the input. It could alternatively be given implicitly in the input as in the can read in the models.}. 
	
	This choice is important in terms of complexity. Indeed, consider the very technical results of \cite{arXivFijalkow}, in Section 7 and especially in Section 8, which deal with $\msf{NP}$-hardness results (and hard-to-approximate results) with and without the next operator $\lX$. These results are especially hard to prove because the set of propositions is fixed a priori (and is of size 3). On the other hand, our proofs of $\msf{NP}$-hardness for $\LTL$ learning are significantly easier than, for instance, the proof of Theorem 9 from \cite{arXivFijalkow}. However, note that the distinction between having the set of propositions part of the input or not becomes less relevant when we restrict ourselves to formulas (for $\LTL$, $\CTL$ and $\ATL$ learning) with bounded occurrence of binary operators, since in that case our $\msf{NP}$-hardness proofs hold even for formula that do not use at all binary operators, and therefor use exactly one proposition. 
	
	
	\paragraph{Bound.}
	Finally, there is the issue of the representation of the integer $B$ that bounds the size of the formulas that we consider. In this paper, we consider the case where it is given in unary. Indeed, although we consider only decision problems where we only answer whether or not there exists a formula, it would also be interesting to explicitly synthesize formulas separating negative and positive instances. If the bound were written in binary, explicitly writing the formula could be exponential in the size of the input. Additionally, having the bound written in unary allows us to have completeness results, and not only hardness ones. 
	
	\paragraph{Results from \cite{arXivFijalkow}}
	Let us now briefly discuss the $\msf{NP}$-hardness results of \cite{arXivFijalkow}. Letting $\msf{Op}$ denote the operators allowed in $\LTL$-formulas, they show that the following $\LTL$ learning problems are $\msf{NP}$-hard:
	\begin{itemize}
		\item With alphabet part of the input:
		\begin{itemize}
			\item When $\msf{Op} = \{ \lF,\lor \}$ (Theorem 2)
			\item When $\msf{Op} = \{ \lF,\wedge \}$ (Theorem 8), and the learning problem is hard to approximate;
		\end{itemize}
		\item With alphabet not part of the input:
		\begin{itemize}
			\item When $\{\lX,\wedge,\lor\} \subseteq \msf{Op} \subseteq \{\wedge,\lor,\lX,\lF,\lG\}$ (Theorem 6, Proposition 11, Proposition 12);
			\item When $\{\lF,\wedge\} \subseteq \msf{Op} \subseteq \{\wedge,\lor,\lF,\lG,\neg\}$ (Theorem 9);
			\item When $\{\lG,\lor\} \subseteq \msf{Op} \subseteq \{\wedge,\lor,\lF,\lG,\neg\}$ (Theorem 10).
		\end{itemize}
	\end{itemize}
	
	Note that, in a previous version of this paper, independently of \cite{arXivFijalkow}, we have shown that the $\LTL$ learning problem is $\msf{NP}$-hard for $\{\lor\} \subseteq \msf{Op} \subseteq \{\wedge,\lor,\Rightarrow,\Leftrightarrow,\lX,\lF,\lG,\lU,\lR,\lW,\lM\}$\footnote{We present this way our result to mimic \cite{arXivFijalkow}, however in this paper we distinguish between the logical and temporal operators that we consider.} (with the setting used in this paper). The reduction was established from the satisfiability problem $\msf{SAT}$. This made the proof quite convoluted as, from a positive instance of the learning problem, we had to be able to extract to a satisfying valuation of the variables. On the other hand, in \cite{arXivFijalkow}, the authors often use the hitting set problem (one of Karp's 21 $\msf{NP}$-complete problem) for their reductions. We define it below. 
	\begin{definition}[Hitting set problem]
		We denote by $\msf{Hit}$ the following decision problem:
		\begin{itemize}
			\item Input: $(l,C,k)$ where $l \in \N_1$, $C = C_1,\ldots,C_n$ are non-empty subsets of $[1,\dots,l]$ such that $\cup_{1 \leq i \leq n} C_i = [1,\ldots,l]$ and $1 \leq k \leq l$.
			\item Output: yes iff there is a subset $H \subseteq [1,\ldots,l]$ of size at most $k$ such that, for all $1 \leq i \leq n$, we have $H \cap C_i \neq \emptyset$.
		\end{itemize}
	\end{definition}
	In the following, unless otherwise stated, when we use $(l,C,k)$ as an instance of the hitting set problem, $C$ refers to $C = C_1,\ldots,C_n$, for some $n \in \N_1$.
	
	\begin{theorem}[\cite{DBLP:conf/coco/Karp72}]
		\label{thm:hitting_set_problem}
		The hitting set problem is $\msf{NP}$-hard.
	\end{theorem}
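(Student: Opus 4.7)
The theorem is a classical result of Karp, so the plan is really to recall the standard textbook reduction rather than invent anything new. The cleanest path is a direct reduction from \emph{Vertex Cover}, which was itself shown $\msf{NP}$-hard in the same paper of Karp and is one of the easiest Karp problems to use as a source.

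Concretely, given an instance $(G,k)$ of Vertex Cover with $G = (V,E)$ and $V = \{v_1,\ldots,v_l\}$, I would construct an instance $(l, C, k)$ of $\msf{Hit}$ as follows: take the ground set $[1,\ldots,l]$ to index the vertices, and for every edge $e = \{v_i,v_j\} \in E$ introduce the subset $C_e := \{i,j\} \subseteq [1,\ldots,l]$. If the graph has isolated vertices (so that the union condition $\bigcup_i C_i = [1,\ldots,l]$ would fail), I would first discard them since they never belong to a minimum vertex cover, which only shrinks the size bound by a known amount. The resulting family $C$ is clearly computable in logspace from $(G,k)$, and the bound $k$ is carried over unchanged.

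For correctness, the key observation is that a set $H \subseteq V$ is a vertex cover of $G$ exactly when, for every edge $e = \{v_i,v_j\}$, the set $H$ contains at least one of $v_i, v_j$, which is precisely the condition $\{i,j\} \cap \{m : v_m \in H\} \neq \emptyset$, i.e.\ that the corresponding index set hits $C_e$. Hence $G$ admits a vertex cover of size at most $k$ iff $(l,C,k)$ is a yes-instance of $\msf{Hit}$, yielding $\msf{NP}$-hardness.

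There is no real obstacle here; the only minor bookkeeping point is the side condition that $C$ covers $[1,\ldots,l]$, handled above by removing isolated vertices. Alternatively one can reduce from Set Cover via the transpose incidence matrix, which makes the equivalence between hitting sets and set covers even more transparent, but the vertex-cover reduction is shorter and suffices. In either case, since $\msf{Hit}$ is obviously in $\msf{NP}$ (guess $H$ and verify in polynomial time), this also gives $\msf{NP}$-completeness, which is what is actually used in the subsequent reductions.
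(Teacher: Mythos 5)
Your proposal is correct: the paper does not prove this statement at all (it is quoted from Karp's 1972 paper), and your reduction from Vertex Cover --- edges become two-element subsets $C_e=\{i,j\}$, the budget $k$ unchanged, isolated vertices discarded to meet the covering side condition --- is the standard argument and establishes $\msf{NP}$-hardness as claimed. One tiny slip: removing isolated vertices does not ``shrink the size bound'' at all, since such vertices never appear in a minimum cover, so $k$ really is carried over verbatim; this does not affect correctness.
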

	\begin{observation}
		The above theorem holds even if $k$ is given in unary. This comes from the fact that $k \leq l$ and $\cup_{1 \leq i \leq n} C_i = [1,\ldots,l]$.
	\end{observation}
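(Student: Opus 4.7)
The plan is to show that the standard (binary-$k$) instance and the unary-$k$ instance of $\msf{Hit}$ have input sizes that are polynomially related, so the $\msf{NP}$-hardness of Theorem~\ref{thm:hitting_set_problem} carries over verbatim (via the identity reduction) when $k$ is written in unary. The key leverage is entirely the two stated conditions $k \leq l$ and $\bigcup_{i} C_i = [1,\ldots,l]$.

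First, I would observe that the input $(l,C,k)$ must, under any reasonable binary encoding of the $C_i$'s, use at least $l$ symbols: since $\bigcup_{1\leq i\leq n} C_i = [1,\ldots,l]$, each element of $[1,\ldots,l]$ appears in at least one $C_i$, so $\sum_{i=1}^n |C_i| \geq l$, and thus the length $N_{\mathrm{bin}}$ of the binary encoding of $(l,C,k)$ satisfies $N_{\mathrm{bin}} \geq l$ (up to a constant factor depending on the chosen encoding of elements of $[1,\ldots,l]$, which only adds a logarithmic factor).

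Next, using $k \leq l$, the unary representation of $k$ has length at most $l \leq N_{\mathrm{bin}}$. Therefore, if $N_{\mathrm{un}}$ denotes the input length when $k$ is written in unary, we have $N_{\mathrm{un}} \leq N_{\mathrm{bin}} + l \leq 2 \cdot N_{\mathrm{bin}}$, i.e.\ the two encodings are linearly related. In particular, the identity map from binary-$k$ instances to unary-$k$ instances is a polynomial-time (indeed logspace) reduction that preserves yes/no answers, so $\msf{NP}$-hardness of the standard hitting set problem transfers directly.

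There is no real obstacle; the only subtlety is to make sure no "hidden" instance inflates $k$ relative to the input size, which is ruled out precisely by the two stated hypotheses (without them, $k$ could be astronomically larger than $l$ and writing it in unary would cause super-polynomial blow-up). Hence the observation follows without modifying Karp's reduction.
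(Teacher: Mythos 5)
Your argument is correct and is exactly the justification the paper intends: since $\bigcup_i C_i = [1,\ldots,l]$ forces the instance encoding to have length at least $l$, and $k \leq l$, writing $k$ in unary at most doubles the input size, so the (logspace-computable) re-encoding preserves $\msf{NP}$-hardness. This matches the paper's one-line reasoning, so there is nothing further to add.
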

	
	This problem is much better suited for establishing the $\msf{NP}$-hardness of the learning problem decision that we consider. Indeed, it is sufficient to be able to exhibit a set of integers, not a valuation on variables. In particular, it makes it easier to handle various logical operators. In this paper, all the $\msf{NP}$-hardness proofs that we exhibit in this paper but one are established from the hitting set problem\footnote{Except for the proof that $\CTL$ and $\ATL$ learning with any non-unary binary operator is $\msf{NP}$-hard, which is proved via a reduction to the $\LTL$ case}. Note that some of the $\msf{NP}$-hardness proofs for the $\LTL$ learning case are very close to some proof from \cite{arXivFijalkow}. We discuss it in details below. 
	
	\subsection{Our results}

	The goal of this subsection is to show the theorem below:
	\begin{theorem}
		\label{thm:non_unary_binary_NP_hard}
		Consider some $\Bt \subseteq \Op{Bin}{tp}$, and $\Bl \subseteq \Op{Bin}{lg}$ and assume that $\Bl \cup \Bt \neq \emptyset$. Then, for all $\Ut \subseteq \Op{Un}{}$, the decision problem $\LTL_\msf{Learn}(\Ut,\Bt,\Bl,\infty)$ is $\msf{NP}$-hard.
	\end{theorem}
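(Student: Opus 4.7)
The plan is to reduce from the hitting set problem (Theorem~\ref{thm:hitting_set_problem}) to $\LTL_\msf{Learn}(\Ut,\Bt,\Bl,\infty)$, with a separate reduction tailored to each available binary operator in $\Bt \cup \Bl$. The common template: given an instance $(l,C_1,\ldots,C_n,k)$, take $\prop := \{p_1,\ldots,p_l\}$, build one size-1 ultimately periodic word $w_i := \alpha_i^\omega$ per subset $C_i$ (with $\alpha_i \subseteq \prop$ an operator-dependent encoding of $C_i$), distribute the $w_i$ between $\mathcal{P}$ and $\mathcal{N}$ together with one complementary ``trivial'' word, and fix a size bound $B$ linear in $k$. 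The correctness of each reduction then splits into two parts: (i) a hitting set of size $\leq k$ yields a separating formula of size $\leq B$ using the available operators; (ii) any separating formula of size $\leq B$ in turn witnesses a hitting set of size $\leq k$.

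The prototype is $\lor \in \Bl$: take $\alpha_i := \{p_j \mid j \in C_i\}$, put each $w_i$ in $\mathcal{P}$, let $\mathcal{N} := \{\emptyset^\omega\}$, and set $B := 2k-1$. For (i), $\bigvee_{j \in H} p_j$ separates whenever $H$ hits every $C_i$. For (ii), since all samples are size-1 periodic, every unary operator in $\Ut$ other than $\neg$ acts as the identity, and the constraint that $\varphi$ is false on $\emptyset^\omega$ rules out separators whose minimal DNF contains a purely negative clause; together these force the canonical form $\bigvee_{j \in H} p_j$ up to equivalence and size. The dual construction for $\wedge \in \Bl$ uses $\alpha_i := \{p_j \mid j \notin C_i\}$ as negative examples and $\{p_1,\ldots,p_l\}^\omega$ as the sole positive. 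For the temporal operators, on size-1 words $p \lW q$ and $p \lM q$ behave propositionally as $p \lor q$ and $p \wedge q$, respectively, so the $\lor$- and $\wedge$-constructions transfer directly.

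The remaining cases --- the asymmetric logical operators ($\Rightarrow, \Leftarrow, \Leftrightarrow$ and their negations) and $\lU, \lR$ in isolation --- require tailored encodings. For the logical cases one exploits the specific truth table of the operator: a right-nested $\Rightarrow$-chain over $p_{j_1},\ldots,p_{j_k}$ realises the pattern ``$\neg p_{j_1} \lor \cdots \lor \neg p_{j_{k-1}} \lor p_{j_k}$'', so flipping the convention on which propositions mean ``in $C_i$'' recovers a hitting-set reduction; $\Leftrightarrow$ gives XNOR and can be combined with carefully chosen extra examples to force the separator into canonical shape; the functionally complete $\prescript{\neg}{}{\lor}$ and $\prescript{\neg}{}{\wedge}$ cases admit direct reductions via their ability to simulate $\lor$ and $\wedge$ with only constant blow-up per operator. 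For $\lU,\lR$ alone the size-1 approach fails because both operators degenerate to their right operand on such words; one instead moves to longer ultimately periodic words and uses nesting depth of the temporal operator to index positions within each word.

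The main technical obstacle is the soundness direction (ii) for each operator: one has to argue that no cleverer separating formula exists outside the canonical $k$-fold composition of the binary operator over propositions. This requires a normal-form or minimality argument per operator, ruling out that unary operators in $\Ut$, mixed nestings, or size-saving rewrites (e.g.\ De Morgan) can substitute for the $k$ disjuncts of a hitting-set witness. Having $B$ in unary (so that size bounds and operator nesting depths are aligned) keeps these arguments manageable, but they must be carried out for each of the $14$ possible binary operators in $\Op{Bin}{tp} \cup \Op{Bin}{lg}$.
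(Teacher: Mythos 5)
Your overall architecture (one reduction per binary operator, hitting set as the source problem, size-1 words for the logical operators and for $\lW,\lM$, longer words for $\lU,\lR$) matches the paper, but the proposal has a genuine gap exactly where you flag the ``main technical obstacle'': the soundness direction. You propose to prove, per operator, that any small separating formula is forced into a canonical shape such as $\bigvee_{j\in H}p_j$. This is both unnecessary and, in general, false (e.g.\ if several binary operators are available, $p_{j_1}\wedge p_{j_2}$ may separate your sample without having the canonical $\lor$-shape), so the normal-form programme you sketch would not go through as stated. The missing idea is that no canonical form is needed at all: the paper extracts the hitting set purely syntactically as the set of indices whose propositions \emph{occur} in the separating formula. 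Lemma~\ref{lem:distinguish} (a formula distinguishing two words must mention a proposition on which they differ) forces this set to intersect every $C_i$, and Lemma~\ref{lem:at_least_that_many_subformulas} (a formula in which $m$ distinct propositions occur has at least $2m-1$ subformulas) converts the size bound $B=2k-1$ into $|H|\le k$ --- and this argument is completely operator-agnostic, which is what lets one theorem cover arbitrary $\Ut,\Bt,\Bl$.

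Two of your per-operator cases also fail concretely. For $\Leftrightarrow$ and $\prescript{\neg}{}{\Leftrightarrow}$, a formula built only from these operators (and $\neg$) evaluated on a size-1 word depends only on the \emph{parity} of the number of occurrences of falsified propositions (cf.\ Lemma~\ref{lem:equiv_satis_prop}); parity functions cannot express ``$H\cap C_i\neq\emptyset$'', so no choice of ``carefully chosen extra examples'' makes the hitting-set template work, and the paper instead reduces from the Coset Weight problem (Definition~\ref{def:coset_weight}), a modulo-2 analogue of hitting set. For $\prescript{\neg}{}{\lor}$ and $\prescript{\neg}{}{\wedge}$, your ``simulate $\lor$/$\wedge$ with constant blow-up'' argument breaks the size accounting that soundness relies on: with bound $B=ck$ the proposition-counting argument only certifies a hitting set of size roughly $ck/2$, not $k$, so the reduction is no longer an equivalence; the paper repairs this by adding $k$ auxiliary propositions $x_1,\dots,x_k$ that extra positive/negative words force to occur in any separator (so they eat up half of the proposition budget), together with the bound $B=4k-1$ and a formula in which the $x_i$'s occupy the ``wrong-polarity'' slots of the alternating $\prescript{\neg}{}{\wedge}$-chain (Lemma~\ref{lem:reduc_not_and}). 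Your treatment of $\lU,\lR$ (longer words, nesting depth indexing positions) is consistent with the paper's construction, but it is only a plan, and its soundness again ultimately rests on the proposition-occurrence lemmas rather than on any normal form.
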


	Overall, there are fourteen different binary operators that we will handle, the ten logical operators $\{ \lor,\Rightarrow,\Leftarrow,\wedge,\prescript{\neg}{}{\Rightarrow},\prescript{\neg}{}{\Leftarrow},\prescript{\neg}{}{\lor},\prescript{\neg}{}{\wedge},\Leftrightarrow,\prescript{\neg}{}{\Leftrightarrow} \}$ and the four temporal operators $\{ \lU,\lR,\lW,\lM \}$. We give below a bird's eye view of how the proof of Theorem~\ref{thm:non_unary_binary_NP_hard} is structured.
	\begin{itemize}
		\item We start with the operators $\lor,\Rightarrow,\Leftarrow$, i.e. we assume that $\Bl \cap \{ \lor,\Rightarrow,\Leftarrow \} \neq \emptyset$, and we show that for all $\Bt \subseteq \Op{Bin}{tp}$ and $\Ut \subseteq \Op{Un}{}$, the decision problem $\LTL_\msf{Learn}(\Ut,\Bt,\Bl,\infty)$ is $\msf{NP}$-hard. This is stated in Corollary~\ref{coro:or}. The reduction for this case is actually a straightforward adaptation of the proof of \cite[Theorem 2]{arXivFijalkow} (that additionally makes use of the fact that our letters are subsets of propositions). In fact, the operators $\prescript{\neg}{}{\Rightarrow},\prescript{\neg}{}{\Leftarrow},\prescript{\neg}{}{\lor}$ are handled at the same time. The reduction for these operators is obtained from the previous one by reversing the positive and negative sets of words.
		\item We then handle the operators $\prescript{\neg}{}{\lor},\prescript{\neg}{}{\wedge}$ (Corollary~\ref{coro:not_or}). The reduction used for the previous item cannot be used as is, because when the operator $\prescript{\neg}{}{\wedge}$ (or the operator $\prescript{\neg}{}{\lor}$) is used successively, the formula obtained is semantically equivalent to an alternation of conjunction and disjunction. An illustrating example is given in Example~\ref{ex:successively_using_not_lor}. To circumvent this difficulty, we define a reduction that is both slightly more subtle and an adaptation of the previous one. 
		\item Before considering the last two logical operators $\Leftrightarrow,\prescript{\neg}{}{\Leftrightarrow}$, we handle the temporal operators $\lW,\lM$ (Corollary~\ref{coro:W_M}). This is actually quite straightforward. Indeed, the two previous reductions only use size-1 infinite words (i.e. a subset of propositions is repeated indefinitely). On such words, the temporal operators $\lW,\lM$ actually behave like $\lor,\wedge$ respectively. Hence, we can use the reduction of the first item.
		\item We then handle the final two logical operators $\Leftrightarrow,\prescript{\neg}{}{\Leftrightarrow}$ (Corollary~\ref{coro:equiv}). These operators behave quite differently from all other operators. In fact,  in this case the reduction is not established from the hitting set problem, but from an $\msf{NP}$-complete problem dealing with modulo-2 calculus (see Definition~\ref{def:coset_weight}), although it still uses only size-1 infinite words. Contrary to the other reductions, here we explicitly assume that the only non-binary operators considered are $\Leftrightarrow$ and $\prescript{\neg}{}{\Leftrightarrow}$.
		\item Finally, we handle the last two operators: the temporal operators $\lU$ and $\lR$ (Corollary~\ref{coro:U_R}). Contrary to the temporal operators $\lW$ and $\lM$, on size-1 words, $\lU$ and $\lR$ are equivalent to unary binary operators. Hence, the reduction that we consider does not use only size-1 infinite words. It is once again established from the hitting set problem, however the construction is more involved than the reductions of the two first items.
	\end{itemize}

	Overall, note that we provide in this subsection a reduction for all possible binary operators to properly justify the statement: \textquotedblleft if the occurrence of any binary operator is unbounded, the $\LTL$ learning decision problem is $\msf{NP}$-hard\textquotedblright{}. However, binary operators are not all equivalently relevant, and it may be that some binary operator are not relevant at all when considered alone. This seems to be particularly the case for the operators $\Leftrightarrow,\prescript{\neg}{}{\Leftrightarrow}$.

	\subsubsection{Two useful lemmas}	
	Before we handle all possible operators as described above, we first state and prove two lemmas that we will use extensively in this subsection. 
	
	First, we state a lemma that establishes a condition on the set of propositions occurring in an $\LTL$-formula that distinguishes a pair of infinite words.
	\begin{lemma}
		\label{lem:distinguish}
		Consider a subset of propositions $Y \subseteq \prop$ and two infinite words $w_1,w_2 \in (2^\prop)^\omega$. Assume that, for all $i \in \N_1$ and $x \in \prop \setminus Y$, we have $x \in w_1[i]$ if and only if $x \in w_2[i]$. Then, if the formula $\varphi$ distinguishes the words $w_1$ and $w_2$, the set $Y$ occurs in $\varphi$.
	\end{lemma}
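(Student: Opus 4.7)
\begin{proof*}[Proof plan]
The plan is to prove the contrapositive: if $Y$ does not occur in $\varphi$, then $\varphi$ does not distinguish $w_1$ and $w_2$. More precisely, I will show by structural induction on $\varphi$ the following stronger statement, which is needed to make the induction go through temporal operators: if $\prop(\varphi) \cap Y = \emptyset$, then for every $i \in \N_1$, we have $w_1[i:] \models \varphi$ if and only if $w_2[i:] \models \varphi$. Specialising to $i = 1$ gives the lemma.

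The base case handles a proposition $\varphi = p \in \prop$. Since $Y$ does not occur in $\varphi$, we have $p \in \prop \setminus Y$, so the hypothesis on $w_1, w_2$ yields $p \in w_1[i]$ iff $p \in w_2[i]$, hence $w_1[i:] \models p$ iff $w_2[i:] \models p$. For the inductive step on unary operators $\neg, \lX, \lF, \lG$, note that $\prop(\bullet \varphi') = \prop(\varphi')$ so the induction hypothesis applies to $\varphi'$; then the semantic clauses depend only on the truth values of $\varphi'$ on suffixes $w_1[j:], w_2[j:]$ for various $j \geq i$, which agree by the induction hypothesis. The same argument applies to binary operators $\bullet \in \Op{Bin}{tp} \cup \Op{Bin}{lg}$: $\prop(\varphi_1 \bullet \varphi_2) = \prop(\varphi_1) \cup \prop(\varphi_2)$, so each disjunct stays disjoint from $Y$, the induction hypothesis applies to $\varphi_1$ and $\varphi_2$ at every position, and each semantic clause is a boolean combination of these pointwise-agreeing truth values.

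There is no real obstacle here: the statement is essentially a \emph{locality} property of $\LTL$ semantics with respect to the set of propositions mentioned in a formula, and the proof is a routine structural induction. The only care needed is to universally quantify the induction hypothesis over all suffix indices $i$, so that the temporal operators $\lX, \lF, \lG, \lU, \lR, \lW, \lM$, which shift the evaluation point, can be handled uniformly.
\end{proof*}
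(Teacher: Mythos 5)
Your proposal is correct and follows essentially the same route as the paper: a structural induction showing that a formula avoiding $Y$ cannot distinguish the words, with the induction hypothesis strengthened to all suffix positions so that the temporal operators go through (the paper's proof does the same, phrasing it as "for all $i \in \N_1$, $\varphi'$ does not distinguish $w_1[i:]$ and $w_2[i:]$" in the unary temporal case). No gaps; the explicit contrapositive formulation and uniform suffix quantification are just a cleaner write-up of the paper's argument.
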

	\begin{proof}
		Let us prove this property $\mathcal{P}(\varphi)$ on $\LTL$-formulas $\varphi$ by induction. Consider such an $\LTL$-formula $\varphi$:
		\begin{itemize}
			\item Assume that $\varphi = x$ for some $x \in \prop$. Then, if $\varphi$ distinguishes $w_1$ and $w_2$, it must be that $x \in Y$, by assumption.
			\item Assume that $\varphi = \neg \varphi'$ and $\mathcal{P}(\varphi')$ holds. Then, if $\varphi$ distinguishes $w_1$ and $w_1$, so does $\varphi'$. Hence, $\mathcal{P}(\varphi)$ follows.
			\item Assume that $\varphi = \varphi_1 * \varphi_2$ for some $* \in \Op{Bin}{lg}$ and that $\mathcal{P}(\varphi_1)$ and $\mathcal{P}(\varphi_2)$ hold. Then, if neither $\varphi_1$ nor $\varphi_2$ distinguish $w_1$ and $w_2$, neither does $\varphi$. Hence, assuming that $\varphi$ distinguishes $w_1$ and $w_2$, then $Y$ occurs in $\varphi_1$ or $\varphi_2$, and it therefore also occurs in $\varphi$. Hence, $\mathcal{P}(\varphi)$ holds.
			\item Assume that $\varphi = \bullet \varphi'$ for some $\bullet \in \Op{Un}{} \setminus \{\neg\}$ and $\mathcal{P}(\varphi')$ holds. Assume that for all $i \in \N_1$, $\varphi'$ does not distinguish $w_1[i:]$ and $w_2[i:]$. Then, $\varphi'$ does not distinguish $w_1[2:]$ and $w_2[2:]$, and $\lX \varphi'$ does not distinguish $w_1$ and $w_2$. Furthermore, there is some $i \in \N_1$ such that $\varphi'$ accepts $w_1[i:]$ iff there is some $i \in \N_1$ such that $\varphi'$ accepts $w_2[i:]$. That is, $\lF \varphi'$ does not distinguish $w_1$ and $w_2$. This is similar for $\lG \varphi'$. Hence, if $\varphi$ distinguishes $w_1$ and $w_2$, $Y$ occurs in $\varphi'$, and also in $\varphi$.
			\item The case of binary temporal operators is similar.
		\end{itemize}
	\end{proof}

	In addition, we show the straightforward relation between the number of propositions occurring in an $\LTL$-formula and the size of that $\LTL$-formula. 
	\begin{lemma}
		\label{lem:at_least_that_many_subformulas}
		For all $Y \subseteq \prop$, we let $\msf{Occ}(Y),\mathsf{NbSubF}(Y)$ denote the properties on $\LTL$-formulas such that an $\LTL$-formula $\varphi$ satisfies:
		\begin{itemize}
			\item $\msf{Occ}(Y)$ if all variables in $Y$ occur in $\varphi$, i.e. $Y \subseteq \prop(\varphi)$;
			\item $\mathsf{NbSubF}(Y)$ if there are at least $2|Y|-1$ different sub-formulas of $\varphi$ where $Y$ occurs.
		\end{itemize}
		
		Then, an $\LTL$-formula $\varphi$ on $\prop$ that satisfies $\msf{Occ}(Y)$ also satisfies $\mathsf{NbSubF}(Y)$.
	\end{lemma}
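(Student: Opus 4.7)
The plan is to proceed by structural induction on $\varphi$, showing that whenever $Y \subseteq \prop(\varphi)$, the number of sub-formulas of $\varphi$ in which $Y$ occurs is at least $2|Y|-1$. The base case $\varphi = p \in \prop$ is immediate: either $Y = \emptyset$ and the bound is vacuous, or $Y = \{p\}$ and $\varphi$ itself is a sub-formula witness, matching $2 \cdot 1 - 1 = 1$. In the unary case $\varphi = \bullet \varphi'$ with $\bullet \in \Op{Un}{}$, one has $\prop(\varphi) = \prop(\varphi')$, so the induction hypothesis produces at least $2|Y|-1$ sub-formulas of $\varphi'$ where $Y$ occurs; since $\varphi \notin \msf{SubF}(\varphi')$, adding $\varphi$ itself gives at least $2|Y|$.

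The interesting case is the binary one, $\varphi = \varphi_1 \bullet \varphi_2$. I would set $Y_i := Y \cap \prop(\varphi_i)$ for $i \in \{1,2\}$ and $Y_1' := Y_1 \setminus Y_2$, and observe that $Y$ is the disjoint union $Y_2 \cup Y_1'$, so $|Y| = |Y_2| + |Y_1'|$. Writing $T_i := \{\psi \in \msf{SubF}(\varphi_i) : \prop(\psi) \cap Y \neq \emptyset\}$, the quantity I want to bound is $|T_1 \cup T_2| + 1$ (since $\varphi$ itself lies outside $\msf{SubF}(\varphi_1) \cup \msf{SubF}(\varphi_2)$). The crucial observation is that every sub-formula $\psi$ of $\varphi_1$ in which $Y_1'$ occurs contains a proposition from $Y_1' \subseteq \prop(\varphi_1) \setminus \prop(\varphi_2)$, hence cannot belong to $\msf{SubF}(\varphi_2)$, and therefore lies in $T_1 \setminus T_2$. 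I would then apply the induction hypothesis twice: to $(\varphi_2, Y_2)$ to obtain $|T_2| \geq 2|Y_2|-1$, and to $(\varphi_1, Y_1')$ to obtain at least $2|Y_1'|-1$ sub-formulas inside $T_1 \setminus T_2$. Using the disjoint decomposition $T_1 \cup T_2 = T_2 \cup (T_1 \setminus T_2)$, this yields
$$ |T_1 \cup T_2| \;\geq\; (2|Y_2|-1) + (2|Y_1'|-1) \;=\; 2|Y|-2, $$
so adding $\varphi$ gives the required $2|Y|-1$.

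The main obstacle is this binary step precisely when $Y_1 \cap Y_2 \neq \emptyset$. A naive symmetric inclusion-exclusion $|T_1 \cup T_2| = |T_1| + |T_2| - |T_1 \cap T_2|$ combined with the inductive bounds $|T_i| \geq 2|Y_i|-1$ will be insufficient, because $|T_1 \cap T_2|$ can be arbitrarily larger than $2|Y_1 \cap Y_2|$ --- for instance when $\varphi_1$ and $\varphi_2$ share a large common sub-formula built entirely over $Y_1 \cap Y_2$. The key idea is therefore to break the symmetry: apply the induction hypothesis on the $\varphi_2$ side with its full $Y_2$, and on the $\varphi_1$ side only with the \emph{exclusive} set $Y_1'$, so that the corresponding sub-formulas of $\varphi_1$ are automatically disjoint from $\msf{SubF}(\varphi_2)$ and contribute fresh elements to the union, with no double-counting required.
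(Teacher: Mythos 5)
Your proof is correct and follows essentially the same route as the paper's: the paper also handles the binary case asymmetrically, taking the full intersection of $Y$ on one side and only the exclusive propositions on the other (its $Y_1 := \prop(\varphi_1)\cap Y$ and $Y_2' := (\prop(\varphi_2)\setminus\prop(\varphi_1))\cap Y$ are just your decomposition with the roles of $\varphi_1$ and $\varphi_2$ swapped), then counts $2|Y_1|-1$ plus $2|Y_2'|-1$ disjoint sub-formulas plus $\varphi$ itself. Your remark about why symmetric inclusion-exclusion fails is a nice justification of that asymmetry, but the argument itself matches the paper's.
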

	\begin{proof}
		Let us show this lemma by induction on $\LTL$-formulas $\varphi$:
		\begin{itemize}
			\item Assume that $\varphi = x$. Then, $\varphi$ satisfies $\msf{Occ}(Y)$ only for $Y = \emptyset$ and $Y = \{ x \}$, and it also satisfies $\mathsf{NbSubF}(\emptyset)$ and $\mathsf{NbSubF}(\{x\})$;
			\item for all $\bullet \in \Op{Un}{}$, assume that $\varphi = \bullet \varphi'$. Consider any $Y \subseteq \prop$. If $\varphi$ satisfies $\msf{Occ}(Y)$, then so does $\varphi'$. Hence, $\varphi'$ satisfies $\mathsf{NbSubF}(Y)$, and therefore so does $\varphi$;
			\item for all $\bullet \in \Op{Bin}{lg} \cup \Op{Bin}{tp}$, assume that $\varphi = \varphi_1 \bullet \varphi_2$. Consider any $Y \subseteq \prop$ and assume that $\varphi$ satisfies $\msf{Occ}(Y)$. Let $Y_1 := \prop(\varphi_1) \cap Y \subseteq Y$ denote the set of different propositions in $Y$ occurring in $\varphi_1$ (they may also occur in $\varphi_2$). Let also $Y_2' := (\prop(\varphi_2) \setminus \prop(\varphi_1)) \cap Y \subseteq Y$ denote the set of different propositions in $Y$  occurring in $\varphi_2$ and not in $\varphi_1$. We have $Y = Y_1 \cup Y_2'$. Furthermore:
			\begin{itemize}
				\item by our induction hypothesis on $\varphi_1$, there are at least $k_1 := 2|Y_1|-1$ sub-formulas in $\msf{SubF}(\varphi_1) \subseteq \msf{SubF}(\varphi)$ where $Y_1$ occurs;
				\item by our induction hypothesis on $\varphi_2$, there are also at least $k_2' := 2|Y_2'|-1$ sub-formulas in $\msf{SubF}(\varphi_2) \subseteq \msf{SubF}(\varphi)$ where $Y'_2$ occurs. By definition of $Y'_2$, it follows that all these sub-formulas are not sub-formulas of $\varphi_1$;
				\item Finally, the sub-formula $\varphi$ itself is a sub-formula of $\varphi$ that is neither a sub-formula of $\varphi_1$ nor of $\varphi_2$ and where $Y$ occurs.
			\end{itemize}
			Therefore, there are at least $k_1 + k_2' + 1 = 2|Y_1|-1 + 2|Y'_2|-1 + 1 \geq 2|Y|-1$ different sub-formulas of $\varphi$ where $Y$ occurs. That is, $\varphi$ satisfies $\msf{NbSubF}(Y)$.
		\end{itemize}
	\end{proof}

	\subsubsection{Proof of Theorem~\ref{thm:non_unary_binary_NP_hard}: when $\Bl \cap \{ \lor,\Rightarrow,\Leftarrow,\wedge,\prescript{\neg}{}{\Rightarrow},\prescript{\neg}{}{\Leftarrow} \} \neq \emptyset$}
	We present a first reduction that we will consider for the operators $\lor,\Rightarrow,\Leftarrow$, along with the dual reduction for the operators $\wedge,\prescript{\neg}{}{\Rightarrow},\prescript{\neg}{}{\Leftarrow}$. This is obtained via a slight modification of the reduction presented in \cite{arXivFijalkow} to establish Theorem 2. We make use of the fact that the letters we consider are subsets of propositions instead of being a single proposition.

	\begin{definition}
		\label{def:reduction_or_easy}
		Consider an instance $(l,C,k)$ of the hitting set problem $\msf{Hit}$. We define:
		\begin{itemize}
			\item $\prop := \{ a_j,b_j \mid 1 \leq j \leq l \}$ to be the set of propositions;
			\item $\msf{Set} := \{ \alpha_i \mid 1 \leq i \leq n \}$ where for all $1 \leq i \leq n$: we let $\alpha_i := \{ c_i^1,\ldots,c_i^l \}^\omega \in (2^\prop)$ with, for all $1 \leq j \leq l$:
			\begin{align*}
			c_i^j := \begin{cases}
			a_j & \text{ if }j \in C_i \\
			b_j & \text{ if }j \notin C_i \\
			\end{cases}
			\end{align*}
			\item $\msf{EmptySet} := \{ \beta \}$ with $\beta := \{ b_1,\dots,b_l \}^\omega \in (2^\prop)$;
			\item $B = 2 k - 1$.
		\end{itemize}
		Then, we define the inputs $\msf{In}^\lor_{(l,C,k)} := (\prop,\msf{Set},\msf{EmptySet},B)$ and $\msf{In}^\wedge_{(l,C,k)} := (\prop,\msf{EmptySet},\msf{Set},B)$.
	\end{definition}

	The positive and negative words that we have defined above satisfy the observation below.
	\begin{observation}
		\label{obs:equiv_on_words_reduc}
		For all $w \in \msf{EmptySet} \cup \msf{Set}$, we have:
		\begin{equation*}
			\label{eqn:equiv_on_words}
			\forall 1 \leq j \leq l,\; a_j \models w \Leftrightarrow \neg b_j \models w
		\end{equation*}
	\end{observation}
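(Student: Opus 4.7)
The plan is to verify the equivalence directly by a simple case analysis on the shape of $w$, exploiting the fact that every word in $\msf{Set} \cup \msf{EmptySet}$ is a size-$1$ ultimately periodic word, so its letter at position $1$ (which equals the letter at every position) completely determines which propositions are satisfied. The key invariant I would extract from Definition~\ref{def:reduction_or_easy} is that for every $w \in \msf{Set} \cup \msf{EmptySet}$ and every $1 \leq j \leq l$, exactly one of $a_j$ and $b_j$ belongs to $w[1]$.

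Concretely, first I would treat $w = \beta$: then $w[1] = \{b_1, \ldots, b_l\}$, so $b_j \in w[1]$ and $a_j \notin w[1]$ for every $j$. Second, I would treat $w = \alpha_i$ for some $1 \leq i \leq n$: then $w[1] = \{c_i^1, \ldots, c_i^l\}$, and by construction of $c_i^j$, we have $a_j \in w[1]$ iff $j \in C_i$, whereas $b_j \in w[1]$ iff $j \notin C_i$. In both cases the membership of $a_j$ and $b_j$ in $w[1]$ are negations of each other.

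From this disjointness-and-exhaustiveness property, the claimed equivalence is immediate by the semantics of $\LTL$: $w \models a_j$ iff $a_j \in w[1]$, and $w \models \neg b_j$ iff $b_j \notin w[1]$, and these two conditions coincide in every case above. There is no real obstacle here; the statement is essentially a restatement of the construction, and the only thing to be careful about is not to forget the $w = \beta$ case, where both sides of the equivalence evaluate to $\mathsf{False}$ (so the equivalence holds trivially) rather than $\mathsf{True}$.
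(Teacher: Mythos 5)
Your proof is correct and takes the same route as the paper, which states this observation without proof precisely because, by construction, every letter of every word in $\msf{Set} \cup \msf{EmptySet}$ contains exactly one of $a_j,b_j$ — the invariant you verify case by case. (Your closing remark is slightly imprecise: both sides are also false for $w = \alpha_i$ when $j \notin C_i$, not only for $\beta$, but this does not affect the argument.)
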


	Let us describe on an example below what this reduction amounts to on a specific instance of the hitting set problem.
	\begin{example}
		\label{ex:1}
		Assume that $l = 4$, $C = C_1,C_2$ with $C_1 := \{ 1,3 \}$ and $C_2 := \{ 1,2,4 \}$ and $k=1$. Then, we have: $\prop = \{ a_1,b_1,a_2,b_2,a_3,b_3,a_4,b_4 \}$, $\alpha_1 = (\{ a_1,b_2,a_3,b_4 \})^\omega$, $\alpha_2 = (\{ a_1,a_2,b_3,a_4 \})^\omega$ and $\beta = (\{ b_1,b_2,b_3,b_4 \})^\omega$. Finally, $B = 1$.
	\end{example}


	The above definition satisfies the lemma below.
	\begin{lemma}
		\label{lem:reduc_or_and_and}
		Consider an instance $(l,C,k)$ is a positive instance of the hitting set problem $\msf{Hit}$ and sets of operators $\Ut \subseteq \Op{Un}{}$, $\Bt \subseteq \Op{Bin}{tp}$, and $\Bl \subseteq \Op{Bin}{}$. If $\msf{In}^\lor_{(l,C,k)}$ or $\msf{In}^\wedge_{(l,C,k)}$ is a positive instance of the decision problem $\LTL_\msf{Learn}(\Ut,\Bt,\Bl,\infty)$, then  $(l,C,k)$ is a positive instance of the hitting set problem $\msf{Hit}$.
		
		On the other hand, if $(l,C,k)$ is a positive instance of the hitting set problem $\msf{Hit}$ and $\{ \lor,\Rightarrow,\Leftarrow \} \cap \Bl \neq \emptyset$ (resp. $\{ \wedge,\prescript{\neg}{}{\Rightarrow},\prescript{\neg}{}{\Leftarrow} \} \cap \Bl \neq \emptyset$), then $\msf{In}^\lor_{(l,C,k)}$ (resp. $\msf{In}^\wedge_{(l,C,k)}$) is a positive instance of the decision problem $\LTL_\msf{Learn}(\Ut,\Bt,\Bl,\infty)$.
	\end{lemma}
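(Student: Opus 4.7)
The statement has two directions, and I would prove them separately. The easy direction is the ``completeness'' of the reduction (hitting set $\Rightarrow$ learning), and the more interesting direction is its ``soundness'' (learning $\Rightarrow$ hitting set). The soundness proof is what actually uses the two preceding lemmas, so that is where the real content lies.

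For the completeness direction, suppose $H = \{j_1,\dots,j_m\}$ is a hitting set with $m \leq k$. In the $\msf{In}^\lor_{(l,C,k)}$ case, when $\lor \in \Bl$ I would take $\varphi := a_{j_1} \lor \cdots \lor a_{j_m}$. Since $H$ hits every $C_i$ there is some $j_r \in H \cap C_i$, so $a_{j_r} \in \alpha_i[1]$ and $\alpha_i \models \varphi$; meanwhile $\beta$ contains no $a_j$, so $\beta \not\models \varphi$. The size is $2m-1 \leq 2k-1 = B$. For $\Rightarrow \in \Bl$, I would instead take $\varphi := b_{j_1} \Rightarrow (b_{j_2} \Rightarrow (\cdots \Rightarrow a_{j_m}))$; on every word of $\msf{Set} \cup \msf{EmptySet}$, Observation~\ref{obs:equiv_on_words_reduc} lets me rewrite $\neg b_{j_r}$ as $a_{j_r}$, so semantically this behaves as $a_{j_1} \lor \cdots \lor a_{j_m}$ on the relevant words. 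The $\Leftarrow$ case is symmetric (iterating $\varphi_{r+1} := \varphi_r \Leftarrow b_{j_{r+1}}$). The $\msf{In}^\wedge_{(l,C,k)}$ case is exactly dual: use $b_{j_1} \land \cdots \land b_{j_m}$ for $\land$, and use the observation to rewrite the corresponding formulas for $\prescript{\neg}{}{\Rightarrow}, \prescript{\neg}{}{\Leftarrow}$ as iterated conjunctions of $b_{j_r}$'s on the relevant words. In every case the subformula count is $2m-1 \leq B$.

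For the soundness direction, suppose $\varphi$ of size at most $B = 2k-1$ separates $\msf{Set}$ and $\msf{EmptySet}$ (in either orientation). For each $1 \leq i \leq n$, the words $\alpha_i$ and $\beta$ agree on every proposition outside $Y_i := \{a_j, b_j \mid j \in C_i\}$: for $j \notin C_i$ we have $a_j$ absent from both and $b_j$ present in both. Since $\varphi$ distinguishes $\alpha_i$ from $\beta$, Lemma~\ref{lem:distinguish} applied with $Y = Y_i$ yields some $j_i \in C_i$ with $a_{j_i} \in \prop(\varphi)$ or $b_{j_i} \in \prop(\varphi)$. Set $H := \{j_i \mid 1 \leq i \leq n\}$; by construction $H$ hits every $C_i$. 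To bound $|H|$, let $Y' := \{a_j,b_j \mid j \in H\} \cap \prop(\varphi)$; since each $j \in H$ contributes at least one element, $|Y'| \geq |H|$. All variables of $Y'$ occur in $\varphi$, so by Lemma~\ref{lem:at_least_that_many_subformulas} we have $|\varphi| \geq 2|Y'| - 1 \geq 2|H| - 1$. Combined with $|\varphi| \leq 2k-1$ this gives $|H| \leq k$, so $(l,C,k)$ is a positive instance of $\msf{Hit}$.

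The main obstacle is not any single step but rather the bookkeeping needed to cover all six logical operators simultaneously in the completeness direction: each operator requires a slightly different iterated formula, and for the four ``non-standard'' operators one has to appeal carefully to Observation~\ref{obs:equiv_on_words_reduc} to justify that the constructed formula really accepts (or rejects) the intended words. By contrast, the soundness direction is uniform over $(\Ut,\Bt,\Bl)$ because Lemmas~\ref{lem:distinguish} and~\ref{lem:at_least_that_many_subformulas} are stated for arbitrary $\LTL$-formulas, which is precisely why the $\msf{NP}$-hardness lower bound will not depend on the operator set beyond the presence of one suitable binary operator.
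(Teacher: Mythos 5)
Your proposal is correct and follows essentially the same route as the paper: the completeness direction uses the same iterated formulas (with Observation~\ref{obs:equiv_on_words_reduc} handling $\Rightarrow,\Leftarrow,\prescript{\neg}{}{\Rightarrow},\prescript{\neg}{}{\Leftarrow}$), and the soundness direction combines Lemma~\ref{lem:distinguish} and Lemma~\ref{lem:at_least_that_many_subformulas} exactly as the paper does. The only (inessential) difference is that you build the hitting set from one witness index per constraint $C_i$, whereas the paper takes all indices $j$ with $a_j$ or $b_j$ occurring in $\varphi$; both yield the same bound $|H|\leq k$.
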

	
	\begin{proof}
		Assume that $\msf{In}^\lor_{(l,C,k)}$ or $\msf{In}^\wedge_{(l,C,k)}$ is a positive instance of $\LTL_\msf{Learn}(\Ut,\Bt,\Bl,\infty)$. Consider an $\LTL$-formula $\varphi$ of size at most $B = 2k-1$ that distinguishes the sets of infinite words $\msf{Set}$ and $\msf{EmptySet}$. Let $H = \{ i \in [1,\ldots,l] \mid \{ a_i,b_i \} \cap \prop(\varphi) \neq \emptyset \}$ denote the set of integers $i$ for which at least one of the corresponding variables $a_i$ or $b_i$ occurs in $\varphi$. By Lemma~\ref{lem:at_least_that_many_subformulas}, it must be that $|H| \leq k$ since $\Size{\varphi} \leq 2k-1$. Let us show that $H$ is a hitting set. Let $1 \leq i \leq n$. The formula $\varphi$ distinguishes the infinite words $\alpha_i$ and $\beta$. 
		Furthermore, for all $j \in [1,\ldots,l] \setminus C_i$, we have $b_j \in \alpha_i[1]$ and $b_j \in \beta[1]$ (also, recall Observation~\ref{obs:equiv_on_words_reduc}). Hence, by Lemma~\ref{lem:distinguish}, it must be that $C_i \cap H \neq \emptyset$. Since this holds for all $1 \leq i \leq n$, we obtain that $H$ is indeed a hitting set. Note that the arguments that we have given here hold regardless of the operators used in the formula $\varphi$. Hence, $(l,C,k)$ is a positive instance of the hitting set problem $\msf{Hit}$.
		
		Assume now that $(l,C,k)$ is a positive instance of the hitting set problem $\msf{Hit}$. Consider a hitting set $H \subseteq [1,\ldots,l]$ of size at most $k$. We denote $H := \{ j_1,\ldots,j_r \}$ with $|H| = r \leq k$. We define $\LTL$-formulas indexed by the operator that we consider.
		\begin{itemize}
			\item $\varphi_\lor := a_{j_1} \lor a_{j_2} \lor \ldots \lor a_{j_r}$
			\item $\varphi_\Rightarrow := b_{j_1} \Rightarrow (b_{j_2} \Rightarrow ( \ldots \Rightarrow \; a_{j_r} ))$ 
			\item $\varphi_\Leftarrow := ((a_{j_1} \Leftarrow b_{j_2}) \Leftarrow \ldots ) \Leftarrow b_{j_r}$
		\end{itemize}
	 	Recall that for all $x_1,x_2 \in \bool$, we have $x_1 \Rightarrow x_2 = \neg x_1 \lor x_2$ and $x_1 \Leftarrow x_2 = x_1 \lor \neg x_2$, hence, by Observation~\ref{obs:equiv_on_words_reduc}, for all $w \in \msf{EmptySet} \cup \msf{Set}$, we have $w \models \varphi_\lor$ iff $w \models \varphi_\Rightarrow$ iff $w \models \varphi_\Leftarrow$.
	 	
	 	We also define the $\LTL$-formulas below.
		\begin{itemize}
			\item $\varphi_\wedge := b_{j_1} \wedge b_{j_2} \wedge \ldots \wedge b_{j_r}$
			\item $\varphi_{\prescript{\neg}{}{\Leftarrow}} := a_{j_1} \prescript{\neg}{}{\Leftarrow} \; (a_{j_2} \prescript{\neg}{}{\Leftarrow} \; ( \ldots \prescript{\neg}{}{\Leftarrow} \; b_{j_r} ))$ 
			\item $\varphi_{\prescript{\neg}{}{\Rightarrow}} := ((b_{j_1} \prescript{\neg}{}{\Rightarrow} \; a_{j_2}) \prescript{\neg}{}{\Rightarrow} \; \ldots ) \prescript{\neg}{}{\Rightarrow} \; a_{j_r}$
		\end{itemize}
		Recall that for all $x_1,x_2 \in \bool$, we have $x_1 \prescript{\neg}{}{\Rightarrow} \; x_2 = x_1 \wedge \neg x_2$ and $x_1 \prescript{\neg}{}{\Leftarrow} \; x_2 = \neg x_1 \wedge x_2$. Hence, by Observation~\ref{obs:equiv_on_words_reduc}, for all $w \in \msf{EmptySet} \cup \msf{Set}$, we have $w \models \varphi_\wedge$ iff $w \models \varphi_{\prescript{\neg}{}{\Rightarrow}}$ iff $w \models \varphi_{\prescript{\neg}{}{\Leftarrow}}$ and $w \models \varphi_\wedge$ iff $w \not\models \varphi_\lor$.
	
		Clearly, we have $\Size{\varphi_\lor} = \Size{\varphi_\Rightarrow} = \Size{\varphi_\Leftarrow} = \Size{\varphi_\wedge} = \Size{\varphi_{\prescript{\neg}{}{\Rightarrow}}} = \Size{\varphi_{\prescript{\neg}{}{\Leftarrow}}} = 2r-1 \leq B$.
		
		Furthermore, consider any $1 \leq i \leq n$. Let $j \in H \cap C_i \neq \emptyset$. We have $\alpha_i \models a_j$, hence $\alpha_i \models \varphi_\lor$. This holds for all  $1 \leq i \leq n$.  Furthermore, we also have $\beta \not\models \varphi_\lor$. Therefore, $\varphi_\lor$ accepts $\msf{Set}$ and rejects $\msf{EmptySet}$. It is also the case for $\varphi_\Rightarrow$ and $\varphi_\Leftarrow$. It is the opposite for the formulas $\varphi_\wedge,\varphi_{\prescript{\neg}{}{\Rightarrow}},\varphi_{\prescript{\neg}{}{\Leftarrow}}$ (i.e. they accept $\msf{EmptySet}$ and reject $\msf{Set}$). Hence, if $\{ \lor,\Rightarrow,\Leftarrow \} \cap \Bl \neq \emptyset$ then $\msf{In}^\lor_{(l,C,k)}$ is a positive instance of $\LTL_\msf{Learn}(\Ut,\Bt,\Bl,\infty)$ and if $\{ \wedge,\prescript{\neg}{}{\Rightarrow},\prescript{\neg}{}{\Leftarrow} \} \cap \Bl \neq \emptyset$, then $\msf{In}^\wedge_{(l,C,k)}$ is a positive instance of the $\LTL_\msf{Learn}(\Ut,\Bt,\Bl,\infty)$.
	\end{proof}
	
	We obtain the corollary below.
	\begin{corollary}
		\label{coro:or}
		Consider a set $\Bl \subseteq \Op{Bin}{lg}$ of binary logical operators and assume that $\Bt \cap \{ \lor,\Rightarrow,\Leftarrow,\wedge,\prescript{\neg}{}{\Rightarrow},\prescript{\neg}{}{\Leftarrow} \} \neq \emptyset$. For all $\Ut \subseteq \Op{Un}{}$ and $\Bt \subseteq \Op{Bin}{tp}$, the $\LTL_\msf{Learn}(\Ut,\Bt,\Bl,\infty)$ decision problem is $\msf{NP}$-hard.
	\end{corollary}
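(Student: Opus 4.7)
The plan is to establish $\msf{NP}$-hardness by a (logspace) many-one reduction from the hitting set problem $\msf{Hit}$, which is $\msf{NP}$-hard by Theorem~\ref{thm:hitting_set_problem} (even with $k$ written in unary, by the accompanying observation). The reduction, together with its correctness, is essentially already assembled: Definition~\ref{def:reduction_or_easy} supplies the two candidate transformations $(l,C,k) \mapsto \msf{In}^\lor_{(l,C,k)}$ and $(l,C,k) \mapsto \msf{In}^\wedge_{(l,C,k)}$, and Lemma~\ref{lem:reduc_or_and_and} proves both directions of equivalence.

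Concretely, I would split the argument according to which operator appears in $\Bl$. If $\Bl \cap \{\lor,\Rightarrow,\Leftarrow\} \neq \emptyset$, use the map $(l,C,k) \mapsto \msf{In}^\lor_{(l,C,k)}$. Lemma~\ref{lem:reduc_or_and_and} then gives that $\msf{In}^\lor_{(l,C,k)}$ is a positive instance of $\LTL_\msf{Learn}(\Ut,\Bt,\Bl,\infty)$ iff $(l,C,k)$ is a positive instance of $\msf{Hit}$ (the forward direction of the lemma is operator-agnostic; the backward direction is exactly the hypothesis on $\Bl$). Symmetrically, if $\Bl \cap \{\wedge,\prescript{\neg}{}{\Rightarrow},\prescript{\neg}{}{\Leftarrow}\} \neq \emptyset$, use $(l,C,k) \mapsto \msf{In}^\wedge_{(l,C,k)}$ and invoke the dual halves of Lemma~\ref{lem:reduc_or_and_and}. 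Since $\Bl \cap \{\lor,\Rightarrow,\Leftarrow,\wedge,\prescript{\neg}{}{\Rightarrow},\prescript{\neg}{}{\Leftarrow}\} \neq \emptyset$ by hypothesis, at least one of the two cases applies.

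The only thing left to check is that each reduction can be computed in logspace (as announced in the preamble to Section~\ref{sec:Def}). This is routine: given $(l,C,k)$, the output $\msf{In}^\lor_{(l,C,k)}$ consists of the propositions $\{a_j,b_j \mid 1 \leq j \leq l\}$, the $n$ ultimately periodic words $\alpha_i$ each of size $l$, the word $\beta$ of size $l$, and the bound $B = 2k-1$ in unary. Each character $c_i^j$ can be determined by a single membership query $j \in C_i$ against the input, which requires only constant counters bounded by $l$, $n$, and $k$; the unary bound $B = 2k-1$ is printed using a counter bounded by $k \leq l$. The computation uses $O(\log(l+n+k))$ space, as required. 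The same analysis applies to $\msf{In}^\wedge_{(l,C,k)}$ (which differs only by swapping the roles of $\mathcal{P}$ and $\mathcal{N}$).

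There is no real obstacle here: the mathematical content is entirely encapsulated in Lemma~\ref{lem:reduc_or_and_and}, and the corollary is obtained by a direct citation plus a case split on $\Bl$, followed by the logspace bookkeeping sketched above.
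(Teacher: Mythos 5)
Your proposal is correct and follows exactly the paper's route: the paper's own proof of Corollary~\ref{coro:or} likewise just invokes Lemma~\ref{lem:reduc_or_and_and} (with the implicit case split on which operator of $\{ \lor,\Rightarrow,\Leftarrow \}$ vs.\ $\{ \wedge,\prescript{\neg}{}{\Rightarrow},\prescript{\neg}{}{\Leftarrow} \}$ lies in $\Bl$) together with the observation that $\msf{In}^\lor_{(l,C,k)}$ and $\msf{In}^\wedge_{(l,C,k)}$ are computable in logarithmic space. You merely spell out the logspace bookkeeping a bit more explicitly than the paper does.
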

	\begin{proof}
		This is a direct consequence of Lemmas
		~\ref{lem:reduc_or_and_and} and the fact that the instances $\msf{In}^{\prescript{\neg}{}{\lor}}_{(l,C,k)}$ and $\msf{In}^{\prescript{\neg}{}{\wedge}}_{(l,C,k)}$ can be computed in logarithmic space from $(l,C,k)$.
	\end{proof}
	
	\subsubsection{Proof of Theorem~\ref{thm:non_unary_binary_NP_hard}: when $\Bl \cap \{ \prescript{\neg}{}{\lor},\prescript{\neg}{}{\wedge} \} \neq \emptyset$.}
	The case of the operators $\prescript{\neg}{}{\lor}$ and $\prescript{\neg}{}{\wedge}$ is slightly different. Indeed, contrary to the above operators, when successively using one of these operators, we obtain (semantically) an alternation of conjunctions and disjunctions. We describe it on an example below.
	\begin{example}
		\label{ex:successively_using_not_lor}
		Consider six variables $x_1,x_2,x_3,x_4,x_5,x_6$. Assume that we want to use them in a single $\LTL$-formula. If we can use the $\lor$ operator, we can consider the formula $x_1 \lor x_2 \lor x_3 \lor x_4 \lor x_5 \lor x_6$. If we can use the $\Rightarrow$ operator, we can consider the formula $x_1 \Rightarrow (x_2 \Rightarrow (x_3 \Rightarrow (x_4 \Rightarrow (x_5 \Rightarrow x_6))))$. Note that, up to some negation on the variables, this amounts semantically to only using the $\lor$ operator. However, assume now that we can only use the $\prescript{\neg}{}{\wedge}$ operator. For instance, consider:
		\begin{equation*}
			\varphi := x_1 \prescript{\neg}{}{\wedge} \; (x_2 \prescript{\neg}{}{\wedge} \; (x_3 \prescript{\neg}{}{\wedge} \; (x_4 \prescript{\neg}{}{\wedge} \; (x_5 \prescript{\neg}{}{\wedge} \; x_6))))
		\end{equation*}
		It is semantically equivalent to:
		\begin{equation*}
			\varphi \equiv \neg x_1 \lor (x_2 \wedge (\neg x_3 \lor (x_4 \wedge (\neg x_5 \lor \neg x_6)))
		\end{equation*}
		Here, we have both operators $\lor$ and $\wedge$. 
	\end{example}

	To circumvent this difficulty, we are going to change the reduction by adding propositions that will always hold on the words of interest. We can then place these propositions where $x_2$ and $x_4$ were in the above formula. That way, on the infinite words where these propositions hold, we obtain a disjunction, as in the formula above. 
	
	We start with the reduction for the $\prescript{\neg}{}{\wedge}$ operator.
	\begin{definition}
		\label{def:reduction_or_not_so_easy}
		Consider an instance $(l,C,k)$ of the hitting set problem $\msf{Hit}$. If $k \geq l$, $(l,C,k)$ is obviously a positive instance of the hitting set problem $\msf{Hit}$, and we define $\msf{In}^{\prescript{\neg}{}{\wedge}}_{(l,C,k)}$ to be an arbitrary positive instance of the $\LTL$ learning decision problem. Otherwise, we define:
		\begin{itemize}
			\item $\prop := \{ a_i,b_i \mid 1 \leq j \leq l \} \cup \{ x_i \mid 1 \leq i \leq k \}$ to be the set of propositions;
			\item $\mathcal{P} := \{ \alpha_1,\ldots,\alpha_n,\beta \}$ where for all $1 \leq i \leq n$: we let $\alpha_i := \{ x_1,\ldots,x_{k},c_i^1,\ldots,c_i^l \}^\omega \in (2^\prop)$ with, for all $1 \leq j \leq l$:
			\begin{align*}
				c_i^j := \begin{cases}
					a_j & \text{ if }j \in C_i \\
					b_j & \text{ if }j \notin C_i \\
				\end{cases}
			\end{align*}
		 	and $\beta := \{ x_1,\ldots,x_{k-1},b_1,\ldots,b_l \}^\omega$;
			\item $\mathcal{N} := \{ \alpha,\beta_1,\ldots,\beta_k \}$ with $\alpha := \{ x_1,\ldots,x_{k},b_1,\dots,b_l \}^\omega \in (2^\prop)$ and, for all $1 \leq i \leq k-1$, we have $\beta_i := (\{ x_1,\ldots,x_{k-1},b_1,\ldots,b_l \} \setminus \{ x_i \})^\omega$ and $\beta_k := \{ x_1,\ldots,x_{k},b_1,\ldots,b_l \}^\omega$;
			\item $B = 4 k - 1$.
		\end{itemize}
		Then, we define the input $\msf{In}^{\prescript{\neg}{}{\wedge}}_{(l,C,k)} := (\prop,\mathcal{P},\mathcal{N},B)$ of the $\LTL_\msf{Learn}$ decision problem.
	\end{definition}

	Similarly to the previous reduction, we have the following lemma. 
	\begin{lemma}
		\label{lem:reduc_not_and}
		Consider a set $\Bl$ of binary logical operators and assume that $\prescript{\neg}{}{\wedge} \in \Bl$. Then, for all $\Ut \subseteq \Op{Un}{}$ and $\Bt \subseteq \Op{Bin}{tp}$, $(l,C,k)$ is a positive instance of the hitting set problem $\msf{Hit}$ if and only if $\msf{In}^{\prescript{\neg}{}{\wedge}}_{(l,C,k)}$ is a positive instance of the the $\LTL_\msf{Learn}(\Ut,\Bt,\Bl,\infty)$ decision problem.
	\end{lemma}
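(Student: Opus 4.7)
The plan is to prove the two directions separately, using the key observation that on every word in $\mathcal{P}\cup\mathcal{N}$ the proposition $b_j$ is true iff $a_j$ is false, so a leaf $b_j$ can be used wherever one would otherwise write $\neg a_j$. This is what makes it possible to realise the target function $\bigvee_{j\in H} a_j\lor(\neg x_k\wedge\bigwedge_{i<k}x_i)$ by a pure $\prescript{\neg}{}{\wedge}$-nesting that still fits inside the bound $4k-1$, avoiding the doubling $A\wedge B\equiv(A\prescript{\neg}{}{\wedge} B)\prescript{\neg}{}{\wedge}(A\prescript{\neg}{}{\wedge} B)$ that a naive simulation would incur. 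The case $k\ge l$ is handled by the definition of $\msf{In}^{\prescript{\neg}{}{\wedge}}_{(l,C,k)}$.

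For the completeness direction, fix a hitting set $H=\{j_1,\ldots,j_r\}$ with $r\le k$ and any surjection $\pi\colon[1,\ldots,k]\to H$ (e.g.\ the cyclic $\pi(m)=j_{((m-1)\bmod r)+1}$). Define $\psi_k:=b_{\pi(k)}\prescript{\neg}{}{\wedge} x_k$ and, for $m<k$, $\psi_m:=b_{\pi(m)}\prescript{\neg}{}{\wedge}(x_m\prescript{\neg}{}{\wedge}\psi_{m+1})$, and set $\varphi:=\psi_1$. Since the image of $\pi$ has only $r$ elements, the distinct sub-formulas of $\varphi$ are the $r$ leaves $b_{j_s}$, the $k$ leaves $x_m$, and the $2k-1$ internal $\prescript{\neg}{}{\wedge}$-expressions $\psi_m$ and $x_m\prescript{\neg}{}{\wedge}\psi_{m+1}$, giving $|\varphi|=r+3k-1\le 4k-1$. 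The simplifications $\psi_m\equiv\neg b_{\pi(m)}\lor(x_m\wedge\psi_{m+1})$ for $m<k$ and $\psi_k\equiv\neg b_{\pi(k)}\lor\neg x_k$ then make verification a downward induction: on $\alpha_i$, $\psi_m$ is true iff some $\pi(m')\in C_i$ for $m'\ge m$, hence $\psi_1$ is true by the hitting-set property; on $\beta$ the innermost $\psi_k$ is true because $x_k\notin\beta$, and the recursion preserves truth since $x_m$ is present for $m<k$; on $\alpha$ and $\beta_k$ every $b_j$ and $x_m$ hold, so $\psi_k=\Fl$ and the value propagates upward; on each $\beta_i$ with $i<k$ the missing $x_i$ forces $\psi_i=\Fl$, again propagating upward.

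For the soundness direction, let $\varphi\in\LTL(\prop,\Ut,\Bt,\Bl)$ separate $\mathcal{P}$ from $\mathcal{N}$ with $|\varphi|\le 4k-1$, and set $H:=\{j\in[1,\ldots,l]\mid\{a_j,b_j\}\cap\prop(\varphi)\ne\emptyset\}$. For each $i$ the words $\alpha_i\in\mathcal{P}$ and $\alpha\in\mathcal{N}$ agree outside $\{a_j,b_j\mid j\in C_i\}$, so Lemma~\ref{lem:distinguish} forces $H\cap C_i\ne\emptyset$ and $H$ hits every $C_i$. Similarly, $\beta$ and each $\beta_i$ ($i<k$) differ only on $\{x_i\}$, and $\beta$ versus $\alpha$ differ only on $\{x_k\}$, so another appeal to Lemma~\ref{lem:distinguish} yields $\{x_1,\ldots,x_k\}\subseteq\prop(\varphi)$. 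Picking one element of $\{a_j,b_j\}\cap\prop(\varphi)$ per $j\in H$ together with all $x_m$ produces $Y\subseteq\prop(\varphi)$ of size $k+|H|$, and Lemma~\ref{lem:at_least_that_many_subformulas} then gives $|\varphi|\ge 2|Y|-1=2(k+|H|)-1$, so $|H|\le k$ and $H$ is a hitting set of size at most $k$.

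The main obstacle is the completeness direction: the negatives $\alpha,\beta_1,\ldots,\beta_k$ are engineered precisely to forbid cheap shortcuts such as \textquotedblleft accept iff $\neg x_k$\textquotedblright{} or \textquotedblleft accept iff some $a_j$ holds\textquotedblright, so the formula must reference every $x_1,\ldots,x_k$; yet any literal NAND-translation of the target function doubles on every $\wedge$ and overflows $4k-1$ already for moderate $k$. Finding the alternating pattern of $b_{\pi(m)}$ and $x_m$, which exploits the $b_j\equiv\neg a_j$ coincidence to spend exactly two leaves and one new NAND per nested layer, and which lets the hitting-set propositions be reused at zero additional cost, is what makes the encoding fit.
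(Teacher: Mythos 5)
Your proposal is correct and follows essentially the same route as the paper: the same alternating $\prescript{\neg}{}{\wedge}$-chain $b \prescript{\neg}{}{\wedge} (x_1 \prescript{\neg}{}{\wedge} (b \prescript{\neg}{}{\wedge} (\ldots)))$ as separating formula, and the same combination of Lemma~\ref{lem:distinguish} (forcing every $x_i$ and a hit in each $C_i$) with Lemma~\ref{lem:at_least_that_many_subformulas} (bounding $|H|$ by $k$) for the converse. The only cosmetic difference is that you reuse hitting-set elements via a cyclic surjection where the paper pads $H$ to a set $H'$ of size exactly $k$; both yield a formula within the bound $B=4k-1$.
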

	This proof of this lemma is quite close to the proof of Lemma~\ref{lem:reduc_or_and_and}. 
	\begin{proof}
		If $k \geq l$, the equivalence is straightforward. We assume in the following that $k \leq l$. 
		
		Assume that $(l,C,k)$ is a positive instance of the hitting set problem $\msf{Hit}$. Consider a hitting set $H \subseteq [1,\ldots,l]$ of size at most $k$. Consider any set $H' \subseteq [1,\ldots,l]$ of size exactly $k$ such that $H \subseteq H'$. We let $H' := \{ j_1,\ldots,j_k \}$. We 
		let:
		\begin{itemize}
			\item $\varphi_{\prescript{\neg}{}{\wedge}} := b_{j_1} \prescript{\neg}{}{\wedge} \; (x_1 \prescript{\neg}{}{\wedge} \; (b_{j_2} \prescript{\neg}{}{\wedge} \; (\ldots x_{k-1} \prescript{\neg}{}{\wedge} \; (b_{j_k} \prescript{\neg}{}{\wedge} \; x_k))))$
			\item $\varphi_{\prescript{\neg}{}{\wedge}}^{\msf{expl}} := \neg b_{j_1} \lor (x_1 \wedge (\neg b_{j_2} \lor (\ldots x_{k-1} \wedge (\neg b_{j_k} \vee \neg x_k))))$
		\end{itemize}
		The formula $\varphi_{\prescript{\neg}{}{\wedge}}^{\msf{expl}}$ is written to make more explicit what $\varphi_{\prescript{\neg}{}{\wedge}}$ is equal to. Indeed, recall that for all $x_1,x_2 \in \bool$, we have $x_1 \prescript{\neg}{}{\wedge} \; x_2 = \neg x_1 \lor \neg x_2$, hence, for all $w \in \mathcal{P} \cup \mathcal{N}$, we have $w \models \varphi_{\prescript{\neg}{}{\wedge}}$ iff $w \models \varphi_{\prescript{\neg}{}{\wedge}}^{\msf{expl}}$. Furthermore, we have $\Size{\varphi_{\prescript{\neg}{}{\wedge}}} = 4k - 3 = B$.
		
		Let $\gamma \in \{ \alpha,\alpha_1,\ldots,\alpha_n\}$. We have $\gamma \models x_1 \wedge \ldots \wedge x_{k}$. Hence, $\gamma \models \varphi_{\prescript{\neg}{}{\wedge}}$ if and only if $\gamma \models \neg b_{j_1} \lor \ldots \lor \neg b_{j_k}$. Therefore, $\alpha \not \models \varphi_{\prescript{\neg}{}{\wedge}}$. However, for any $1 \leq i \leq n$, there is $j \in H \cap C_i \neq \emptyset$ such that $\alpha_i \models \neg b_j$, and thus $\alpha_i \models \varphi_{\prescript{\neg}{}{\wedge}}$. 
		In addition, consider any $\delta \in \{ \beta,\beta_1,\ldots,\beta_k\}$. 
		For all $j \in [1,\ldots,l]$, we have $\delta \models b_j$. Hence, $\delta \models \varphi_{\prescript{\neg}{}{\wedge}}$ if and only if $\delta \models x_1 \wedge \ldots \wedge x_{k-1} \wedge \neg x_k$. Hence,  $\beta \models \varphi_{\prescript{\neg}{}{\wedge}}$ whereas, for all $1 \leq i \leq k$, we have $\beta_i \not\models \varphi_{\prescript{\neg}{}{\wedge}}$. 
		Overall, the formula $\varphi_{\prescript{\neg}{}{\wedge}}$ accepts $\mathcal{P}$ and rejects $\mathcal{N}$. Hence, the decision problem $\msf{In}^{\prescript{\neg}{}{\wedge}}_{(l,C,k)}$ is a positive instance of $\LTL_\msf{Learn}(\Ut,\Bt,\Bl,\infty)$. 
		
		Assume now the decision problem $\msf{In}^{\prescript{\neg}{}{\wedge}}_{(l,C,k)}$ is a positive instance of $\LTL_\msf{Learn}(\Ut,\Bt,\Bl,\infty)$. Consider an $\LTL$-formula $\varphi$ of size at most $B = 4k-1$ that distinguishes the sets of infinite words $\mathcal{P}$ and $\mathcal{N}$. Let $H = \{ i \in [1,\ldots,l] \mid \{ a_i,b_i \} \cap \prop(\varphi) \neq \emptyset \}$ denote the set of integers $i$ for which at least one of the corresponding variables $a_i$ or $b_i$ occurs in $\varphi$. Let us show that this set is of size at most $k$ and intersects all sets $C_i$. 
		
		By Lemma~\ref{lem:at_least_that_many_subformulas}, we have $\prop(\varphi) \leq 2k$. Furthermore, by Lemma~\ref{lem:distinguish}, since for all $1 \leq i \leq k$, the formula $\varphi$ distinguishes the words $\beta$ and $\beta_i$, it follows that $x_i \in \prop(\varphi)$. Hence, $|H| \leq 2k - k = k$.
		
		Furthermore, let $1 \leq i \leq n$. The formula $\varphi$ distinguishes the infinite words $\alpha_i$ and $\alpha$. In addition, for all $j \in [1,\ldots,l] \setminus C_i$, we have $b_j \in \alpha_i[1] \cap \alpha[1]$, $a_j \notin \alpha_i[1] \cup \alpha[1]$ and $\alpha_i[1] \cap \{ x_1,\ldots,x_k \} = \alpha[1] \cap \{ x_1,\ldots,x_k \}$. Hence, by Lemma~\ref{lem:distinguish} and by Definition of $\alpha_i$, it must be that $C_i \cap H \neq \emptyset$. Since this holds for all $1 \leq i \leq n$, we obtain that $H$ is indeed a hitting set. Hence, $(l,C,k)$ is a positive instance of the hitting set problem $\msf{Hit}$.
	\end{proof}

	Contrary to the reductions we defined in Definition~\ref{def:reduction_or_easy}, the reduction for the operator $\prescript{\neg}{}{\lor}$ is not obtained from the reduction for $\prescript{\neg}{}{\wedge}$ by reversing the positive and negative sets of words, though it is quite similar. We give it below.
	\begin{definition}
		\label{def:reduction_and_not_so_easy}
		Consider an instance $(l,C,k)$ of the hitting set problem $\msf{Hit}$. If $k \geq l$, $(l,C,k)$ is obviously a positive instance of the hitting set problem $\msf{Hit}$, and we define $\msf{In}^{\prescript{\neg}{}{\lor}}_{(l,C,k)}$ to be an arbitrary positive instance of the $\LTL$ learning decision problem. Otherwise, we define:
		\begin{itemize}
			\item $\prop := \{ a_i,b_i \mid 1 \leq j \leq l \} \cup \{ x_i \mid 1 \leq i \leq k \}$ to be the set of propositions;
			\item $\mathcal{P} := \{ \alpha,\beta_1,\ldots,\beta_k \}$ with $\alpha := \{ b_1,\dots,b_l \}^\omega \in (2^\prop)$ and, for all $1 \leq i \leq k-1$, we have $\beta_i := \{b_1,\dots,b_l,x_i,x_k\}^\omega$ and $\beta_k = \{b_1,\dots,b_l\}^\omega$;
			\item $\mathcal{N} := \{ \alpha_1,\ldots,\alpha_n,\beta \}$ where for all $1 \leq i \leq n$: we let $\alpha_i := \{ c_i^1,\ldots,c_i^l \}^\omega \in (2^\prop)$ with, for all $1 \leq j \leq l$:
			\begin{align*}
				c_i^j := \begin{cases}
					a_j & \text{ if }j \in C_i \\
					b_j & \text{ if }j \notin C_i \\
				\end{cases}
			\end{align*}
			and $\beta := \{b_1,\dots,b_l,x_k\}^\omega$;
			\item $B = 4 k - 1$.
		\end{itemize}
		Then, we define the inputs $\msf{In}^{\prescript{\neg}{}{\lor}}_{(l,C,k)} := (\prop,\mathcal{P},\mathcal{N},B)$ of the $\LTL_\msf{Learn}$ decision problem.
	\end{definition}
	
	Similarly to the previous reduction, we have the following lemma. 
	\begin{lemma}
		\label{lem:reduc_not_or}
		Consider a set $\Bl$ of binary logical operators and assume that $\prescript{\neg}{}{\lor} \in \Bl$. Then, for all $\Ut \subseteq \Op{Un}{}$ and $\Bt \subseteq \Op{Bin}{tp}$, $(l,C,k)$ is a positive instance of the hitting set problem $\msf{Hit}$ if and only if $\msf{In}^{\prescript{\neg}{}{\lor}}_{(l,C,k)}$ is a positive instance of the the $\LTL_\msf{Learn}(\Ut,\Bt,\Bl,\infty)$ decision problem.
	\end{lemma}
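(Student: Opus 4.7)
The plan is to follow the same template as Lemma~\ref{lem:reduc_not_and}, adapted to the De~Morgan dual $\prescript{\neg}{}{\lor}$ and to the slightly reshuffled sets $\mathcal{P}$ and $\mathcal{N}$. The trivial case $k \geq l$ is settled by definition, so I assume $k < l$ throughout.

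For the easy direction, given a hitting set $H \subseteq [1,\ldots,l]$ of size at most $k$, I first pad it to some $H' = \{j_1, \ldots, j_k\} \supseteq H$ of size exactly $k$ and consider the candidate formula
\[
\varphi_{\prescript{\neg}{}{\lor}} := a_{j_1} \prescript{\neg}{}{\lor} \bigl(x_1 \prescript{\neg}{}{\lor} \bigl(a_{j_2} \prescript{\neg}{}{\lor} \bigl(x_2 \prescript{\neg}{}{\lor} \cdots \prescript{\neg}{}{\lor} (a_{j_k} \prescript{\neg}{}{\lor} x_k)\bigr)\bigr)\bigr).
\]
Using $u \prescript{\neg}{}{\lor} v \equiv \neg u \wedge \neg v$, a short induction on the depth yields the semantic expansion
\[
\varphi_{\prescript{\neg}{}{\lor}} \equiv \neg a_{j_1} \wedge \bigl(x_1 \vee \bigl(\neg a_{j_2} \wedge \bigl(x_2 \vee \cdots \vee (\neg a_{j_k} \wedge \neg x_k)\bigr)\bigr)\bigr),
\]
which has size $4k-1 = B$. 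Checking each model is then routine: on $\alpha$ and on $\beta_k$ every atom is false and the formula bottoms out at the satisfied innermost $\neg a_{j_k} \wedge \neg x_k$; on $\beta_i$ for $i < k$ the disjunct $x_i$ is true; on $\alpha_i$ every $x_m$ is false, so all $x_m$-disjuncts vanish and $\neg x_k$ holds, collapsing the formula to $\bigwedge_{m=1}^{k} \neg a_{j_m}$, which is false because $H \subseteq H'$ hits $C_i$; and on $\beta$ the $x_m$ for $m<k$ are false while $x_k$ makes $\neg x_k$ false, so the innermost conjunction fails and the whole formula evaluates to false.

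For the hard direction, I take a formula $\varphi$ of size at most $4k-1$ separating $\mathcal{P}$ from $\mathcal{N}$ and set $H := \{j \in [1,\ldots,l] \mid \{a_j,b_j\} \cap \prop(\varphi) \neq \emptyset\}$. By Lemma~\ref{lem:at_least_that_many_subformulas}, $|\prop(\varphi)| \leq 2k$. To pin down that $|H| \leq k$, I apply Lemma~\ref{lem:distinguish} to the pairs $(\beta_i,\beta) \in \mathcal{P} \times \mathcal{N}$ for $i \in [1,\ldots,k-1]$, which agree everywhere except on $x_i$, and to $(\alpha,\beta)$, which agree everywhere except on $x_k$; this forces $\{x_1,\ldots,x_k\} \subseteq \prop(\varphi)$, leaving at most $2k-k = k$ room for the $a_j$'s and $b_j$'s. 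To show that $H$ hits each $C_i$, I compare $\alpha \in \mathcal{P}$ with $\alpha_i \in \mathcal{N}$: they agree outside $\{a_j,b_j \mid j \in C_i\}$ (no $x_m$ appears in either), so Lemma~\ref{lem:distinguish} forces some $j \in C_i$ into $H$.

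The main obstacle will be the semantic expansion of the iterated $\prescript{\neg}{}{\lor}$ formula: because $\prescript{\neg}{}{\lor}$ alternates between conjunctions and disjunctions upon unfolding, the polarities of atoms flip at every level, and the auxiliary propositions $x_1,\ldots,x_{k-1}$ must be interleaved with the $a_{j_m}$'s at exactly the right positions so that the single formula is simultaneously falsified on all $\alpha_i$'s (where every $x_m$ disappears) and on $\beta$ (where only $x_k$ is present and the innermost $\neg x_k$ blocks the conjunction). Once the correct interleaving and its expansion are pinned down, the remainder is a direct transcription of the argument for Lemma~\ref{lem:reduc_not_and}.
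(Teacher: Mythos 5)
Your proof is correct and takes essentially the same route as the paper: you use the same candidate formula $a_{j_1} \prescript{\neg}{}{\lor} (x_1 \prescript{\neg}{}{\lor} (\ldots (a_{j_k} \prescript{\neg}{}{\lor} x_k)))$ built from a padded hitting set of size exactly $k$, and for the converse the same combination of Lemma~\ref{lem:at_least_that_many_subformulas} and Lemma~\ref{lem:distinguish} applied to the pairs $(\beta_i,\beta)$, $(\alpha,\beta)$ and $(\alpha,\alpha_i)$. The paper only sketches this lemma by referring to Lemma~\ref{lem:reduc_not_and}; your write-up fills in exactly those details, and the model-by-model checks are accurate for the sets $\mathcal{P},\mathcal{N}$ of Definition~\ref{def:reduction_and_not_so_easy}.
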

	The proof of this lemma is very close to the proof of Lemma~\ref{lem:reduc_not_and}. Hence, we only give the formula using only the $\prescript{\neg}{}{\lor}$ operator built from a hitting set that we consider.
	\begin{proof}[Proof sketch]
		If $k \geq l$, the equivalence is straightforward. Let us now assume that $k \leq l$. 
		
		Assume that $(l,C,k)$ is a positive instance of the hitting set problem $\msf{Hit}$. Consider a hitting set $H \subseteq [1,\ldots,l]$ of size at most $k$. Consider any set $H' \subseteq [1,\ldots,l]$ of size exactly $k$ such that $H \subseteq H'$. We let $H' := \{ j_1,\ldots,j_k \}$ and:
		\begin{itemize}
			\item $\varphi_{\prescript{\neg}{}{\lor}} := a_{j_1} \prescript{\neg}{}{\lor} \; (x_1 \prescript{\neg}{}{\lor} \; (a_{j_2} \prescript{\neg}{}{\lor} \; (\ldots x_{k-1} \prescript{\neg}{}{\lor} \; (a_{j_k} \prescript{\neg}{}{\lor} \; x_k))))$
			\item $\varphi_{\prescript{\neg}{}{\lor}}^{\msf{expl}} := \neg a_{j_1} \wedge (x_1 \lor (\neg a_{j_2} \wedge (\ldots x_{k-1} \lor (\neg a_{j_k} \wedge \neg x_k))))$
		\end{itemize}
		The formula $\varphi_{\prescript{\neg}{}{\lor}}^{\msf{expl}}$ is written to make more explicit what $\varphi_{\prescript{\neg}{}{\lor}}$ is equal to. Indeed, recall that for all $x_1,x_2 \in \bool$, we have $x_1 \prescript{\neg}{}{\lor} \; x_2 = \neg x_1 \wedge \neg x_2$, hence, for all $w \in \mathcal{P} \cup \mathcal{N}$, we have $w \models \varphi_{\prescript{\neg}{}{\lor}}$ iff $w \models \varphi_{\prescript{\neg}{}{\lor}}^{\msf{expl}}$. One can then check that the formula $\varphi_{\prescript{\neg}{}{\lor}}$ accepts $\mathcal{P}$ and rejects $\mathcal{N}$.  
	\end{proof}

	We deduce the corollary below.
	\begin{corollary}
		\label{coro:not_or}
		Consider a set $\Bl \subseteq \Op{Bin}{lg}$ of binary logical operators and assume that $\Bt \cap \{ \prescript{\neg}{}{\wedge},\prescript{\neg}{}{\lor} \} \neq \emptyset$. For all $\Ut \subseteq \Op{Un}{}$ and $\Bt \subseteq \Op{Bin}{tp}$, the $\LTL_\msf{Learn}(\Ut,\Bt,\Bl,\infty)$ decision problem is $\msf{NP}$-hard.
	\end{corollary}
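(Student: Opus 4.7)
The plan is to conclude this directly from the two lemmas just established, namely Lemma~\ref{lem:reduc_not_and} for $\prescript{\neg}{}{\wedge}$ and Lemma~\ref{lem:reduc_not_or} for $\prescript{\neg}{}{\lor}$, combined with the $\msf{NP}$-hardness of the hitting set problem (Theorem~\ref{thm:hitting_set_problem}).

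Concretely, I would split on the two cases. If $\prescript{\neg}{}{\wedge} \in \Bl$, then given any instance $(l,C,k)$ of $\msf{Hit}$, one builds $\msf{In}^{\prescript{\neg}{}{\wedge}}_{(l,C,k)}$ as in Definition~\ref{def:reduction_or_not_so_easy}; by Lemma~\ref{lem:reduc_not_and} this is a positive instance of $\LTL_\msf{Learn}(\Ut,\Bt,\Bl,\infty)$ if and only if $(l,C,k)$ is a positive instance of $\msf{Hit}$. Symmetrically, if $\prescript{\neg}{}{\lor} \in \Bl$, one uses $\msf{In}^{\prescript{\neg}{}{\lor}}_{(l,C,k)}$ from Definition~\ref{def:reduction_and_not_so_easy} and invokes Lemma~\ref{lem:reduc_not_or}. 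Since $\Bl \cap \{\prescript{\neg}{}{\wedge},\prescript{\neg}{}{\lor}\} \neq \emptyset$, at least one of these reductions applies.

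The only remaining check is that the reductions are computable in logarithmic space. In both Definitions~\ref{def:reduction_or_not_so_easy} and~\ref{def:reduction_and_not_so_easy}, the set of propositions has size $2l + k$, the number of words is $n + k + 1$, each word is a size-1 ultimately periodic word obtained by a direct inspection of membership $j \in C_i$, and the bound $B = 4k - 1$ is written in unary; all of these can clearly be produced by a logspace transducer by looping over indices $(i,j)$. This is the routine step and poses no real obstacle. Thus $\msf{Hit}$ reduces in logspace (a fortiori in polynomial time) to $\LTL_\msf{Learn}(\Ut,\Bt,\Bl,\infty)$, which yields the claimed $\msf{NP}$-hardness.
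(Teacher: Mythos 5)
Your proposal is correct and matches the paper's proof: both derive the corollary directly from Lemma~\ref{lem:reduc_not_and} and Lemma~\ref{lem:reduc_not_or} together with the $\msf{NP}$-hardness of the hitting set problem and the logspace computability of the instances $\msf{In}^{\prescript{\neg}{}{\wedge}}_{(l,C,k)}$ and $\msf{In}^{\prescript{\neg}{}{\lor}}_{(l,C,k)}$. The extra detail you give on why the reduction is logspace is a harmless elaboration of what the paper states without proof.
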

	\begin{proof}
		This is a direct consequence of Lemmas~\ref{lem:reduc_not_and} and~\ref{lem:reduc_not_or} and the fact that the instances $\msf{In}^{\prescript{\neg}{}{\lor}}_{(l,C,k)}$ and $\msf{In}^{\prescript{\neg}{}{\lor}}_{(l,C,k)}$ can be computed in logarithmic space from $(l,C,k)$.
	\end{proof}

	\subsubsection{Proof of Theorem~\ref{thm:non_unary_binary_NP_hard}: with the temporal operators $\lW$ and $\lM$}
	The last two logical binary operators $\Leftrightarrow$ and $\prescript{\neg}{}{\Leftrightarrow}$ will be handled by considering a reduction that is completely different from what we have presented so far. However, to prove the correctness of that reduction we first show a way to handle (i.e. to remove them from the formulas that we consider) the temporal operators. This can be done because, in all the reductions that we have presented so far, and with the reduction for the operators $\Leftrightarrow$ and $\prescript{\neg}{}{\Leftrightarrow}$ that will follow, all the infinite words that we consider are size-1 infinite words. We define a notion of equivalence on $\LTL$-formulas over these infinite words.
	\begin{definition}
		Consider a set of propositions $\prop$ and two $\LTL$-formulas $\varphi_1$ and $\varphi_2$. We write $\varphi_1 \equiv_1 \varphi_2$ when, for all size-1 infinite words $w \in (2^{\prop})^\omega$, we have $w \models \varphi_1$ if and only if $w \models \varphi_2$.
	\end{definition}
	Then, we have the following lemma.
	\begin{lemma}
		\label{lem:equiv_temp}
		Consider three $\LTL$-formulas $\varphi,\varphi_1,\varphi_2 \in \LTL$. We have:
		\begin{enumerate}
			\item for all $\bullet \in \{ \lX,\lF,\lG \}$, we have $\varphi \equiv_1 \bullet \varphi$;
			\item for all $\bullet \in \{ \lU,\lR \}$, we have $\varphi_2 \equiv_1 \varphi_1 \bullet \varphi_2$;
			\item $\varphi_1 \lor \varphi_2 \equiv_1  \varphi_1 \lW \varphi_2$;
			\item $\varphi_1 \wedge \varphi_2 \equiv_1  \varphi_1 \lM \varphi_2$.
		\end{enumerate}
	\end{lemma}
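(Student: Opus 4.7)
The plan is to exploit the single crucial observation that a size-1 infinite word $w = \alpha^\omega$ satisfies $w[i:] = w$ for every $i \in \N_1$. Once this is noted, all four items become one-line unfoldings of the semantic clauses given in Section 2.1. I would open the proof with this remark, possibly phrased as: for any $\LTL$-formula $\psi$ and any size-1 word $w = \alpha^\omega$, we have $w[i:] \models \psi$ iff $w \models \psi$ for all $i \in \N_1$.

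For item (1), I would handle the three unary temporal operators in turn. For $\lX$: by semantics $w \models \lX \varphi$ iff $w[2:] \models \varphi$, and $w[2:] = w$, so this is equivalent to $w \models \varphi$. For $\lF$: $w \models \lF \varphi$ iff there exists $i \in \N_1$ with $w[i:] \models \varphi$, which by the key observation is equivalent to $w \models \varphi$ (taking $i=1$ for one direction, and using $w[i:] = w$ for the other). For $\lG$: analogous, using that the universal quantifier over $i$ collapses.

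For item (2), I would apply the clause for $\lU$: $w \models \varphi_1 \lU \varphi_2$ iff there exists $i \in \N_1$ with $w[i:] \models \varphi_2$ and $w[j:] \models \varphi_1$ for all $1 \leq j \leq i-1$. Taking $i=1$ makes the second conjunct vacuous, so $w \models \varphi_2$ implies $w \models \varphi_1 \lU \varphi_2$; conversely, any witnessing $i$ gives $w[i:] = w \models \varphi_2$. For $\lR$: unfold the definition $w \models \varphi_1 \lR \varphi_2 \iff w \models \neg(\neg \varphi_1 \lU \neg \varphi_2)$ and apply the $\lU$ case to $\neg \varphi_2$ on the right, obtaining equivalence with $w \not\models \neg \varphi_2$, i.e.\ $w \models \varphi_2$.

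Items (3) and (4) then follow by simply substituting the previous equivalences into the defining identities $\varphi_1 \lW \varphi_2 = (\varphi_1 \lU \varphi_2) \lor \lG \varphi_1$ and $\varphi_1 \lM \varphi_2 = (\varphi_1 \lR \varphi_2) \land \lF \varphi_1$, using item (2) on the $\lU/\lR$ subterm and item (1) on the $\lG/\lF$ subterm. There is no real obstacle here; the entire lemma is essentially bookkeeping once the stationarity $w[i:] = w$ is established, so my main concern would simply be presenting the six small derivations cleanly without repetition.
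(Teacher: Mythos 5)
Your proposal is correct and follows essentially the same route as the paper's proof: it establishes the stationarity $\alpha^\omega[i:] = \alpha^\omega$ and then unfolds each semantic clause, treating $\lR$ via its definition as the negation of an $\lU$-formula and deriving items (3) and (4) by substituting the earlier equivalences into the defining identities for $\lW$ and $\lM$. No gaps.
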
 
	\begin{proof}
		Consider any $\alpha \subseteq \prop$. Note that for all $i \in \N_1$, we have $\alpha^\omega[i:] = \alpha^\omega$. Therefore:
		\begin{enumerate}
			\item \begin{itemize}
				\item $\alpha^\omega \models \varphi \text{ iff } \alpha^\omega[2:] \models \varphi \text{ iff } \alpha^\omega \models \lX \varphi$;
				\item $\alpha^\omega \models \varphi \text{ iff } \exists i\in \N_1,\; \alpha^\omega[i:] \models \varphi \text{ iff } \alpha^\omega \models \lF \varphi$;
				\item $\alpha^\omega \models \varphi \text{ iff } \forall i\in \N_1,\; \alpha^\omega[i:] \models \varphi \text{ iff } \alpha^\omega \models \lG \varphi;$ 
			\end{itemize}
			\item If $\alpha^\omega \models \varphi_2$, then $\alpha^\omega \models \varphi_1 \lU \varphi_2$.  Furthermore, if $\alpha^\omega \models \varphi_1 \lU \varphi_2$, then there is some $i \in \N_1$ such that $\alpha^\omega[i:] \models \varphi_2$. Hence, $\alpha^\omega \models \varphi_2$. In fact,  $\alpha^\omega \models \varphi_1 \lU \varphi_2 \text{ iff } \alpha^\omega \models \varphi_2$.
			
			In addition, we have:
			\begin{equation*}
				\alpha^\omega \models \varphi_1 \lR \varphi_2 \text{ iff } \alpha^\omega \not\models \neg\varphi_1 \lU \neg\varphi_2 \text{ iff } \alpha^\omega \not\models \neg\varphi_2 \text{ iff } \alpha^\omega \models \varphi_2
			\end{equation*}
			\item $\alpha^\omega \models \varphi_1 \lW \varphi_2 \text{ iff } \alpha^\omega \models (\varphi_1 \lU \varphi_2) \lor \lG \varphi_1 \text{ iff } \alpha^\omega \models \varphi_2 \lor \varphi_1$;
			\item $\alpha^\omega \models \varphi_1 \lM \varphi_2 \text{ iff } \alpha^\omega \models (\varphi_1 \lR \varphi_2) \wedge \lF \varphi_1 \text{ iff } \alpha^\omega \models \varphi_2 \wedge \varphi_2$.
		\end{enumerate}
	\end{proof}
	In particular, this lemma tells us that, on size-1 infinite words, the $\lW$ temporal operator behaves like the $\lor$ logical operator and the $\lM$ temporal operator behaves like the $\wedge$ logical operator. Hence, we can reuse the reduction from Definition~\ref{def:reduction_or_easy} to establish that the $\LTL$ learning problem is $\msf{NP}$-hard with at least one of the two temporal operators $\lW$ or $\lM$, as stated in the lemma below.
	\begin{lemma}
		\label{lem:reduc_W_M}
		Consider a set $\Bt$ of binary temporal operators and assume that $\lW \in \Bt$ (resp. $\lM \in \Bt$). Then, for all $\Ut \subseteq \Op{Un}{}$ and $\Bl \subseteq \Op{Bin}{lg}$, $(l,C,k)$ is a positive instance of the hitting set problem $\msf{Hit}$ if and only if $\msf{In}^\lor_{(l,C,k)}$ is a positive instance of the the $\LTL_\msf{Learn}(\Ut,\Bt,\Bl,\infty)$ decision problem (resp. $\msf{In}^\wedge_{(l,C,k)}$ is a positive instance of the the $\LTL_\msf{Learn}(\Ut,\Bt,\Bl,\infty)$).
	\end{lemma}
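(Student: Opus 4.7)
The plan is to reuse the reduction from Definition~\ref{def:reduction_or_easy} and piggyback on the arguments from the proof of Lemma~\ref{lem:reduc_or_and_and}. The crucial observation is that every word appearing in $\msf{Set} \cup \msf{EmptySet}$ is a size-$1$ infinite word, so Lemma~\ref{lem:equiv_temp} applies: on such words, $\lW$ behaves like $\lor$ and $\lM$ behaves like $\wedge$. Moreover $\equiv_1$ is compatible with all connectives (since on a size-$1$ word $\alpha^\omega$ every suffix $\alpha^\omega[i{:}]$ equals $\alpha^\omega$), so equivalence-$1$ can be pushed through nested applications of $\lW$ or $\lM$.

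For the forward direction in the case $\lW \in \Bt$, I would take a hitting set $H = \{j_1, \ldots, j_r\}$ of size $r \leq k$ and build the right-associated chain $\varphi_\lW := a_{j_1} \lW (a_{j_2} \lW (\ldots \lW a_{j_r}))$. This formula uses only the operator $\lW$ and $r$ propositions, so it has size $2r - 1 \leq 2k - 1 = B$. By iterated use of item~3 of Lemma~\ref{lem:equiv_temp}, one obtains $\varphi_\lW \equiv_1 a_{j_1} \lor a_{j_2} \lor \ldots \lor a_{j_r}$. Since the disjunction on the right-hand side was already shown in the proof of Lemma~\ref{lem:reduc_or_and_and} to accept $\msf{Set}$ and reject $\msf{EmptySet}$, and since all words involved are size-$1$, $\varphi_\lW$ separates $\msf{Set}$ and $\msf{EmptySet}$ as well. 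The case $\lM \in \Bt$ is symmetric: build $\varphi_\lM := b_{j_1} \lM (b_{j_2} \lM (\ldots \lM b_{j_r}))$, which is $\equiv_1$ to $b_{j_1} \wedge \ldots \wedge b_{j_r}$ and therefore separates $\msf{EmptySet}$ from $\msf{Set}$.

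For the converse direction, I would simply quote the converse direction of Lemma~\ref{lem:reduc_or_and_and}. Indeed, that argument is purely syntactic in $\varphi$: from any separating formula $\varphi$ of size at most $2k - 1$, one defines $H := \{i \in [1, \ldots, l] \mid \{a_i, b_i\} \cap \prop(\varphi) \neq \emptyset\}$; Lemma~\ref{lem:at_least_that_many_subformulas} bounds $|H|$ by $k$, and Lemma~\ref{lem:distinguish} applied to each pair $(\alpha_i, \beta)$ forces $H \cap C_i \neq \emptyset$. The operators actually appearing in $\varphi$ play no role in this reasoning, so it goes through unchanged regardless of whether $\lW$ or $\lM$ (or neither) lies in $\Bt$.

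I do not expect any significant obstacle, as the whole argument reduces to previously handled cases through Lemma~\ref{lem:equiv_temp}. The only minor checks are the size bound $2r - 1 \leq B$ and the soundness of propagating $\equiv_1$ through nested $\lW$ (resp. $\lM$) applications; both are immediate from the semantics on size-$1$ words.
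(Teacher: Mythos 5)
Your proposal is correct and follows essentially the same route as the paper: it reuses the instances $\msf{In}^\lor_{(l,C,k)}$ and $\msf{In}^\wedge_{(l,C,k)}$, obtains the backward direction from the operator-independent part of Lemma~\ref{lem:reduc_or_and_and}, and for the forward direction builds the nested $\lW$- (resp.\ $\lM$-) chains of size $2r-1$ and invokes Lemma~\ref{lem:equiv_temp} on size-$1$ words. Your extra remark that $\equiv_1$ is a congruence, justifying its propagation through the nested chain, is a harmless explicit check of something the paper leaves implicit.
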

	\begin{proof}
		The bottom to top implication is already given by Lemma~\ref{lem:reduc_or_and_and}. Furthermore, let us assume that $(l,C,k)$ is a positive instance of the hitting set problem $\msf{Hit}$. Then, from a hitting set $H \subseteq [1,\ldots,l]$ of size at most $k$ with $H := \{ j_1,\ldots,j_r \}$, we consider the $\LTL$-formulas: $\varphi_{\lW} := a_{j_1} \lW (a_{j_2} \lW ( \ldots \lW a_{j_r}))$ and $\varphi_{\lM} := b_{j_1} \lM (b_{j_2} \lM ( \ldots \lM b_{j_r}))$. We have $\Size{\varphi_{\lW}} = \Size{\varphi_{\lM}} = 2r-1 \leq 2k-1$.
		In addition, by Lemma~\ref{lem:equiv_temp}, $\varphi_{\lW} \equiv_1 a_{j_1} \lor \ldots \lor a_{j_r} = \varphi_\lor$ and $\varphi_{\lM} := b_{j_1} \wedge \ldots \wedge b_{j_r} = \varphi_\wedge$, with $\varphi_\lor$ and $\varphi_\wedge$ from the proof of Lemma~\ref{lem:reduc_or_and_and}, where we have shown that $\varphi_{\lor}$ accepts $\msf{Set}$ and rejects $\msf{EmptySet}$ and $\varphi_{\wedge}$ accepts $\msf{EmptySet}$ and rejects $\msf{Set}$. The lemma follows.
		%
	\end{proof}

	We deduce the corollary below.
	\begin{corollary}
		\label{coro:W_M}
		Consider a set $\Bt \subseteq \Op{Bin}{tp}$ of binary temporal operators with $\Bt \cap \{ \lW,\lM \} \neq \emptyset$. For all $\Ut \subseteq \Op{Un}{}$ and $\Bl \subseteq \Op{Bin}{lg}$, the $\LTL_\msf{Learn}(\Ut,\Bt,\Bl,\infty)$ decision problem is $\msf{NP}$-hard.
	\end{corollary}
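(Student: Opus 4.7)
The plan is to derive this corollary as an immediate consequence of Lemma~\ref{lem:reduc_W_M} combined with Theorem~\ref{thm:hitting_set_problem}, exactly mirroring the pattern of Corollary~\ref{coro:or} and Corollary~\ref{coro:not_or}. Split on the two cases according to which of $\lW$ or $\lM$ lies in $\Bt$: if $\lW \in \Bt$, use the map $(l,C,k) \mapsto \msf{In}^\lor_{(l,C,k)}$ from Definition~\ref{def:reduction_or_easy}; if $\lM \in \Bt$, use the map $(l,C,k) \mapsto \msf{In}^\wedge_{(l,C,k)}$. In either case, Lemma~\ref{lem:reduc_W_M} supplies the two-way equivalence between the positive instances of the hitting set problem and the positive instances of $\LTL_\msf{Learn}(\Ut,\Bt,\Bl,\infty)$ on the image.

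The only remaining item is to verify that these reductions are logspace. Inspecting Definition~\ref{def:reduction_or_easy}, the set of propositions $\prop = \{a_j,b_j \mid 1 \leq j \leq l\}$ is enumerated by iterating $j$ from $1$ to $l$ (a counter of logarithmic size in the input); each word $\alpha_i$ is produced by scanning the input representation of $C_i$ once and emitting $a_j$ or $b_j$ according to whether $j \in C_i$ (this per-bit decision can be made with two logarithmic counters); the word $\beta$ and the unary bound $B = 2k-1$ are trivially producible in logspace since $k \leq l$ is already written in unary by the observation following Theorem~\ref{thm:hitting_set_problem}. Hence the reduction is a valid logspace reduction, and NP-hardness of $\msf{Hit}$ transfers to $\LTL_\msf{Learn}(\Ut,\Bt,\Bl,\infty)$.

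There is no real obstacle: the semantic work — showing that $\lW$ and $\lM$ collapse to $\lor$ and $\wedge$ on size-$1$ ultimately periodic words — has already been discharged by Lemma~\ref{lem:equiv_temp} and applied in the proof of Lemma~\ref{lem:reduc_W_M}, and the soundness/completeness of the underlying reduction for $\lor$ and $\wedge$ comes from Lemma~\ref{lem:reduc_or_and_and}. Thus the proof is essentially a one-line assembly of the three ingredients (Lemma~\ref{lem:reduc_W_M}, Theorem~\ref{thm:hitting_set_problem}, and the logspace computability of the reduction), requiring no new construction.
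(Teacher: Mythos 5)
Your proposal is correct and matches the paper's own proof, which likewise derives the corollary directly from Lemma~\ref{lem:reduc_W_M} (splitting on whether $\lW$ or $\lM$ is available and reusing the instances $\msf{In}^\lor_{(l,C,k)}$ and $\msf{In}^\wedge_{(l,C,k)}$) together with the logspace computability of these instances. The extra details you give about why the reduction is logspace are fine but not needed beyond what the paper states.
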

	\begin{proof}
		This is a direct consequence of Lemma~\ref{lem:reduc_W_M} and the fact that the instances $\msf{In}^\lor_{(l,C,k)}$ and $\msf{In}^\wedge_{(l,C,k)}$ can be computed in logarithmic space from $(l,C,k)$.
	\end{proof}

	\subsubsection{Proof of Theorem~\ref{thm:non_unary_binary_NP_hard}: when $\Bl \cap \{ \Leftrightarrow,\prescript{\neg}{}{\Leftrightarrow} \} \neq \emptyset$.}
	We now consider the case of the operators $\Leftrightarrow$ and $\prescript{\neg}{}{\Leftrightarrow}$. To handle this case, we are going to restrict ourselves to 
	$\Leftrightarrow$-formula, defined below, i.e. $\LTL$-formulas that only use operators $\{ \neg,\Leftrightarrow,\prescript{\neg}{}{\Leftrightarrow} \}$.
	\begin{definition}
		We say that an $\LTL$-formula is a $\Leftrightarrow$-formula if it belongs to the fragment $\LTL(\{\neg\},\{\Leftrightarrow,\prescript{\neg}{}{\Leftrightarrow}\},\emptyset)$.
	\end{definition}
	
	We exhibit a function that removes all the operators that we do not want to consider here.
	\begin{lemma}
		\label{lem:temp_into_logic}
		There exists a function $\msf{tr}: \LTL(\Op{Un}{},\{\Leftrightarrow,\prescript{\neg}{}{\Leftrightarrow}\},\{\lU,\lR\}) \rightarrow \LTL(\{\neg\},\{\Leftrightarrow,\prescript{\neg}{}{\Leftrightarrow}\},\emptyset)$ such that, for all $\LTL$-formulas $\varphi \in \LTL$:
		\begin{enumerate}
			\item $\msf{SubF}(\msf{tr}(\varphi)) \subseteq \msf{tr}[\msf{SubF}(\varphi)] := \{ \msf{tr}(\varphi') \mid \varphi' \in \msf{SubF}(\varphi) \}$, hence $\Size{\msf{tr}(\varphi)} \leq \Size{\varphi}$;
			\item $\varphi \equiv_1 \msf{tr}(\varphi)$.
		\end{enumerate}
	\end{lemma}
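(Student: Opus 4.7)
The plan is to define $\msf{tr}$ by structural induction on $\varphi$, using Lemma~\ref{lem:equiv_temp} to discard (on size-1 words) the operators forbidden in the target fragment. Concretely, I would set $\msf{tr}(p) := p$, $\msf{tr}(\neg \varphi) := \neg \msf{tr}(\varphi)$, and for the unary temporal operators $\bullet \in \{\lX,\lF,\lG\}$ set $\msf{tr}(\bullet \varphi) := \msf{tr}(\varphi)$. For the two allowed binary logical operators $\ast \in \{\Leftrightarrow,\prescript{\neg}{}{\Leftrightarrow}\}$, I would set $\msf{tr}(\varphi_1 \ast \varphi_2) := \msf{tr}(\varphi_1) \ast \msf{tr}(\varphi_2)$. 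Finally, for $\bullet \in \{\lU,\lR\}$, I would set $\msf{tr}(\varphi_1 \bullet \varphi_2) := \msf{tr}(\varphi_2)$, throwing away the left operand since Lemma~\ref{lem:equiv_temp} tells us the binary temporal operator acts trivially on size-1 words.

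Item 2 (the $\equiv_1$ guarantee) then falls out of a straightforward induction: each case where $\msf{tr}$ strips or replaces an operator is justified by exactly one clause of Lemma~\ref{lem:equiv_temp}, and the $\Leftrightarrow, \prescript{\neg}{}{\Leftrightarrow}$ cases are immediate because the semantics of a logical binary operator is compositional in the truth values of its arguments, which by induction agree on size-1 words. A small point to double-check is that a negation in front of a temporal formula, e.g.\ $\neg(\varphi_1 \lU \varphi_2)$, is handled correctly; but $\neg$ is compositional too, so the induction hypothesis applied inside the negation suffices.

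Item 1 is the step requiring the most care, since it is strictly stronger than a plain size bound: it asks that every subformula of the output be itself the image of \emph{some} subformula of the input. I would prove this by induction, showing $\msf{SubF}(\msf{tr}(\varphi)) \subseteq \msf{tr}[\msf{SubF}(\varphi)]$. In the cases where $\msf{tr}$ discards a top-level operator (unary temporal, or $\lU,\lR$), $\msf{tr}(\varphi) = \msf{tr}(\varphi')$ for a subformula $\varphi'$ of $\varphi$, and $\msf{SubF}(\msf{tr}(\varphi')) \subseteq \msf{tr}[\msf{SubF}(\varphi')] \subseteq \msf{tr}[\msf{SubF}(\varphi)]$ by the induction hypothesis. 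In the $\Leftrightarrow / \prescript{\neg}{}{\Leftrightarrow}$ case, the subformulas of $\msf{tr}(\varphi_1 \ast \varphi_2)$ are $\{\msf{tr}(\varphi_1 \ast \varphi_2)\} \cup \msf{SubF}(\msf{tr}(\varphi_1)) \cup \msf{SubF}(\msf{tr}(\varphi_2))$, and each piece lies in $\msf{tr}[\msf{SubF}(\varphi_1 \ast \varphi_2)]$ by the IH. The size bound $|\msf{tr}(\varphi)| \leq |\varphi|$ is then immediate, because $|\msf{tr}(\varphi)| = |\msf{SubF}(\msf{tr}(\varphi))| \leq |\msf{tr}[\msf{SubF}(\varphi)]| \leq |\msf{SubF}(\varphi)| = |\varphi|$.

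The main obstacle I anticipate is not any deep argument but the bookkeeping in item 1: one must be careful that the inductive clauses really produce a subformula structure containing only elements of the form $\msf{tr}(\varphi')$. In particular, in the $\Leftrightarrow$ case we must ensure that the top-level formula $\msf{tr}(\varphi_1) \ast \msf{tr}(\varphi_2)$ is itself literally equal to $\msf{tr}(\varphi_1 \ast \varphi_2)$, which holds by our definition, so this element is indeed in $\msf{tr}[\msf{SubF}(\varphi_1 \ast \varphi_2)]$. Once the cases are set up consistently, both items follow by the same induction.
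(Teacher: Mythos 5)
Your proposal is correct and follows essentially the same route as the paper: the same inductive definition of $\msf{tr}$ (identity on propositions and negation, dropping $\lX,\lF,\lG$, mapping $\varphi_1 \lU \varphi_2$ and $\varphi_1 \lR \varphi_2$ to $\msf{tr}(\varphi_2)$, and acting homomorphically on $\Leftrightarrow,\prescript{\neg}{}{\Leftrightarrow}$), with correctness of each case justified by Lemma~\ref{lem:equiv_temp} and the subformula containment checked case by case exactly as the paper does. The size bound via $\Size{\msf{tr}(\varphi)} = |\msf{SubF}(\msf{tr}(\varphi))| \leq |\msf{tr}[\msf{SubF}(\varphi)]| \leq |\msf{SubF}(\varphi)|$ matches the paper's reasoning as well.
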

	\begin{proof}
		We define $\msf{tr}: \LTL(\Op{Un}{},\{\Leftrightarrow,\prescript{\neg}{}{\Leftrightarrow}\},\{\lU,\lR\}) \rightarrow \LTL(\{\neg\},\{\Leftrightarrow,\prescript{\neg}{}{\Leftrightarrow}\},\emptyset)$ by induction on $\LTL$-formulas. Consider a formula $\varphi \in \LTL(\Op{Un}{},\{\Leftrightarrow,\prescript{\neg}{}{\Leftrightarrow}\},\{\lU,\lR\})$:
		\begin{itemize}
			\item Assume that $\varphi = x$ for any $x \in \prop$. We set $\msf{tr}(\varphi) := \varphi$ which satisfies conditions 1.-2.;
			\item Assume that $\varphi = \neg \varphi'$ and that $\msf{tr}$ is defined on all sub-formulas of $\varphi'$ and satisfies conditions 1.-2. on $\varphi'$. We set $\msf{tr}(\varphi) := \lnot \msf{tr}(\varphi')$, which ensures that $\msf{SubF}(\msf{tr}(\varphi)) = \{ \lnot \msf{tr}(\varphi') \} \cup \msf{SubF}(\msf{tr}(\varphi')) \subseteq \msf{tr}[\{ \lnot \varphi' \}] \cup \msf{tr}[\msf{SubF}(\varphi')] = \msf{tr}[\msf{SubF}(\varphi)]$. Furthermore, for all $\alpha \subseteq \prop$, we have $\alpha^\omega \models \msf{tr}(\varphi) \text{ iff } \alpha^\omega \not\models \msf{tr}(\varphi') \text{ iff } \alpha^\omega \not\models \varphi' \text{ iff } \alpha^\omega \models \varphi$. Thus, $\varphi \equiv_1\msf{tr}(\varphi)$;
			\item For all $\bullet \in \{ \lX,\lF,\lG \}$, assume that $\varphi = \bullet \varphi'$, that $\msf{tr}$ is defined on all sub-formulas of $\varphi'$ and satisfies conditions 1.-2. on $\varphi'$. Then, we set $\msf{tr}(\varphi) := \msf{tr}(\varphi')$, which ensures that $\msf{SubF}(\msf{tr}(\varphi)) = \msf{SubF}(\msf{tr}(\varphi')) \subseteq \msf{tr}[\msf{SubF}(\varphi')] \subseteq \msf{tr}[\msf{SubF}(\varphi)]$. Furthermore, $\varphi \equiv_1 \msf{tr}(\varphi)$ by Lemma~\ref{lem:equiv_temp} and since $\varphi' \equiv_1 \msf{tr}(\varphi')$.
			\item For any $\bullet \in \{\Leftrightarrow,\prescript{\neg}{}{\Leftrightarrow},\lU,\lR \}$, assume that $\varphi = \varphi_1 \bullet \varphi_2$ and that $\msf{tr}$ is defined on all sub-formulas of $\varphi_1$ and $\varphi_2$ and satisfies conditions 1.-2. on $\varphi_1$ and $\varphi_2$.
			\begin{itemize}
			\item If $\bullet \in \{\Leftrightarrow,\prescript{\neg}{}{\Leftrightarrow}\}$, we set $\msf{tr}(\varphi) := \msf{tr}(\varphi_1) \bullet\msf{tr}(\varphi_2)$, which ensures that $\msf{SubF}(\msf{tr}(\varphi)) = \{ \msf{tr}(\varphi) \} \cup \msf{SubF}(\msf{tr}(\varphi_1)) \cup \msf{SubF}(\msf{tr}(\varphi_2)) \subseteq \msf{tr}[\{ \varphi \}] \cup \msf{tr}[\msf{SubF}(\varphi_1)] \cup \msf{tr}[\msf{SubF}(\varphi_2)] \subseteq \msf{tr}[\msf{SubF}(\varphi)]$. Furthermore, since $\varphi_1 \equiv_1 \msf{tr}(\varphi_1)$ and $\varphi_2 \equiv_1 \msf{tr}(\varphi_2)$, we have $\varphi \equiv_1 \msf{tr}(\varphi)$;
			\item If $\bullet \in \{ \lU,\lR \}$, we set $\msf{tr}(\varphi) := \msf{tr}(\varphi_2)$, which ensures that $\msf{SubF}(\msf{tr}(\varphi)) = \msf{SubF}(\msf{tr}(\varphi_2)) \subseteq \msf{tr}[\msf{SubF}(\varphi_2)] \subseteq \msf{tr}[\msf{SubF}(\varphi)]$. Furthermore, $\varphi \equiv_1 \msf{tr}(\varphi)$ by Lemma~\ref{lem:equiv_temp} and since $\varphi_1 \equiv_1 \msf{tr}(\varphi_1)$ and $\varphi_2 \equiv_1 \msf{tr}(\varphi_2)$
			\end{itemize}
		\end{itemize}
	\end{proof}

	Now, in order to gain an intuition on the reduction that we will consider (and especially the problem from which we make that reduction), let us give the central property satisfied by $\Leftrightarrow$-formulas.
	\begin{lemma}
		\label{lem:equiv_satis_prop}
		Consider a set of propositions $\prop$. Given an $\Leftrightarrow$-formula $\varphi$, we let $\msf{Neg}(\varphi) \in \N$ denote the number of occurrences of the $\neg$ and $\prescript{\neg}{}{\Leftrightarrow}$ operators in $\varphi$. Furthermore, given any subset $\alpha \subseteq \prop$, we also denote by $\msf{NbPr}_{\bar{\alpha}}(\varphi) \in \N$ the number of occurrences of propositions in $\prop \setminus \alpha$ occurring in $\prop$ (hence, some propositions may be counted several times if they appear more than once in $\varphi$). Both of these numbers are defined inductively on the tree structure of the $\Leftrightarrow$-formula $\varphi$, without the notion of sub-formulas.	Then, for all $\alpha \subseteq \prop$, we have:
		\begin{equation*}
			\alpha^\omega \models \varphi \text{ if and only if }
			\msf{Neg}(\varphi) \text{ and } \msf{NbPr}_{\bar{\alpha}}(\varphi) \text{ have the same parity}
		\end{equation*}
		Stated algebraically, we have $\alpha^\omega \models \varphi \text{ if and only if }
		\msf{Neg}(\varphi) + \msf{NbPr}_{\bar{\alpha}}(\varphi) \mod 2 = 0$.
	\end{lemma}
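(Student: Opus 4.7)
The plan is to proceed by structural induction on the $\Leftrightarrow$-formula $\varphi$, exploiting the fact that on a size-1 word $\alpha^\omega$ every proposition $p$ evaluates to a fixed boolean depending only on $p \in \alpha$ or $p \notin \alpha$, so that the entire semantics reduces to a boolean calculation over the operators $\{\neg, \Leftrightarrow, \prescript{\neg}{}{\Leftrightarrow}\}$. The guiding observation is that $\Leftrightarrow$ is XNOR and $\prescript{\neg}{}{\Leftrightarrow}$ is XOR, so both operators and $\neg$ contribute purely to a parity, which is exactly what the statement is claiming. In each step I would track the two quantities $\msf{Neg}(\varphi)$ and $\msf{NbPr}_{\bar\alpha}(\varphi)$ and show that their sum modulo $2$ flips precisely when the truth value of $\varphi$ on $\alpha^\omega$ flips.

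For the base case $\varphi = p$ with $p \in \prop$, I would note that $\msf{Neg}(p) = 0$ and $\msf{NbPr}_{\bar\alpha}(p)$ is $0$ if $p \in \alpha$ and $1$ otherwise, which matches $\alpha^\omega \models p$ iff $p \in \alpha$. For $\varphi = \neg \varphi'$, the counter $\msf{Neg}$ increases by $1$, $\msf{NbPr}_{\bar\alpha}$ is unchanged, and the truth value flips, so the equivalence is preserved by the induction hypothesis. For $\varphi = \varphi_1 \Leftrightarrow \varphi_2$, both counters are additive across the two subformulas; using that $\alpha^\omega \models \varphi$ iff $\varphi_1$ and $\varphi_2$ share the same truth value on $\alpha^\omega$, iff the parities $\msf{Neg}(\varphi_i) + \msf{NbPr}_{\bar\alpha}(\varphi_i)$ coincide for $i=1,2$, iff their sum is even, the conclusion follows. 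The case $\varphi = \varphi_1 \prescript{\neg}{}{\Leftrightarrow} \varphi_2$ is identical except that $\msf{Neg}$ picks up an additional $+1$, which matches the fact that the truth condition is now that the two subformulas \emph{disagree}.

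The only point deserving a little care is bookkeeping: $\msf{Neg}$ and $\msf{NbPr}_{\bar\alpha}$ are defined as counts over the \emph{tree} of the formula, not over the set of sub-formulas, so that repeated occurrences of a proposition are counted with multiplicity. This is essential in the $\Leftrightarrow$ and $\prescript{\neg}{}{\Leftrightarrow}$ cases, where one uses the identities $\msf{Neg}(\varphi_1 \Leftrightarrow \varphi_2) = \msf{Neg}(\varphi_1) + \msf{Neg}(\varphi_2)$ and $\msf{NbPr}_{\bar\alpha}(\varphi_1 \Leftrightarrow \varphi_2) = \msf{NbPr}_{\bar\alpha}(\varphi_1) + \msf{NbPr}_{\bar\alpha}(\varphi_2)$ (and similarly for $\prescript{\neg}{}{\Leftrightarrow}$, with the $+1$ on $\msf{Neg}$). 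Once these identities are in place, the induction is an immediate modulo-$2$ computation, and no case presents any real obstacle; the lemma is essentially a restatement of the fact that the boolean algebra generated by $\{\neg,\Leftrightarrow,\prescript{\neg}{}{\Leftrightarrow}\}$ over atomic values is the $\mathbb{F}_2$-affine algebra, where truth is determined by a parity of literals and negations.
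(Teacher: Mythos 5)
Your proposal is correct and follows essentially the same route as the paper: a structural induction on the $\Leftrightarrow$-formula, using the additivity of $\msf{Neg}$ and $\msf{NbPr}_{\bar\alpha}$ over the two subformulas in the binary cases (with the extra $+1$ on $\msf{Neg}$ for $\prescript{\neg}{}{\Leftrightarrow}$) and a direct parity computation in each case. The bookkeeping point you flag, that the counts are taken on the tree with multiplicity, is exactly how the paper sets things up, so there is nothing to add.
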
 
	\begin{proof}
		We show by induction on $\varphi$ the property $\mathcal{P}(\varphi)$: $\alpha^\omega \models \varphi$ iff
		$\msf{Neg}(\varphi) + \msf{NbPr}_{\bar{\alpha}}(\varphi) \text{ is even}$.
		\begin{itemize}
			\item If $\varphi = x$ for some $x \in \prop$, then $\alpha^\omega \models \varphi$ if and only if $x \in \alpha$, i.e. $\msf{NbPr}_{\bar{\alpha}}(\varphi) = \emptyset$. Since $\msf{Neg}(\varphi) = 0$ in any case, $\mathcal{P}(\varphi)$ follows.
			\item Assume that $\varphi = \neg \varphi'$ and that $\mathcal{P}(\varphi')$ holds. Then, $\msf{Neg}(\varphi) = \msf{Neg}(\varphi') +1$ and $\prop(\varphi) = \prop(\varphi')$, thus $\msf{NbPr}_{\bar{\alpha}}(\varphi') = \msf{NbPr}_{\bar{\alpha}}(\varphi)$. Hence, $\msf{Neg}(\varphi')$ and $\msf{NbPr}_{\bar{\alpha}}(\varphi')$ have the same parity if and only if $\msf{Neg}(\varphi)$ and $\msf{NbPr}_{\bar{\alpha}}(\varphi)$ do not, and $\mathcal{P}(\varphi)$ follows.
			\item Assume that $\varphi = \varphi_1 \bullet \varphi_2$ for some $\bullet \in \{ \Leftrightarrow,\prescript{\neg}{}{\Leftrightarrow} \}$ and assume that both $\mathcal{P}(\varphi_1)$ and $\mathcal{P}(\varphi_2)$ hold. We have $\msf{NbPr}_{\bar{\alpha}}(\varphi) = \msf{NbPr}_{\bar{\alpha}}(\varphi_1) + \msf{NbPr}_{\bar{\alpha}}(\varphi_2)$. Then:
			\begin{itemize}
				\item If $\bullet = \Leftrightarrow$, we have $\msf{Neg}(\varphi) = \msf{Neg}(\varphi_1) + \msf{Neg}(\varphi_2)$. Furthermore, $\alpha^\omega \models \varphi$ iff we have $\alpha^\omega \models \varphi_1$ and $\alpha^\omega \models \varphi_2$ or $\alpha^\omega \not\models \varphi_2$ and $\alpha^\omega \not\models \varphi_2$. That is, by $\mathcal{P}(\varphi_1)$ and $\mathcal{P}(\varphi_2)$, we have $\alpha^\omega \models \varphi$ iff
				\begin{equation*}
					\msf{Neg}(\varphi_1) + \msf{NbPr}_{\bar{\alpha}}(\varphi_1) \mod 2 = \msf{Neg}(\varphi_2) + \msf{NbPr}_{\bar{\alpha}}(\varphi_2) \mod 2
				\end{equation*}
				which is equivalent to 
				\begin{equation*}
					\msf{Neg}(\varphi_1) + \msf{Neg}(\varphi_2) + \msf{NbPr}_{\bar{\alpha}}(\varphi_1) + \msf{NbPr}_{\bar{\alpha}}(\varphi_2) = 0 \mod 2
				\end{equation*}
				That is:
				\begin{equation*}
					\msf{Neg}(\varphi) + \msf{NbPr}_{\bar{\alpha}}(\varphi) = 0 \mod 2
				\end{equation*}
				\item If $\bullet = \prescript{\neg}{}{\Leftrightarrow}$, we have $\msf{Neg}(\varphi) = \msf{Neg}(\varphi_1) + \msf{Neg}(\varphi_2) + 1$. Furthermore,  $\alpha^\omega \models \varphi$ iff we have $\alpha^\omega \models \varphi_1$ and $\alpha^\omega \not\models \varphi_2$ or $\alpha^\omega \models \varphi_2$ and $\alpha^\omega \not\models \varphi_2$. That is, by $\mathcal{P}(\varphi_1)$ and $\mathcal{P}(\varphi_2)$, we have $\alpha^\omega \models \varphi$ iff
				\begin{equation*}
					\msf{Neg}(\varphi_1) + \msf{NbPr}_{\bar{\alpha}}(\varphi_1) \mod 2 = \msf{Neg}(\varphi_2) + \msf{NbPr}_{\bar{\alpha}}(\varphi_2) + 1 \mod 2
				\end{equation*}
				which is equivalent to 
				\begin{equation*}
					\msf{Neg}(\varphi_1) + \msf{Neg}(\varphi_2) + 1 + \msf{NbPr}_{\bar{\alpha}}(\varphi_1) + \msf{NbPr}_{\bar{\alpha}}(\varphi_2) = 0 \mod 2
				\end{equation*}
				That is:
				\begin{equation*}
					\msf{Neg}(\varphi) + \msf{NbPr}_{\bar{\alpha}}(\varphi) = 0 \mod 2
				\end{equation*}
			\end{itemize}
			The property $\mathcal{P}(\varphi)$ follows.
		\end{itemize}
	\end{proof}
	
	We state below, as a corollary, the lemma above in a form that is easier to use for us.
	\begin{corollary}
		\label{coro:equiv_satis_prop_useful}
		Consider a set of propositions $\prop$. Given an $\Leftrightarrow$-formula $\varphi$ and any subset $\alpha \subseteq \prop$, we let $\msf{TrueNbPr}_{\bar{\alpha}}(\varphi) \in \N$ denote the number of propositions in $\prop \setminus \alpha$ that occur oddly many times in $\varphi$. Then, for all $\alpha \subseteq \prop$, we have:
		\begin{equation*}
			\alpha^\omega \models \varphi \text{ if and only if }
			\msf{Neg}(\varphi) \text{ and } \msf{TrueNbPr}_{\bar{\alpha}}(\varphi) \text{ have the same parity}
		\end{equation*}
	\end{corollary}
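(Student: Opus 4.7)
The plan is to derive this corollary directly from Lemma~\ref{lem:equiv_satis_prop} by observing that the two quantities $\msf{NbPr}_{\bar{\alpha}}(\varphi)$ and $\msf{TrueNbPr}_{\bar{\alpha}}(\varphi)$ agree modulo $2$. Since the statement of Lemma~\ref{lem:equiv_satis_prop} only depends on the parity of $\msf{NbPr}_{\bar{\alpha}}(\varphi)$, replacing it by any quantity with the same parity yields an equivalent statement.

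The key identity to establish is that, for every $\Leftrightarrow$-formula $\varphi$ and every $\alpha \subseteq \prop$,
\begin{equation*}
    \msf{NbPr}_{\bar{\alpha}}(\varphi) \equiv \msf{TrueNbPr}_{\bar{\alpha}}(\varphi) \pmod{2}.
\end{equation*}
For this, I would write
\begin{equation*}
    \msf{NbPr}_{\bar{\alpha}}(\varphi) = \sum_{p \in \prop \setminus \alpha} n_p(\varphi),
\end{equation*}
where $n_p(\varphi) \in \N$ is the number of occurrences of the proposition $p$ in $\varphi$. Reducing modulo $2$, each term $n_p(\varphi)$ contributes $1$ if and only if $p$ occurs oddly many times in $\varphi$, and $0$ otherwise. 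Summing these contributions gives exactly the number of propositions in $\prop \setminus \alpha$ that occur oddly many times in $\varphi$, i.e.\ $\msf{TrueNbPr}_{\bar{\alpha}}(\varphi)$ modulo $2$. This establishes the parity identity.

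Combining this with Lemma~\ref{lem:equiv_satis_prop}, we obtain that $\alpha^\omega \models \varphi$ iff $\msf{Neg}(\varphi) + \msf{NbPr}_{\bar{\alpha}}(\varphi) \equiv 0 \pmod{2}$, iff $\msf{Neg}(\varphi) + \msf{TrueNbPr}_{\bar{\alpha}}(\varphi) \equiv 0 \pmod{2}$, which is exactly the statement that $\msf{Neg}(\varphi)$ and $\msf{TrueNbPr}_{\bar{\alpha}}(\varphi)$ have the same parity. There is no real obstacle here: the content of the corollary is essentially a cosmetic rewriting of the lemma, replacing a sum of multiplicities by the easier-to-manipulate count of odd-occurrence propositions, which is the form that the subsequent reduction for $\Leftrightarrow$ and $\prescript{\neg}{}{\Leftrightarrow}$ will need.
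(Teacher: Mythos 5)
Your proof is correct and follows exactly the paper's own route: the paper also deduces the corollary directly from Lemma~\ref{lem:equiv_satis_prop} together with the observation that $\msf{NbPr}_{\bar{\alpha}}(\varphi)$ and $\msf{TrueNbPr}_{\bar{\alpha}}(\varphi)$ have the same parity. Your only addition is to spell out this parity identity explicitly via the sum of per-proposition multiplicities, which the paper leaves implicit.
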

	\begin{proof}
		This is a direct consequence of Lemma~\ref{lem:equiv_satis_prop} and the fact that $\msf{TrueNbPr}_{\bar{\alpha}}(\varphi)$ and $\msf{NbPr}_{\bar{\alpha}}(\varphi)$ have the same parity.
	\end{proof}

	This corollary suggests that the $\LTL_{\msf{Learn}}(\{\neg\},\{\Leftrightarrow,\prescript{\neg}{}{\Leftrightarrow}\},\emptyset,\infty)$ learning problem is linked to modulo 2-calculus. In fact, there exists an $\msf{NP}$-hard decision problem dealing with modulo 2-calculus, which we define below. The definition of this problem, and the proof that it is $\msf{NP}$-complete can be found in \cite{DBLP:journals/tit/BerlekampMT78}.
	\begin{definition}[Coset Weight]
		\label{def:coset_weight}
		We denote by $\msf{CW}$ the following decision problem:
		\begin{itemize}
			\item Input: $(A,k,y)$ where $A$ is $n \times l$-matrix on $\Z/2\Z$, $k \leq l$ is an integer and $y$ is an $n$-vector in $\Z/2\Z$. 
			\item Output: yes iff there is an $l$-vector $x$ in $\Z/2\Z$ with at most $k$ components of value 1 such that $A \cdot x = y$ in $\Z/2\Z$. 
		\end{itemize}
		In terms of representation, the integer $k$ may be given in unary. 
	\end{definition}
	
	Since we are going to manipulate matrices, we introduce below the relevant notations. 
	\begin{definition}
		Given $n,l \in \N_1$, an $n \times l$-matrix $A$ has $n$ rows and $l$ columns. Furthermore, for all $1 \leq i \leq n$ and $1 \leq j \leq l$, $A_{i,j}$ refers to the coefficient of the matrix $A$ at the intersection of the $i$-th row and $j$-th column.
	\end{definition}

	Let us now define the reduction that we consider. 
	\begin{definition}
		\label{def:reduction_equiv_not_equiv}
		Consider an instance $(A,k,y)$ of the coset weight problem $\msf{CW}$. If $y = 0$ (i.e. it is the null vector), then $(A,k,y)$ is trivially a positive instance of $\msf{CW}$, hence in that case $\msf{In}^{\Leftrightarrow}_{(A,k,y)}$ is defined as any positive instance of the $\LTL_{\msf{Learn}}(\{\neg\},\{\Leftrightarrow,\prescript{\neg}{}{\Leftrightarrow}\},\emptyset,\infty)$ decision problem. Otherwise, we define:
		\begin{itemize}
			\item $\prop := \{ a_j \mid 1 \leq j \leq l \}$ to be the set of propositions;
			\item For all $1 \leq i \leq n$, we set $\alpha_i := \{ a_j \mid 1 \leq j \leq l,\; A_{i,j} = 0 \}^\omega$;
			\item $\mathcal{P} := \{ \beta \} \cup \{ \alpha_i \mid 1 \leq i \leq n,\; y[i] = 0 \}$ with $\beta := \prop^\omega$;
			\item $\mathcal{N} := \{ \alpha_i \mid 1 \leq i \leq n,\; y[i] = 1 \}$;
			\item $B = 2k$.
		\end{itemize}
		Then, we define the input $\msf{In}^{\Leftrightarrow}_{(A,k,y)} := (\prop,\mathcal{P},\mathcal{N},B)$ of the $\LTL_{\msf{Learn}}(\{\neg\},\{\Leftrightarrow,\prescript{\neg}{}{\Leftrightarrow}\},\emptyset,\infty)$ decision problem.
	\end{definition}

	The definition above satisfies the lemma below.
	\begin{lemma}
		\label{lem:reduc_equiv}
		Consider a set $\Bl$ of logical binary operators and assume that $\emptyset \neq \Bl \subseteq \{ \Leftrightarrow,\prescript{\neg}{}{\Leftrightarrow} \} $. Then, for all $\Ut \subseteq \Op{Un}{}$ and $\Bt \subseteq \{ \lU,\lW \}$, $(A,k,y)$ is a positive instance of the coset weight problem $\msf{CW}$ if and only if $\msf{In}^{\Leftrightarrow}_{(A,k,y)}$ is a positive instance of the the $\LTL_\msf{Learn}(\Ut,\Bt,\Bl,\infty)$ decision problem.
	\end{lemma}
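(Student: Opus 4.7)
The plan is to encode $\Leftrightarrow$-formulas on size-$1$ infinite words as linear equations over $\Z/2\Z$, directly mirroring the structure of $\msf{CW}$. First I would invoke Lemma~\ref{lem:temp_into_logic} to reduce any candidate separator to a $\Leftrightarrow$-formula: since every word in $\mathcal{P} \cup \mathcal{N}$ is size-$1$, $\varphi$ separates iff $\msf{tr}(\varphi) \in \LTL(\{\neg\}, \{\Leftrightarrow, \prescript{\neg}{}{\Leftrightarrow}\}, \emptyset)$ does, and $|\msf{tr}(\varphi)| \leq |\varphi|$. To each such formula I associate the vector $x_\varphi \in (\Z/2\Z)^l$ with $x_\varphi[j] = 1$ iff $a_j$ occurs oddly many times in $\varphi$, and the scalar $e_\varphi := \msf{Neg}(\varphi) \bmod 2$.

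By Corollary~\ref{coro:equiv_satis_prop_useful}, $\beta = \prop^\omega$ is accepted iff $e_\varphi = 0$, and each $\alpha_i$ is accepted iff $e_\varphi + (A x_\varphi)[i] \equiv 0 \pmod 2$, because $\msf{TrueNbPr}_{\bar{\alpha_i}}(\varphi)$ counts exactly the odd-occurring propositions absent from $\alpha_i$, i.e.\ those $j$ with $x_\varphi[j] = A_{i,j} = 1$. For the forward direction, a separator of size $\leq 2k = B$ forces $e_\varphi = 0$ (from $\beta \in \mathcal{P}$), and matching each $\alpha_i$'s acceptance with $y[i]$ yields $A x_\varphi = y$. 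The weight bound follows from Lemma~\ref{lem:at_least_that_many_subformulas}: $|x_\varphi|_1 \leq |\prop(\varphi)| \leq (|\varphi|+1)/2 \leq k$, so $x_\varphi$ witnesses $\msf{CW}$.

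For the backward direction, given a $\msf{CW}$-witness $x$ with support $J = \{j_1, \ldots, j_r\}$ (where $r \geq 1$ since $y \neq 0$), I would construct a separator of size $\leq 2k$ using the available operators. If $\Leftrightarrow \in \Bl$, the right-associated chain $a_{j_1} \Leftrightarrow \cdots \Leftrightarrow a_{j_r}$ has size $2r - 1$ and $\msf{Neg} = 0$, and separates by the same algebraic characterization. If only $\prescript{\neg}{}{\Leftrightarrow}$ is available, the analogous XOR chain has $\msf{Neg} = r - 1$, so one prepends a $\neg$ when $r$ is even (using $\neg \in \Ut$) to restore $e_\varphi = 0$, staying within the size budget $2r \leq 2k$.

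The main obstacle I anticipate is the sub-case $\Bl = \{\prescript{\neg}{}{\Leftrightarrow}\}$ without $\neg \in \Ut$: then $e_\varphi$ is pinned down by the parity of the number of binary operators, so only odd-weight solutions of $A x = y$ yield a separator, yet $\msf{CW}$ does not guarantee such a solution in general. Bridging this gap is the delicate point of the proof; plausible remedies include tailoring the reduction so that odd-weight $\msf{CW}$-witnesses always exist (e.g.\ adjoining an extra propositional coordinate whose column in $A$ is zero, together with a one-unit increase in the weight budget, so that its $0/1$ contribution can flip parity for free) or invoking a parity-refined variant of $\msf{CW}$ that remains $\msf{NP}$-hard. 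Identifying and verifying the correct patch for this sub-case is where I would focus the technical effort.
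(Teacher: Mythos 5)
Your proposal matches the paper's proof essentially step for step: strip temporal and other operators via Lemma~\ref{lem:temp_into_logic}, translate acceptance on the size-$1$ words into the parity condition of Corollary~\ref{coro:equiv_satis_prop_useful} (so that $\beta$ forces $\msf{Neg}$ even and each $\alpha_i$ encodes row $i$ of $Ax=y$), bound the support via Lemma~\ref{lem:at_least_that_many_subformulas} using $B=2k$, and, for the converse, build a $\Leftrightarrow$- or $\prescript{\neg}{}{\Leftrightarrow}$-chain over the support of a witness $x$, correcting parity with a single leading $\neg$ inside the slack between $2r-1$ and $2k$. The one subcase you flag as delicate ($\Bl=\{\prescript{\neg}{}{\Leftrightarrow}\}$ with $\neg\notin\Ut$) is not actually a gap in the paper's setting: in the syntax of $\LTL(\prop,\Ut,\Bt,\Bl)$ the production $\neg\varphi$ is part of the base grammar and is not gated by $\Ut$, so the paper's proof uses exactly your parity fix (prepending one $\neg$ when the support has the wrong parity) without any modification of the reduction, and no parity-refined variant of $\msf{CW}$ is needed.
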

	\begin{proof}
		If $y = 0$, the equivalence is straightforward. Let us now assume in the following that $y \neq 0$. In that case, we have $\mathcal{P} \neq \emptyset$ and $\mathcal{N} \neq \emptyset$. 
		
		Assume that $(A,k,y)$ is a positive instance of the coset weight problem $\msf{CW}$ and consider an $l$-vector $x$ in $\Z/2\Z$ with at most $k$ components of value 1 such that $Ax = y$. Since $y \neq 0$, $x$ has at least one component of value 1. We let $H := \{ a_j \mid 1 \leq j \leq l,\; x[j] = 1 \}$. We write $H = \{ a_{j_1},\ldots,a_{j_r} \}$ with $1 \leq r \leq k$. We then consider the formulas:
		\begin{itemize}
			\item $\varphi_{\Leftrightarrow} := a_{j_1} \Leftrightarrow a_{j_2} \Leftrightarrow \ldots \Leftrightarrow a_{j_r}$
			\item We let $\varphi_{\prescript{\neg}{}{\Leftrightarrow}}^\msf{even} := a_{j_1} \prescript{\neg}{}{\Leftrightarrow} \; a_{j_2} \prescript{\neg}{}{\Leftrightarrow} \ldots \prescript{\neg}{}{\Leftrightarrow} \; a_{j_r}$. We also define:
			\begin{equation*}
				\varphi_{\prescript{\neg}{}{\Leftrightarrow}} := \begin{cases}
					\varphi_{\prescript{\neg}{}{\Leftrightarrow}}^\msf{even} & \text{ if }r\text{ is even} \\
					\neg \varphi_{\prescript{\neg}{}{\Leftrightarrow}}^\msf{even} & \text{ if }r\text{ is odd}
				\end{cases}
			\end{equation*}
		\end{itemize}
		Both of these formulas have size at most $2r \leq 2k = B$. 
		
		By definition, we have both $\msf{Neg}(\varphi_{\Leftrightarrow})$ and $\msf{Neg}(\varphi_{\prescript{\neg}{}{\Leftrightarrow}})$ even. Therefore, by Corollary~\ref{coro:equiv_satis_prop_useful}, both $\varphi_{\Leftrightarrow}$ and $\varphi_{\prescript{\neg}{}{\Leftrightarrow}}$ accept the word $\beta$. Consider now some $1 \leq i \leq n$. 
		\begin{itemize}
			\item Assume that $y[i] = 0$. Then, they are evenly many indices $j \in [1,\ldots,l]$ such that $x[j] = 1 = A_{i,j}$. That is, $|H \setminus \alpha_i|$ is even. Hence, by Corollary~\ref{coro:equiv_satis_prop_useful}, $\varphi_{\Leftrightarrow},\varphi_{\prescript{\neg}{}{\Leftrightarrow}}$ accept $\alpha_i \in \mathcal{P}$. 
			\item Assume that $y[i] = 1$. Then, they are oddly many indices $j \in [1,\ldots,l]$ such that $x[j] = 1 = A_{i,j}$. That is, $|H \setminus \alpha_i|$ is odd. Hence, by Corollary~\ref{coro:equiv_satis_prop_useful}, $\varphi_{\Leftrightarrow},\varphi_{\prescript{\neg}{}{\Leftrightarrow}}$ reject $\alpha_i \in \mathcal{N}$. 
		\end{itemize}
		Hence, $\varphi_{\Leftrightarrow},\varphi_{\prescript{\neg}{}{\Leftrightarrow}}$ accept $\mathcal{P}$ and reject $\mathcal{N}$. Thus, $\msf{In}^{\Leftrightarrow}_{(A,k,y)}$ is a positive instance of the $\LTL_\msf{Learn}(\Ut,\Bt,\Bl,\infty)$ decision problem.
		
		Assume now that $\msf{In}^{\Leftrightarrow}_{(A,k,y)}$ is a positive instance of the $\LTL_\msf{Learn}(\Ut,\Bt,\Bl,\infty)$ decision problem. Consider an $\LTL(\Ut,\Bt,\Bl)$-formula $\varphi$ of size at most $B$ that accepts $\mathcal{P}$ and rejects $\mathcal{N}$. By Lemma~\ref{lem:temp_into_logic}, there is a $\Leftrightarrow$-formula $\varphi'$ of size at most $B$ that is equivalent to the formula $\varphi$ on size-1 infinite words. By Corollary~\ref{coro:equiv_satis_prop_useful}, since $\varphi'$ accepts $\beta$, it follows that $\msf{Neg}(\varphi')$ is even. We let $H := \{ i \in [1,\ldots,l] \mid a_i \text{ occurs oddly many times in }\varphi'\}$. By Lemma~\ref{lem:at_least_that_many_subformulas}, we have $H \leq k$. We define the $l$-vector $x$ in $\Z/2\Z$ by, for all $1 \leq j \leq l$, $x[j] := 1$ if and only if $j \in H$. Consider now any $1 \leq i \leq n$. 
		\begin{itemize}
			\item If $\alpha_i \in \mathcal{P}$, we have $|H \setminus \alpha_i[1]|$ even, by Corollary~\ref{coro:equiv_satis_prop_useful}. In addition, for all $j \in H \setminus \alpha_i[1]$, we have $x[j] = 1 = A_{i,j}$ whereas, for all $j \notin H \setminus \alpha_i$, we have $x[j] = 0$ or $A_{i,j} = 0$. Therefore, $A x[i] = 0 = y[i]$. 
			\item If $\alpha_i \in \mathcal{N}$, we have $|H \setminus \alpha_i|$ odd, by Corollary~\ref{coro:equiv_satis_prop_useful}. Therefore, $A x[i] = 1 = y[i]$. 
		\end{itemize} 
		Hence, $(A,k,y)$ is a positive instance of the coset weight problem $\msf{CW}$. 
	\end{proof}
	
	We deduce the corollary below.
	\begin{corollary}
		\label{coro:equiv}
		Consider a set $\Bl \subseteq \Op{Bin}{lg}$ of binary logical operators with $\Bl \cap \{ \Leftrightarrow,\prescript{\neg}{}{\Leftrightarrow} \} \neq \emptyset$. For all $\Ut \subseteq \Op{Un}{}$ and $\Bt \subseteq \Op{Bin}{tp}$, the $\LTL_\msf{Learn}(\Ut,\Bt,\Bl,\infty)$ decision problem is $\msf{NP}$-hard.
	\end{corollary}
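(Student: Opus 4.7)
The plan is to derive the corollary as a direct consequence of Lemma~\ref{lem:reduc_equiv}, together with two ingredients: the coset weight problem $\msf{CW}$ of Definition~\ref{def:coset_weight} is $\msf{NP}$-hard (even with $k$ in unary, as this follows from $k \leq l$), and the instance $\msf{In}^{\Leftrightarrow}_{(A,k,y)}$ from Definition~\ref{def:reduction_equiv_not_equiv} is computable in logarithmic space from $(A,k,y)$. The second point is essentially immediate: the set of propositions $\{a_1,\ldots,a_l\}$ is indexed by the columns of $A$, each size-1 word $\alpha_i$ is read directly off the $i$-th row of $A$, the placement of $\alpha_i$ into $\mathcal{P}$ or $\mathcal{N}$ is dictated by the coordinate $y[i]$, the word $\beta = \prop^\omega$ is trivial, and $B = 2k$ is copied from the input.

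Once these ingredients are in place, Lemma~\ref{lem:reduc_equiv} directly yields $\msf{NP}$-hardness of $\LTL_\msf{Learn}(\Ut,\Bt,\Bl,\infty)$ in the core case where $\emptyset \neq \Bl \subseteq \{\Leftrightarrow, \prescript{\neg}{}{\Leftrightarrow}\}$ and $\Bt \subseteq \{\lU,\lR\}$. To cover the full statement of the corollary, I would then proceed by a case analysis, noting that the hypothesis $\Bl \cap \{\Leftrightarrow,\prescript{\neg}{}{\Leftrightarrow}\} \neq \emptyset$ does not preclude $\Bl$ or $\Bt$ from containing further operators. If $\Bt \cap \{\lW,\lM\} \neq \emptyset$, invoke Corollary~\ref{coro:W_M}; if $\Bl \cap \{\lor,\Rightarrow,\Leftarrow,\wedge,\prescript{\neg}{}{\Rightarrow},\prescript{\neg}{}{\Leftarrow}\} \neq \emptyset$, invoke Corollary~\ref{coro:or}; if $\Bl \cap \{\prescript{\neg}{}{\lor},\prescript{\neg}{}{\wedge}\} \neq \emptyset$, invoke Corollary~\ref{coro:not_or}. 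Together with the core case handled by Lemma~\ref{lem:reduc_equiv}, these cover every $(\Ut,\Bt,\Bl)$ permitted by the hypothesis.

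There is no real technical obstacle at this stage of the argument; all the combinatorial work has already been done inside Lemma~\ref{lem:reduc_equiv}, whose correctness rests on the parity characterization of $\Leftrightarrow$-formula acceptance established in Corollary~\ref{coro:equiv_satis_prop_useful} and on the ability of Lemma~\ref{lem:temp_into_logic} to strip away the temporal operators $\lU,\lR$ on size-1 words without increasing formula size. The only minor subtlety is remembering that the case split above is forced by the fact that Lemma~\ref{lem:reduc_equiv} requires $\Bl$ to be a \emph{subset} of $\{\Leftrightarrow,\prescript{\neg}{}{\Leftrightarrow}\}$ (to guarantee that a separating formula can be rewritten as a pure $\Leftrightarrow$-formula on size-1 words), whereas the corollary only assumes a nonempty intersection; any extra binary operator in $\Bl$ would break the translation, but it simultaneously makes an earlier corollary directly applicable, so nothing is lost.
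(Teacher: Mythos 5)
Your proposal is correct and follows essentially the same route as the paper's own proof: a case split that dispatches any extra binary operators to Corollaries~\ref{coro:or}, \ref{coro:not_or}, and \ref{coro:W_M}, and then applies Lemma~\ref{lem:reduc_equiv} (via the $\msf{NP}$-hardness of Coset Weight and the logspace computability of $\msf{In}^{\Leftrightarrow}_{(A,k,y)}$) in the remaining case $\emptyset \neq \Bl \subseteq \{\Leftrightarrow,\prescript{\neg}{}{\Leftrightarrow}\}$, $\Bt \subseteq \{\lU,\lR\}$. You even state the residual temporal-operator set correctly as $\{\lU,\lR\}$, where the paper's text contains a small typo ($\{\lU,\lW\}$).
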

	\begin{proof}
		If we have $\Bl \cap \{ \lor,\Rightarrow,\Leftarrow,\wedge,\prescript{\neg}{}{\Rightarrow},\prescript{\neg}{}{\Leftarrow},\prescript{\neg}{}{\lor},\prescript{\neg}{}{\wedge} \} \neq \emptyset$, Corollary~\ref{coro:or} or Corollary~\ref{coro:not_or} gives that the $\LTL_\msf{Learn}(\Ut,\Bt,\Bl,\infty)$ decision problem is $\msf{NP}$-hard. Furthermore, if $\Bt \cap \{ \lW,\lM \} \neq \emptyset$, then Corollary~\ref{coro:W_M} gives that the $\LTL_\msf{Learn}(\Ut,\Bt,\Bl,\infty)$ decision problem is $\msf{NP}$-hard. If it is not the case, then we have both $\emptyset \neq \Bl \subseteq \{ \Leftrightarrow,\prescript{\neg}{}{\Leftrightarrow} \} $ and $\Bt \subseteq \{ \lU,\lW \}$. The result then follows from  Lemma~\ref{lem:reduc_equiv}, the fact that the Coset Weight decision problem is $\msf{NP}$-hard and the fact that the instance $\msf{In}^\Leftrightarrow_{(l,C,k)}$ can be computed in logarithmic space from $(l,C,k)$.
	\end{proof}
	
	\subsubsection{Proof of Theorem~\ref{thm:non_unary_binary_NP_hard}: with the temporal operators $\lU$ and $\lR$}
	We conclude with the temporal operators $\lU$ and $\lR$. We will consider a reduction from the hitting set problem again. However, as can be seen in Lemma~\ref{lem:equiv_temp}, with these operators, a reduction with size-1 infinite words will not work since they simplify too much on them. 
	
	We define below a way to build infinite words or $\LTL$-formulas from subsets of integers.
	\begin{definition}
		Let $l \in \N$. We consider a set of propositions $\prop^l := \{ a_{i,j},b_{i,j} \mid 1 \leq i \leq j \leq l \}$. Consider any set subset $C \subseteq [1,\ldots,l]$. For all $1 \leq i \leq l$, we let $\msf{S}_i(C) := \{ (k,j) \mid 1 \leq k \leq i,\; i \leq j, j \in C\}$. Then, we let:
		\begin{equation*}
			w^{\lR}(C) := w_1(C) \cdot \ldots \cdot w_l(C) \cdot \emptyset^\omega \in (2^{\prop^l})^\omega
		\end{equation*}
		and
		\begin{equation*}
			w^{\lU}(C) := w_1(C) \cdot \ldots \cdot w_l(C) \cdot (\prop^l)^\omega \in (2^{\prop^l})^\omega
		\end{equation*} 
		where, for all $1 \leq i \leq l$, we have:
		\begin{equation*}
			w_i(C) := \{ a_{k,j} \mid (k,j) \in \msf{S}_i(C) \} \cup \{ b_{k,j} \mid (k,j) \notin \msf{S}_i(C) \}
		\end{equation*}
		
		Furthermore, consider any $H = \{ i_1,\ldots,i_r \} \subseteq [1,\ldots,l]$ with $i_1 < i_2 < i_3 < \ldots < i_r$. We let $i_0 := 1$ and we define:
		\begin{itemize}
			\item $\varphi_{\lR}(H,r) := a_{i_{r-1},i_r}$, and	$\varphi_{\lU}(H,r) := b_{i_{r-1},i_r}$;
			\item for all $2 \leq p \leq r$, $\varphi_{\lR}(H,p-1) := \varphi_{\lR}(H,p) \; \lR \;  a_{i_{p-2},i_{p-1}}$, and $\varphi_{\lU}(H,p-1) := \varphi_{\lU}(H,p) \; \lR \;  b_{i_{p-2},i_{p-1}}$.
		\end{itemize}
		We let $\varphi_{\lR}(H) := \varphi_{\lR}(H,1)$ and $\varphi_{\lU}(H) := \varphi_{\lU}(H,1)$. That way, we have:
		\begin{equation*}
			\varphi_{\lR}(H) = ((a_{i_{r-1},i_r} \; \lR \; a_{i_{r-2},i_{r-1}}) \; \lR \; \ldots \;) \lR \; a_{1,i_1}
		\end{equation*}
		and 
		\begin{equation*}
			\varphi_{\lU}(H) = ((b_{i_{r-1},i_r} \; \lU \; b_{i_{r-2},i_{r-1}}) \; \lU \; \ldots \;) \lU \; b_{1,i_1}
		\end{equation*}
	\end{definition}
	
	We first establish below what the $\lR$-formula satisfies.
	\begin{lemma}
		\label{lem:criterion_accept_formula_r}
		Consider any $H = \{ i_1,\ldots,i_r \} \subseteq [1,\ldots,l]$ with $i_1 < i_2 < i_3 < \ldots < i_r$ and some $C \subseteq [1,\ldots,l]$. We have the following equivalence:
		\begin{equation*}
			H \subseteq C \text{ if and only if } w^{\lR}(C) \models \varphi_{\lR}(H)
		\end{equation*}
	\end{lemma}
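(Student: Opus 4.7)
The plan is to prove by downward induction on $p \in \{1, \ldots, r\}$ the following stronger statement: for every position $t \in \N_1$,
\begin{equation*}
w^{\lR}(C)[t:] \models \varphi_{\lR}(H, p) \iff \{i_p, i_{p+1}, \ldots, i_r\} \subseteq C \text{ and } i_{p-1} \leq t \leq i_p,
\end{equation*}
where $i_0 := 1$ as given. Specialising at $p = 1$ and $t = 1$ gives exactly $w^{\lR}(C) \models \varphi_{\lR}(H)$ iff $H \subseteq C$, since the bound $1 \leq 1 \leq i_1$ is automatic.

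For the base case $p = r$, the formula $\varphi_{\lR}(H, r) = a_{i_{r-1}, i_r}$ is just an atomic proposition, and by the definition of $w^{\lR}(C)$ together with $\mathsf{S}_t(C)$, it holds at position $t$ precisely when $i_{r-1} \leq t \leq i_r$ and $i_r \in C$ (and in particular only at positions $t \leq l$, matching the $\emptyset^\omega$ tail). For the inductive step from $p+1$ to $p$, I unpack the release operator: by the semantics given in the excerpt, $w^{\lR}(C)[t:] \models \varphi_{\lR}(H, p+1) \lR a_{i_{p-1}, i_p}$ iff for every $s \geq t$, either $a_{i_{p-1}, i_p}$ holds at position $s$, or there exists some $j$ with $t \leq j < s$ at which $\varphi_{\lR}(H, p+1)$ holds. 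The atom $a_{i_{p-1}, i_p}$ holds at $s$ iff $i_{p-1} \leq s \leq i_p$ and $i_p \in C$, while by the induction hypothesis $\varphi_{\lR}(H, p+1)$ is satisfied at a position $j$ iff $\{i_{p+1}, \ldots, i_r\} \subseteq C$ and $i_p \leq j \leq i_{p+1}$.

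The verification then splits into cases. If $i_p \notin C$, or $t > i_p$, or $t < i_{p-1}$, I examine $s = t$: no atom-witness is available (either globally, or at that specific $s$), and the constraint $t \leq j < t$ on an earlier witness is vacuous, so the formula fails. Conversely, when $i_p \in C$ and $i_{p-1} \leq t \leq i_p$, all positions $s \in [t, i_p]$ are covered directly by the atom $a_{i_{p-1}, i_p}$; for the remaining positions $s > i_p$, the induction hypothesis supplies the witness $j = i_p$, which satisfies both $t \leq j < s$ and the condition for $\varphi_{\lR}(H, p+1)$, provided $\{i_{p+1}, \ldots, i_r\} \subseteq C$. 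Conjoining $i_p \in C$ with this gives $\{i_p, \ldots, i_r\} \subseteq C$, closing the induction.

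The main obstacle I anticipate is bookkeeping in the case where $t$ lies outside $[i_{p-1}, i_p]$: one has to argue, using the induction hypothesis in its precise form, that no $\varphi_{\lR}(H, p+1)$-witness can be found early enough to rescue the first failing $s$. The key observation making this work is that the window $[i_p, i_{p+1}]$ in which $\varphi_{\lR}(H, p+1)$ can hold begins exactly where the atom window $[i_{p-1}, i_p]$ ends, so the nested windows fit together without a gap and without overlap beyond the single endpoint $i_p$, which is why the $i_0 = 1$ convention makes the outermost application work cleanly.
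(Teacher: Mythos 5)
Your strengthened invariant is true, and your strategy—downward induction on $p$ with a positional characterization of where $\varphi_{\lR}(H,p)$ holds—is essentially the same as the paper's, which inducts on $p$ with the slightly weaker statement "$\varphi_{\lR}(H,p)$ fails at every position $< i_{p-1}$, and holds at position $i_{p-1}$ iff $\{i_p,\ldots,i_r\}\subseteq C$". Your base case is correct, and the two directions you actually argue are sound: at $s=t$ the empty witness range forces the atom $a_{i_{p-1},i_p}$ to hold at $t$ itself, which yields $i_p\in C$ and $i_{p-1}\le t\le i_p$; and under the full hypothesis the atom covers every $s\in[t,i_p]$ while $j=i_p$ witnesses every $s>i_p$.

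There is, however, a genuine (if small) gap: your case split never treats the situation $i_p\in C$, $i_{p-1}\le t\le i_p$, but $\{i_{p+1},\ldots,i_r\}\not\subseteq C$, and this is exactly what is needed for the "only if" half of your biconditional, i.e.\ the necessity of the tail condition. Saying that the induction hypothesis "supplies the witness $j=i_p$ \ldots provided $\{i_{p+1},\ldots,i_r\}\subseteq C$" only gives sufficiency; to close the iff you must add that if some $i_q\notin C$ with $q>p$, then by the induction hypothesis $\varphi_{\lR}(H,p+1)$ holds at no position whatsoever, so at $s=i_p+1$ the atom is false and no witness $j$ with $t\le j\le i_p$ exists, hence $\varphi_{\lR}(H,p)$ fails at $t$. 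This is a one-line repair using machinery you already have, but as written the induction is not closed. Relatedly, the difficulty you flag in your last paragraph (positions $t$ outside $[i_{p-1},i_p]$) is not where the work lies—those positions are already killed by $s=t$, as you yourself observe; the "no witness can rescue the failing $s$" argument is needed precisely in the in-window case just described.
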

	\begin{proof}
		For all $r \geq p \geq 1$, we let $H_p := \{ i_p,\ldots,i_r \} \subseteq H$. Let us show by induction on $r \geq p \geq 1$ the property $\mathcal{P}(p)$:  
		\begin{center}
			For all $1 \leq i \leq i_{p-1}-1$: $w^{\lR}(C)[i:] \not\models \varphi_{\lR}(H,p)$ and $w^{\lR}(C)[i_{p-1}:] \models \varphi_{\lR}(H,p)$ iff $H_p \subseteq C$
		\end{center}
		
		We start with the base case $\mathcal{P}(r)$. For all $1 \leq i \leq i_{r-1}-1$, we have $a_{i_{r-1},i_r} \notin w^{\lR}(C)[i]$. Furthermore, we have $a_{i_{r-1},i_r} \in w^{\lR}(C)[i_{r-1}]$ if and only if $i_r \in C$. Since $\varphi_{\lR}(H,r) = a_{i_{r-1},i_r}$, $\mathcal{P}(r)$ follows. 
		
		Let us now assume that $\mathcal{P}(p)$ holds for some $2 \leq p \leq r$. We have: 
		\begin{equation*}
			\varphi_{\lR}(H,p-1) = \varphi_{\lR}(H,p) \; \lR \; a_{i_{p-2},i_{p-1}}
		\end{equation*}
		For all $1 \leq i \leq i_{p-2}-1$, we have $a_{i_{p-2},i_{p-1}} \notin w^{\lR}(C)[i]$, thus $w^{\lR}(C)[i:] \not\models a_{i_{p-2},i_{p-1}}$. In addition, by $\mathcal{P}(p)$, for all $1 \leq i \leq i_{p-1}-1$: $w^{\lR}(C)[i:] \not\models \varphi_{\lR}(H,p)$ (in particular, this holds for $i = i_{p-2}$). It follows that, for all $1 \leq i \leq i_{p-2}-1$, we have $w^{\lR}(C)[i:] \not\models \varphi_{\lR}(H,p-1)$. 
		\begin{itemize}
			\item Assume that $H_{p-1} \subseteq C$. Then, we also have $H_{p} \subseteq C$, thus, by  $\mathcal{P}(p)$, we have $w^{\lR}(C)[i_{p-1}:] \models \varphi_{\lR}(H,p)$. Furthermore, for all $i_{p-2} \leq i \leq i_{p-1}$, we have $w^{\lR}(C)[i] \models a_{i_{p-2},i_{p-1}}$ since $i_{p-1} \in C$. It follows that:
			\begin{equation*}
				w^{\lR}(C)[i_{p-2}:] \models \varphi_{\lR}(H,p-1)
			\end{equation*}
			\item Assume that $H_{p-1} \not\subseteq C$. Then, there are two cases, since $H_{p-1} = H_p \cup \{i_{p-1}\}$.
			\begin{itemize}
				\item Assume that $H_{p} \not\subseteq C$. By $\mathcal{P}(p)$, for all $1 \leq i \leq i_{p-1}$, we have $w^{\lR}(C)[i:] \not\models \varphi_{\lR}(H,p)$. Furthermore, $w^{\lR}(C)[i_{p-1}+1:] \not\models a_{i_{p-2},i_{p-1}}$. Therefore, $w^{\lR}(C)[i_{p-2}:] \not\models \varphi_{\lR}(H,p-1)$. 
				\item Assume that $i_{p-1} \notin C$. Then, for all $1 \leq i \leq l$, we have $w^{\lR}(C)[i:] \not\models a_{i_{p-2},i_{p-1}}$. Furthermore, by $\mathcal{P}(p)$, for all $1 \leq i \leq i_{p-1}-1$: $w^{\lR}(C)[i:] \not\models \varphi_{\lR}(H,p)$. It follows that $w^{\lR}(C)[i_{p-2}:] \not\models \varphi_{\lR}(H,p-1)$.  
			\end{itemize}
		\end{itemize}
		Hence, the property $\mathcal{P}(p-1)$ holds. Therefore, $\mathcal{P}(p)$ holds for all $1 \leq p \leq r$. The lemma is then given by $\mathcal{P}(1)$, since $H_1 = H$, $w^{\lR}(C)[1:] = w^{\lR}(C)$ and $\varphi_{\lR}(H,1) = \varphi_{\lR}(H)$.
	\end{proof}
	
	Let us now consider the case of $\lU$-formulas.
	\begin{lemma}
		\label{lem:criterion_accept_formula_u}
		Consider any $H = \{ i_1,\ldots,i_r \} \subseteq [1,\ldots,l]$ with $i_1 < i_2 < i_3 < \ldots < i_r$ and some $C \subseteq [1,\ldots,l]$. We have the following equivalence:
		\begin{equation*}
			H \subseteq C \text{ if and only if } w^{\lU}(C) \not\models \varphi_{\lU}(H)
		\end{equation*}
	\end{lemma}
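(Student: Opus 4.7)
The strategy mirrors the proof of Lemma~\ref{lem:criterion_accept_formula_r}, dualized to the $\lU$ operator. For $r \geq p \geq 1$, set $H_p := \{i_p,\ldots,i_r\}$. I would prove by downward induction on $p$ the property $\mathcal{Q}(p)$: for every $1 \leq i \leq i_{p-1}-1$, $w^{\lU}(C)[i:] \models \varphi_{\lU}(H,p)$, and moreover $w^{\lU}(C)[i_{p-1}:] \not\models \varphi_{\lU}(H,p)$ iff $H_p \subseteq C$. The lemma then follows from $\mathcal{Q}(1)$, since $i_0 = 1$, $H_1 = H$ and $\varphi_{\lU}(H,1) = \varphi_{\lU}(H)$.

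For the base case $p=r$, I would compute that $b_{i_{r-1},i_r} \in w_i(C)$ iff $(i_{r-1},i_r) \notin \msf{S}_i(C)$, which fails precisely when $i_{r-1} \leq i \leq i_r$ and $i_r \in C$. For $i < i_{r-1}$ this is automatic, giving the first clause; at $i = i_{r-1}$ it holds iff $i_r \notin C$, yielding $w^{\lU}(C)[i_{r-1}:] \not\models b_{i_{r-1},i_r}$ iff $i_r \in C$ iff $H_r \subseteq C$.

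For the inductive step from $\mathcal{Q}(p)$ to $\mathcal{Q}(p-1)$, recall that $\varphi_{\lU}(H,p-1) = \varphi_{\lU}(H,p) \lU b_{i_{p-2},i_{p-1}}$ and $H_{p-1} = H_p \cup \{i_{p-1}\}$. The key observation is that $b_{i_{p-2},i_{p-1}}$ fails exactly on the positions $i_{p-2} \leq i \leq i_{p-1}$ when $i_{p-1} \in C$, and holds everywhere when $i_{p-1} \notin C$. For $1 \leq i < i_{p-2}$, $b_{i_{p-2},i_{p-1}}$ already holds at $i$, so $\varphi_{\lU}(H,p-1)$ is satisfied with witness $k=1$. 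To evaluate at $i = i_{p-2}$, I would split on whether $i_{p-1} \in C$: if not, then $b_{i_{p-2},i_{p-1}}$ holds at $i_{p-2}$ (satisfying the until with $k=1$) and $H_{p-1} \not\subseteq C$, so both sides of the equivalence are false; if yes, then the earliest candidate witness position for $b_{i_{p-2},i_{p-1}}$ is $i_{p-1}+1$, and the intermediate universal clause requires $\varphi_{\lU}(H,p)$ at positions $i_{p-2},\ldots,i_{p-1}$, which by $\mathcal{Q}(p)$ is equivalent to $H_p \not\subseteq C$. Combining: $w^{\lU}(C)[i_{p-2}:] \not\models \varphi_{\lU}(H,p-1)$ iff $i_{p-1} \in C$ and $H_p \subseteq C$, i.e., iff $H_{p-1} \subseteq C$.

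The main subtlety, compared to the release case, is that the existential clause of $\lU$ is never trivially frustrated because the $(\prop^l)^\omega$ suffix eventually satisfies every atom, so the question reduces purely to whether the universal clause can be sustained up to the earliest admissible witness. The slightly delicate bookkeeping step is ruling out that larger choices of $k$ rescue the until when the earliest candidate fails: taking a later witness position only enlarges the set of positions on which $\varphi_{\lU}(H,p)$ must hold, and in particular still requires it at $i_{p-1}$, where by $\mathcal{Q}(p)$ it fails as soon as $H_p \subseteq C$.
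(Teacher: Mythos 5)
Your proof is correct, but it follows a different route from the paper. The paper does not redo the positional analysis for $\lU$ at all: it proves, by induction on $p$, the transfer property that for every $k \in \N_1$, $w^{\lU}(C)[k:] \models \varphi_{\lU}(H,p)$ if and only if $w^{\lR}(C)[k:] \not\models \varphi_{\lR}(H,p)$, exploiting that $\lU$ and $\lR$ are dual operators and that the words $w^{\lU}(C)$ and $w^{\lR}(C)$ carry complementary information on the relevant atoms (this is exactly why the suffixes are $(\prop^l)^\omega$ versus $\emptyset^\omega$); the statement then follows immediately from Lemma~\ref{lem:criterion_accept_formula_r}. You instead prove the statement directly, by dualizing the inductive invariant of Lemma~\ref{lem:criterion_accept_formula_r} to the property $\mathcal{Q}(p)$ and analyzing the until semantics on $w^{\lU}(C)$ by hand. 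Your argument is sound: the base case computation of where $b_{i_{r-1},i_r}$ fails is right, the case split on $i_{p-1} \in C$ is right, and you correctly handle the two points where a direct argument could slip, namely that the first clause of $\mathcal{Q}(p)$ settles all positions strictly below $i_{p-1}$ so the earliest admissible witness $i_{p-1}+1$ suffices when $H_p \not\subseteq C$, and that any later witness still forces $\varphi_{\lU}(H,p)$ at position $i_{p-1}$, where it fails when $H_p \subseteq C$. The trade-off: your proof is self-contained and makes the mechanics of the $\lU$-word explicit, at the cost of repeating essentially the same bookkeeping as the release case; the paper's duality argument is shorter and reuses Lemma~\ref{lem:criterion_accept_formula_r}, but hides the positional intuition inside the complementation of the two words.
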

	\begin{proof}
		Let us show by induction on $r \geq p \geq 1$ the property $\mathcal{P}(p)$: for all $k \in \N_1$, we have $w^{\lU}(C)[k:] \models \varphi_{\lU}(H,p)$ if and only if $w^{\lR}(C)[k:] \not\models \varphi_{\lR}(H,p)$.
		
		Let us start with the base case. We have $\varphi_{\lU}(H,r) = b_{i_{r-1},i_r}$  and $\varphi_{\lR}(H,r) = a_{i_{r-1},i_r}$. Furthermore, for all $k \in \N_1$, we have $a_{i_{r-1},i_r} \in w^{\lU}(C)[k]$ if and only if $b_{i_{r-1},i_r} \notin w^{\lR}(C)[k]$. Thus, $\mathcal{P}(r)$ follows.
		
		Assume now that $\mathcal{P}(p)$ holds for some $2 \leq p \leq r$. We have:
		\begin{equation*}
			\varphi_{\lU}(H,p-1) = \varphi_{\lU}(H,p) \; \lU \; b_{i_{p-2},i_{p-1}}
		\end{equation*}
		and
		\begin{equation*}
			\varphi_{\lR}(H,p-1) = \varphi_{\lR}(H,p) \; \lR \; a_{i_{p-2},i_{p-1}}
		\end{equation*}
		Thus, by definition of the operator $\lR$, we have $\varphi_{\lR}(H,p-1)$ equivalent to $\neg (\neg \varphi_{\lR}(H,p) \lU \neg a_{i_{p-2},i_{p-1}})$. In addition, for all $i \in \N_1$, we have $b_{i_{p-2},i_{p-1}} \in w^{\lU}(C)[i]$ iff $a_{i_{p-2},i_{p-1}} \in w^{\lR}(C)[i]$. Thus, by $\mathcal{P}(p)$, for all $k \in \N_1$, we have:
		\begin{align*}
			w^{\lU}(C)[k:] \models \varphi_{\lU}(H,p-1) & \text{ iff } \exists j \geq k,\; b_{i_{p-2},i_{p-1}} \in w^{\lU}(C)[j] \text{ and }\forall k \leq i < j,\; w^{\lU}(C)[i:] \models \varphi_{\lU}(H,p) \\ 
			& \text{ iff } \exists j \geq k,\; a_{i_{p-2},i_{p-1}} \notin w^{\lR}(C)[j] \text{ and }\forall k \leq i < j,\; w^{\lR}(C)[i:] \not\models \varphi_{\lR}(H,p) \\ 
			& \text{ iff } w^{\lR}(C)[k:] \models \neg \varphi_{\lR}(H,p) \lU \neg a_{i_{p-2},i_{p-1}} \\
			& \text{ iff } w^{\lR}(C)[k:] \models \neg \varphi_{\lR}(H,p-1)
		\end{align*}
		Thus, the property $\mathcal{P}(p-1)$ holds. In fact, it holds for all $1 \leq p \leq r$. This lemma is then a direct consequence of the property $\mathcal{P}(1)$ and Lemma~\ref{lem:criterion_accept_formula_u}.
	\end{proof}

	We can now define the reduction that we consider.
	\begin{definition}
		\label{def:reduction_U_R}
		Consider an instance $(l,C,k)$ of the hitting set problem $\msf{Hit}$. 
		We define:
		\begin{itemize}
			\item $\prop := \prop^l = \{ a_{i,j},b_{i,j} \mid 1 \leq i \leq j \leq l \}$ as set of propositions;
			\item $\mathcal{P}^{\lU} := \{ v_1^{\lU},\ldots,v_n^{\lU} \}$ 
			and  $\mathcal{P}^{\lR} := \{ v^{\lR} \}$
			;
			\item $\mathcal{N}^{\lU} := \{ v^{\lU} \}$ and $\mathcal{N}^{\lR} := \{ v_1^{\lR},\ldots,v_n^{\lR} \}$;
			\item $B = 2 k - 1$.
		\end{itemize}
		with, for $\bullet \in \{\lR,\lU\}$, we let $v^{\bullet} := w^{\bullet}([1,\ldots,l]) \in (2^\prop)$ and for all $1 \leq i \leq n$, we let $v_i^{\bullet} := w^{\bullet}([1,\ldots,l] \setminus C_i) \in (2^\prop)$. 
		
		Then, we define the inputs $\msf{In}^{\lR}_{(l,C,k)} := (\prop,\mathcal{P}^{\lR},\mathcal{N}^{\lR},B)$ and $\msf{In}^{\lU}_{(l,C,k)} := (\prop,\mathcal{P}^{\lU},\mathcal{N}^{\lU},B)$ of the $\LTL_\msf{Learn}(\Ut,\Bt,\Bl,\infty)$ decision problem.
	\end{definition}

	This definition satisfies the lemma below.
	\begin{lemma}
		\label{lem:reduc_U_R}
		Let $\bullet \in \{\lR,\lU\}$. Consider a set $\Bt$ of operators and assume that $\bullet \in \Bt \neq \emptyset$. Then, for all $\Ut \subseteq \Op{Un}{}$ and $\Bl \subseteq \Op{Bin}{lg}$, $(l,C,k)$ is a positive instance of the hitting set problem $\msf{Hit}$ if and only if $\msf{In}^{\bullet}_{(l,C,k)}$ is a positive instance of the the $\LTL_\msf{Learn}(\Ut,\Bt,\Bl,\infty)$ decision problem.
	\end{lemma}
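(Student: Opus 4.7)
The plan is to prove both directions by leveraging the characterizations already established in Lemmas~\ref{lem:criterion_accept_formula_r} and~\ref{lem:criterion_accept_formula_u}, which already turn the semantics of $\varphi_\bullet(H)$ on the words $w^\bullet(C)$ into the combinatorial condition $H \subseteq C$.

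For the forward direction, I would take a hitting set $H = \{i_1 < \ldots < i_r\} \subseteq [1,\ldots,l]$ with $r \leq k$ (necessarily $r \geq 1$ since all $C_i$ are non-empty) and use the formula $\varphi_\bullet(H) \in \LTL(\emptyset,\{\bullet\},\emptyset) \subseteq \LTL(\Ut,\Bt,\Bl)$. A straightforward induction on its construction yields $\Size{\varphi_\bullet(H)} = 2r-1 \leq 2k-1 = B$, so it is of permitted size. For the $\lR$ case: since $H \subseteq [1,\ldots,l]$, Lemma~\ref{lem:criterion_accept_formula_r} gives $v^\lR \models \varphi_\lR(H)$; conversely, for every $1 \leq i \leq n$, the hitting property gives $H \cap C_i \neq \emptyset$, hence $H \not\subseteq [1,\ldots,l] \setminus C_i$ and so $v_i^\lR \not\models \varphi_\lR(H)$. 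Thus $\varphi_\lR(H)$ separates $\mathcal{P}^\lR$ from $\mathcal{N}^\lR$. The case $\bullet = \lU$ is symmetric using Lemma~\ref{lem:criterion_accept_formula_u}: there the accept/reject roles of $v^\lU$ and the $v_i^\lU$ are swapped, which matches exactly how $\mathcal{P}^\lU$ and $\mathcal{N}^\lU$ are defined.

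For the backward direction, I would take any separating $\LTL(\Ut,\Bt,\Bl)$-formula $\varphi$ of size at most $B = 2k-1$ and define the candidate hitting set
\begin{equation*}
H := \{ j \in [1,\ldots,l] \mid \exists 1 \leq k' \leq j,\; \{a_{k',j}, b_{k',j}\} \cap \prop(\varphi) \neq \emptyset \}.
\end{equation*}
Lemma~\ref{lem:at_least_that_many_subformulas} (applied with $Y = \prop(\varphi)$) gives $|\prop(\varphi)| \leq (B+1)/2 = k$, hence $|H| \leq |\prop(\varphi)| \leq k$. To prove $H$ is a hitting set, fix $1 \leq i \leq n$ and suppose for contradiction that $H \cap C_i = \emptyset$. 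Set $Y := \{a_{k',j}, b_{k',j} \mid 1 \leq k' \leq j,\; j \in C_i\}$. By the assumption $H \cap C_i = \emptyset$, no proposition of $Y$ appears in $\varphi$. A direct unfolding of the definitions of $w^\bullet(\cdot)$ and $\msf{S}_p(\cdot)$ shows that, for every $p \in \N_1$ and every $x \in \prop \setminus Y$, the membership $x \in v^\bullet[p]$ coincides with $x \in v_i^\bullet[p]$ (the two words only differ on the $a_{k',j},b_{k',j}$ with $j \in C_i$, and such propositions are in $Y$). By Lemma~\ref{lem:distinguish}, $\varphi$ cannot distinguish $v^\bullet$ and $v_i^\bullet$; however $\varphi$ accepts one and rejects the other by assumption, contradiction.

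The main obstacle, I expect, is the verification that $v^\bullet$ and $v_i^\bullet$ agree on every letter of $\prop \setminus Y$ at every position, which underlies the use of Lemma~\ref{lem:distinguish}. This requires carefully unpacking the definition of $\msf{S}_p([1,\ldots,l])$ versus $\msf{S}_p([1,\ldots,l] \setminus C_i)$ and noticing that they differ exactly by the pairs $(k',j)$ with $j \in C_i$, which correspond precisely to the propositions in $Y$; once that set equality is clearly spelled out, the rest of the argument is bookkeeping that mirrors the hitting-set reductions of earlier subsections.
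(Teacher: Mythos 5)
Your proposal is correct and follows essentially the same route as the paper: the forward direction uses the formulas $\varphi_{\bullet}(H)$ together with Lemmas~\ref{lem:criterion_accept_formula_r} and~\ref{lem:criterion_accept_formula_u}, and the backward direction extracts the same set $H$ from $\prop(\varphi)$, bounds it via Lemma~\ref{lem:at_least_that_many_subformulas}, and uses Lemma~\ref{lem:distinguish} after observing that $v^\bullet$ and $v_i^\bullet$ differ only on propositions with second index in $C_i$ (the paper states this as an explicit symmetric-difference computation, you phrase it as a contradiction — the same argument).
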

	\begin{proof}
		Assume that $(l,C,k)$ is a positive instance of the hitting set problem $\msf{Hit}$. Consider a hitting set $H \subseteq [1,\ldots,l]$. We consider the formulas $\varphi_{\lR} := \varphi_{\lR}(H)$ and $\varphi_{\lU} := \varphi_{\lU}(H)$. Clearly, these formulas have size $2k-1 = B$. In addition, Lemmas~\ref{lem:criterion_accept_formula_r} and~\ref{lem:criterion_accept_formula_u} gives that $v^{\lU} \not\models \varphi_{\lU}$ and $v^{\lR} \models \varphi_{\lR}$. Consider now some $1 \leq i \leq n$. Since $H \cap C_i \neq \emptyset$, it follows that $H \not\subseteq [1,\ldots,l] \setminus C_i$, hence, by Lemmas~\ref{lem:criterion_accept_formula_r} and~\ref{lem:criterion_accept_formula_u}, we have $v_i^{\lU} \models \varphi_{\lU}$ and $v_i^{\lR} \not\models \varphi_{\lR}$. 
		Therefore, for $\bullet \in \{\lR,\lU\}$, we have $\msf{In}^{\bullet}_{(l,C,k)}$ a positive instance of the  $\LTL_\msf{Learn}(\Ut,\Bt,\Bl,\infty)$ decision problem. 
		
		Assume now that $\msf{In}^{\bullet}_{(l,C,k)}$ is a positive instance of the $\LTL_\msf{Learn}(\Ut,\Bt,\Bl,\infty)$ decision problem. Consider an $\LTL$-formula $\varphi$ of size at most $B = 2k-1$ that accepts $\mathcal{P}^{\bullet}$ and rejects $\mathcal{N}^{\bullet}$. We let $H := \{ \alpha \in [1,\ldots,l] \mid \exists 1 \leq i \leq l,\; \{ a_{i,\alpha},b_{i,\alpha} \} \cap \prop(\varphi) \neq \emptyset \}$. By Lemma~\ref{lem:at_least_that_many_subformulas}, we have $|H| \leq k$. Let us show that it is a hitting set. Consider some $1 \leq p \leq n$. 
		Given tow sets $A$ and $B$, we let $A \Delta B$ denote the symmetric difference: $A \; \Delta \; B := A \setminus B \cup B \setminus A$. Then, we have that:
		\begin{equation*}
			\forall 1 \leq i \leq l,\; v_p^{\bullet}[i] \; \Delta \; v^{\bullet}[i] = \{ a_{k,j},b_{k,j} \mid 1 \leq k \leq i,\; i \leq j, j \in C_p \}
		\end{equation*}
		and 
		\begin{equation*}
			\forall i \geq l,\;
			v_p^{\bullet}[i] = v^{\bullet}[i]
		\end{equation*}
		Hence, by Lemma~\ref{lem:distinguish}, it follows that $\prop(\varphi)$ must contain at least a variable $a_{k,j}$ or $b_{k,j}$ with $j \in C_p$. That is, $C_p \cap H \neq \emptyset$. Since this holds for all $1 \leq p \leq n$, it follows that $H$ is a hitting set and $(l,C,k)$ is a positive instance of the hitting set problem $\msf{Hit}$.
	\end{proof}

	We obtain the corollary below.
	\begin{corollary}
		\label{coro:U_R}
		Consider a set $\Bt \subseteq \Op{Bin}{tp}$ of binary temporal operators with $\Bt \cap \{ \lU,\lR \} \neq \emptyset$. For all $\Ut \subseteq \Op{Un}{}$ and $\Bl \subseteq \Op{Bin}{lg}$, the $\LTL_\msf{Learn}(\Ut,\Bt,\Bl,\infty)$ decision problem is $\msf{NP}$-hard.
	\end{corollary}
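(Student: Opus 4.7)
The plan is a direct wrapper around Lemma~\ref{lem:reduc_U_R}: given a set $\Bt$ with $\bullet \in \Bt \cap \{\lU, \lR\}$ and arbitrary $\Ut \subseteq \Op{Un}{}$ and $\Bl \subseteq \Op{Bin}{lg}$, I reduce the hitting set problem $\msf{Hit}$ (which is $\msf{NP}$-hard by Theorem~\ref{thm:hitting_set_problem}) to $\LTL_\msf{Learn}(\Ut,\Bt,\Bl,\infty)$ via the map $(l,C,k) \mapsto \msf{In}^{\bullet}_{(l,C,k)}$ from Definition~\ref{def:reduction_U_R}. Lemma~\ref{lem:reduc_U_R} already establishes that this map is a correct many-one reduction (positive instances map to positive instances and negative to negative), so the only thing left to justify for the corollary is that the map can be produced in logarithmic space.

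For the logspace bound, I would note that on input $(l,C,k)$ the set of propositions is $\prop^l = \{a_{i,j}, b_{i,j} \mid 1 \leq i \leq j \leq l\}$, which is enumerated by two nested counters of value at most $l$. Each word $w^\bullet(C_i)$ and $w^\bullet([1,\ldots,l])$ has an ultimately periodic representation of length polynomial in $l$, whose letters $w_i(C)$ are constructed one proposition at a time by checking the membership condition $(k,j) \in \msf{S}_i(C)$; this check only requires comparing indices and scanning the finite set $C_i$, all of which fits in $O(\log l + \log n)$ bits of work space. The bound $B = 2k-1$ is output in unary by a single loop using a counter up to $k$. Thus the entire output of the reduction is produced in logarithmic workspace.

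Combining these two facts, $\msf{Hit}$ reduces in logspace (hence in polynomial time) to $\LTL_\msf{Learn}(\Ut,\Bt,\Bl,\infty)$, which gives the $\msf{NP}$-hardness of the latter. The main step---the correctness of the reduction---is handled entirely by Lemma~\ref{lem:reduc_U_R} (relying in turn on Lemmas~\ref{lem:criterion_accept_formula_r} and~\ref{lem:criterion_accept_formula_u} for the forward direction and on Lemmas~\ref{lem:distinguish} and~\ref{lem:at_least_that_many_subformulas} for the backward direction), so no additional obstacle is expected here; the only book-keeping item is the logspace-computability check just sketched.
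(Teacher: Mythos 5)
Your proposal is correct and follows exactly the paper's own route: the corollary is obtained by invoking Lemma~\ref{lem:reduc_U_R} together with the observation that the instances $\msf{In}^{\lU}_{(l,C,k)}$ and $\msf{In}^{\lR}_{(l,C,k)}$ of Definition~\ref{def:reduction_U_R} are computable in logarithmic space from $(l,C,k)$. Your additional detail on the logspace construction of the words $w^\bullet(\cdot)$ and the unary bound $B$ is a fine (if optional) elaboration of the same argument.
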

	\begin{proof}
		This is a direct consequence of Lemma~\ref{lem:reduc_U_R} and the fact that the instances $\msf{In}^{\lU}_{(l,C,k)}$ and $\msf{In}^{\lR}_{(l,C,k)}$ can be computed in logarithmic space from $(l,C,k)$.
	\end{proof}

	The proof of Theorem~\ref{thm:non_unary_binary_NP_hard} follows.
	\begin{proof}
		It is a direct consequence of Corollaries~\ref{coro:or},~\ref{coro:not_or},~\ref{coro:equiv},~\ref{coro:W_M} and~\ref{coro:U_R}.
	\end{proof}
	
	\subsection{$\CTL$ learning is at least as hard as $\LTL$ learning}
	Let us now turn to $\CTL$ learning. Our goal is to show that $\CTL$ learning is at least as hard as $\LTL$ learning, under logarithmic space reductions, regardless of the operators allowed. This is formally stated in the theorem below. 
	\begin{theorem}
		\label{thm:CTL_hard_than_LTL}
		For all $\Ut \subseteq \Op{Un}{}$, $\Bt \subseteq \Op{Bin}{tp}$, $\Bl \subseteq \Op{Bin}{lg}$ and $n \in \N \cup \{\infty\}$, the decision problem $\CTL_\msf{Learn}(\Ut,\Bt,\Bl,n)$ is at least as hard as the decision problem $\LTL_\msf{Learn}(\Ut,\Bt,\Bl)$, under logarithmic space reduction.
		
		Therefore, it is also the case of the decision problems $\ATL_\msf{Learn}^k(\Ut,\Bt,\Bl,n)$, for $k \in \N_1$.
	\end{theorem}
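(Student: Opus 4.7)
The plan is to reduce $\LTL_\msf{Learn}(\Ut,\Bt,\Bl,n)$ to $\CTL_\msf{Learn}(\Ut,\Bt,\Bl,n)$ by encoding each ultimately periodic word as a deterministic ``lasso'' Kripke structure and by translating every $\LTL$-formula into a $\CTL$-formula through the insertion of the path quantifier $\fanBr{\Ag}$ in front of each temporal operator. Concretely, to each ultimately periodic word $w = u \cdot v^\omega$ I would associate the Kripke structure $K_w$ with state set $\{1,\ldots,|u|+|v|\}$, initial state $1$, labelling $\pi(i) := w[i]$ and deterministic successor relation $\msf{Succ}(i) = \{i+1\}$ for $i < |u|+|v|$ together with $\msf{Succ}(|u|+|v|) = \{|u|+1\}$; this is constructible in logarithmic space from $w$ and of size linear in $|w|$. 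In parallel I would define a syntactic translation $\msf{tr}: \LTL(\prop,\Ut,\Bt,\Bl) \to \CTL(\prop,\Ut,\Bt,\Bl)$ inductively by $\msf{tr}(p) := p$, $\msf{tr}(\neg \varphi) := \neg\,\msf{tr}(\varphi)$, $\msf{tr}(\varphi_1 * \varphi_2) := \msf{tr}(\varphi_1) * \msf{tr}(\varphi_2)$ for $* \in \Bl$, $\msf{tr}(\bullet \varphi) := \fanBr{\Ag}(\bullet\,\msf{tr}(\varphi))$ for $\bullet \in \Ut \setminus \{\neg\}$, and $\msf{tr}(\varphi_1 \bullet \varphi_2) := \fanBr{\Ag}(\msf{tr}(\varphi_1) \bullet \msf{tr}(\varphi_2))$ for $\bullet \in \Bt$. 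By inspecting the inductive definitions of $\msf{SubF}$ for $\CTL$ one checks immediately that $|\msf{tr}(\varphi)| = |\varphi|$, that the number of binary operator occurrences is preserved, and that $\msf{tr}(\varphi)$ stays in the same operator fragment as $\varphi$.

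The central step is then a structural induction showing: for every ultimately periodic word $w$, every state $q$ of $K_w$ and every $\varphi$, the unique infinite path $\rho_q$ of $K_w$ starting at $q$ satisfies $\varphi$ iff $q \models \msf{tr}(\varphi)$. The propositional and Boolean cases are trivial. The key observation making the temporal cases go through is that $K_w$ is deterministic, i.e.\ every state has a unique successor, so from any state there is a unique infinite path whose labelling is exactly the suffix of $w$ starting at the corresponding position; consequently both $\fanBr{\Ag}$ and $\fanBr{\emptyset}$ reduce to ``the underlying path formula holds on the unique outgoing path'', which matches the $\LTL$ semantics verbatim. Applied to the initial state this yields $w \models \varphi \Leftrightarrow K_w \models \msf{tr}(\varphi)$, so the mapping $(\prop,\mathcal{P},\mathcal{N},B) \mapsto (\prop,\{K_w : w \in \mathcal{P}\},\{K_w : w \in \mathcal{N}\},B)$ is a logspace map that handles the forward direction of the reduction.

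The main obstacle I expect is the converse direction, i.e.\ extracting a separating $\LTL$-formula of size at most $B$ from a separating $\CTL$-formula of size at most $B$. I would handle this by defining an inverse translation $\msf{untr}$ that simply deletes every occurrence of $\fanBr{A}$ while retaining the underlying temporal operator: determinism of every $K_w$ ensures that $\phi$ and $\msf{untr}(\phi)$ agree on all of the $K_w$, and the inductive definitions of $\msf{SubF}$ and $\msf{SubBin}$ make it clear that this erasure never increases the size nor the number of binary operators and keeps all operators within $\Ut \cup \Bt \cup \Bl$. Finally, for the $\ATL^k$ statement the very same reduction transports verbatim once each Kripke structure $K_w$ is viewed as a CGS in which every agent has a single available action at every state: the transition function still encodes the unique successor of the lasso, every strategy profile is trivial, and $\fanBr{A}\psi$ holds iff the unique outgoing path satisfies $\psi$ for any coalition $A \subseteq \Ag$, so the argument carries over without change.
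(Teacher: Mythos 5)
Your proposal is correct and follows essentially the same route as the paper: the same lasso Kripke structure $K_w$ per ultimately periodic word, the same pair of translations (your $\msf{tr}$ is the paper's $\msf{f}_{\LC}$ inserting $\exists$ quantifiers, your $\msf{untr}$ is its $\msf{f}_{\CL}$ erasing them), the same determinism-based semantic equivalence lemma, and the same size/fragment-preservation bookkeeping, with the ATL case handled by making all agents' choices trivial just as the paper does.
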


	The proof of this theorem consists in translating ultimately periodic words into Kripke structures, translating $\LTL$-formulas into $\CTL$-formulas, and vice versa, and relating all these translations together. 
	
	Let first translate ultimately periodic words into Kripke structure of the same size.
	\begin{definition}
		\label{def:kripke_from_words}
		Consider a set of propositions $\prop$ and an ultimately periodic word $w  = u \cdot v^\omega \in (2^\prop)^\omega$. We $k := |u|$ and $n := |v|$. Then, we define the Kripke structure $K_w = \langle Q,I,\prop, \pi, \msf{Succ} \rangle$ where:
		\begin{itemize}
			\item $Q := \{ q^u_1, \ldots, q^u_k,q^v_{1},\ldots,q^v_{n} \}$;
			\item $I := \{ q_1^u \}$;
			\item for all $1 \leq i \leq k-1$, we have $\msf{Succ}(q^u_i) := \{ q^u_{i+1} \}$, $\msf{Succ}(q^u_{k}) := \{ q^v_{1} \}$ and for all $1 \leq i \leq n-1$, we have $\msf{Succ}(q^v_i) := \{ q^v_{i+1} \}$, $\msf{Succ}(q^v_{n}) := \{ q^v_{1} \}$;
			\item for all $1 \leq i \leq k$, $\pi(q^u_i) := u[i] \subseteq \prop$ and for all $1 \leq i \leq n$, $\pi(q^v_{i}) := v[i] \subseteq \prop$.
		\end{itemize}
	\end{definition}
	
	We then define below how to translate $\CTL$-formulas into $\LTL$-formulas and vice versa, while staying in the same fragment.
	\begin{definition}
		\label{def:translate_ltl_ctl}
		Consider a set of propositions $\prop$ and  $\Ut \subseteq \Op{Un}{}$, $\Bt \subseteq \Op{Bin}{tp}$ and $\Bl \subseteq \Op{Bin}{lg}$. Then, we let $\msf{f}_{\CL(\Ut,\Bt,\Bl)}: \CTL(\Ut,\Bt,\Bl) \rightarrow \LTL(\Ut,\Bt,\Bl)$ be such that it removes all $\exists$ or $\forall$ quantifiers from a $\CTL$-formula, while keeping the same operators. More formally:
		\begin{itemize}
			\item $\msf{f}_{\CL(\Ut,\Bt,\Bl)}(p) = p$ for all $p \in \prop$;
			\item $\msf{f}_{\CL(\Ut,\Bt,\Bl)}(\neg \phi) = \neg \msf{f}_{\CL(\Ut,\Bt,\Bl)}(\phi)$;
			\item $\msf{f}_{\CL(\Ut,\Bt,\Bl)}(\diamond (\bullet \msf{f}_{\CL(\Ut,\Bt,\Bl)}(\phi)) = \bullet \msf{f}_{\CL(\Ut,\Bt,\Bl)}(\phi))$ for all $\bullet \in \Ut \setminus \{ \neg \}$ and $\diamond \in \{\exists,\forall\}$;
			\item $\msf{f}_{\CL(\Ut,\Bt,\Bl)}(\phi_1 \bullet \phi_2) = \msf{f}_{\CL(\Ut,\Bt,\Bl)}(\phi_1) \bullet \msf{f}_{\CL(\Ut,\Bt,\Bl)}( \phi_2)$ for all $\bullet \in \Bl$;
			\item $\msf{f}_{\CL(\Ut,\Bt,\Bl)}(\diamond (\phi_1 \bullet \phi_2)) = \msf{f}_{\CL(\Ut,\Bt,\Bl)}(\phi_1) \bullet \msf{f}_{\CL(\Ut,\Bt,\Bl)}( \phi_2)$ for all $\bullet \in \Bt$ and $\diamond \in \{\exists,\forall\}$.
		\end{itemize}
	
		We also define the function $\msf{f}_{\LC(\Ut,\Bt,\Bl)}: \LTL(\Ut,\Bt,\Bl) \rightarrow \CTL(\Ut,\Bt,\Bl)$ that adds an $\exists$ quantifier before every temporal operator of an $\LTL$-formula, thus transforming it into a $\CTL$-formula. More formally:
		\begin{itemize}
			\item $\msf{f}_{\LC(\Ut,\Bt,\Bl)}(p) = p$ for all $p \in \prop$;
			\item $\msf{f}_{\LC(\Ut,\Bt,\Bl)}(\neg \varphi) = \neg \msf{f}_{\LC(\Ut,\Bt,\Bl)}(\varphi)$;
			\item $\msf{f}_{\LC(\Ut,\Bt,\Bl)}(\bullet \msf{f}_{\LC(\Ut,\Bt,\Bl)}(\varphi) = \exists (\bullet \msf{f}_{\LC(\Ut,\Bt,\Bl)}(\varphi)))$ for all $\bullet \in \Ut \setminus \{\neg \}$;
			\item $\msf{f}_{\LC(\Ut,\Bt,\Bl)}(\varphi_1 \bullet \varphi_2) = \msf{f}_{\LC(\Ut,\Bt,\Bl)}(\varphi_1) \bullet \msf{f}_{\LC(\Ut,\Bt,\Bl\mathsf{O}_U,\mathsf{O}_B)}( \varphi_2)$ for all $\bullet \in \Bl$;
			\item $\msf{f}_{\LC(\Ut,\Bt,\Bl)}(\varphi_1 \bullet \varphi_2) = \exists (\msf{f}_{\LC(\Ut,\Bt,\Bl)}(\varphi_1) \bullet \msf{f}_{\LC(\Ut,\Bt,\Bl)}( \varphi_2))$ for all $\bullet \in \Bt$.
		\end{itemize}
	\end{definition}
	
	A direct proof by induction shows that the above definition satisfies the proposition below:
	\begin{proposition}
		\label{prop:translate_ltl_ctl_simple}
		Consider a set of propositions $\prop$ and  $\Ut \subseteq \Op{Un}{}$, $\Bt \subseteq \Op{Bin}{tp}$ and $\Bl \subseteq \Op{Bin}{lg}$. We have:
		\begin{itemize}
			\item for all $\CTL$-formulas $\phi \in \CTL(\Ut,\Bt,\Bl)$, $\Size{\phi} \geq \Size{\msf{f}_{\CL(\Ut,\Bt,\Bl)}(\phi)}$ (the size may decrease for instance for the $\CTL$-formula $\phi = \exists \lX p \lor \forall \lX p$);
			\item for all $\LTL$-formulas $\varphi \in \LTL(\Ut,\Bt,\Bl)$, $\Size{\varphi} = \Size{\msf{f}_{\LC(\Ut,\Bt,\Bl)}(\varphi)}$;
			\item for all $\LTL$-formulas $\varphi \in \LTL(\Ut,\Bt,\Bl)$, we have $\msf{f}_{\CL(\Ut,\Bt,\Bl)}(\msf{f}_{\LC(\Ut,\Bt,\Bl)}(\varphi)) = \varphi$.
		\end{itemize}
	\end{proposition}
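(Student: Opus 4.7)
The plan is to prove all three items by structural induction on the input formula, paralleling the inductive clauses of $\msf{f}_{\CL(\Ut,\Bt,\Bl)}$ and $\msf{f}_{\LC(\Ut,\Bt,\Bl)}$ given in Definition~\ref{def:translate_ltl_ctl}. The unifying observation is that every defining clause has the form ``apply $\msf{f}$ recursively to the immediate subformulas and then wrap in (at most) one outer symbol,'' so the subformula sets on both sides are related in a very controlled way.

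For item 2, I would show inductively that the map $\psi \mapsto \msf{f}_{\LC(\Ut,\Bt,\Bl)}(\psi)$ is a bijection from $\msf{SubF}(\varphi)$ onto $\msf{SubF}(\msf{f}_{\LC(\Ut,\Bt,\Bl)}(\varphi))$. Injectivity is immediate since the map only inserts an $\exists$ before temporal operators and preserves the outermost connective, so distinct LTL formulas yield distinct CTL formulas; surjectivity follows clause-by-clause from the definition. The key accounting point is that both $\msf{SubF}$-definitions treat ``$\bullet \varphi'$'' (in LTL) and ``$\fanBr{A}(\bullet \phi')$'' (in CTL) as a single subformula of their parent rather than exposing $\bullet \phi'$ separately in CTL; this is exactly what makes the counts match. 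Item 3 is then a direct structural induction: atomic, negation, and logical-binary cases follow immediately from the IH, and for a unary temporal $\bullet \in \Ut \setminus \{\neg\}$ one has $\msf{f}_{\CL(\Ut,\Bt,\Bl)}(\msf{f}_{\LC(\Ut,\Bt,\Bl)}(\bullet \varphi')) = \msf{f}_{\CL(\Ut,\Bt,\Bl)}(\exists(\bullet\,\msf{f}_{\LC(\Ut,\Bt,\Bl)}(\varphi'))) = \bullet\,\msf{f}_{\CL(\Ut,\Bt,\Bl)}(\msf{f}_{\LC(\Ut,\Bt,\Bl)}(\varphi')) = \bullet \varphi'$ by the IH; binary temporal cases are analogous.

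For item 1, I would prove the stronger subset statement $\msf{SubF}(\msf{f}_{\CL(\Ut,\Bt,\Bl)}(\phi)) \subseteq \msf{f}_{\CL(\Ut,\Bt,\Bl)}[\msf{SubF}(\phi)]$ by induction, where the right-hand side denotes the image. Every clause of $\msf{f}_{\CL}$ is of the shape $\msf{f}_{\CL}(\phi) = *\,\msf{f}_{\CL}(\phi_1)\cdots \msf{f}_{\CL}(\phi_r)$, so every subformula of $\msf{f}_{\CL}(\phi)$ is either $\msf{f}_{\CL}(\phi)$ itself (the image of $\phi$) or lies in some $\msf{SubF}(\msf{f}_{\CL}(\phi_i))$, which by IH lies in $\msf{f}_{\CL}[\msf{SubF}(\phi_i)] \subseteq \msf{f}_{\CL}[\msf{SubF}(\phi)]$. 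Taking cardinalities yields $\Size{\msf{f}_{\CL}(\phi)} = |\msf{SubF}(\msf{f}_{\CL}(\phi))| \leq |\msf{f}_{\CL}[\msf{SubF}(\phi)]| \leq |\msf{SubF}(\phi)| = \Size{\phi}$. Strictness occurs exactly when $\msf{f}_{\CL}$ identifies two distinct CTL subformulas, witnessed by $\phi = \exists \lX p \lor \forall \lX p$. There is no genuine obstacle here; the proof is pure bookkeeping, and the only point requiring care is keeping the subformula accounting consistent with the convention that $\fanBr{A}(\bullet \phi)$ is a single block in CTL.
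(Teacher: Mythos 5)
Your proof is correct and matches the paper, which simply asserts that ``a direct proof by induction'' establishes the proposition: your subformula-set bookkeeping (the image inclusion $\msf{SubF}(\msf{f}_{\CL}(\phi)) \subseteq \msf{f}_{\CL}[\msf{SubF}(\phi)]$ for item 1, the bijection for item 2, and the clause-by-clause computation for item 3) is exactly the induction the authors have in mind, with the details filled in.
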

	
	Interestingly for us, Definition~\ref{def:translate_ltl_ctl} also satisfies the lemma below, which is slightly less direct to show that Proposition~\ref{prop:translate_ltl_ctl_simple} above.
	\begin{lemma}
		\label{lem:equiv_word_ltl_ctl}
		Consider a set of propositions $\prop$ and  $\Ut \subseteq \Op{Un}{}$, $\Bt \subseteq \Op{Bin}{tp}$ and $\Bl \subseteq \Op{Bin}{lg}$.  Let $\phi \in \CTL(\Ut,\Bt,\Bl)$ be a $\CTL$-formula. For all ultimately periodic words $w = u \cdot v^\omega \in 2^\prop$, we have $K_{w} \models \phi$ if and only if  $w \models \msf{f}_{\CL(\Ut,\Bt,\Bl)}(\phi)$.
	\end{lemma}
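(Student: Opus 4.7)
The plan is to proceed by structural induction on the $\CTL$-formula $\phi$, but with a strengthened statement that talks about arbitrary states of $K_w$ rather than only the initial state. The key structural observation is that $K_w$ is \emph{deterministic}: by Definition~\ref{def:kripke_from_words}, every state has exactly one successor, so from any state $q$ there is a unique infinite path $\rho_q \in \mathrm{Out}^Q(q)$. In particular, from $q_1^u$ this unique path lists the states $q_1^u, \ldots, q_k^u, q_1^v, \ldots, q_n^v, q_1^v, \ldots$, and its labeling sequence is exactly $w = u \cdot v^\omega$. More generally, if $q$ is the state reached at position $i \in \N_1$ along $\rho_{q_1^u}$, then $\rho_q$ is the shift $\rho_{q_1^u}[i:]$ and its label sequence is $w[i:]$. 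Because of determinism, both CTL quantifiers collapse: $q \models \exists \psi$ iff $q \models \forall \psi$ iff $\rho_q \models \psi$.

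The strengthened statement I would prove by induction is: for every state $q$ of $K_w$ and every $i \in \N_1$ such that $\rho_{q_1^u}$ visits $q$ at position $i$, and for every $\phi \in \CTL(\Ut,\Bt,\Bl)$, we have $q \models \phi$ iff $w[i:] \models \msf{f}_{\CL(\Ut,\Bt,\Bl)}(\phi)$. The base case $\phi = p$ holds because $\pi(q) = w[i]$, and the boolean cases (negation and $* \in \Bl$) are immediate from the induction hypothesis. For a formula of the form $\diamond(\bullet \phi')$ with $\diamond \in \{\exists,\forall\}$ and $\bullet \in \Ut \setminus \{\neg\}$, the determinism collapse reduces $q \models \diamond(\bullet\phi')$ to $\rho_q \models \bullet\phi'$; then one unpacks the semantics (e.g.\ for $\bullet = \lX$, this is $\rho_q[2] \models \phi'$, i.e.\ the state at position $i+1$ satisfies $\phi'$; the inductive hypothesis at that state and position $i+1$ converts this to $w[i+1:] \models \msf{f}_{\CL}(\phi')$, which is exactly $w[i:] \models \lX\,\msf{f}_{\CL}(\phi') = \msf{f}_{\CL}(\diamond \lX \phi')$), and analogously for $\lF, \lG$. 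The case $\diamond(\phi_1 \bullet \phi_2)$ with $\bullet \in \Bt$ works the same way: determinism plus the IH applied at every relevant position along $\rho_q$ yields the matching LTL semantics on the corresponding suffix of $w$.

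The lemma then follows by instantiating the strengthened statement at the initial state $q = q_1^u$ and position $i = 1$, since $K_w \models \phi$ means $q_1^u \models \phi$ by definition and $w[1:] = w$. I do not expect any serious obstacle: the only thing to handle with care is the bookkeeping of how positions on $\rho_{q_1^u}$ identify to states once the path has entered the periodic part (the state $q_j^v$ occurs at positions $k+j, k+n+j, k+2n+j, \ldots$), but this ambiguity is harmless because all those positions $i$ yield the same ultimately periodic suffix $w[i:] = v[j:]\cdot v^\omega$, so the statement \textquotedblleft$q \models \phi$ iff $w[i:] \models \msf{f}_{\CL}(\phi)$\textquotedblright{} is consistent across the different indices $i$ that share the same state $q$.
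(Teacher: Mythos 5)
Your proposal is correct and follows essentially the same route as the paper's proof: both strengthen the statement to all positions/states along the unique path (the paper parametrizes this via a function $g(i)$ giving the state at position $i$ and the shifted structures $K_w^i$), exploit determinism of $K_w$ to collapse the $\exists$/$\forall$ quantifiers, and induct on the structure of the $\CTL$-formula, concluding at the initial state and position $1$. The bookkeeping point you raise about a state occurring at several positions is handled in the paper simply by indexing the inductive claim by the position $i$ rather than the state, which is the same resolution you describe.
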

	\begin{proof}
		Let $k := |u|$ and $n := |v|$. In the Kripke structure $K_w$, all states have exactly one successor. Hence, for all states $q \in Q$, we have $|\msf{Out}^Q(q)| = 1$ and we let $\rho_q \in Q^\omega$ be the infinite path such that $\msf{Out}^Q(q) = \{\rho_q\}$. 

		Let $g: \N_1 \rightarrow Q$ be such that for all $i \in \N_1$ we have:
		\begin{align*}
			g(i) := \begin{cases}
				q^u_i & \text{ if }i \leq k \\
				q^v_{1 + (i - k -1 \mod n)} & \text{ if }i \geq k+1
			\end{cases}
		\end{align*}
		
		For all $i \in \N_1$, we have $\msf{Succ}(g(i)) = \{ g(i+1)\}$. Indeed, for all $i \in \N_1$, we have:
		\begin{itemize}
			\item if $i \leq k-1$, then $g(i) = q^u_i$ and $g(i+1) = q^u_{i+1}$;
			\item $g(k) = q^u_k$ and $g(k+1) = q^v_{1}$;
			\item if $i \geq k+1$ and $l := (i - k - 1) \mod n \leq n-2$, then $g(i) = q^v_{l+1}$ and $g(i+1) = q^v_{l+2}$;
			\item if $i \geq k+1$ and $(i - k - 1) \mod n = n-1$, then $g(i) = q^v_{n}$ and $g(i+1) = q^v_{1}$.
		\end{itemize}
		
		It follows that, for all $i \in \N_1$, we have $w[i:] = \pi(\rho_{g(i)}) \in (2^\prop)^\omega$. 
		
		Now, for all $i \in \N_1$, we denote by $K_{w}^i$ the Kripke structure that is equal to the Kripke structure $K_{w}$ except that the initial state is now $g(i)$, i.e. $I = \{ g(i)\}$. Let us show by induction on $\CTL$-formulas $\phi \in \CTL(\Ut,\Bt,\Bl)$ the property $\mathcal{P}(\phi)$: for all $i \in \N_1$, we have $K_{w}^i \models \phi$ if and only if $w[i:] \models \msf{f}_{\CL(\Ut,\Bt,\Bl)}(\phi)$. Consider some $\phi \in \CTL(\Ut,\Bt,\Bl)$. We have:
		\begin{itemize}
			\item Assume that $\phi = x$ for any $x \in \prop$. In that case, $\msf{f}_{\CL(\Ut,\Bt,\Bl)}(\phi) = x$. Let $i \in \N_1$. We have 
			$K_{w}^i \models \phi$ iff $x \in \pi(q_{g(i)}) = w[i]$ iff $w[i:] \models \phi$. Hence, $\mathcal{P}(\phi)$ holds;
			\item Assume that $\phi = \neg \phi'$ for some $\phi' \in \CTL(\Ut,\Bt,\Bl)$
			. In that case, we have $\msf{f}_{\CL(\Ut,\Bt,\Bl)}(\phi) = \neg \msf{f}_{\CL(\Ut,\Bt,\Bl)}(\phi')$. Hence, $\mathcal{P}(\phi)$ is a straightforward consequence of $\mathcal{P}(\phi')$. 
			\item For all $\bullet \in \Ut$, assume that $\phi = \diamond (\bullet \phi')$ for some $\diamond \in \{ \exists,\forall \}$ and that $\mathcal{P}(\phi')$ holds. In that case, we have $\varphi := \msf{f}_{\CL(\Ut,\Bt,\Bl)}(\phi) = \bullet \varphi'$ with $\varphi' := \msf{f}_{\CL(\Ut,\Bt,\Bl)}(\phi')$. Let $i \in \N_1$. Since $|\msf{Out}^Q(g(i))| = 1$, it follows that $K_{w}^i \models \exists (\bullet \phi')$ iff $K_{w}^i \models \forall (\bullet \phi')$ iff:
			\begin{itemize}
				\item If $\bullet = \lX$: $K_{w}^{i+1} \models \phi'$ iff $w[i+1:] \models \varphi'$ (by $\mathcal{P}(\phi')$) iff $w[i:] \models \varphi$;
				\item If $\bullet = \lF$: there is some $j \in \N$, such that $K_{w}^{i+j} \models \phi'$ iff there is some $j \in \N$, such that $w[i+j:] \models \varphi'$ (by $\mathcal{P}(\phi')$) iff $w[i:] \models \bullet \varphi$;
				\item If $\bullet = \lG$: for all $j \in \N$, we have $K_{w}^{i+j} \models \phi'$ iff for all $j \in \N$, we have $w[i+j:] \models \varphi'$ (by $\mathcal{P}(\phi')$) iff $w[i:] \models \bullet \varphi$.
			\end{itemize}
			Hence, $\mathcal{P}(\phi)$ holds.
			\item The case of binary operators is similar.
		\end{itemize}
		In fact, $\mathcal{P}(\phi)$ holds for all $\CTL$-formulas $\phi \in \CTL(\Ut,\Bt,\Bl)$. The lemma follows.
	\end{proof}
	
	We can now define the reduction that we consider. 
	\begin{definition}
		\label{def:reduc_ctl}
		Consider an instance $\msf{In}_{\LTL} = (\prop,\mathcal{P},\mathcal{N},B)$ of the $\LTL_\msf{Learn}$ decision problem. We define the input $\msf{In}_{\CTL} = (\prop,\mathcal{P}',\mathcal{N}',B)$ with: 
		\begin{equation*}
			\mathcal{P}' := \{ M_w \mid w \in \mathcal{P} \}
		\end{equation*}
		and 
		\begin{equation*}
			\mathcal{N}' := \{ M_w \mid w \in \mathcal{N} \}
		\end{equation*}
	\end{definition}
	Clearly this reduction can be computed in logarithmic space. Let us now show that it satisfies the desired property. 
	
	\begin{lemma}
		\label{lem:equiv_ctl}
		Consider a set of propositions $\prop$, sets of operators $\Ut \subseteq \Op{Un}{}$, $\Bt \subseteq \Op{Bin}{tp}$ and $\Bl \subseteq \Op{Bin}{lg}$, and $k \in \N \cup \{\infty\}$. Then, the input $\msf{In}_{\LTL}$ is a positive instance of the $\LTL_\msf{Learn}(\Ut,\Bt,\Bl,k)$ decision problem if and only if $\msf{In}_{\CTL}$ is a positive instance of the $\CTL_\msf{Learn}(\Ut,\Bt,\Bl,k)$ decision problem.
	\end{lemma}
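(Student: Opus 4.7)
The plan is to use the two translations $\msf{f}_{\LC}$ and $\msf{f}_{\CL}$ from Definition~\ref{def:translate_ltl_ctl} as the building blocks, together with the semantic bridge provided by Lemma~\ref{lem:equiv_word_ltl_ctl}, to convert separating formulas back and forth between the two settings while respecting the size bounds.

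For the forward direction, suppose $\msf{In}_{\LTL}$ is a positive instance, witnessed by an $\LTL$-formula $\varphi \in \LTL(\Ut,\Bt,\Bl)$ with $|\varphi|\leq B$, $|\varphi|_{\msf{Bin}}\leq k$, such that $\varphi$ separates $\mathcal{P}$ and $\mathcal{N}$. Set $\phi := \msf{f}_{\LC(\Ut,\Bt,\Bl)}(\varphi)$. By Proposition~\ref{prop:translate_ltl_ctl_simple}, $|\phi| = |\varphi| \leq B$ and $\msf{f}_{\CL(\Ut,\Bt,\Bl)}(\phi) = \varphi$. Moreover a quick induction on $\varphi$ (the only rules that introduce quantifiers are on unary temporal operators and binary temporal operators, and they do not create new binary sub-formulas) shows that $|\phi|_{\msf{Bin}} = |\varphi|_{\msf{Bin}} \leq k$. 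Then, by Lemma~\ref{lem:equiv_word_ltl_ctl}, for every ultimately periodic word $w$ we have $K_w \models \phi$ iff $w \models \msf{f}_{\CL(\Ut,\Bt,\Bl)}(\phi) = \varphi$. Hence $\phi$ separates $\mathcal{P}'$ and $\mathcal{N}'$, so $\msf{In}_{\CTL}$ is a positive instance of $\CTL_\msf{Learn}(\Ut,\Bt,\Bl,k)$.

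For the backward direction, suppose $\msf{In}_{\CTL}$ is a positive instance, witnessed by a $\CTL$-formula $\phi \in \CTL(\Ut,\Bt,\Bl)$ with $|\phi|\leq B$, $|\phi|_{\msf{Bin}}\leq k$, that separates $\mathcal{P}'$ and $\mathcal{N}'$. Set $\varphi := \msf{f}_{\CL(\Ut,\Bt,\Bl)}(\phi) \in \LTL(\Ut,\Bt,\Bl)$. By Proposition~\ref{prop:translate_ltl_ctl_simple}, $|\varphi| \leq |\phi| \leq B$. A similar induction shows that $\msf{f}_{\CL(\Ut,\Bt,\Bl)}$ only drops quantifiers and never introduces new binary sub-formulas, so $|\varphi|_{\msf{Bin}} \leq |\phi|_{\msf{Bin}} \leq k$. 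Lemma~\ref{lem:equiv_word_ltl_ctl} then yields, for every $w \in \mathcal{P} \cup \mathcal{N}$, that $w \models \varphi$ iff $K_w \models \phi$, so $\varphi$ separates $\mathcal{P}$ and $\mathcal{N}$, making $\msf{In}_{\LTL}$ a positive instance of $\LTL_\msf{Learn}(\Ut,\Bt,\Bl,k)$.

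Most of the work has already been done upstream: Lemma~\ref{lem:equiv_word_ltl_ctl} gives the semantic correspondence and Proposition~\ref{prop:translate_ltl_ctl_simple} gives the size relationship. The only mildly delicate point is tracking the bound $k$ on binary operators through the two translations, which is immediate from their inductive definitions because both $\msf{f}_{\LC}$ and $\msf{f}_{\CL}$ only manipulate unary-style quantifier symbols and never create or destroy a binary sub-formula.
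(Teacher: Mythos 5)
Your proof is correct and takes essentially the same route as the paper's: translate the witness formula with $\msf{f}_{\LC(\Ut,\Bt,\Bl)}$ (resp.\ $\msf{f}_{\CL(\Ut,\Bt,\Bl)}$), use Proposition~\ref{prop:translate_ltl_ctl_simple} for the size bound, and Lemma~\ref{lem:equiv_word_ltl_ctl} to transfer satisfaction between each word $w$ and its Kripke structure. Your explicit observation that both translations preserve the bound $k$ on occurrences of binary operators is a detail the paper leaves implicit, but otherwise the two arguments coincide.
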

	\begin{proof}
		Assume that $\msf{In}_{\LTL}$ is a positive instance of the $\LTL_\msf{Learn}(\Ut,\Bt,\Bl)$ decision problem. Let $\varphi_\LTL$ be an $\LTL$-formula in $\LTL(\Ut,\Bt,\Bl,k)$ separating $\mathcal{P}$ and $\mathcal{N}$ of size at most $B$. Consider the $\CTL$-formula $\varphi_\CTL := \msf{f}_{\LC(\Ut,\Bt,\Bl)}(\varphi_\LTL)$. By Proposition~\ref{prop:translate_ltl_ctl_simple}, we have $\Size{\varphi_\CTL} = \Size{\varphi_\LTL} \leq B$ and  $\msf{f}_{\CL(\Ut,\Bt,\Bl)}(\varphi_\CTL) = \varphi_\LTL$.
		Consider now any $M_w \in \mathcal{P}'$ with $w \in \mathcal{P}$. By Lemma~\ref{lem:equiv_word_ltl_ctl}, we have $M_{w} \models \varphi_\CTL$ if and only if $w \models \msf{f}_{\CL(\Ut,\Bt,\Bl)}(\varphi_\CTL) = \varphi_\LTL$. Since $w \in \mathcal{P}$, it follows that $w \models \varphi_\LTL$ and $M_{w} \models \varphi_\CTL$. This is similar for any $M_w \in \mathcal{N}'$ with $w \in \mathcal{N}$. Hence, $\msf{In}_{\CTL}$ is a positive instance of the $\CTL_\msf{Learn}(\Ut,\Bt,\Bl,k)$ decision problem. 
		
		Assume now that $\msf{In}_{\CTL}$ is a positive instance of the $\CTL_\msf{Learn}(\Ut,\Bt,\Bl,k)$ decision problem. Let $\varphi_\CTL$ be a $\CTL$-formula separating $\mathcal{P}'$ and $\mathcal{N}'$ of size at most $B$. Consider the $\LTL$-formula $\varphi_\LTL := \msf{f}_{\CL(\Ut,\Bt,\Bl)}(\varphi_\CTL)$. By Proposition~\ref{prop:translate_ltl_ctl_simple}, we have $\Size{\varphi_\LTL} \leq \Size{\varphi_\CTL} \leq B$. Consider now any $w \in \mathcal{P}$. By Lemma~\ref{lem:equiv_word_ltl_ctl}, we have $M_{w} \models \varphi_\CTL$ if and only if $w \models \varphi_\LTL$. Since $M_{w} \in \mathcal{P}'$, it follows that $M_{w} \models \varphi_\CTL$ and $w \models \varphi_\LTL$. This is similar for any $w \in \mathcal{N}$. Hence, $\msf{In}_{\LTL}$ is a positive instance of the $\LTL_\msf{Learn}(\Ut,\Bt,\Bl,k)$ decision problem.
	\end{proof}
	
	The proof of Theorem~\ref{thm:CTL_hard_than_LTL} is now direct.
	\begin{proof}
		It is straightforward consequence of Lemma~\ref{lem:equiv_ctl} and the fact that the reduction from Definition~\ref{def:reduc_ctl} can be computed in logarithmic space.
		
		For all $k \in \N_1$, one can straightforwardly simulate a Kripke structure by a concurrent game structure with $k$ agents by making $k-1$ agents having only one action. Hence, $\ATL_\msf{Learn}^k(\Ut,\Bt,\Bl,n)$ is at least as hard as $\CTL_\msf{Learn}(\Ut,\Bt,\Bl,n)$, under logarithmic space reduction.
	\end{proof} 
	
	\section{Learning formulas using only unary operators}
	\label{sec:unary}
	We have seen that, regardless of the operators considered, $\CTL$ learning (and therefore $\ATL$ learning as well) is at least as hard as $\LTL$ learning. Furthermore, we have also seen that $\LTL$ learning is $\msf{NP}$-hard as soon as any non-unary operator is allowed. In this section, we focus on the learning problems without non-unary operators in order to be able to distinguish the complexity of $\LTL$, $\CTL$ and different kinds of $\ATL$ learning. In this setting, we show that:
	\begin{itemize}
		\item $\LTL$ learning can now be decided in logarithmic space. To establish this result, we use simple results on equivalences of $\LTL$-formulas. Some of these equivalences were already proven in \cite{arXivFijalkow}.
		\item $\CTL$ learning remains $\msf{NP}$-complete, this is proved again via a reduction from the hitting set problem. The reduction relies heavily on the use of the next operator $\lX$. On the other hand, $\CTL$ learning without the next operator $\lX$ is equivalent to the $\CTL$ learning with formulas of size at most 5, and is in $\msf{NL}$ (i.e. it can be decided in non-deterministic logarithmic space). 
		\item On the other hand, $\ATL$ learning with at least two agents and at least two operators in $\{ \lF,\lG,\neg \}$ 
		is still $\msf{NP}$-complete. The reduction is an adaptation of the reduction for the $\CTL$ case that makes use of the additional players to mimic the behavior of the next operator $\lX$ with both the eventually and globally operators $\lF$ and $\lG$. This is the most technical reduction of the paper. However, if one only allows the use of either $\lF$ or $\lG$ without negations, then the $\ATL$ learning with at most two agents can be decided in polynomial time as, given a bound $k$ and a set of propositions $\prop$, the number of formulas to check is polynomial in $k$ and $|\prop|$.
		\item Finally, $\ATL$ learning with at least three agents remains $\msf{NP}$-complete even if only of the three operators $\lX,\lF,\lG$ is allowed. The reduction is an adaptation of the previous one where the third player is used to replace one of the operators $\lF$ or $\lG$. Hence, in this setting, we do not distinguish, complexity-wise, the 
		$\ATL$ learning problems with a fixed number of agents (at least 3) and a number of agents as part of the input. This will be done in the next section.
	\end{itemize}	
	
	
	\subsection{$\LTL$ learning}
	We first focus on the case of $\LTL$ learning. The goal of this subsection is to show the proposition below.
	\begin{proposition}
		\label{prop:ltl_unary_in_l}
		For all sets $\Ut \subseteq \Op{Un}{}$ and $\Bl \subseteq \Op{Bin}{lg}$, and $n \in \N$, the decision problem $\LTL_\msf{Learn}(\Ut,\emptyset,\Bl,n)$ is in $\msf{L}$.
	\end{proposition}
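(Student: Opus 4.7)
The plan is to show that, since $\Bt = \emptyset$ and the bound $n$ on binary operators is a fixed constant (not part of the input), the space of semantically distinct candidate formulas is polynomial in the input size and can be enumerated and checked by a logspace machine.

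First I would establish a canonical form for unary-only $\LTL$-formulas using the following equivalences on infinite words: $\lX$ commutes with each of $\lF$, $\lG$, and $\neg$; $\neg\lF\varphi \equiv \lG\neg\varphi$, $\neg\lG\varphi \equiv \lF\neg\varphi$, and $\neg\neg\varphi \equiv \varphi$; $\lF\lF \equiv \lF$ and $\lG\lG \equiv \lG$; and $\lF\lG\lF \equiv \lG\lF$ together with $\lG\lF\lG \equiv \lF\lG$. Some of these are already recorded in \cite{arXivFijalkow}. They imply that every $\varphi \in \LTL(\Op{Un}{}, \emptyset, \emptyset)$ is equivalent on infinite words to one of the form $\lX^{k} M \ell$ with $k \in \N$, $M \in \{\varepsilon, \lF, \lG, \lF\lG, \lG\lF\}$, and $\ell$ a literal $p$ or $\neg p$ for some $p \in \prop$. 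Any such formula of size at most $B$ satisfies $k \leq B$, so the number of semantically distinct canonical leaves of size at most $B$ is $O(|\prop|\cdot B)$.

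Because $n$ is fixed, any candidate with $|\varphi|_{\msf{Bin}} \leq n$ has a syntactic skeleton that is an ordered tree with at most $n$ internal nodes labelled by operators from $\Bl$; the at most $n+1$ blocks of unary operators (one above each binary node and one above each leaf) can each be assumed to be in canonical form. The number of tree shapes is a constant depending only on $n$ and $|\Bl|$, so the total number of candidate formulas, up to semantic equivalence, is $O((|\prop|\cdot B)^{n+1})$, polynomial in the input size. A logspace machine then iterates over all such candidates---each encoded in $O(\log(|\prop|+B))$ bits---and for every candidate $\varphi$ checks that every $w \in \mathcal{P}$ satisfies $\varphi$ and every $w \in \mathcal{N}$ does not. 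The check reduces to evaluating a canonical leaf $\lX^{k} M \ell$ at position $1$ of an ultimately periodic word $u \cdot v^\omega$, which, once $k$ exceeds $|u|$, amounts to arithmetic modulo $|v|$, and then traversing the constant-size skeleton to combine the leaf values under the binary logical operators.

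The main technical obstacle is the faithful normalisation step: one has to verify each of the listed equivalences on arbitrary infinite words, argue that the exponent of $\lX$ in the canonical form is bounded by $B$ (so the search space does not blow up), and handle unary operators that may sit on top of binary nodes rather than leaves. Once the normal form and its size bound are in place, the enumeration of the polynomially many candidates and their model-checking against the input models are routine logspace subroutines, giving $\LTL_\msf{Learn}(\Ut,\emptyset,\Bl,n) \in \msf{L}$.
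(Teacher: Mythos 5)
Your overall strategy --- normalise the unary blocks via the standard equivalences, observe that a fixed bound $n$ on binary operators leaves only polynomially many candidates, then enumerate and model-check them in logspace --- is exactly the route the paper takes (its canonical sequences $\msf{SeqQt}_{\LTL}(\Ut)$, the class $\LTL_{\msf{Unif}}$, and the logspace evaluation lemma). However, there is a genuine gap: your canonical form $\lX^{k} M \ell$ with $M \in \{\varepsilon,\lF,\lG,\lF\lG,\lG\lF\}$ and $\ell \in \{p,\neg p\}$ is taken over the \emph{full} unary signature, while the decision problem asks for a witness in $\LTL(\prop,\Ut,\emptyset,\Bl)$ for an arbitrary $\Ut \subseteq \{\neg,\lX,\lF,\lG\}$, of size at most $B$. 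Enumerating full-signature canonical candidates of size at most $B$ is unsound for proper subsets $\Ut$: with $\Ut=\{\lF\}$ the candidate $\lG p$ has no equivalent in the fragment at all (every fragment formula is equivalent to $p$ or $\lF p$), so on a sample separated only by $\lG p$ your algorithm answers ``yes'' although the instance is negative; and with $\Ut=\{\lF,\neg\}$ the candidate $\lG\lF p$ (size $3$) is only expressible in the fragment as $\neg\lF\neg\lF p$ (size $5$), so accepting on the basis of the canonical size can report ``yes'' on instances whose only fragment witnesses exceed $B$. The paper avoids both problems by defining the canonical sequences relative to $\Ut$ (e.g.\ $\{\lF,\neg\lF\neg,\lF\neg\lF,\neg\lF\neg\lF\}$ when $\lF,\neg\in\Ut$ but $\lG\notin\Ut$) and proving that every fragment formula has an equivalent canonical formula \emph{in the same fragment} whose size does not increase; your normalisation needs the same $\Ut$-sensitive treatment before the enumeration argument goes through.

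Two smaller points. A skeleton with at most $n$ binary nodes has up to $n+1$ leaves and hence up to $2n+1$ unary blocks, not $n+1$; this only changes the exponent in your polynomial count and is harmless since $n$ is fixed. The logspace evaluation sketch (arithmetic modulo $|v|$ for leaves, a constant number of position counters for blocks sitting above binary nodes) matches the paper's algorithm and is fine.
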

	
	To establish this proposition, we first consider $\LTL$-formulas that do not use any binary operators. First of all, since we consider ultimately periodic words, we have the (well-known, see for instance \cite[Proposition 8]{arXivFijalkow}) equivalences below.
	\begin{observation}
		\label{obser:equiv_ltl}
		Consider a non-empty set of propositions $\prop$ and some $k \in \N$. For all $\LTL$-formulas $\varphi$ on $\prop$, we have: 
		\begin{itemize}
			\item[1.] $\lF \lX^k \varphi \equiv \lX^k \lF \varphi $
			\item[2.] $\lG \lX^k \varphi \equiv \lX^k \lG \varphi $
			\item[3.] $\lF \lF \varphi \equiv \lF \varphi $
			\item[4.] $\lG \lG \varphi \equiv \lG \varphi $
			\item[5.] $\lF \lG \lF \varphi \equiv \lG \lF \varphi $
			\item[6.] $\lG \lF \lG \varphi \equiv \lF \lG \varphi $
		\end{itemize}
	\end{observation}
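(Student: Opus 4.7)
The plan is to establish each of the six equivalences by a direct semantic argument, unfolding the definitions of $\models$ for $\lX$, $\lF$, and $\lG$ given in the $\LTL$ semantics paragraph. I observe first that none of these equivalences actually requires $w$ to be ultimately periodic: they all hold on arbitrary infinite words, so the ultimate-periodicity restriction in the definition of $\equiv$ plays no role beyond fixing the class of models we quantify over.

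For items 1 and 2, the plan is to induct on $k \in \N$. The case $k = 0$ is trivial, and for the inductive step it suffices to prove $\lF \lX \varphi \equiv \lX \lF \varphi$ (and dually $\lG \lX \varphi \equiv \lX \lG \varphi$) and then peel off one $\lX$ at a time. Unfolding: $w \models \lF \lX \varphi$ iff $\exists i \in \N_1,\ (w[i:])[2:] \models \varphi$, i.e.\ $\exists i \in \N_1,\ w[i+1:] \models \varphi$; while $w \models \lX \lF \varphi$ iff $\exists j \in \N_1,\ (w[2:])[j:] \models \varphi$, i.e.\ $\exists j \in \N_1,\ w[j+1:] \models \varphi$. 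The two conditions coincide under the trivial bijection on $\N_1$. Item 2 is entirely dual.

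For items 3 and 4, I would unfold directly: $w \models \lF \lF \varphi$ iff there exist $i, j \in \N_1$ with $w[i + j - 1 :] \models \varphi$. The set of values $i + j - 1$ taken as $i, j$ range over $\N_1$ is exactly $\N_1$ (take $j = 1$ to cover every $m = i \geq 1$), so this reduces to $\exists m \in \N_1,\ w[m:] \models \varphi$, i.e.\ $w \models \lF \varphi$. Item 4 follows by the dual argument on $\lG \lG \varphi$.

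For items 5 and 6, the key observation is that $\lG \lF \varphi$ and $\lF \lG \varphi$ are \emph{shift-invariant}: for every $k \in \N_1$, one has $w \models \lG \lF \varphi$ iff $w[k:] \models \lG \lF \varphi$, and similarly for $\lF \lG \varphi$. The shift-invariance of $\lG \lF \varphi$ follows because it expresses ``$\varphi$ holds at infinitely many positions'', a property clearly preserved under discarding any finite prefix; dually for $\lF \lG \varphi$, which expresses ``$\varphi$ holds from some point on''. Once this is in hand, applying either $\lF$ or $\lG$ to a shift-invariant property is a no-op (both quantifiers over suffixes collapse), yielding $\lF \lG \lF \varphi \equiv \lG \lF \varphi$ and dually $\lG \lF \lG \varphi \equiv \lF \lG \varphi$. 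I expect items 5 and 6 to be the main obstacle, since the shift-invariance step is slightly less mechanical than the index manipulations of items 1--4; however, these are classical $\LTL$ identities and, once shift-invariance is stated explicitly, the proofs reduce to routine unfolding.
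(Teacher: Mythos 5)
Your proposal is correct, and for items 1--4 it is essentially the paper's argument: direct unfolding of the semantics with the index identity $(w[i:])[j:]=w[i+j-1:]$, followed by duality for the $\lG$-versions (the paper handles $\lX^k$ in one index shift rather than by induction on $k$, but your induction works, using the evident fact that $\equiv$ is a congruence for prefixing with $\lX$). Where you genuinely diverge is items 5--6: the paper proves $\lF\lG\lF\varphi \implies \lG\lF\varphi$ directly, taking for each $j$ the witness $k_{j'}$ with $j'=\max(i,j)$, and gets the converse from $\lG\lF\varphi \implies \lF\lG\lF\varphi$; you instead isolate a shift-invariance lemma (for every $k\in\N_1$, $w\models\lG\lF\varphi$ iff $w[k:]\models\lG\lF\varphi$, and dually for $\lF\lG\varphi$) and observe that applying $\lF$ or $\lG$ to a shift-invariant property is a no-op. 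Your route is slightly more abstract but buys modularity: it handles items 5 and 6 uniformly and in fact absorbs any further stack of $\lF$/$\lG$ operators above $\lG\lF$ or $\lF\lG$, which is in the spirit of how the paper later uses these identities (Definition~\ref{def:ltl_seq} and Corollary~\ref{coro:all_possible_cases_ltl_unary}); the paper's argument is more elementary and self-contained at the single identity. Two small points to make explicit if you write this up: the shift-invariance claim itself deserves the one-line proof (infinitely many satisfying positions, resp.\ a tail of satisfying positions, are unaffected by deleting a finite prefix), and since the paper's $\equiv$ quantifies only over ultimately periodic words you should note that suffixes of ultimately periodic words are again ultimately periodic, so your stronger claim over arbitrary infinite words indeed subsumes what is needed.
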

	\begin{proof}
		Consider an ultimately periodic word $w = u \cdot v^\omega \in (2^\prop)^\omega$. We prove the first, third and fifth items, the other ones are obtained by duality. 
		\begin{itemize}
			\item[1.] We have $w \models \lF \lX^k \varphi$ iff there is some $i \geq 1$ such that $w[i:] \models \lX^k \varphi$ iff there is some $i \geq 1$ such that $w[i+k:] \models \varphi$ iff there is some $j \geq k+1$ such that $w[j:] \models \varphi$ iff $w[k+1:] \models \lF\varphi$ iff $w \models \lX^k \lF \varphi$.
			\item[3.] We have $w \models \lF \lF \varphi$ iff there is some $i \geq 1$ such that $w[i:] \models \lF \varphi$ iff there is some $i \geq 1$ and some $j \geq i$ such that $w[i+j:] \models \varphi$ iff there is some $j \geq 1$ such that $w[j:] \models \varphi$ iff $w \models \lF\varphi$.
			\item[5.] Straightforwardly, we have $\lG \lF \varphi \implies \lF\lG \lF \varphi$. On the other hand, assume that $w \models \lF\lG \lF \varphi$, i.e. that there is some $i \geq 1$ such that $w[i:] \models \lG \lF \varphi$. Thus, for all $j\geq i$, we have $w[j:] \models \lF \varphi$, i.e. there is some $k_j \geq j$ such that $w[k_j:] \models \varphi$. Let us show that $w \models \lG \lF \varphi$. Let $j \geq 1$ and $j' := \max(i,j) \geq i$. We have $w[k_{j'}:] \models \varphi$, hence $w[j:] \models \lF\varphi$ since $k_{j'} \geq j' \geq j$. Since this holds for all $j \geq 1$, we have $w \models \lG \lF \varphi$.
		\end{itemize}
	\end{proof}
	
	In turn, let us consider the definition below of sequences of $\LTL$-operators that we will consider. 
	\begin{definition}
		\label{def:ltl_seq}
		Consider some $\Ut \subseteq \{ \lX,\lF,\lG,\neg \}$.
		We let:
		\begin{equation*}
			\msf{Qt}_{\lX}(\Ut) := \{ \msf{Y}^k \mid k \in \N,\; \msf{Y} \in \{ \epsilon \} \cup (\Ut \cap \{\lX\}) \}
		\end{equation*}
		and
		\begin{align*}
			\msf{Qt}_{\lF,\lG}(\Ut) := \begin{cases}
				\{ \lF,\lG,\lF \lG,\lG \lF \} & \text{ if }\lF,\lG \in \Ut \\
				\{ \lF \} & \text{ if }\lF \in \Ut, \neg \notin \Ut \\
				\{ \lF,\neg \lF \neg,\lF \neg \lF,\neg \lF \neg \lF \} & \text{ if }\lF,\neg \in \Ut\\
				\{ \lG \}& \text{ if }\lG \in \Ut, \neg \notin \Ut \\
				\{ \lG,\neg \lG \neg,\neg \lG \neg \lG,\lG \neg \lG \neg \} & \text{ if }\lG,\neg \in \Ut \\
				\{ \epsilon \} & \text{ if }\lF,\lG \in \Ut
			\end{cases}
		\end{align*}
		and 
		\begin{equation*}
			\msf{Qt}_{\neg}(\Ut) := \{\epsilon\} \cup \Ut \cap \{\neg\}
		\end{equation*}
		
		Then, we let:
		\begin{equation*}
			\msf{SeqQt}_{\LTL}(\Ut) := \{ \msf{Q}_{\lX} \cdot \msf{Q}_{\lF,\lG} \cdot \msf{Q}_{\neg}
			\mid \msf{Q}_{\lX} \in \msf{Qt}_{\lX}(\Ut),\; \msf{Q}_{\lF,\lG} \in \msf{Qt}_{\lF,\lG}(\Ut),\; \msf{Q}_{\neg} \in \msf{Qt}_{\neg}(\Ut) 
			\}
		\end{equation*}
	\end{definition}

	We deduce the corollary below
	.
	\begin{corollary}
		\label{coro:all_possible_cases_ltl_unary}
		Consider a non-empty set of propositions $\prop$, and some $\Ut \subseteq \{ \lX,\lF,\lG,\neg \}$. For any $\LTL$-formula $\varphi = \msf{Qt} \cdot \varphi' \in \LTL(\prop,\Ut,\emptyset,\emptyset,0)$, where $\msf{Qt}$ is a sequence of operators and $\varphi' \in \LTL(\prop,\Ut,\emptyset,\emptyset,0)$ is an $\LTL$-formula, there is a sequence of operators $\msf{Qt}' \in \msf{SeqQt}_{\LTL}(\Ut)$ such that, for $\varphi'' := \msf{Qt}' \cdot \varphi' \in \LTL(\prop,\Ut,\emptyset,\emptyset,0)$, we have $\varphi \equiv \varphi''$ and $\Size{\varphi''} \leq \Size{\varphi}$.
	\end{corollary}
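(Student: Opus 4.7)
The plan is to normalize the prefix $\msf{Qt}$ of unary operators via a confluent rewrite system drawn from Observation~\ref{obser:equiv_ltl} together with three standard equivalences that hold on any infinite word, namely $\neg \neg \psi \equiv \psi$, $\neg \lX \psi \equiv \lX \neg \psi$, $\neg \lF \psi \equiv \lG \neg \psi$, and $\neg \lG \psi \equiv \lF \neg \psi$ (the last two by a direct unfolding of the semantics of $\lF$ and $\lG$). All these rewrites are equivalence-preserving and, crucially, either preserve or strictly decrease the number of unary operators in $\msf{Qt}$.

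First, I would commute every $\lX$ past every $\lF$, $\lG$, and $\neg$ to its left, using items 1--2 of Observation~\ref{obser:equiv_ltl} and $\neg \lX \equiv \lX \neg$. Since each such step preserves size, this yields $\msf{Qt} \equiv \lX^k \cdot \msf{R}$ with $\msf{R} \in (\{\lF,\lG,\neg\} \cap \Ut)^*$ and $|\lX^k \cdot \msf{R}| = |\msf{Qt}|$. Next, I would repeatedly apply the strictly size-reducing rules $\neg \neg \to \epsilon$, $\lF \lF \to \lF$, $\lG \lG \to \lG$, $\lF \lG \lF \to \lG \lF$, $\lG \lF \lG \to \lF \lG$ (items 3--6 of Observation~\ref{obser:equiv_ltl}), interleaved with the size-preserving commutations $\neg \lF \leftrightarrow \lG \neg$ and $\neg \lG \leftrightarrow \lF \neg$ to expose further reducible patterns. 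Termination is immediate since size strictly decreases whenever a reducing rule applies.

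At saturation, the residual $\msf{R}$ is equivalent to a word of the form $\msf{S} \cdot \msf{N}$ with $\msf{S} \in \{\epsilon, \lF, \lG, \lF \lG, \lG \lF\}$ and $\msf{N} \in \{\epsilon, \neg\}$: these are the only normal forms of the monoid generated by $\lF, \lG$ under the relations above. If $\{\lF, \lG\} \subseteq \Ut$, one simply writes $\msf{Qt}' := \lX^k \cdot \msf{S} \cdot \msf{N}$, which lies in $\msf{SeqQt}_{\LTL}(\Ut)$ by construction. If one of $\lF, \lG$ is missing from $\Ut$ while $\neg \in \Ut$, I would syntactically encode the missing operator via $\lG \equiv \neg \lF \neg$ or $\lF \equiv \neg \lG \neg$, landing in the corresponding enumeration in $\msf{Qt}_{\lF,\lG}(\Ut)$; if $\neg \notin \Ut$, the original $\msf{R}$ cannot contain a $\neg$ at all, so the reduction already leaves only $\lF$-powers or $\lG$-powers, each of which collapses to at most one operator.

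The main obstacle will be the size bound $|\varphi''| \leq |\varphi|$ in the restricted cases where we encode a missing operator via negations. I would address this by showing that the original $\msf{R}$ already contains enough negations to pay for the encoding: an original sequence whose semantic normal form involves $\lG$ (say) must itself already contain a pattern $\neg \cdots \lF \cdots \neg$ that, after our commutations and reductions, contributes at least as many symbols as the encoding $\neg \lF \neg$. A careful bookkeeping of ambient negations in Stage~2, together with a case analysis on the at most five possible values of $\msf{S}$ and the at most two values of $\msf{N}$, suffices to establish the bound uniformly across every sub-case of $\Ut$.
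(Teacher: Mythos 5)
Your overall route is the same as the paper's: the paper disposes of this corollary in one line, as a direct consequence of Observation~\ref{obser:equiv_ltl} together with $\varphi\equiv\neg\neg\varphi$, $\lF\varphi\equiv\neg\lG\neg\varphi$ and $\lG\varphi\equiv\neg\lF\neg\varphi$, and your three-stage normalization (commute $\lX$ outward, reduce the $\lF/\lG/\neg$ block to one of the classes $\{\epsilon,\lF,\lG,\lF\lG,\lG\lF\}\cdot\{\epsilon,\neg\}$, then re-encode using the operators available in $\Ut$) is exactly that argument written out. One small point: ``termination is immediate'' is not quite right as written, since you allow the size-preserving commutations $\neg\lF\leftrightarrow\lG\neg$ and $\neg\lG\leftrightarrow\lF\neg$ in both directions, so a run applying only those need not terminate; orient them (push all negations rightward, cancelling $\neg\neg$ on the way) before applying the idempotence and absorption rules and both termination and your normal-form claim become immediate.

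The genuine gap is in your last paragraph, i.e.\ precisely the size bound in the cases $\{\lF,\neg\}\subseteq\Ut$, $\lG\notin\Ut$ (and dually). You claim that a prefix whose normal form involves $\lG$ ``must itself already contain a pattern $\neg\cdots\lF\cdots\neg$'' that pays for the encoding $\neg\lF\neg$; the prefix $\neg\lF$ refutes this: its normal form is $\lG\neg$, yet it contains a single negation, so there is nothing to pay with. The correct accounting is different and simpler: the trailing negation of the encoding $\neg\lF\neg$ is absorbed into the $\msf{Qt}_{\neg}$ slot (equivalently, cancels against an adjacent $\neg$), so the class of $\lG\neg$ is represented by $\neg\lF$ itself, of the same length as the original, and more generally every step you use --- the $\lX$-commutations, $\neg\neg\to\epsilon$, the rules of Observation~\ref{obser:equiv_ltl}, and the replacements $\lG\mapsto\neg\lF\neg$ (resp.\ $\lF\mapsto\neg\lG\neg$) with adjacent double negations cancelled --- is length-nonincreasing, which already gives $\Size{\varphi''}\le\Size{\varphi}$ with no counting of pre-existing negations. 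Note also that this absorption is genuinely needed: under a strictly literal reading of Definition~\ref{def:ltl_seq}, where the only representative offered for this class is $\neg\lF\neg$ followed by the $\msf{Qt}_{\neg}$ negation, no bookkeeping (yours included) can achieve $\Size{\varphi''}\le\Size{\varphi}$ for $\varphi=\neg\lF p$; the reading with the cancellation is the one the paper's one-line proof implicitly relies on, and it is the one you should make explicit rather than the ``enough negations to pay'' argument.
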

	\begin{proof}
		This is a direct consequence of Observation~\ref{obser:equiv_ltl} and the fact that, for all $\LTL$-formulas $\varphi$, we have $\varphi \equiv \neg \neg \varphi$, $\lF \varphi \equiv \neg \lG \neg \varphi$ and $\lG \varphi \equiv \neg \lF \neg \varphi$.
	\end{proof}
	
	Let us now consider $\LTL$-formulas with (a bounded amount of occurrences of) binary operators.
	\begin{definition}
		Consider a non-empty set of propositions $\prop$, and some $\Ut \subseteq \Op{Un}{}$ and $\Bl \subseteq \Op{Bin}{lg}$. We define inductively the set $\LTL_{\msf{Unif}}(\prop,\Ut,\emptyset,\Bl,\infty) \subseteq \LTL(\prop,\Ut,\emptyset,\Bl,\infty)$:
		\begin{itemize}
			\item $p \in \LTL_{\msf{Unif}}(\prop,\Ut,\emptyset,\Bl,\infty)$, for all $p \in \prop$;
			\item $\msf{Qt} \cdot \varphi \in \LTL_{\msf{Unif}}(\prop,\Ut,\emptyset,\Bl,\infty)$, for all $\msf{Qt} \in \msf{SeqQt}_{\LTL}(\Ut)$ and $\varphi \in \LTL_{\msf{Unif}}(\prop,\Ut,\emptyset,\Bl,\infty)$;
			\item $\varphi_1 \bullet \varphi_2 \in \LTL_{\msf{Unif}}(\prop,\Ut,\emptyset,\Bl,\infty)$, for all $\bullet \in \Bl$ and $\varphi_1,\varphi_2 \in \LTL_{\msf{Unif}}(\prop,\Ut,\emptyset,\Bl,\infty)$.
		\end{itemize}
		Then, for all $n \in \N$, the set of $\LTL$-formulas $\LTL_{\msf{Unif}}(\prop,\Ut,\emptyset,\Bl,n)$ is defined by $\LTL_{\msf{Unif}}(\prop,\Ut,\emptyset,\Bl,n) := \LTL_{\msf{Unif}}(\prop,\Ut,\emptyset,\Bl,\infty) \cap \LTL(\prop,\Ut,\emptyset,\Bl,n)$.
	\end{definition}

	To conclude and prove Proposition~\ref{prop:ltl_unary_in_l}, we establish the three following facts, for sets $\Ut \subseteq \Op{Un}{}$ and $\Bl \subseteq \Op{Bin}{lg}$: 1) for all instances $I = (\prop,\mathcal{P},\mathcal{N},k)$ of the decision problem $\LTL_\msf{Learn}(\Ut,\emptyset,\Bl,n)$, $I$ is a positive instance if and only if there is an $\LTL_{\msf{Unif}}(\prop,\Ut,\emptyset,\Bl,n)$-formula of size at most $k$ accepting $\mathcal{P}$ and rejecting $\mathcal{N}$. Furthermore, let us fix a bound $n \in \N$ on the number of occurrences of binary operators, then: 2) the number of $\LTL_{\msf{Unif}}(\prop,\Ut,\emptyset,\Bl,n)$-formulas of size at most $k$ is polynomial in $k$ and $|\prop|$; and 3) there is a logarithmic-space algorithm that decides if an ultimately-periodic word satisfies $\LTL_{\msf{Unif}}(\prop,\Ut,\emptyset,\Bl,n)$-formulas.
	
	Before we argue that these facts hold, let us show that they directly imply Proposition~\ref{prop:ltl_unary_in_l}.
	\begin{proof}
		Consider the decision problem $\LTL_\msf{Learn}(\Ut,\emptyset,\Bl,n)$. With fact 1, deciding if an instance $(\prop,\mathcal{P},\mathcal{N},k)$ is positive amounts to deciding the existence of an $\LTL_{\msf{Unif}}(\prop,\Ut,\emptyset,\Bl,n)$-formula of size at most $k$ accepting $\mathcal{P}$ and rejecting $\mathcal{N}$. With facts 2) and 3), we can design a logarithmic space algorithmic solving that problem: it suffices to have a counter with which we enumerate all the polynomially-many $\LTL_{\msf{Unif}}(\prop,\Ut,\emptyset,\Bl,n)$-formulas to consider, and then check if one them does accept $\mathcal{P}$ and reject $\mathcal{N}$. 
	\end{proof}

	Let us now argue that these three facts hold. The first two facts are rather straightforward. Indeed, Fact 1) is direct consequence of Corollary~\ref{coro:all_possible_cases_ltl_unary}. In addition, Fact 2) is a direct consequence (which can be proved straightforwardly by induction on $n$) of the fact that the number of sequences of operators in $ \msf{SeqQt}_{\LTL}(\Ut)$ of size at most $k$ is polynomial (in fact, linear) in $k$. As it is more involved, we state Fact 3 in a lemma below. Its proof concludes this subsection.
	\begin{lemma}
		\label{lem:log_space_algo_ltl}
		Consider a bound $n \in \N$. The following decision problem can be decided in logarithmic space:
		\begin{itemize}
			\item Input: an $\LTL_{\msf{Unif}}(\prop,\Ut,\emptyset,\Bl,n)$-formula $\varphi$, and an ultimately periodic word $w = u \cdot v^\omega$;
			\item Output: Yes iff $w \models \varphi$.
		\end{itemize}
	\end{lemma}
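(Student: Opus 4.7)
The plan is to give a recursive evaluation algorithm that, on input a subformula of $\varphi$ and a position $i$ in the ultimately periodic word $w = u \cdot v^\omega$, decides whether $w[i:]$ satisfies the subformula. Positions will be kept in a \emph{normalized form}: any index strictly greater than $|u|$ is rewritten to $|u| + 1 + ((i - |u| - 1) \bmod |v|)$, so a position always fits in $O(\log(|u|+|v|))$ bits; this normalization is crucial because $\lX^k$-chains and inner $\lF/\lG$-loops can otherwise move arbitrarily far along the word. The current subformula will be tracked by a pointer into the input, using $O(\log|\varphi|)$ bits.

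The case analysis mirrors the inductive definition of $\LTL_{\msf{Unif}}(\prop,\Ut,\emptyset,\Bl,n)$. For $\varphi = p \in \prop$, simply check whether $p \in w[i]$. For $\varphi = \varphi_1 \bullet \varphi_2$ with $\bullet \in \Bl$, evaluate $\varphi_1$, reuse the working space to evaluate $\varphi_2$, and combine with $\bullet$. Otherwise $\varphi = \msf{Qt} \cdot \varphi'$ with $\msf{Qt} = \msf{Q}_\lX \cdot \msf{Q}_{\lF,\lG} \cdot \msf{Q}_\neg \in \msf{SeqQt}_{\LTL}(\Ut)$: the prefix $\msf{Q}_\lX = \lX^k$ is discharged by a single arithmetic step that updates the normalized position to $i + k$; the finitely many templates for $\msf{Q}_{\lF,\lG}$ listed in Definition~\ref{def:ltl_seq} are constant-length chains of $\lF$, $\lG$ and $\neg$, so they are evaluated by at most four nested loops, each ranging over the positions $i, i+1, \ldots, i + |u| + |v|$; and $\msf{Q}_\neg$ flips the returned boolean.

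Two observations yield the logspace bound. First, the recursion depth is $O(1)$: each binary operator decreases $|\varphi|_{\msf{Bin}}$ by one, bounding the binary-operator chain by $n$, while each interposed unary block contributes at most five nested calls (the constant-length $\msf{Q}_{\lF,\lG}$ together with $\msf{Q}_\neg$, since $\lX^k$ is collapsed into a single arithmetic update). Second, each recursive level stores only $O(1)$ integers bounded by $|u|+|v|$ (the normalized position and at most four loop counters) together with a pointer into $\varphi$, using $O(\log(|\varphi|+|w|))$ bits per level. The only real technical point to check is that $|u|+|v|$ iterations indeed suffice to decide $\lF$ and $\lG$ on an ultimately periodic word: this follows because, past the prefix $u$, the suffix $w[i:]$ depends only on $i \bmod |v|$, so every equivalence class of suffixes is visited within $|u|+|v|$ steps. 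Multiplying the $O(1)$ recursion depth by the $O(\log(|\varphi|+|w|))$ per-level cost gives the claimed $\msf{L}$ upper bound.
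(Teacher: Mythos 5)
Your proposal is correct and follows essentially the same route as the paper: a recursive evaluation of the formula along the semantics, where the $\LTL_{\msf{Unif}}$ structure (at most $n$ binary operators and constant-length $\lF/\lG$ blocks between them, with $\lX^k$ absorbed into the position pointer) bounds the recursion depth and the number of stored counters by a constant depending only on $n$, each of logarithmic size. Your explicit normalization of positions modulo $|v|$ and the $|u|+|v|$ iteration bound for $\lF/\lG$ just spell out what the paper's algorithm (Figure~\ref{fig:algo_LTL}) does implicitly by wrapping positions back into the periodic part.
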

	\begin{proof}
		The recursive algorithm depicted in Figure~\ref{fig:algo_LTL} (in which $\bullet$ refers to a binary operator) straightforwardly solves the decision problem (by simply following the $\LTL$ semantics). We have to argue that it can be implemented in logarithmic space. To execute the algorithmic, it is sufficient to keep a pointer to the current position in the word, plus additional pointers:
		\begin{itemize}
			\item To keep track of the sub-formulas currently being evaluated, and the intermediary results already computed, which is necessary with binary operators. Since there are at most $n$ occurrences of binary operators, the total number of intermediary results to keep track of is bounded by $2^n$.
			\item To keep track of the indices being evaluated, which is necessary with the $\lF$ and $\lG$ operators. However, since we consider $\LTL_{\msf{Unif}}(\prop,\Ut,\emptyset,\Bl,n)$-formulas, between binary operators, there is at most two $\lF$ operators and $\lG$ operators. Thus, again, the amount of pointers sufficient is bounded by $2n$.
		\end{itemize}
		Overall, the total number of pointers sufficient to keep track of everything is bounded, independently of the input. Thus, Algorithm  $\mathsf{DecideLTL}_{\msf{Unif}}$ can be implemented in logarithmic space.
	\end{proof}

	\begin{figure}
		\centering
		\includegraphics[scale=1]{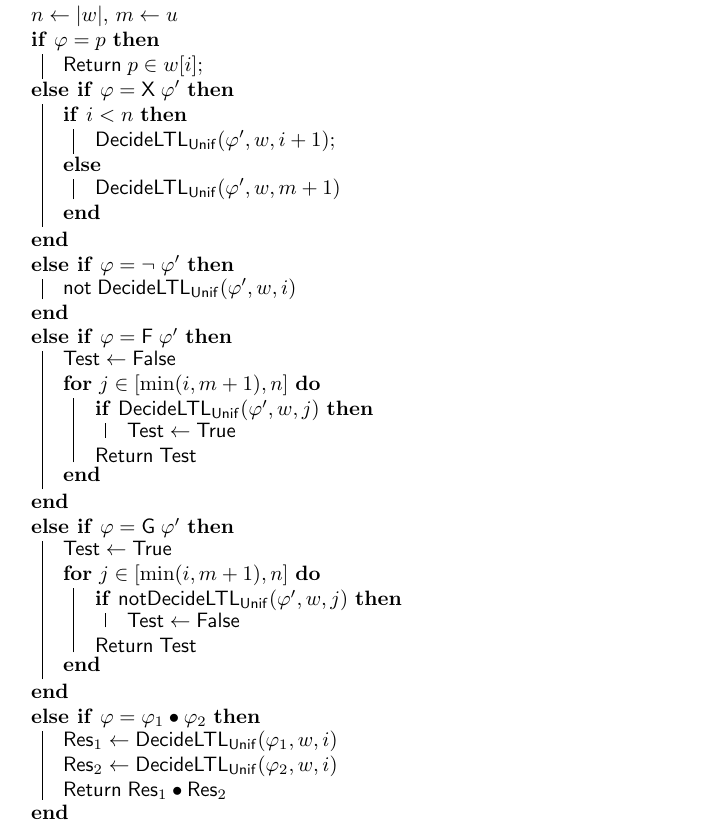}
		\caption{Algorithm $\mathsf{DecideLTL}_{\msf{Unif}}(\varphi,w,i)$.}
		\label{fig:algo_LTL}
	\end{figure}

	\subsection{Abstract recipe for $\msf{NP}$-hardness proofs and binary operators}
	As can be seen in Table~\ref{tab:summary}, we are going to establish three $\msf{NP}$-hardness results in the following cases: $\CTL$-learning with the operator $\lX$, $\ATL$-learning with two agents and both operators $\lF$ and $\lG$, and $\ATL$-learning with three agents and the operator $\lF$, or the operator $\lG$. Although these reductions differ, they all share a similar structure. The goal of this subsection is: first, to give the abstract recipe that we follow for all these reductions; and second, to state and prove the relevant lemmas to handle binary operators in our reductions. 

	\subsubsection{Abstract recipe}
	\label{subsubsec:abstract_recipe}
	Since the formulas to learn can only contain a bounded amount of binary operators, we cannot define a hitting set from a (small enough) separating formula by looking at the propositions that it uses, as we have done for LTL in Section~\ref{sec:hard_binary_op}. Instead, our reduction focuses on formulas using only unary operators (and therefore a single proposition). In that context,we create a sample (and a bound $B$) such that all separating (unary) formulas have a specific shape, and there is bijection between subsets $H \subseteq [1,\ldots,l]$ and formulas $\phi(l,H)$ of that specific shape. This correspondence allows us to
	extract a hitting set. Note that, however, although the the formulas that we consider can only use a bounded amount of binary operators, they can still use some binary operators. Therefore, to be able to only focus on unary operators, we first need to control how the (separating) formulas use binary operators. More specifically, we follow the abstract recipe below, from an $(l,C,k)$ of the hitting set problem: 
	\begin{enumerate}[label={\alph*)}]
		\item We first handle binary operators. That is, given any $n \in \N$ and some binary operators $\bullet_2 \in \Bl$, we consider $n$ propositions $p_1,\ldots,p_n$ and several positive $\mathcal{P}_{n,\bullet}$ and negative $\mathcal{N}_{n,\bullet}$ structures such that a separating formula has to use binary operators to express properties with the $n$ propositions $p_1,\ldots,p_n$.
		\label{stepa}
		\item We can now focus on unary formulas. We define the bound $B$ and additional positive $\mathcal{P}_{\msf{Un}}$ and negative $\mathcal{N}_{\msf{Un}}$ structures (using only the proposition $p$) that \textquotedblleft eliminate\textquotedblright{} certain (unary) operators or pattern of (unary) operators. This way we ensure that any unary formula separating $\mathcal{P}_{\msf{Un}}$ and $\mathcal{N}_{\msf{Un}}$ will be of the form $\phi(l,H)$, for some $H \subseteq [1,\ldots,l]$. 
		\label{stepb} 
		\item We can finally encode the hitting set problem itself. We define a negative structure (on $p$) that a unary formula $\phi(l,H)$ accepts if and only if $|H| \geq k+1$.
		\label{stepc} 
		\item For all $1 \leq i \leq n$, we define a positive structure (on $p$) that a unary formula $\phi(l,H)$ accepts if and only if $H \cap C_i \neq \emptyset$.
		\label{stepd} 
	\end{enumerate}
	By construction, the instance of the learning decision problem that we obtain is a positive instance if and only if $(l,C,k)$ also is.

	\subsubsection{Handling binary operators}
	\label{subsubsec:handling_binary_operators}
	The statement of this section deal with $\ATL$-structures and formulas because they will be used for the next three $\msf{NP}$-hardness proof. 
	
	\paragraph{Notations and definitions}
	First of all, we will use notations akin to that of regular languages to describe the formulas that we will consider.
	\begin{notation}
		\label{nota:atl_formulas}
		For any $\ATL$-formula $\phi$ and set of operators $O$, we denote by $O^* \; \phi$ the set of $\ATL$-formulas beginning with finitely many operators in $O$ followed by the $\ATL$-formula $\phi$. Furthermore, when $O$ is not a singleton, its elements may be enumerated with commas.
	\end{notation}
	
	For the $\ATL$-reductions, we will use turn-based game structures, where, at each state, only one agent is choosing the next state. Note that Kripke structure can be seen as turn-based structures, with only one player.
	\begin{definition}
		\label{def:turn_based_game_structure}
		Given any coalition of agents $A
		$, an $A$-turn-based game structure $T$ is defined by a tuple $T = \langle Q,I,\prop,\pi,\msf{AgSt},\msf{Succ} \rangle$ where $\msf{AgSt}: Q \rightarrow A$ maps every state to an agent in $A$ and $\msf{Succ}: Q \rightarrow 2^Q$ maps every state $q \in Q$ its set of successor states where the agent $\msf{AgSt}(q)$ can choose to go. Note that when a state has only one successor, i.e. one outgoing edge, the identity of the agent owning the state is irrelevant. The coalitions $A$ of agents that we consider are always such that 
		$A \subseteq \{ 1,2,3 \}$.
	\end{definition}
	
	Interestingly for us, when evaluated on turn-based structures, $\ATL$-formulas satisfy the classical equivalences w.r.t. negations. Let us introduce below a notation that refers to the dual of the operators considered.
	\begin{definition}
		\label{def:ATL_2_dual_op}
		For all $k \in \N_1$, $A \subseteq [1,\ldots,k]$, and $\msf{H} \in \{ \lX,\lF,\lG\}$, we let:
		\begin{equation*}
			\overline{\fanBr{A} \msf{H}}^k := \fanBr{[1,\ldots,k] \setminus A} \overline{\msf{H}}
		\end{equation*}
		where $\overline{\lX} := \lX$, $\overline{\lF} := \lG$ and $\overline{\lG} := \lF$.
	\end{definition}
	This definition satisfies the proposition below.
	\begin{proposition}
		\label{prop:equiv_negation_ATL_turn_based}
		Let $k \in \N_1$ and $\Ag := [1,\ldots,k]$. For all $A \subseteq \Ag$, $\msf{H} \in \{ \lX,\lF,\lG\}$ and any $\ATL$-formula $\phi$, when evaluated on $\Ag$-turn-based game structures, for all coalitions of agents $A \subseteq [1,\ldots,k]$, we have the equivalence $\neg \fanBr{A} \msf{H} \phi \equiv \overline{\fanBr{A} \msf{H}}^k \neg \phi$. 
	\end{proposition}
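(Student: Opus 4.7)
The plan is to fix any $\ATL$ state formula $\phi$ and set $W := \{q \in Q \mid q \models \phi\}$; since the only quantifier at stake is the outer $\fanBr{A}\msf{H}$, the equivalence reduces to a purely game-theoretic duality about a two-player turn-based game whose winning region is derived from $W$. First I would unpack what being turn-based buys us: at each state $q$, only $\msf{AgSt}(q)$ picks a successor, so a strategy profile for a coalition $A$ makes substantive choices exactly at histories ending in $A$-owned states, while at $(\Ag \setminus A)$-owned states the set $\msf{Out}^Q(q,s)$ ranges over every successor. Hence the $\ATL$ semantics of $\fanBr{A}\psi$ coincides exactly with the statement that coalition $A$, viewed as a single player, wins the turn-based graph game with objective $\psi$ from $q$, where the graph is partitioned into $A$-owned and $(\Ag \setminus A)$-owned vertices.

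Next I would split by the outer temporal operator $\msf{H} \in \{\lX, \lF, \lG\}$. For $\msf{H} = \lX$, unfolding the semantics yields that $q \models \fanBr{A} \lX \phi$ iff either $\msf{AgSt}(q) \in A$ and some successor lies in $W$, or $\msf{AgSt}(q) \notin A$ and every successor lies in $W$. Negating this swaps $\exists$ and $\forall$ over successors on each disjunct and also swaps which side $\msf{AgSt}(q)$ belongs to, yielding precisely the condition for $q \models \fanBr{\Ag \setminus A} \lX \neg \phi$ (using $\overline{\lX} = \lX$).

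For $\msf{H} \in \{\lF, \lG\}$, the equivalence is precisely the determinacy of reachability and safety games on turn-based graphs. Indeed, $\fanBr{A} \lF \phi$ asserts that $A$ has a strategy to reach $W$ from $q$, while $\fanBr{\Ag \setminus A} \lG \neg \phi$ asserts that $\Ag \setminus A$ has a strategy to keep the play inside $Q \setminus W$ forever; these are complementary by the classical positional determinacy of reachability games on (possibly infinite) turn-based graphs. The case $\msf{H} = \lG$ follows symmetrically, safety being dual to reachability.

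The main obstacle is the careful bridge between the $\ATL$ semantics (history-dependent strategies for one coalition, followed by a universal quantification over outcomes under the opposition) and the standard two-player formulation of turn-based games. In the turn-based setting this is essentially bookkeeping: one must verify that $\msf{Out}^Q(q,s_A)$ coincides with the set of plays consistent with $A$ following $s_A$ against arbitrary moves of $\Ag \setminus A$ at its own states, so that $A$'s coalition strategies correspond bijectively to single-player strategies in the derived two-player game. Once this alignment is in place, the determinacy result is standard and gives the duality uniformly for all three operators.
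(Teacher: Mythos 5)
Your proposal is correct and follows essentially the same route as the paper, which also proves the equivalence by viewing the coalition $A$ versus $\Ag \setminus A$ as the two players of a turn-based game and invoking determinacy of reachability/safety games for the $\lF/\lG$ cases. Your additional explicit treatment of the $\lX$ case and of the correspondence between coalition strategies and single-player strategies just fills in details the paper leaves implicit.
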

	\begin{proof}
		This is a consequence of the fact that two-player (with one player representing the coalition $A$ and the other the coalition $\Ag \setminus A$) turn-based reachability/safety games (respectively for the $\lF/\lG$ operator) are determined: from any starting state, either one of the players has winning (positional) strategies. 
	\end{proof}
	
	We can use this property to define a way to remove negations from sequences of unary operators, while maintaining the semantics of the formula (and not increasing the size). This is done in the definition.
	\begin{definition}
		\label{def:unnegate_unary}
		Let $k \in \N_1$, and $\Ag := [1,\ldots,k]$. For $\Ut \subseteq \Op{Un}{}$, we let $\msf{Op}(k,\Ut) := \{ \fanBr{A} \; \msf{H} \mid A \subseteq \Ag,\; \msf{H} \in \Ut \} \cup \{ \neg \} \cap \Ut$. We define inductively on $(\msf{Op}(k,\Ut))^+$ a function $\msf{UnNeg}: (\msf{Op}(k,\Ut))^* \times \{ 0,1\} \rightarrow (\msf{Op}(k,\{\lX,\lF,\lG,\neg\}))^* \times \{0,1\}$ as follows:
		\begin{itemize}
			\item For all $x \in \{0,1\}$, we let $\msf{UnNeg}(\epsilon,x) := (\epsilon,x)$.
			\item For all $\msf{O} \in \msf{Op}(k,\Ut)$, $\msf{Qt} \in (\msf{Op}(k,\Ut))^*$, and $x \in \{0,1\}$, we let:
			\begin{equation*}
				\msf{UnNeg}(\msf{O} \cdot \msf{Qt},x) := \begin{cases}
					(\msf{O} \cdot \msf{Qt}',x') 
					\text{ if }x = 0 \\
					(\overline{\msf{O}}^k \cdot \msf{Qt}',x') 
					\text{ if }x = 1
				\end{cases}
			\end{equation*}
			for $(\msf{Qt}',x') := \msf{UnNeg}( \msf{Qt},x)$.
			\item For all $x \in \{0,1\}$ and $\msf{Qt} \in (\msf{Op}(k,\Ut))^*$, we let $\msf{UnNeg}(\neg \msf{Qt},x) := \msf{UnNeg}(\msf{Qt},1-x)$.
		\end{itemize}  
	\end{definition}
	
	This definition satisfies the lemma below.
	\begin{lemma}
		\label{lem:unnegate_unary}
		Let $k \in \N_1$, $\Ag := [1,\ldots,k]$, $\Ut \subseteq \Op{Un}{}$, and $\msf{Qt} \in (\msf{Op}(k,\Ut))^*$. We have:
		\begin{itemize}
			\item For all $x \in \{0,1\}$, letting $(\msf{Qt},y) := \msf{UnNeg}(\msf{Qt},x)$, we have $\msf{Qt}' \in (\msf{Op}(k,\{\lX,\lF,\lG\}))^*$ and if $\lF \in \Ut$ if and only if $\lG \in \Ut$, then $\msf{Qt}' \in (\msf{Op}(k,\Ut \setminus \{\neg\}))^*$;
			\item For all $x \in \{0,1\}$, $(\msf{Qt},y) := \msf{UnNeg}(\msf{Qt},x)$, we have $|\msf{Qt}'| \leq |\msf{Qt}|$;
			\item Consider an $\ATL$-formula $\phi$. For all $i \in \{0,1\}$, we let $(\msf{Qt}_i,x_i) := \msf{UnNeg}(\msf{Qt},i)$ and $\psi_i := \phi$ if $x_i = 0$, and $\psi_i := \neg \phi$ otherwise. Then, we have $\msf{Qt} \cdot \phi \equiv \msf{Qt}_0 \cdot \psi_0$ and $\neg \msf{Qt} \cdot \phi \equiv \msf{Qt}_1 \cdot \psi_1$.
		\end{itemize}
	\end{lemma}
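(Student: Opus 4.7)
The plan is to prove the three claims simultaneously by induction on the length of the sequence $\msf{Qt} \in (\msf{Op}(k,\Ut))^*$. The base case $\msf{Qt} = \epsilon$ is immediate from the definition: $\msf{UnNeg}(\epsilon,x) = (\epsilon,x)$, so the resulting sequence is empty (hence trivially in the required sets and of length $0$), and for the semantic part, when $x = 0$ we get $\psi_0 = \phi$ and the equivalence $\phi \equiv \phi$, while for $x = 1$ we get $\psi_1 = \neg \phi$ and the equivalence $\neg \phi \equiv \neg \phi$.

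For the inductive step, I would split into two cases according to the first symbol of $\msf{Qt}$. If $\msf{Qt} = \msf{O} \cdot \msf{Qt}''$ with $\msf{O} \in \msf{Op}(k,\Ut) \setminus \{\neg\}$ (so $\msf{O} = \fanBr{A} \msf{H}$ for some $A \subseteq \Ag$ and $\msf{H} \in \Ut \setminus \{\neg\}$), then $\msf{UnNeg}(\msf{Qt},x)$ prepends either $\msf{O}$ (if $x=0$) or $\overline{\msf{O}}^k$ (if $x=1$) to $\msf{UnNeg}(\msf{Qt}'',x)$. For the first claim, the only new symbol is $\msf{O}$ or $\overline{\msf{O}}^k$; since $\overline{\lX} = \lX$ and $\{\overline{\lF},\overline{\lG}\} = \{\lG,\lF\}$, the new symbol lies in $\msf{Op}(k,\{\lX,\lF,\lG\})$, and under the symmetry assumption $\lF \in \Ut \Leftrightarrow \lG \in \Ut$ it even lies in $\msf{Op}(k,\Ut \setminus \{\neg\})$. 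For the length claim, the length is preserved. For the semantic claim, applying Proposition~\ref{prop:equiv_negation_ATL_turn_based} once gives $\neg \msf{O} \cdot \msf{Qt}'' \cdot \phi \equiv \overline{\msf{O}}^k \cdot \neg \msf{Qt}'' \cdot \phi$, after which the induction hypothesis applied to $\msf{Qt}''$ finishes both the $x = 0$ and $x = 1$ cases.

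The second case is $\msf{Qt} = \neg \cdot \msf{Qt}''$. Here $\msf{UnNeg}$ simply toggles the flag and drops the symbol, so the length strictly decreases (giving a strict inequality, hence in particular the claimed $\leq$), and the first claim follows directly from the induction hypothesis applied to $\msf{Qt}''$ with the flipped flag. For the semantic claim, observe that $\neg \cdot \msf{Qt}'' \cdot \phi = \neg \msf{Qt}'' \cdot \phi$, which by the induction hypothesis applied with flag $1$ is equivalent to $\msf{Qt}_1'' \cdot \psi_1''$, matching the definition $\msf{UnNeg}(\neg \cdot \msf{Qt}'',0) = \msf{UnNeg}(\msf{Qt}'',1)$; the $x=1$ case is analogous, noting that $\neg\neg \msf{Qt}'' \cdot \phi \equiv \msf{Qt}'' \cdot \phi$.

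The only subtle point is the bookkeeping of the boolean flag: the induction must be performed with the flag quantified universally, so that the recursive call on the shorter suffix is available with whichever value of $x$ the definition invokes. The main obstacle is thus not a calculation but getting this universal quantification right and applying Proposition~\ref{prop:equiv_negation_ATL_turn_based} systematically to push accumulated negations through the operators of the sequence. Once the flag is tracked carefully, all three statements collapse into a uniform induction.
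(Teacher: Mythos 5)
Your proof is correct and follows essentially the same route as the paper: the first two items are read off the definitions of $\overline{\cdot}^k$ and $\msf{UnNeg}$, and the third is an induction on $\msf{Qt}$ (with the flag quantified universally) that pushes negations through via Proposition~\ref{prop:equiv_negation_ATL_turn_based}; you merely spell out the case analysis that the paper leaves implicit.
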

	\begin{proof}
		The first item follows from Definition~\ref{def:ATL_2_dual_op}. The second item follows from Definition~\ref{def:unnegate_unary}. The third item can be obtained by induction on $\msf{Qt}$, by successively applying Proposition~\ref{prop:equiv_negation_ATL_turn_based}.
	\end{proof}
	
	Now, let us define the kind of turn-based game structures that we will consider. Note that there is a slight difference with the informal explanations of Section~\ref{subsubsec:abstract_recipe}: in addition to the $n$ propositions $p_1,\ldots,p_n$, we do not consider a single proposition $p$ but rather two propositions $p$ and $\bar{p}$. This is done to negate formulas without using negations (i.e. informally, $\bar{p}$ will be equivalent to $\neg p$ on the structures that we will consider).
	\begin{definition}
		\label{def:proper_turn_based_structures}
		Let $n \in \N$. We let $\prop_n := \{ p,\bar{p} \} \cup \prop_n'$ where $\prop_n' := \{ p_1,\ldots,p_n \}$ be two sets of propositions. Then, for all $S \subseteq \prop_n'$, a turn-based structure $T$ is \emph{$(n,S)$-proper} if, for all states $q$ in $T$, we have $\pi(q) \in \{ S \cup \{p\},S \cup \{\bar{p}\} \}$. 
		
		
		Let $k \in \N_1$ and $\Ag := [1,\ldots,k]$. Given any two $\ATL$-formulas $\phi,\phi'$, we denote by $\phi \equiv_{k,n,S} \phi'$ the fact that $\phi$ and $\phi'$ are equivalent on $(n,S)$-proper $\Ag$-turn-based structures, i.e. for all proper turn-based structures $T$, we have $T \models \phi$ if and only if $T \models \phi'$. 
		%
		
		A structure $T$ is trivial if it contains a single self-looping state. It can be seen as a turn-based structure with any number of agents. In addition, such a structure is entirely defined by the set $S$ of propositions labeling the unique state of the structure. It is denoted $T(S)$.
	\end{definition}
	
	We can now state the two lemmas that will let us properly handle binary operators in our $\msf{NP}$-hardness proofs. The first one, Lemma~\ref{lem:n_proper}, states that from an $\ATL$-formula $\phi$ using $k$ binary operators and featuring $k$ propositions in $\prop_n'$ (for some $k \leq n$), for all $S \subseteq \prop_n'$, we can extract a unary $\ATL$-formula $\phi'$ equivalent to $\phi$ on $S$-proper structures. This is stated below.
	\begin{lemma}
		\label{lem:n_proper}
		Let $k \in \N_1$. Consider a set of unary operators $\Ut \in \Op{Un}{}$ such that $\lG \in \Ut$ if and only if $\lF \in \Ut$, and some set of binary logical operator $\Bl \subseteq \Op{Bin}{lg}$. Let $n \in \N$ and $i \leq n$. For all $\ATL$-formulas $\phi \in \ATL^k(\prop,\Ut,\emptyset,\Bl,i)$ such that $|\prop(\phi) \cap \prop_n'| = i$ and for all $S \subseteq \prop_n'$, there are two $\ATL$-formulas $\psi,\widehat{\psi} \in \ATL^k(\{p,\bar{p}\},\Ut \setminus \{\neg\},\emptyset,\Bl,0)$ such that:
		\begin{itemize}
			\item $\Size{\psi},\Size{\widehat{\psi}} \leq \Size{\phi}-2i$; and
			\item $\phi \equiv_{k,n,S} \psi$ and $\neg \phi \equiv_{k,n,S} \widehat{\psi}$.
		\end{itemize}
	\end{lemma}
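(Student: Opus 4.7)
The plan is to proceed by induction on the structure of $\phi$, exploiting the fact that on any $(n,S)$-proper $\Ag$-turn-based structure, every proposition $p_j \in \prop_n'$ is semantically a Boolean constant: it is $\true$ if $p_j \in S$ and $\false$ otherwise, uniformly at every state. Since turn-based game structures have non-empty successor sets at every state, these constants propagate through every unary operator: for $\msf{O} \in \{\fanBr{A} \lX, \fanBr{A} \lF, \fanBr{A} \lG\}$ with $A \subseteq \Ag$ and any constant $c$, one has $\msf{O}\,c \equiv c$; and $\neg\true \equiv \false$, $\neg\false \equiv \true$. Likewise, any binary logical operator applied to a constant either reduces to a constant or collapses to its other argument (e.g.\ $\true \wedge \chi \equiv \chi$, $\true \vee \chi \equiv \true$, and analogously for the other operators in $\Op{Bin}{lg}$, possibly up to a negation that can be absorbed later).

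First, I would define a recursive translation $\msf{tr}_S$ that maps every sub-formula of $\phi$ either to one of the symbolic constants $\true,\false$ or to a formula over $\{p,\bar p\}$ in the target fragment. On the leaves, $p$ and $\bar p$ are fixed; each $p_j \in \prop_n'$ becomes its constant value determined by $S$. At an inner node, the constant-absorption rules above are applied whenever a constant appears; if no constant bubbles up, we keep the operator and recurse on translated children. I would then show by induction that $\msf{tr}_S(\phi) \equiv_{k,n,S} \phi$, where a symbolic constant at the root is interpreted via the equivalence above.

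Second, I would construct $\widehat{\psi}$ in parallel with $\psi$: swap $p \leftrightarrow \bar p$ at the leaves and swap the returned constant for each $p_j$, then push the resulting negation outward using the dual-operator rewrite of Lemma~\ref{lem:unnegate_unary}. This is exactly the step that needs the hypothesis $\lF \in \Ut \Leftrightarrow \lG \in \Ut$, to guarantee that dualizing a temporal operator produces an operator still in $\Ut\setminus\{\neg\}$ and therefore that both $\psi$ and $\widehat\psi$ lie in $\ATL^k(\{p,\bar p\},\Ut\setminus\{\neg\},\emptyset,\Bl,0)$.

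The main obstacle, and where I expect the real bookkeeping to live, is the size accounting: we need exactly $2i$ symbols of savings, not merely $i$. The plan is a stronger induction hypothesis where, for each sub-formula $\phi'$, one tracks both the number $j := |\prop(\phi') \cap \prop_n'|$ and whether $\msf{tr}_S(\phi')$ is a constant or a genuine formula; in the latter case we show $|\msf{tr}_S(\phi')| \leq |\phi'| - 2j$, and in the former we get an even sharper bound because the whole sub-formula has been erased. The key inductive argument is at a binary operator $\phi_1 \bullet \phi_2$: with $j = j_1 + j_2$ propositions of $\prop_n'$ distributed between the two sides and $|\phi|_\msf{Bin} = 1 + |\phi_1|_\msf{Bin} + |\phi_2|_\msf{Bin} \leq j$, at least one side must contribute a constant, so the operator $\bullet$ together with that side's constant-valued sub-formula (containing at least one $p_j$) is removed, accounting for the extra factor of $2$ that matches the $-2i$ bound.
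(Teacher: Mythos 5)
Your plan is essentially the paper's own proof: on $(n,S)$-proper structures every proposition of $\prop_n'$ is a Boolean constant, so binary operators collapse and the formula reduces to a purely unary formula over $\{p,\bar{p}\}$; negations are then eliminated and the companion formula $\widehat{\psi}$ produced by the dualization of Lemma~\ref{lem:unnegate_unary}, which is exactly where the hypothesis $\lF\in\Ut\Leftrightarrow\lG\in\Ut$ is used; and the $-2i$ in the size bound is obtained by charging, to each proposition of $\prop_n'$, the erased constant sub-formula together with the erased binary operator. The paper organizes this as an induction on $i$ (peeling off the top binary operator under a prefix of unary operators) rather than a structural induction, but the mechanism is the same.

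The one place where your sketch understates the work is the combinatorial core of the bookkeeping. You assert that $|\phi|_{\msf{Bin}} = 1 + |\phi_1|_{\msf{Bin}} + |\phi_2|_{\msf{Bin}}$, that the propositions of $\prop_n'$ split as $j = j_1 + j_2$, and that "at least one side must contribute a constant." Under the paper's definitions, $\Size{\cdot}$ and $|\cdot|_{\msf{Bin}}$ count \emph{distinct} sub-formulas, so these additivity identities fail in general when $\phi_1$ and $\phi_2$ share sub-formulas, and the "one side is constant" claim (i.e.\ that $p,\bar{p}$ cannot occur on both sides of a binary operator when $|\phi|_{\msf{Bin}} \leq |\prop(\phi)\cap\prop_n'|$) is not a freebie: it is precisely the content of Lemma~\ref{lem:non_intersect_binary}, which the paper proves with a fresh-proposition renaming argument exactly to cope with shared sub-formulas, and the quantitative charge of $2$ per erased proposition rests on Lemma~\ref{lem:at_least_that_many_subformulas} (a formula in which $m$ propositions occur has at least $2m-1$ sub-formulas containing them, and these are disjoint from the other side's sub-formulas once proposition-disjointness is known). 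You also need to check that the hypothesis "number of binary operators $\leq$ number of $\prop_n'$-propositions" propagates to the surviving side $\phi_1$ before recursing, which again follows from Lemma~\ref{lem:non_intersect_binary} applied to the erased side. With those two lemmas supplied, your argument goes through and coincides with the paper's; without them, the size-accounting paragraph is the part of the proof that is still missing.
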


	The second lemma, Lemma~\ref{lem:set_for_all_binary}, shows the existence, for all $n \in \N$ and binary operators $\bullet \in \Bl$, of the two sets of structures $\mathcal{P}_{n,\bullet}$ and $\mathcal{N}_{n,\bullet}$ mentioned in Step~\ref{stepa} in the abstract recipe described in Section~\ref{subsubsec:abstract_recipe}. It is stated below.
	\begin{lemma}
		\label{lem:set_for_all_binary}
		Let $k \in \N_1$. Consider a binary operator $\bullet \in \Op{Bin}{lg}$ and some $n \in \N_1$. There is some $S_{n,\bullet} \subseteq \prop_n'$ and two sets trivial structures $\mathcal{A}_{n,\bullet}$ and  $\mathcal{B}_{n,\bullet}$ such that:
		\begin{itemize}
			\item If an $\ATL$-formula $\phi$  distinguishes $\mathcal{A}_{n,\bullet}$ and $\mathcal{B}_{n,\bullet}$, then $\prop_n' \subseteq \prop(\phi)$.
			\item There is an $\ATL$-formula $\phi_{n,\bullet} \in \ATL^k(\prop^n,\emptyset,\emptyset,\{\bullet\},n)$ of size $2n-1$ and such that, for all positive and negative sets $\mathcal{P},\mathcal{N}$ of $(n,S_{n,\bullet})$-proper structures, there is $(\mathcal{P}',\mathcal{N}') \in \{ (\mathcal{P} \cup \mathcal{A}_{n,\bullet},\mathcal{N} \cup \mathcal{B}_{n,\bullet}),(\mathcal{N} \cup \mathcal{B}_{n,\bullet},\mathcal{P} \cup \mathcal{A}_{n,\bullet}) \}$ such that, for all $\ATL$-formulas $\phi \in \ATL^k(\prop^n,\Ut \setminus \{\neg\},\emptyset,\Bl,0)$, we have: 
			\begin{center}
				$\phi$ accepts $\mathcal{P}$ and rejects $\mathcal{N}$ if and only if $\psi := \phi \bullet \phi_{n,\bullet}$ accepts $\mathcal{P}'$ and rejects $\mathcal{N}'$
			\end{center}
		\end{itemize}
	\end{lemma}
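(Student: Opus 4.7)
The plan is a case analysis on the binary operator $\bullet \in \Op{Bin}{lg}$. In every case I take $\phi_{n,\bullet}$ to be the left-associated chain $p_1 \bullet p_2 \bullet \cdots \bullet p_n$, which has size $2n-1$ and uses exactly $n-1$ occurrences of $\bullet$. Because $\phi_{n,\bullet}$ is a state formula with no temporal or path operators and uses only propositions in $\prop_n'$, its value on any $(n,S)$-proper structure is completely determined by $S$; call it $v_\bullet$. I then choose $S_{n,\bullet}$ so that the resulting value $v_\bullet$ satisfies $b \bullet v_\bullet \in \{b,\neg b\}$ for every $b \in \{\true,\false\}$. A quick inspection of the ten truth tables shows this is always possible: for $\lor,\prescript{\neg}{}{\lor},\Rightarrow,\prescript{\neg}{}{\Rightarrow}$ I take $S_{n,\bullet} = \emptyset$, forcing $v_\bullet = \false$; for $\wedge,\prescript{\neg}{}{\wedge},\Leftarrow,\prescript{\neg}{}{\Leftarrow}$ I take $S_{n,\bullet} = \prop_n'$, forcing $v_\bullet = \true$; for $\Leftrightarrow,\prescript{\neg}{}{\Leftrightarrow}$ either value is workable once $S_{n,\bullet}$ is picked to match the parity of the chained biconditional. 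Whether $b \bullet v_\bullet$ equals $b$ or $\neg b$ then dictates which of the two pairings $(\mathcal{P}',\mathcal{N}')$ to select.

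Given this choice, the equivalence on the $\mathcal{P},\mathcal{N}$ part of the separation is immediate: on any $(n,S_{n,\bullet})$-proper structure $X$, the value $\psi(X) = \phi(X) \bullet v_\bullet$ equals either $\phi(X)$ (identity case) or $\neg \phi(X)$ (negation case), so $\psi$ separating $\mathcal{P},\mathcal{N}$ in the corresponding direction is equivalent to $\phi$ separating them. The delicate part is constructing the gadget sets $\mathcal{A}_{n,\bullet}$ and $\mathcal{B}_{n,\bullet}$. Here I use the key observation that a formula $\phi \in \ATL^k(\prop_n,\Ut \setminus \{\neg\},\emptyset,\Bl,0)$ contains exactly one proposition $q_\phi$ (since it uses no binary operator and no negation, the grammar reduces to iterated $\fanBr{A}$ with a unary temporal operator, ending at a single proposition), and on a trivial structure $T(\Sigma)$ all temporal operators and path quantifiers become vacuous, giving $\phi(T(\Sigma)) = [q_\phi \in \Sigma]$. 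Hence $\phi(T(\Sigma))$ is independent of which $\phi$ one picked precisely when $\Sigma$ contains either every proposition of $\prop_n$ or none of them.

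I exploit this to pick $\mathcal{A}_{n,\bullet}$ and $\mathcal{B}_{n,\bullet}$ as families of trivial structures whose labels pairwise differ in exactly one $p_i$ (so that any formula separating them must mention every $p_i$, giving the first bullet by the argument of Lemma~\ref{lem:distinguish}), and which combine under $\bullet$ with $\phi_{n,\bullet}$ to yield a $\phi$-independent truth value on the correct side of the separation. For example, for $\bullet = \lor$ I set $\mathcal{A}_{n,\lor} := \{T(\{p,p_i\}) \mid i \in [1,n]\}$ and $\mathcal{B}_{n,\lor} := \{T(\emptyset)\}$: on each $T(\{p,p_i\})$ the chain $\phi_{n,\lor}$ evaluates to true, so $\psi = \true$ regardless of $\phi$; on $T(\emptyset)$ every proposition is absent, hence $\phi(T(\emptyset)) = \false$ for every $\phi$ in our fragment, and $\phi_{n,\lor}$ is also false, so $\psi = \false$. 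Symmetric constructions, swapping the pairing when needed, handle $\wedge$ (with labels $\prop_n$ versus $\prop_n \setminus \{p_i\}$), and $\Rightarrow, \Leftarrow$ and their negated variants.

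The main obstacle is the biconditional operators $\Leftrightarrow$ and $\prescript{\neg}{}{\Leftrightarrow}$, for which $b \bullet v$ is never constant in $b$: on the $\mathcal{A},\mathcal{B}$ structures one must simultaneously control $\phi$'s value (by taking $\Sigma \cap \{p,\bar{p}\}$ to be either $\emptyset$ or $\{p,\bar{p}\}$, which pins every negation-free $\phi$ to a fixed truth value) and tune the parity of the $p_i$'s occurring in $\Sigma$ so that the combined $\psi$-value falls on the correct side. The parity of $n$ enters the $\mathrm{XNOR}$-like expansion of the chained biconditional, and one must pick the identity versus swapped pairing accordingly; for instance, for $\Leftrightarrow$ one can verify that $\mathcal{A} = \{T(\{p,\bar{p},p_i\}) \mid i \in [1,n]\}$ and $\mathcal{B} = \{T(\{p,\bar{p}\})\}$ work with the identity pairing when $n$ is odd and with the swapped pairing when $n$ is even. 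The construction therefore remains mechanical, but the bookkeeping is more involved than for the other eight operators.
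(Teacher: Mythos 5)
Your high-level plan is the same as the paper's: take the chain $\phi_{n,\bullet} = p_1 \bullet \cdots \bullet p_n$, choose $S_{n,\bullet}$ so that on $(n,S_{n,\bullet})$-proper structures $b \bullet v_\bullet$ is either $b$ or $\neg b$ (which dictates the identity or swapped pairing), and add trivial gadget structures whose labels differ in a single $p_i$ and on which $\psi$ takes a $\phi$-independent value. But the execution has concrete gaps. First, your one fully worked gadget already violates the first bullet: $T(\{p,p_i\})$ and $T(\emptyset)$ differ in $p$ as well as in $p_i$, so the formula $p$ distinguishes $\mathcal{A}_{n,\lor}$ from $\mathcal{B}_{n,\lor}$ without mentioning any $p_i$; you need $T(\{p_i\})$ versus $T(\emptyset)$ (as the paper does), or $p$ in both labels. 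Second, your claimed values $v_\bullet$ are wrong for the non-associative operators: under the all-false assignment the left-associated chain alternates with the parity of $n$ for $\Rightarrow$ and $\prescript{\neg}{}{\lor}$ (e.g.\ $\false \Rightarrow \false = \true$, then $\true \Rightarrow \false = \false$, \dots), and likewise under the all-true assignment for $\prescript{\neg}{}{\wedge}$ and $\prescript{\neg}{}{\Leftarrow}$. For even $n$ your choices give $v_\Rightarrow = \true$, $v_{\prescript{\neg}{}{\lor}} = \true$, $v_{\prescript{\neg}{}{\wedge}} = \false$, $v_{\prescript{\neg}{}{\Leftarrow}} = \false$, and in each of these cases $b \bullet v_\bullet$ is constant, so $\psi$ is constant on all proper structures and the claimed equivalence fails as soon as $\mathcal{N}$ (resp.\ $\mathcal{P}$) is non-empty and separable. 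You flag a parity issue only for $\Leftrightarrow$, where it is actually harmless; it is these four operators that need a parity-aware choice of $S_{n,\bullet}$.

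Third, and most substantially, $\prescript{\neg}{}{\wedge}$ and $\prescript{\neg}{}{\lor}$ are not ``symmetric constructions'' to $\wedge$ and $\lor$: writing $\chi$ for the value of $\phi_{n,\bullet}$ on a gadget label, $\psi \equiv \neg(\phi \wedge \chi)$ is $\phi$-independent only if $\chi$ is false there (then $\psi$ is true) or if $\phi$ is pinned to true there, which forces the label to contain $\{p,\bar p\}$ and makes $\psi \equiv \neg \chi$. So for each $i$ you must exhibit two labels agreeing outside $p_i$, both containing $\{p,\bar p\}$, on which the chain takes different values, i.e.\ prove that the left-associated NAND/NOR chain is sensitive to its $i$-th input under a suitable assignment to the other $p_j$ (for NAND: $p_{i-1}$ false and $p_j$ true for $j>i$), and do so consistently with the pairing imposed by $v_\bullet$. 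This sensitivity/parity bookkeeping is exactly the inductive construction the paper carries out for $\prescript{\neg}{}{\wedge}$, and it is absent from your proposal; as written, your treatment of these two operators both asserts an incorrect $v_\bullet$ (for even $n$) and supplies no gadget sets at all.
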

	
	With both of those lemmas, we are able to show the theorem below that allows to prove the $\msf{NP}$-hardness of the learning decision problems that we consider, with an arbitrary bound on the number of occurrences of binary operators, by showing the $\msf{NP}$-hardness of the learning problem without binary operator.
	\begin{theorem}
		\label{thm:unary_is_sufficient}
		Let $k \in \N_1$, $\Ag := [1,\ldots,k]$, and $\Ut \in \Op{Un}{}$. 
		Assume that there is a function $f_k$ computable in logarithmic space that takes an input an instance $(l,C,k')$ of the hitting set problem $\msf{Hit}$ and returns an instance of the learning decision problem $\ATL^k_\msf{Learn}(\msf{Un},\emptyset,\emptyset,0)$ such that for all instances $(l,C,k')$:
		\begin{itemize}
			\item The set of propositions in $f_k((l,C,k'))$ is $\prop_0$;
			\item All structures in $f_k((l,C,k'))$ are $(0,\emptyset)$-proper $\Ag$-turn-based structures;
			\item The three statements below are equivalent:
			\begin{itemize}
				\item $(l,C,k')$ is a positive instance of $\msf{Hit}$;
				\item $f_k((l,C,k'))$ is a positive instance of $\ATL^k_\msf{Learn}(\Ut \setminus \{\neg\},\emptyset,\emptyset,0)$;
				\item $f_k((l,C,k'))$ is a positive instance of $\ATL^k_\msf{Learn}(\Ut \cup \{\lF,\lG\},\emptyset,\emptyset,0)$.
			\end{itemize}
		\end{itemize}
		Then, for all $\Bl \subseteq \Op{Bin}{lg}$ and $n \in \N$, the decision problem $\ATL^k_\msf{Learn}(\Ut,\emptyset,\Bl,n)$ is $\msf{NP}$-complete.
	\end{theorem}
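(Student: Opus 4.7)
The plan is to compose the assumed reduction $f_k$ with Lemmas~\ref{lem:n_proper} and~\ref{lem:set_for_all_binary}, which together let us ``inject'' exactly $n$ occurrences of a chosen binary operator from $\Bl$ into the sample in a logspace-controlled way; membership in $\msf{NP}$ is already given by Proposition~\ref{prop:easy}, so only $\msf{NP}$-hardness remains. The broad idea is that Lemma~\ref{lem:set_for_all_binary} tells us how to build a sample that syntactically forces any separator to consume $n$ binary operators together with all $n$ propositions of $\prop_n'$, and Lemma~\ref{lem:n_proper} then lets us \emph{remove} those very binary operators while shrinking the formula by $2n$, which is precisely the slack we added to the bound.

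Given an instance $(l,C,k')$ of $\msf{Hit}$, I would first compute $(\prop_0, \mathcal{P}_0, \mathcal{N}_0, B_0) := f_k((l,C,k'))$. If $\Bl = \emptyset$, I would return this instance unchanged: the fragment $\ATL^k(\prop_0, \Ut, \emptyset, \emptyset, n)$ coincides with $\ATL^k(\prop_0, \Ut, \emptyset, \emptyset, 0)$ and sits between $\ATL^k(\prop_0, \Ut \setminus \{\neg\}, \emptyset, \emptyset, 0)$ and $\ATL^k(\prop_0, \Ut \cup \{\lF, \lG\}, \emptyset, \emptyset, 0)$, so both implications of the assumption transfer directly. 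Otherwise, pick any $\bullet \in \Bl$ and invoke Lemma~\ref{lem:set_for_all_binary} to obtain $S_{n,\bullet} \subseteq \prop_n'$, the trivial-structure sets $\mathcal{A}_{n,\bullet}, \mathcal{B}_{n,\bullet}$, and the formula $\phi_{n,\bullet}$ of size $2n-1$ using only $\bullet$ and $n$ propositions from $\prop_n'$. Then relabel every state of every structure in $\mathcal{P}_0 \cup \mathcal{N}_0$ to additionally contain the propositions of $S_{n,\bullet}$, producing $(n, S_{n,\bullet})$-proper turn-based structures $\mathcal{P}_0', \mathcal{N}_0'$; take the pair $(\mathcal{P}', \mathcal{N}')$ furnished by Lemma~\ref{lem:set_for_all_binary} from $(\mathcal{P}_0', \mathcal{N}_0')$; set $B := B_0 + 2n$; and output $(\prop_n, \mathcal{P}', \mathcal{N}', B)$. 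Every step is logspace-computable since $f_k$ is and the rest consists of simple relabelings and additions of fixed gadgets.

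For the forward direction, assume $(l,C,k')$ is positive for $\msf{Hit}$. The assumption yields $\phi \in \ATL^k(\prop_0, \Ut \setminus \{\neg\}, \emptyset, \emptyset, 0)$ of size $\leq B_0$ separating $(\mathcal{P}_0, \mathcal{N}_0)$. Since $\phi$ mentions only $\{p, \bar{p}\}$ the relabeling is invisible to it, so it still separates $(\mathcal{P}_0', \mathcal{N}_0')$; Lemma~\ref{lem:set_for_all_binary} then promotes $\phi \bullet \phi_{n,\bullet}$ to a separator of $(\mathcal{P}', \mathcal{N}')$ of size $\Size{\phi} + 2n \leq B$ using at most $n$ binary operators, all from $\Bl$. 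Conversely, given a separator $\phi^* \in \ATL^k(\prop_n, \Ut, \emptyset, \Bl, n)$ of size $\leq B$, the first bullet of Lemma~\ref{lem:set_for_all_binary} forces $\prop_n' \subseteq \prop(\phi^*)$, hence $|\prop(\phi^*) \cap \prop_n'| = n$. Lemma~\ref{lem:n_proper} with $i = n$ and $S = S_{n,\bullet}$ then produces $\psi \in \ATL^k(\{p, \bar{p}\}, \Ut \setminus \{\neg\}, \emptyset, \emptyset, 0)$ of size $\leq B - 2n = B_0$ that agrees with $\phi^*$ on every $(n, S_{n,\bullet})$-proper structure; because $\psi$ ignores $\prop_n'$, it also separates $(\mathcal{P}_0, \mathcal{N}_0)$, and the third equivalence of the assumption (applicable via $\Ut \setminus \{\neg\} \subseteq \Ut \cup \{\lF, \lG\}$) concludes that $(l,C,k')$ is a positive $\msf{Hit}$ instance.

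The main delicacy lies in the reverse direction, specifically in verifying that Lemma~\ref{lem:n_proper} applies cleanly with $i = n$: the bound of $n$ on binary operators is baked into the fragment, while the requirement that $\phi^*$ realize all $n$ distinct propositions of $\prop_n'$ comes from the distinguishing requirement on $(\mathcal{A}_{n,\bullet}, \mathcal{B}_{n,\bullet})$. The size budget $B_0 + 2n$ is calibrated precisely so that the $-2i$ drop afforded by Lemma~\ref{lem:n_proper} brings us back to the original bound $B_0$ used by $f_k$; this tight interplay between the number of binary operators, the activated propositions from $\prop_n'$, and the size bound is what makes the construction work and keeps the entire argument within the unary reduction already furnished by the assumption.
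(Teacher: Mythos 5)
Your reduction is structurally the same as the paper's: dispose of the degenerate case, pad the sample produced by $f_k$ with the gadget $(\mathcal{A}_{n,\bullet},\mathcal{B}_{n,\bullet},\phi_{n,\bullet})$ of Lemma~\ref{lem:set_for_all_binary}, raise the bound by $2n$, and in the converse direction strip the forced binary part via Lemma~\ref{lem:n_proper}. There is, however, a concrete gap in how you apply Lemma~\ref{lem:n_proper}: that lemma is stated only for sets of unary operators with $\lF \in \Ut$ if and only if $\lG \in \Ut$, whereas the theorem must cover sets violating this condition (e.g.\ $\Ut = \{\lF\}$, used later for $\ATL^3$, or $\Ut = \{\lX,\lF\}$ for $\CTL$). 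The negation elimination underlying the lemma (Lemma~\ref{lem:unnegate_unary}) turns $\fanBr{A}\lF$ into $\fanBr{\Ag \setminus A}\lG$ and vice versa, so your claim that the extracted formula $\psi$ lies in $\ATL^k(\{p,\bar{p}\},\Ut\setminus\{\neg\},\emptyset,\emptyset,0)$ of size at most $B_0$ is unjustified in general. The paper instead applies Lemma~\ref{lem:n_proper} with $\Ut\cup\{\lF,\lG\}$ (which does satisfy the condition) and obtains $\psi$ in the $(\Ut\cup\{\lF,\lG\})\setminus\{\neg\}$ fragment; this is exactly why the hypothesis contains the third equivalent statement about $\ATL^k_\msf{Learn}(\Ut\cup\{\lF,\lG\},\emptyset,\emptyset,0)$ --- not, as you write, merely as a convenience via $\Ut\setminus\{\neg\}\subseteq\Ut\cup\{\lF,\lG\}$.

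Two further points. Lemma~\ref{lem:set_for_all_binary} may hand you the swapped pair $(\mathcal{P}',\mathcal{N}')=(\mathcal{N}_0'\cup\mathcal{B}_{n,\bullet},\mathcal{P}_0'\cup\mathcal{A}_{n,\bullet})$; in that case a separator $\phi^*$ of $(\mathcal{P}',\mathcal{N}')$ accepts $\mathcal{N}_0'$ and rejects $\mathcal{P}_0'$, so a formula that merely \emph{agrees} with $\phi^*$ on proper structures does not witness positivity of $f_k((l,C,k'))$; you need the second output $\widehat{\psi}$ of Lemma~\ref{lem:n_proper}, equivalent to $\neg\phi^*$ on $(n,S_{n,\bullet})$-proper structures, and a choice between $\psi$ and $\widehat{\psi}$ according to which pair was taken, as in the paper. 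Finally, when $n=0$ and $\Bl\neq\emptyset$ your construction invokes Lemma~\ref{lem:set_for_all_binary}, which is stated only for $n\in\N_1$ (there is no $\phi_{0,\bullet}$ of size $-1$); this case must be handled like $\Bl=\emptyset$, since with a bound of $0$ on binary occurrences the fragments coincide and $f_k$ itself is the reduction. With these repairs your argument coincides with the paper's proof.
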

	\begin{proof}
		Let $\Bl \subseteq \Op{Bin}{lg}$ and $n \in \N$. As mentioned in Proposition~\ref{prop:easy}, the decision problem $\ATL^k_\msf{Learn}(\Ut,\emptyset,\Bl,n)$ is in $\msf{NP}$. Let us now show that it in $\msf{NP}$-hard. 
		
		If $n = 0$ or $\Bl = \emptyset$, the two problems $\ATL^k_\msf{Learn}(\Ut,\emptyset,\emptyset,0)$ and $\ATL^k_\msf{Learn}(\Ut,\emptyset,\Bl,n)$ are the same. Furthermore, or all instances $(l,C,k')$ of $\msf{Hit}$, we have that if $f_k((l,C,k'))$ is a positive instance of $\ATL^k_\msf{Learn}(\Ut \setminus \{\neg\},\emptyset,\emptyset,0)$, then it is also a positive instance of $\ATL^k_\msf{Learn}(\Ut,\emptyset,\emptyset,0)$, and similarly if $f_k((l,C,k'))$ is a positive instance of $\ATL^k_\msf{Learn}(\Ut,\emptyset,\emptyset,0)$, then it is also a positive instance of $\ATL^k_\msf{Learn}(\Ut \cup \{\lF,\lG\},\emptyset,\emptyset,0)$. Therefore, $f_k((l,C,k'))$ is a positive instance of $\ATL^k_\msf{Learn}(\Ut \setminus \{\neg\},\emptyset,\emptyset,0)$ if and only if it is a positive instance $\ATL^k_\msf{Learn}(\Ut,\emptyset,\emptyset,0)$. We can conclude that the decision problem $\ATL^k_\msf{Learn}(\Ut,\emptyset,\Bl,n)$ is $\msf{NP}$-hard. 		
		
		Assume now that we have $n \in \N_1$ and $\Bl \neq \emptyset$. We let $\bullet \in \Bl$ and we consider the set of propositions $S_{n,\bullet} \subseteq \prop_n'$ and the two sets of trivial structures $\mathcal{A}_{n,\bullet}$ and  $\mathcal{B}_{n,\bullet}$ from Lemma~\ref{lem:set_for_all_binary}, along with the formula $\phi_{n,\bullet} \in \ATL^k(\prop^n,\emptyset,\emptyset,\{\bullet\},n)$ of size at most $2n-1$. Now, consider an instance $(l,C,k')$ of the hitting set problem $\msf{Hit}$. Let $\msf{In}_0 := f_k((l,C,k')) =  (\prop_0,\mathcal{P},\mathcal{N},B)$. 
		Note that all structures in $\mathcal{P} \cup \mathcal{N}$ are $(0,\emptyset)$-proper $\Ag$-turn-based structures. We let $\widehat{\mathcal{P}},\widehat{\mathcal{N}}$ be two sets of $\Ag$-turn-based structures equal to $\mathcal{P},\mathcal{N}$ respectively except that the labels of all states is changed from $x \in \prop_0 = \{p,\bar{p}\}$ to $S_{n,\bullet} \cup \{x\} \subseteq \prop_n$. That way, all the turn-based structures in $\widehat{\mathcal{P}}$ and $\widehat{\mathcal{N}}$ are $(n,S_{n,\bullet})$-proper structures. Then, we let $(\mathcal{P}',\mathcal{N}') \in \{ (\widehat{\mathcal{P}} \cup \mathcal{A}_{n,\bullet},\widehat{\mathcal{N}} \cup \mathcal{B}_{n,\bullet}),(\widehat{\mathcal{N}} \cup \mathcal{B}_{n,\bullet},\widehat{\mathcal{P}} \cup \mathcal{A}_{n,\bullet})\}$ be as in the second point of Lemma~\ref{lem:set_for_all_binary} and we define the input $\msf{In}_n := (\prop_n,\mathcal{P}',\mathcal{N}',B+2n)$ of the decision problem $\ATL^k_\msf{Learn}(\Ut,\emptyset,\Bl,n)$. Note that, the structures in $\mathcal{A}_{n,\bullet}$ and $\mathcal{B}_{n,\bullet}$ are defined independently of the input $(l,C,k')$ and, by assumption, the input $\msf{In}_0$ can be computed in logarithmic space from the instance $(l,C,k')$. Therefore, the input $\msf{In}_n$ can also be computed in logarithmic space from the instance $(l,C,k')$.
		
		Let us show that $\msf{In}_n$ is a positive instance of the decision problem $\ATL^k_\msf{Learn}(\Ut,\emptyset,\Bl,n)$ if and only if $\msf{In}_0$ is a positive instance decision problem $\ATL^k_\msf{Learn}(\Ut \setminus \{\neg\},\emptyset,\emptyset,0)$. 
		
		Assume that $\msf{In}_n$ is a positive instance of the decision problem $\ATL^k_\msf{Learn}(\Ut,\emptyset,\Bl,n)$. Consider a formula $\phi \in \ATL^k_\msf{Learn}(\prop_n,\Ut,\emptyset,\Bl,n)\subseteq \ATL^k_\msf{Learn}(\prop_n,\Ut \cup \{\lF,\lG\},\emptyset,\Bl,n)$ of size at most $B+2n$ that accepts $\mathcal{P}'$ and rejects $\mathcal{N}'$. 
		In that case, the formula $\phi$ distinguishes $\mathcal{A}_{n,\bullet}$ and $\mathcal{B}_{n,\bullet}$,  and therefore, by Lemma~\ref{lem:set_for_all_binary}, we have $\prop_n' \subseteq \prop(\phi)$. Hence, by Lemma~\ref{lem:n_proper}, there are two formulas $\psi,\widehat{\psi} \in \ATL^k_\msf{Learn}(\prop_0,(\Ut \cup\{\lF,\lG\}) \setminus \{\neg\},\emptyset,\emptyset,0)$ such that $\Size{\psi},\Size{\widehat{\psi}} \leq \Size{\phi} - 2n \leq B$ and $\phi \equiv_{k,n,S_{n,\bullet}} \psi$ and $\neg \phi \equiv_{k,n,S_{n,\bullet}} \widehat{\psi}$. Let $\phi ' \in \{\psi,\widehat{\psi}\}$ be such that $\phi' = \psi$ if and only if $(\mathcal{P}',\mathcal{N}') = (\widehat{\mathcal{P}} \cup \mathcal{A}_{n,\bullet},\widehat{\mathcal{N}} \cup \mathcal{B}_{n,\bullet})$. That way, we have that $\phi' \in \ATL^k_\msf{Learn}(\prop_n,\Ut \cup \{\lF,\lG\},\emptyset,\emptyset,0)$ accepts the set $\widehat{\mathcal{P}}$ and rejects the set $\widehat{\mathcal{N}}$. Since $\prop(\phi') \subseteq \prop_0 = \{p,\bar{p}\}$, and by definition of $\widehat{\mathcal{P}}$ and $\widehat{\mathcal{N}}$, it follows that $\phi'$ also accepts $\mathcal{P}$ and rejects $\mathcal{N}$. Thus, $\msf{In}_0$ is a positive instance of the decision problem $\ATL^k_\msf{Learn}(\Ut \cup \{\lF,\lG\},\emptyset,\emptyset,0)$, and is therefore also a positive instance of the decision problem $\ATL^k_\msf{Learn}(\Ut \setminus \{\neg\},\emptyset,\emptyset,0)$.
		
		Assume now that $\msf{In}_0$ is a positive instance of the decision problem $\ATL^k_\msf{Learn}(\Ut \setminus \{\neg\},\emptyset,\emptyset,0)$. Consider an $\ATL$-formula $\phi \in \ATL^k_\msf{Learn}(\prop_0,\Ut \setminus \{\neg\},\emptyset,\emptyset,0)$ of size at most $B$ that accepts $\mathcal{P}$ and rejects $\mathcal{N}$. 
		We consider the $\ATL$ formula $\phi' := \phi \bullet \phi_{n,\bullet} \in \ATL^k_\msf{Learn}(\prop_0,\Ut \setminus \{\neg\},\emptyset,\emptyset,0)$. We have $\Size{\phi'} \leq \Size{\phi} + \Size{\phi_{n,\bullet}} + 1 \leq B + 2n-1+1 = B+2n$. Furthermore, by definition of the sets $\mathcal{P}',\mathcal{N}'$, we have that $\phi'$ accepts $\mathcal{P}'$ and rejects $\mathcal{N}'$ since $\phi$ accepts $\mathcal{P}$ and rejects $\mathcal{N}$\footnote{Note that, to apply Lemma~\ref{lem:set_for_all_binary}, it is crucial that $\phi$ does not feature negations.}. Thus, $\msf{In}_n$ is a positive instance of the decision problem $\ATL^k_\msf{Learn}(\Ut,\emptyset,\Bl,n)$.
		
		Overall, we obtain that $\msf{In}_n$ is a positive instance of the decision problem $\ATL^k_\msf{Learn}(\Ut,\emptyset,\Bl,n)$ if and only if $(l,C,k')$ is a positive instance of the hitting set problem $\msf{Hit}$. Hence, the decision problem $\ATL^k_\msf{Learn}(\Ut,\emptyset,\Bl,n)$ is $\msf{NP}$-hard.
	\end{proof}
	
	Let us now prove these two Lemmas~\ref{lem:n_proper} and~\ref{lem:set_for_all_binary}. We start with the proof of Lemma~\ref{lem:n_proper}, but before we proceed to it, we state and prove below a crucial lemma regarding the shape of formulas that use a bounded amount of binary operators and feature many propositions.	
	\begin{lemma}
		\label{lem:non_intersect_binary}
		For all $k \in \N$, for all $\ATL$-formulas 
		$\phi$, if $|\phi|_{\msf{Bin}} = k$, then $|\prop(\phi)| \leq k+1$. In addition, if 
		$|\prop(\phi)| = k+1$, and $\phi = \phi_1 \bullet \phi_2$ for some binary operator $\bullet \in \Bl$, and $\ATL$-formulas $\phi_1,\phi_2$, then $\prop(\phi_1) \cap \prop(\phi_2) = \emptyset$.
	\end{lemma}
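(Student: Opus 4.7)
The plan is to prove both claims simultaneously by structural induction on $\phi$. Thanks to the uniform shape of the inductive argument, the second claim (disjointness under equality) essentially falls out of the calculation that establishes the first.

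For the base case, $\phi = p$ for some $p \in \prop$: here $|\phi|_{\msf{Bin}} = 0$ and $|\prop(\phi)| = 1$, so $|\prop(\phi)| \leq |\phi|_{\msf{Bin}} + 1$, and the second claim is vacuous since $\phi$ is not a binary composition.

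For the unary inductive cases ($\phi = \neg \phi'$, or $\phi = \fanBr{A}(\bullet \phi')$ with $\bullet \in \Op{Un}{} \setminus \{\neg\}$), the key observation is that both $\prop(\phi) = \prop(\phi')$ and $|\phi|_{\msf{Bin}} = |\phi'|_{\msf{Bin}}$, so the inductive hypothesis on $\phi'$ transfers directly to $\phi$, and again the second claim is vacuous. The same is true for the $\LTL$ unary cases $\bullet \varphi'$.

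For the binary inductive case, write $\phi$ as (depending on the logic) $\phi_1 \bullet \phi_2$ or $\fanBr{A}(\phi_1 \bullet \phi_2)$; in either shape we have $\prop(\phi) = \prop(\phi_1) \cup \prop(\phi_2)$ and $|\phi|_{\msf{Bin}} = |\phi_1|_{\msf{Bin}} + |\phi_2|_{\msf{Bin}} + 1$. Applying the inductive hypothesis (first claim) to $\phi_1$ and $\phi_2$ and then inclusion--exclusion gives
\begin{equation*}
|\prop(\phi)| = |\prop(\phi_1)| + |\prop(\phi_2)| - |\prop(\phi_1) \cap \prop(\phi_2)| \leq |\phi|_{\msf{Bin}} + 1 - |\prop(\phi_1) \cap \prop(\phi_2)|,
\end{equation*}
which proves the first claim. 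For the second claim, observe that when $\phi = \phi_1 \bullet \phi_2$ is a top-level binary composition and $|\prop(\phi)| = |\phi|_{\msf{Bin}} + 1$, the displayed inequality forces $|\prop(\phi_1) \cap \prop(\phi_2)| = 0$, i.e. $\prop(\phi_1) \cap \prop(\phi_2) = \emptyset$, as required.

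There is no real obstacle here: the only mildly delicate point is making sure the bookkeeping is consistent with the paper's definition of $\msf{SubBin}$ across $\LTL$, $\CTL$, and $\ATL^k$ (so that the strategic wrapper $\fanBr{A}$ carries through without changing either $|\phi|_{\msf{Bin}}$ on unary path formulas or $\prop(\phi)$), and noting that since the second claim is only invoked at top-level binary compositions, it is enough to derive it from the single inclusion--exclusion identity used for the first claim.
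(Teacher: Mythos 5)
There is a genuine gap, and it sits exactly at the point you dismiss as bookkeeping. In the paper, $|\phi|_{\msf{Bin}}$ is defined as $|\msf{SubBin}(\phi)|$, where $\msf{SubBin}(\phi) \subseteq \msf{SubF}(\phi)$ is a \emph{set} of sub-formulas; repeated occurrences of the same binary sub-formula are counted once. Consequently your key identity $|\phi_1 \bullet \phi_2|_{\msf{Bin}} = |\phi_1|_{\msf{Bin}} + |\phi_2|_{\msf{Bin}} + 1$ is false whenever $\msf{SubBin}(\phi_1) \cap \msf{SubBin}(\phi_2) \neq \emptyset$: e.g.\ for $\phi_1 = p \lor q$ and $\phi_2 = (p \lor q) \lor r$ one has $|\phi_1 \wedge \phi_2|_{\msf{Bin}} = 3$, not $1+2+1$. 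In that situation your chain only yields $|\prop(\phi)| \leq |\phi|_{\msf{Bin}} + 1 + s - |\prop(\phi_1) \cap \prop(\phi_2)|$ with $s = |\msf{SubBin}(\phi_1) \cap \msf{SubBin}(\phi_2)| > 0$, which is weaker than what is needed, and the slack cannot be absorbed by arguing that shared binary sub-formulas force shared propositions one-for-one: taking $\psi_1 = p \lor q$, $\psi_2 = \psi_1 \lor p$, $\psi_3 = \psi_2 \lor q$, the formulas $\phi_1 = \psi_3$ and $\phi_2 = \psi_3 \lor r$ share three binary sub-formulas but only two propositions. The same problem invalidates the "forcing" step for the disjointness claim. (Your argument would be perfectly fine if $|\cdot|_{\msf{Bin}}$ counted \emph{occurrences} of binary operators, or under the extra hypothesis $\msf{SubF}(\phi_1) \cap \msf{SubF}(\phi_2) = \emptyset$; but the lemma as stated uses the set-based count, and it is applied elsewhere with that meaning.)

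Handling this sharing is precisely the non-trivial content of the paper's proof: there, an induction on $k$ is combined with a substitution operator $\msf{tr}_x$ that replaces every sub-formula of $\phi_1$ also occurring in $\phi_2$ by a fresh proposition $x$, producing $\phi_1^x$ with $\msf{SubF}(\phi_1^x) \cap \msf{SubF}(\phi_2) = \emptyset$, $|\phi_1^x \bullet \phi_2|_{\msf{Bin}} \leq |\phi|_{\msf{Bin}}$ and $\prop(\phi) \subseteq \prop(\phi_1^x \bullet \phi_2)$; only after this normalization does the additivity you rely on become available, and the disjointness claim then follows because any shared proposition would be a shared sub-formula, making the inequality strict. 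To repair your proof you would need either this substitution step or some other device (for instance a counting argument on the DAG of distinct sub-formulas) that explicitly accounts for sub-formulas common to $\phi_1$ and $\phi_2$.
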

	\begin{proof}
		We prove this lemma by induction on $k$. It straightforwardly holds for $k=0$. Assume now that it holds for all $i \leq k$, for some $k \in \N$. Then, consider an $\ATL$-formula $\phi$ for which $|\phi|_{\msf{Bin}} = k+1$. Let us first assume that $\phi = \phi_1 \bullet \phi_2$, for some with binary operator $\bullet$, and $\ATL$-formulas $\phi_1,\phi_2$. We are going to modify the formula $\phi_1$. To do so, we consider a fresh proposition $x \notin \prop(\phi)$. Then, for all $\ATL$-formulas $\psi,\psi'$, we let $\msf{ch}_x(\psi,\psi') := x$ if $\psi \in \msf{SubF}(\phi_2)$, and 
		$\msf{ch}_x(\psi,\psi') := \psi'$ otherwise. Now, we define inductively a function $\msf{tr}_x$ on $\ATL$-formulas as follows: 
		\begin{itemize}
			\item for all propositions $p$, we let $\msf{tr}_x(p) : = \msf{ch}_x(p,x)$;
			\item for all unary operators $\bullet_1$ and $\ATL$-formulas $\psi$, we let $\msf{tr}_x(\bullet_1 \psi) := \bullet_1 \msf{ch}_x(\psi,\msf{tr}_x(\psi))$;
			\item for all binary operators $\bullet_2$, and $\psi,\psi'$ two $\ATL$-formulas, we let  $\msf{tr}_x(\psi \bullet_2 \psi') := \msf{ch}_x(\psi,\msf{tr}_x(\psi)) \bullet_2 \msf{ch}_x(\psi',\msf{tr}_x(\psi'))$.
		\end{itemize}
		We are going to apply the function $\msf{tr}_x$ to the formula $\phi_1$, and then use the properties satisfied by the obtained formula. Thus, 
		we show by induction on $\ATL$-formulas $\psi$ the property $\mathcal{P}(\psi)$: $\msf{SubF}(\msf{ch}_x(\psi,\msf{tr}_x(\psi))) \cap \msf{SubF}(\phi_2) = \emptyset$ 
		and 
		$\msf{SubBin}(\msf{ch}_x(\psi,\msf{tr}_x(\psi))) \subseteq \msf{tr}_x(\msf{SubBin}(\psi) \setminus \msf{SubBin}(\phi_2))$. First of all, for all $\ATL$-formulas, if $\psi \in \msf{SubF}(\phi_2)$, then $\msf{ch}_x(\psi,\msf{tr}_x(\psi)) = x$, in which case $\msf{SubF}(\msf{ch}_x(\psi,\msf{tr}_x(\psi))) \cap \msf{SubF}(\phi_2) = \emptyset$, and $\msf{SubBin}(\msf{ch}_x(\psi,\msf{tr}_x(\psi))) = \emptyset$. Hence, $\mathcal{P}(\psi)$ always holds in that case. Thus, we will focus on the cases where $\psi \notin \msf{SubF}(\phi_2)$ and $\msf{ch}_x(\psi,\msf{tr}_x(\psi)) = \msf{tr}_x(\psi)$.
		
		Furthermore, $\mathcal{P}(p)$ straightforwardly holds for all propositions $p$. Assume now that $\mathcal{P}(\psi)$ holds for some $\ATL$-formula $\psi$. Consider any unary operator $\bullet_1$. We have $\msf{tr}_x(\bullet_1 \; \psi) = \bullet_1 \; \msf{ch}_x(\psi,\msf{tr}_x(\psi))$, hence $\msf{SubF}(\msf{tr}_x(\bullet_1 \; \psi)) = \{ \bullet_1 \; \msf{ch}_x(\psi,\msf{tr}_x(\psi))\} \cup \msf{SubF}(\msf{ch}_x(\psi,\msf{tr}_x(\psi)))$ and $\msf{SubBin}(\bullet_1 \; \msf{tr}_x(\psi)) = \msf{SubBin}(\msf{tr}_x(\psi))
		$. Thus, $\mathcal{P}(\bullet_1 \; \psi)$ follows directly from $\mathcal{P}(\psi)$. 
		
		Assume now that $\mathcal{P}(\psi_1),\mathcal{P}(\psi_2)$ hold for some $\ATL$-formulas $\psi_1,\psi_2$. Consider any binary operator $\bullet_2$. We have $\msf{tr}_x(\psi \bullet_2 \psi') := \msf{ch}_x(\psi,\msf{tr}_x(\psi))  \bullet_2 \msf{ch}_x(\psi',\msf{tr}_x(\psi'))$, hence:
		\begin{equation*}
			\msf{SubF}(\msf{tr}_x(\psi \bullet_2 \psi')) = \{ 
			\msf{tr}_x(\psi \bullet_2 \psi') \} \cup \msf{SubF}(\msf{ch}_x(\psi,\msf{tr}_x(\psi))) \cup \msf{SubF}(\msf{ch}_x(\psi',\msf{tr}_x(\psi')))
		\end{equation*}
		Since $\msf{SubF}(\msf{ch}_x(\psi,\msf{tr}_x(\psi))) \cap \msf{SubF}(\phi_2) = \emptyset$ (by $\mathcal{P}(\psi)$) and $\msf{SubF}(\msf{ch}_x(\psi',\msf{tr}_x(\psi'))) \cap \msf{SubF}(\phi_2) = \emptyset$ (by $\mathcal{P}(\psi')$), it follows that $\msf{SubF}(\msf{tr}_x(\psi \bullet_2 \psi')) \cap \msf{SubF}(\phi_2) = \emptyset$. Furthermore, we have:
		\begin{equation*}
			\msf{SubBin}(\msf{tr}_x(\psi \bullet_2 \psi')) = \{ \msf{tr}_x(\psi \bullet_2 \psi') \} \cup \msf{SubBin}(\msf{ch}_x(\psi,\msf{tr}_x(\psi))) \cup \msf{SubBin}(\msf{ch}_x(\psi',\msf{tr}_x(\psi')))
		\end{equation*}
		with $\{ \msf{tr}_x(\psi \bullet_2 \psi') \} = \msf{tr}_x(\{ \psi \bullet_2 \psi' \})$, and for any $\tilde{\psi} \in \{\psi,\psi'\}$, we have $\msf{SubBin}(\msf{ch}_x(\tilde{\psi},\msf{tr}_x(\tilde{\psi}))) \subseteq \msf{tr}_x(\msf{SubBin}(\tilde{\psi}) \setminus \msf{SubBin}(\phi_2))$ (by $\mathcal{P}(\tilde{\psi})$). Thus, $\mathcal{P}(\psi \bullet_2 \psi')$ follows from the fact that, if $\psi \bullet_2 \psi' \notin \msf{SubF}(\phi_2)$, then $\msf{SubBin}(\psi \bullet_2 \psi') \setminus \msf{SubBin}(\phi_2) = \{\psi \bullet_2 \psi'\} \cup (\msf{SubBin}(\psi) \setminus \msf{SubBin}(\phi_2)) \cup (\msf{SubBin}(\psi') \setminus \msf{SubBin}(\phi_2))$. In fact, $\mathcal{P}(\psi)$ holds for all $\ATL$-formulas $\psi$.
		
		Now, let $\phi_1^x := \msf{ch}_x(\phi_1,\msf{tr}_x(\phi_1))$ and $\phi^x : \phi_1^x \bullet \phi_2$. By $\mathcal{P}(\phi_1^x)$, we have the following facts:
		\begin{itemize}
			\item $\msf{SubF}(\phi_1^x) \cap \msf{SubF}(\phi_2) = \emptyset$;
			\item $\msf{SubBin}(\phi^x) = \{ \phi^x \} \cup \msf{SubBin}(\phi_1^x) \cup \msf{SubBin}(\phi_2)$ and $\msf{SubBin}(\phi) = \{ \phi \} \cup \msf{SubBin}(\phi_1) \cup \msf{SubBin}(\phi_2)$. Since $\msf{SubBin}(\phi_1^x) \subseteq \msf{tr}_x(\msf{SubBin}(\phi_1) \setminus \msf{SubBin}(\phi_2))$, we have:
			\begin{align*}
				|\phi^x|_{\msf{Bin}} = |\msf{SubBin}(\phi^x)| & = 1 + |\msf{SubBin}(\phi_1^x)| + |\msf{SubBin}(\phi_2)| \\ 
				& \leq 1 + |\msf{SubBin}(\phi_1) \setminus \msf{SubBin}(\phi_2)| + |\msf{SubBin}(\phi_2)| \\ 
				& = |\msf{SubBin}(\phi)| = |\phi|_{\msf{Bin}}
			\end{align*}
			That is, $|\phi^x|_{\msf{Bin}} \leq |\phi|_{\msf{Bin}}$;
			\item If $\msf{SubF}(\phi_1) \cap \msf{SubF}(\phi_2) = \emptyset$, then $\phi^x = \phi$ and $\prop(\phi) = \prop(\phi^x)$. Otherwise, we have $\prop(\phi) \cup \{x\} = \prop(\phi^x)$ (it can be proved straightforwardly by induction).
		\end{itemize}
		Therefore, we have $|\phi^x|_{\msf{Bin}} = |\phi_1^x|_{\msf{Bin}} + |\phi_2|_{\msf{Bin}} + 1 \leq k+1$. Hence, we can apply our induction hypotheses to $|\phi_1^x|_{\msf{Bin}} \leq k$ and $|\phi_2^x|_{\msf{Bin}} \leq k$ to obtain that: $|\prop(\phi_1^x)| \leq |\phi_1^x|_{\msf{Bin}} + 1$, $|\prop(\phi_2)| \leq |\phi_2|_{\msf{Bin}} + 1$. Hence, $|\prop(\phi)| \leq |\prop(\phi^x)| = |\prop(\phi_1^x)| + |\prop(\phi_2)| \leq k+2$. In addition, if $\msf{SubF}(\phi_1) \cap \msf{SubF}(\phi_2) \neq \emptyset$, then $|\prop(\phi)| < |\prop(\phi^x)|$, and thus $|\prop(\phi)| < k+2$.
		
		If instead there is an $\ATL$-formula $\phi' = \phi_1 \bullet \phi_2$, for some $\ATL$-formulas $\phi_1,\phi_2$ and binary operator $\bullet$, such that $\phi = \msf{Qt} \cdot \phi'$ where $\msf{Qt}$ is a non-empty sequence of unary operators, we can apply the above arguments to the formula $\phi'$. Thus, the property holds also for $k+1$. In fact, it holds for all $k \in \N$. The lemma follows.
	\end{proof}
	We can now proceed to the proof of Lemma~\ref{lem:n_proper}. 
	\begin{proof}
		First of all, note that for all $\ATL$-formulas $\phi$ such that $\prop(\phi) \subseteq \prop_n'$, and for all $S \subseteq \prop_n'$, there is $\psi \in \{\msf{True},\msf{False}\}$ such that we have $\phi \equiv_{k,n,S} \psi$. This can be straightforwardly established by induction on $\ATL$-formulas $\phi$
		.
		
		Now, let us show by induction on $0 \leq i \leq n$ the property $\mathcal{P}(i)$: for all $\ATL$-formulas $\phi \in \ATL^k(\prop,\Ut,\emptyset,\Bl,i)$ such that $|\prop(\phi) \cap \prop_n'| = i$, and for all $S \subseteq \prop_n'$, there are two $\ATL$-formulas $\psi,\widehat{\psi} \in \ATL^k(\{p,\bar{p}\},\Ut \setminus \{\neg\},\emptyset,\Bl,0)$ such that: $\Size{\psi},\Size{\widehat{\psi}} \leq \Size{\phi}-2i$, $\phi \equiv_{k,n,S} \psi$, and $\neg \phi \equiv_{k,n,S} \widehat{\psi}$.
		
		Consider first the case $i = 0$. Let $\phi \in \ATL^k(\prop,\Ut,\emptyset,\Bl,0)$. We have $\phi := \msf{Qt} \cdot r$ for some $r \in \{p,\bar{p}\}$ and $\msf{Qt} \in (\msf{Op}(k,\Ut))^*$. For $j \in \{0,1\}$, we let $(\msf{Qt}_j,x_j) := \msf{UnNeg}(\msf{Qt},j)$ and $r_j \in \{p,\bar{p}\}$ be such that $r_j = r$ if and only if $x_j = 0$. Then, we let $\psi := \msf{Qt}_0 \cdot r_0$ and $\widehat{\psi} := \msf{Qt}_1 \cdot r_1$. By Lemma~\ref{lem:unnegate_unary}, we have: $\psi,\widehat{\psi} \in \ATL^k(\prop,\Ut \setminus \{\neg\},\emptyset,\Bl,0)$ and $\Size{\psi},\Size{\widehat{\psi}} \leq \Size{\psi}$. Furthermore, we also have, for all $S \subseteq \prop_n'$, $\phi \equiv_{k,n,S} \psi$, and $\neg \phi \equiv_{k,n,S} \widehat{\psi}$, again by Lemma~\ref{lem:unnegate_unary}, and since $p \equiv_{k,n,S} \neg \bar{p}$ (by definition of $(n,S)$-proper structures). Thus, $\mathcal{P}(0)$ holds. 

		Assume now that it $\mathcal{P}(j)$ holds for all $j \leq i$, for some $i \leq n-1 \in \N$. Consider an $\ATL$-formula $\phi \in \ATL^k(\prop,\Ut,\emptyset,\Bl,i+1)$ such that $|\prop(\phi) \cap \prop_n'| = i+1$. Let $S \subseteq \prop_n'$. If $|\prop(\phi)| = i+1$, then we have $\Size{\phi} \geq 2i+3$ by Lemma~\ref{lem:at_least_that_many_subformulas}\footnote{This lemma is established for $\LTL$-formulas, but its result can be applied to $\ATL$-formulas as well}. 
		Furthermore, there is $\psi,\widehat{\psi} \in \{ \msf{True},\msf{False}\}$ of size 1 such that, $\phi \equiv_{k,n,S} \psi$ and $\neg\phi \equiv_{k,n,S} \widehat{\psi}$. Assume now that $|\prop(\phi)| > i+1$. Since $\prop(\phi) \leq |\phi|_{\msf{Bin}} + 1 \leq i+2$ (by Lemma~\ref{lem:non_intersect_binary}), it follows that $|\prop(\phi)| = i+2$. Let $x \in \{p,\bar{p}\}$ be such that $x \in \prop(\phi)$. Then, there is some sequence of unary operators $\msf{Qt} \in (\msf{Op}(k,\Ut))^*$, a binary operator $\bullet$ and two $\ATL$-formulas $\phi_1,\phi_2 \in \ATL^k(\prop,\Ut,\emptyset,\Bl,i)$ such that $\phi = \msf{Qt} \cdot (\phi_1 \bullet \phi_2)$. Thus, $\prop(\phi) = \prop(\phi_1 \bullet \phi_2)$. Since we have $\phi_1 \bullet \phi_2 \in \ATL^k(\prop,\Ut,\emptyset,\Bl,i+1)$ and $|\prop(\phi_1 \bullet \phi_2)| = i+2$, it follows that $\prop(\phi_1) \cap \prop(\phi_2) = \emptyset$, by Lemma~\ref{lem:non_intersect_binary}. Without loss of generality, let us assume that $x \in \prop(\phi_1)$. In that case, we have $\prop(\phi_2) \subseteq \prop_n'$. Therefore, as mentioned at the beginning of this proof, there is some $\psi_{\msf{tr}} \in \{\msf{True},\msf{False}\}$ such that $\phi_2 \equiv_{k,S} \psi_{\msf{tr}}$. Hence, $\phi_1 \bullet \phi_2 \equiv_{k,S} \phi_1 \bullet \psi_{\msf{tr}}$. It follows, since $\bullet$ is a binary logical operator, that 
		there is some $\Psi,\widehat{\Psi} \in \{\msf{True},\msf{False},\phi_1,\neg \phi_1\}$ such that $\phi_1 \bullet \phi_2 \equiv_{k,n,S} \Psi$ and $\neg (\phi_1 \bullet \phi_2) \equiv_{k,n,S} \widehat{\Psi}$. Now, let $m_1 := |\prop(\phi_1) \cap \prop_n'| \leq i$. We have $m_2 := |\prop(\phi_2)|$ such that $m_1 + m_2 = |\prop(\phi)| = i+1$. In addition, by $\mathcal{P}(m_1)$, $\Psi$ and $\widehat{\Psi}$ can be chosen such that $\Psi,\widehat{\Psi} \in \ATL^k(\{p,\bar{p}\},\Ut \setminus \{\neg\},\emptyset,\Bl,0)$ and $\Size{\Psi},\Size{\widehat{\Psi}} \leq \Size{\phi_1} - 2m_1$. In addition, by Lemma~\ref{lem:at_least_that_many_subformulas}, we have $\Size{\phi_2} \geq 2m_2-1$. Overall, we obtain that $\Psi,\widehat{\Psi}$ can be chosen such that (note that since $\prop(\phi_1) \cap \prop(\phi_2) = \emptyset$, we have $\msf{SubF}(\phi_1) \cap \msf{SubF}(\phi_2) = \emptyset$):
		\begin{align*}
			\Size{\Psi},\Size{\widehat{\Psi}} & \leq \Size{\phi_1} - 2m_1 = \Size{\phi_1 \bullet \phi_2} - (\Size{\phi_2} + 1) - 2m_1 \\ 
			& \leq \Size{\phi_1 \bullet \phi_2} - 2m_2 - 2m_1 = \Size{\phi_1 \bullet \phi_2} - 2 (i+1)
		\end{align*}
		Then, for $j \in \{0,1\}$, we let $(\msf{Qt}_j,x_j) := \msf{UnNeg}(\msf{Qt},j)$ and $\psi_j \in \{\Psi,\widehat{\Psi}\}$ be such that $\psi_j \equiv_{k,n,S} \Psi$ if and only if $x_j = 0$. Then, we let $\psi := \msf{Qt}_0 \cdot \psi_0$ and $\widehat{\psi} := \msf{Qt}_1 \cdot \psi_1$. By Lemma~\ref{lem:unnegate_unary}, we have: $\psi,\widehat{\psi} \in \ATL^k(\prop,\Ut \setminus \{\neg\},\emptyset,\Bl,0)$, $\phi \equiv_{k,n,S} \psi$ and $\neg \phi \equiv_{k,n,S} \widehat{\psi}$, and $|\msf{Qt}_0|,|\msf{Qt}_1| \leq |\msf{Qt}|$. Thus, we have $\Size{\psi} = |\msf{Qt}_0| + \Size{\psi_0} \leq |\msf{Qt}| + \Size{\phi_1 \bullet \phi_2} - 2(i+1) = \Size{\phi} - 2(i+1)$, and similarly for $\widehat{\psi}$. Thus, $\mathcal{P}(i+1)$ follows. In fact, $\mathcal{P}(i)$ holds for all $0 \leq i \leq n$. The lemma is then given by $\mathcal{P}(n)$. 
	\end{proof}

	Let us consider Lemma~\ref{lem:set_for_all_binary}. Before we proceed to its proof, we state below a useful lemma analogous to Lemma~\ref{lem:distinguish}
	.
	\begin{lemma}
		\label{lem:distinguish_ATL}
		For all sets of propositions $\prop$, for all $Y \subseteq \prop$, and $S,S' \subseteq \prop$, if, for all $x  \in Y$, we have $x \in S$ if and only if $x \in S'$, then an $\ATL$-formula $\phi$ such that $\prop(\phi) \subseteq Y$ cannot distinguish the trivial structures $T(S)$ and $T(S')$. 
	\end{lemma}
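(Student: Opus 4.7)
The plan is a straightforward structural induction on the $\ATL$-formula $\phi$, mirroring the analogous induction used for Lemma~\ref{lem:distinguish}. Fix $Y \subseteq \prop$ and $S,S' \subseteq \prop$ such that $S \cap Y = S' \cap Y$, and denote by $\mathcal{P}(\phi)$ the property: if $\prop(\phi) \subseteq Y$, then $T(S) \models \phi$ iff $T(S') \models \phi$. We will show $\mathcal{P}(\phi)$ for every $\ATL^k$-state formula $\phi$.

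For the base case $\phi = p$ with $p \in Y$, the definition of $T(S)$ gives $T(S) \models p$ iff $p \in S$ iff $p \in S'$ iff $T(S') \models p$. The Boolean cases are immediate: $\mathcal{P}(\neg \phi')$ follows from $\mathcal{P}(\phi')$, and $\mathcal{P}(\phi_1 \bullet \phi_2)$ for $\bullet \in \Bl$ follows from $\mathcal{P}(\phi_1)$ and $\mathcal{P}(\phi_2)$ together with the fact that $\prop(\phi_i) \subseteq \prop(\phi) \subseteq Y$.

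The key case is the strategic quantifier. A trivial structure has a single self-looping state $q$, so regardless of the coalition $A$ and the strategy profile chosen, the unique outcome from $q$ is the constant infinite path $q^\omega$. Consequently, for any path formula $\psi$, we have $T(S) \models \fanBr{A} \bullet \phi'$ iff $q^\omega \models \bullet \phi'$, and unfolding the semantics of $\bullet \in \{\lX,\lF,\lG\}$ on $q^\omega$ reduces this to $T(S) \models \phi'$; an identical reduction applies to $T(S')$. For $\bullet \in \Bt$ and $\fanBr{A}(\phi_1 \bullet \phi_2)$, the same collapse shows that, on $q^\omega$, each of $\lU, \lR, \lW, \lM$ reduces to $\phi_2$ (for $\lU, \lR$) or to a Boolean combination of $\phi_1$ and $\phi_2$ (for $\lW, \lM$) evaluated at $q$, so $\mathcal{P}$ propagates from the immediate subformulas. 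Combining all cases gives $\mathcal{P}(\phi)$ for every $\phi$, hence the claim.

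No step looks hard; the only subtlety is to explicitly observe the collapse of the path semantics on the self-loop, which is a one-line calculation analogous to Lemma~\ref{lem:equiv_temp}. Because this is the same pattern already exploited in the proof of Lemma~\ref{lem:distinguish}, the main bookkeeping is just enumerating the operator cases allowed in the $\ATL$ grammar.
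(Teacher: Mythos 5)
Your proposal is correct and is exactly the "straightforward proof by induction on $\ATL$-formulas" that the paper invokes without spelling out: the paper's proof is the same structural induction, with the only substantive point being the collapse of the path semantics on the single self-looping state of a trivial structure, which you make explicit. Nothing further is needed.
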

	\begin{proof}
		A straightforward proof by induction on $\ATL$-formulas establishes the lemma.
	\end{proof}

	We can now proceed to the proof of Lemma~\ref{lem:set_for_all_binary}.
	\begin{proof}
		Let $n \in \N$.
		\begin{itemize}
			\item Assume that $\bullet = \lor$. The cases $\bullet = \Rightarrow$ and $\bullet = \Leftarrow$ are analogous.
			We let $S_{n,\bullet} := \emptyset$ and, for all $1 \leq i \leq n$, we let $S_i := \{p_i\}$. Then, we define: $\mathcal{A}_{n,\bullet} := \{ T(S_i) \mid 1 \leq i \leq n\}$, $\mathcal{B}_{n,\bullet} := \{ T(S_{n,\bullet}) \}$, and  $\phi_{n,\bullet} := p_1 \lor \ldots \lor p_n$.
			
			That way, for all $1 \leq i \leq n$, to distinguish $T(S_i)$ and $T(S)$, an $\ATL$-formula $\phi$ needs to be such that $p_i \in \prop(\phi)$, by Lemma~\ref{lem:distinguish_ATL}. Hence, if an $\ATL$-formula $\phi$ separates $\mathcal{A}_{n,\bullet}$ and $\mathcal{B}_{n,\bullet}$, we have $\prop_n' \subseteq \prop(\phi)$. Consider now any two positive and negative sets $\mathcal{P},\mathcal{N}$ of $(n,S_{n,\bullet})$-proper structures. Let $\mathcal{P}' := \mathcal{P} \cup \mathcal{A}_{n,\bullet}$ and $\mathcal{N}' := \mathcal{N} \cup \mathcal{B}_{n,\bullet}$. Consider any formula $\phi \in \ATL(\prop^n,\Ut,\emptyset,\Bl,0)$ and $\psi := \phi \lor \phi_{n,\bullet}$.  For all $1 \leq i \leq n$, we have $T(S_i) \models \phi_{n,\bullet}$ and $T(S_{n,\bullet}) \not\models \phi_{n,\bullet}$. Furthermore, since $\phi$ does not use negations, we also have $T(S_{n,\bullet}) \not\models \phi$. In addition, by definition of $S_{n,\bullet}$, for all $(n,S_{n,\bullet})$-proper structures $T$, we have $T \not\models \phi_{n,\bullet}$. Thus, we have that $\phi$ accepts $\mathcal{P}$ and rejects $\mathcal{N}$ if and only if $\psi$ accepts $\mathcal{P}'$ and rejects $\mathcal{N}'$.
			\item Assume now that $\bullet = \wedge$. The cases $\bullet = \prescript{\neg}{}{\Rightarrow}$ and $\bullet = \prescript{\neg}{}{\Leftarrow}$ are analogous.
			We let $S_{n,\bullet} := \prop_n'$, $S := \prop_n$ and, for all $1 \leq i \leq n$, we let $S_i := \prop_n \setminus \{p_i\}$. Then, we define: $\mathcal{A}_{n,\bullet} := \{ T(S) \}$, $\mathcal{B}_{n,\bullet} := \{ T(S_i) \mid 1 \leq i \leq n \}$, and  $\phi_{n,\bullet} := p_1 \wedge \ldots \wedge p_n$.
			
			That way, for all $1 \leq i \leq n$, to distinguish $T(S_i)$ and $T(S)$, an $\ATL$-formula $\phi$ needs to be such that $p_i \in \prop(\phi)$, by Lemma~\ref{lem:distinguish_ATL}. Hence, if an $\ATL$-formula $\phi$ separates $\mathcal{A}_{n,\bullet}$ and $\mathcal{B}_{n,\bullet}$, we have $\prop_n' \subseteq \prop(\phi)$. Consider now any two positive and negative sets $\mathcal{P},\mathcal{N}$ of $(n,S_{n,\bullet})$-proper structures. Let $\mathcal{P}' := \mathcal{P} \cup \mathcal{A}_{n,\bullet}$ and $\mathcal{N}' := \mathcal{N} \cup \mathcal{B}_{n,\bullet}$. Consider any formula $\phi \in \ATL(\prop^n,\Ut,\emptyset,\Bl,0)$ and $\psi := \phi \wedge \phi_{n,\bullet}$. We have $T(S) \models \phi_{n,\bullet}$ and for all $1 \leq i \leq n$, we have $T(S_i) \not\models \phi_{n,\bullet}$. Furthermore, since $\phi$ does not use negations, we also have $T(S) \models \phi$. In addition, by definition of $S_{n,\bullet}$, for all $(n,S_{n,\bullet})$-proper structures $T$, we have $T \models \phi_{n,\bullet}$. Overall, we obtain that $\phi$ accepts $\mathcal{P}$ and rejects $\mathcal{N}$ if and only if $\psi$ accepts $\mathcal{P}'$ and rejects $\mathcal{N}'$.
			\item Assume now that $\bullet = \prescript{\neg}{}{\wedge}$, the case $\bullet = \prescript{\neg}{}{\lor}$ is analogous. We let $\phi^1 := p_1$ and, for all $2 \leq i \leq n$, we let $\phi^i := \phi^{i-1} \prescript{\neg}{}{\wedge} \; p_i$. Let us first prove by induction on $1 \leq i \leq n$ that, for all $S \subseteq \{p_1,\ldots,p_i\}$, letting $\tilde{S}^i := S \cup \{p_{i+1},\ldots,p_n\}$, we have that $T(S) \models \phi^i$ and $T(\tilde{S}^i) \models \phi^n$ have the same truth value if and only if $i$ and $n$ have the same parity. This obviously holds for $i = n$. Assume now that it holds for some $2 \leq i \leq n$. Let $S \subseteq \{p_1,\ldots,p_{i-1}\}$. We have $\phi^i := \phi^{i-1} \prescript{\neg}{}{\wedge} \; p_i \equiv \neg \phi^{i-1} \lor \neg p_i$. Since $p_i \in \tilde{S}^{i-1}$, and $\prop(\phi^{i-1}) \subseteq \{p_1,\ldots,p_{i-1}\}$ we have $T(\tilde{S}^{i-1}) \models \phi^i $ if and only if $T(S \cup \{p_i\}) \models \phi^{i}$ if and only if $T(S) \not\models \phi^{i-1}$. By our induction hypothesis, we have that $T(S \cup \{p_i\}) \models \phi^{i}$ and $T(\widetilde{S \cup \{p_i\}}^i) \models \phi^{n}$ have the same truth value if and only if $i$ and $n$ have the same parity. We deduce that $T(S) \models \phi^{i-1}$ and $T(\widetilde{S}^{i-1}) \models \phi^{n}$ have the same truth value if and only if $i-1$ and $n$ have the same parity, since $\widetilde{S \cup \{p_i\}}^i = \tilde{S}^{i-1}$. In fact, the property holds for all $1 \leq i \leq n$. 
			
			Let us now show by induction on $1 \leq i \leq n$ that there is some $S_i,S_i' \subseteq \{p_1,\ldots,p_i\}$ such that $S_i \cup \{p_i\} = S_i' \cup \{p_i\}$, and $T(\tilde{S_i}^i) \models \phi^n$ while $T(\tilde{S_i'}^i) \not\models \phi^n$. We have $T(\{p_1\}) \models \phi^1$ and $T(\emptyset) \not\models \phi^1$. Hence, we can conclude that this property holds for $i = 1$ with what we have proved above. Assume now that it holds for some $1 \leq i-1 \leq n-1$. We have $\phi^i := \phi^{i-1} \prescript{\neg}{}{\wedge} \; p_i$. We let $S := S_{i-1}$ if $n$ and $i-1$ have the same parity, and $S := S_{i-1}'$ otherwise. That way, with our above result and the induction hypothesis, we know that $T(S) \models \phi^{i-1}$. Then, for all $S_i,S_i' \in \{ S \cup \{p_i\},S\}$ such that $S_i,\neq S_i'$, we have that the truth value of $T(S_i) \models \phi^{i}$ and $T(S_i') \models \phi^{i}$ are different. We can then conclude with our above result. In fact, this property holds for all $1 \leq i \leq n$.
			
			We can now finally define the formula and structures that we consider.  We let $\phi_{n,\bullet} := \phi^n$ and $S_{n,\bullet} := S_n \subseteq \{p_1,\ldots,p_n\}$. Furthermore, we let $\mathcal{A}_{n,\bullet} := \{ T(\tilde{S_i}^i \cup \{p,\bar{p}\})  \mid 1 \leq i \leq n \}$, and $\mathcal{B}_{n,\bullet} := \{ T(\tilde{S_i'}^i \cup \{p,\bar{p}\}) \mid 1 \leq i \leq n \}$. 
			
			That way, for all $1 \leq i \leq n$, to distinguish $T(\tilde{S_i}^i \cup \{p,\bar{p}\})$ and $T(\tilde{S_i'}^i \cup \{p,\bar{p}\})$, an $\ATL$-formula $\phi$ needs to be such that $p_i \in \prop(\phi)$, since $S_i \cup \{p_i\} = S_i' \cup \{p_i\}$ and by Lemma~\ref{lem:distinguish_ATL}. Hence, if an $\ATL$-formula $\phi$ separates $\mathcal{A}_{n,\bullet}$ and $\mathcal{B}_{n,\bullet}$, we have $\prop_n' \subseteq \prop(\phi)$. Consider now any two positive and negative sets $\mathcal{P},\mathcal{N}$ of $(n,S_{n,\bullet})$-proper structures. Let $\mathcal{P}' := \mathcal{N} \cup \mathcal{B}_{n,\bullet}$ and $\mathcal{N}' := \mathcal{P} \cup \mathcal{A}_{n,\bullet}$. Consider any formula $\phi \in \ATL(\prop^n,\Ut,\emptyset,\Bl,0)$ and $\psi := \phi \prescript{\neg}{}{\wedge} \; \phi_{n,\bullet} \equiv \neg \phi \lor \neg \phi_{n,\bullet}$. Let $1 \leq l \leq n$. We have $\phi_{n,\bullet} \models T(\tilde{S_i}^i)$ and $\phi \models T(\tilde{S_i}^i)$. Therefore, $\psi \not\models T(\tilde{S_i}^i)$. However, $\phi_{n,\bullet} \not\models T(\tilde{S_i'}^i)$. Therefore, $\psi \models T(\tilde{S_i'}^i)$. Hence, $\psi$ accepts $\mathcal{B}_{n,\bullet}$ and rejects $\mathcal{A}_{n,\bullet}$. In addition, by definition of $S_{n,\bullet}$, for all $(n,S_{n,\bullet})$-proper structures $T$, we have $T \models \phi_{n,\bullet}$. Overall, we obtain that $\phi$ accepts $\mathcal{P}$ and rejects $\mathcal{N}$ if and only if $\psi$ accepts $\mathcal{P}'$ and rejects $\mathcal{N}'$.
			\item Assume now that $\bullet = \; \Leftrightarrow$. The case $\bullet = \prescript{\neg}{}{\Leftrightarrow}$ is analogous.
			We let $S_{n,\bullet} := \prop_n'$, $S := \prop_n$ and, for all $1 \leq i \leq n$, we let $S_i := \prop_n \setminus \{p_i\}$. Then, we define: $\mathcal{A}_{n,\bullet} := \{ T(S) \}$, $\mathcal{B}_{n,\bullet} := \{ T(S_i) \mid 1 \leq i \leq n \}$, and  $\phi_{n,\bullet} := p_1 \Leftrightarrow \ldots \Leftrightarrow p_n$.
			
			That way, for all $1 \leq i \leq n$, to distinguish $T(S_i)$ and $T(S)$, an $\ATL$-formula $\phi$ needs to be such that $p_i \in \prop(\phi)$, by Lemma~\ref{lem:distinguish_ATL}. Hence, if an $\ATL$-formula $\phi$ separates $\mathcal{A}_{n,\bullet}$ and $\mathcal{B}_{n,\bullet}$, we have $\prop_n' \subseteq \prop(\phi)$. Consider now any two positive and negative sets $\mathcal{P},\mathcal{N}$ of $(n,S_{n,\bullet})$-proper structures. Let $\mathcal{P}' := \mathcal{P} \cup \mathcal{A}_{n,\bullet}$ and $\mathcal{N}' := \mathcal{N} \cup \mathcal{B}_{n,\bullet}$. Consider any formula $\phi \in \ATL(\prop^n,\Ut,\emptyset,\Bl,0)$ and $\psi := \phi \Leftrightarrow \phi_{n,\bullet}$. We have $T(S) \models \phi_{n,\bullet}$ and for all $1 \leq i \leq n$, we have $T(S_i) \not\models \phi_{n,\bullet}$. Furthermore, since $\phi$ does not use negations, we also have $T(S) \models \phi$, and $T(S_i) \models \phi$ for all $1 \leq i \leq n$. Thus, $\psi$ accepts $\mathcal{A}_{n,\bullet}$ and rejects $\mathcal{B}_{n,\bullet}$. In addition, by definition of $S_{n,\bullet}$, for all $(n,S_{n,\bullet})$-proper structures $T$, we have $T \models \phi_{n,\bullet}$. Thus, we have that $\phi$ accepts $\mathcal{P}$ and rejects $\mathcal{N}$ if and only if $\psi$ accepts $\mathcal{P}'$ and rejects $\mathcal{N}'$.
		\end{itemize}
	\end{proof}

	\subsection{$\CTL$ learning}
	We start with $\CTL$ learning. We consider two cases: with and without the next operator $\lX$. In the former case, the learning problem is $\msf{NP}$-complete, in the latter it is $\msf{NL}$-complete.
	
	\subsubsection{With the next operator $\lX$}
	The goal of this subsection is to show the theorem below.
	\begin{theorem}
		\label{thm:ctl_unary_X_NP_hard}
		Consider a set $\Ut \subseteq \Op{Un}{}$ of unary temporal operators and assume that $\lX \in \Ut$. Then, for all sets $\Bl \subseteq \Op{Bin}{lg}$ and $n \in \N$, the decision problem $\CTL_\msf{Learn}(\Ut,\emptyset,\Bl,n)$ is $\msf{NP}$-complete.
	\end{theorem}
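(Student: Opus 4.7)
The plan is to apply Theorem~\ref{thm:unary_is_sufficient} with $k=1$ (so $\ATL^1 = \CTL$), which reduces the task to exhibiting a logspace-computable function $f_1$ mapping an instance $(l,C,k')$ of the hitting set problem $\msf{Hit}$ to an instance of $\CTL_\msf{Learn}(\Ut',\emptyset,\emptyset,0)$ consisting of $(0,\emptyset)$-proper Kripke structures (i.e.\ every state is labeled by $\{p\}$ or $\{\bar{p}\}$), such that positivity is preserved both for $\Ut' = \Ut \setminus \{\neg\}$ and for $\Ut' = \Ut \cup \{\lF,\lG\}$. Once such an $f_1$ is constructed, the theorem immediately lifts the hardness to $\CTL_\msf{Learn}(\Ut,\emptyset,\Bl,n)$ for arbitrary $\Bl$ and $n$, and membership in $\msf{NP}$ is given by Proposition~\ref{prop:easy}.

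I will follow the abstract recipe of Section~\ref{subsubsec:abstract_recipe}, skipping Step~\ref{stepa} (handled by Theorem~\ref{thm:unary_is_sufficient}). For Step~\ref{stepb}, I will first observe that, without binary operators and without negation, a $\CTL$-formula in $\Ut' \subseteq \{\lX,\lF,\lG,\exists,\forall\}$ is simply a sequence of unary operators of the form $\mathsf{Qt} \cdot r$ with $r \in \{p,\bar{p}\}$. I will then design a small family of Kripke structures over $\{p,\bar{p}\}$ whose sole purpose is to constrain $\mathsf{Qt}$ to a canonical shape $\mathsf{Qt}(H) := \mathsf{Q}_1 \lX \cdots \mathsf{Q}_l \lX$ with each $\mathsf{Q}_i \in \{\exists,\forall\}$, so that any separating formula $\phi$ of bounded size canonically encodes a subset $H \subseteq [1,\ldots,l]$, with $j \in H$ iff $\mathsf{Q}_j = \exists$. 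The key gadgets will be ``saturation chains'' where every state has the same label, on which $\lF$ and $\lG$ collapse to trivial values and can therefore be replaced by $\lX$-sequences without affecting satisfaction; this ensures the construction works for both $\Ut \setminus \{\neg\}$ and $\Ut \cup \{\lF,\lG\}$ simultaneously.

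For Steps~\ref{stepc} and~\ref{stepd}, I will design the central ``check'' Kripke structure as a balanced binary-branching structure of depth~$l$: at each level $j \in [1,\ldots,l]$ a state has two successors, a ``yes-$j$'' and a ``no-$j$'' successor, and the leaves at depth $l$ are labeled according to whether the root-to-leaf choices, interpreted as a subset of $[1,\ldots,l]$, intersect $C_i$. Under the canonical form of the formula, the quantifier $\mathsf{Q}_j$ in $\phi$ selects which direction is taken at level $j$: $\exists$ corresponds to ``include $j$'' (pick the yes branch) and $\forall$ to ``exclude'' (forces both branches, which by the leaf labeling corresponds to the ``no'' choice). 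For Step~\ref{stepc}, a carefully chosen negative structure, together with the size bound $B$, will force the number of $\exists$-quantifiers in $\mathsf{Qt}$ (hence $|H|$) to be at most $k'$: this is achieved by making the formula pay an extra sub-formula each time $\mathsf{Q}_j = \exists$, so that $|\phi| = l + |H| + O(1)$ and setting $B := l + k' + O(1)$. For Step~\ref{stepd}, one positive check-structure per $C_i$, constructed as above but with leaves labeled from $C_i$, will be satisfied by $\mathsf{Qt}(H) \cdot p$ iff $H \cap C_i \neq \emptyset$.

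The main technical obstacle is engineering the canonical-form gadgets so that they are robust both against the unrestricted use of $\lX$ in the minimal fragment $\Ut \setminus \{\neg\}$ and against the added power of $\lF$ and $\lG$ in $\Ut \cup \{\lF,\lG\}$: one must prevent $\lF$ or $\lG$ from acting as a cheap ``universal skip'' that collapses the linear-in-$l$ structure of the formula. I will address this by making every relevant path in the Kripke structures uniform in label over all their reachable states beyond a fixed depth (so that $\lF\varphi$ and $\lG\varphi$ coincide with $\varphi$ there and offer no advantage over $\lX^{i}\varphi$), together with local parity gadgets ensuring that at the critical depths the two propositions $p$ and $\bar{p}$ alternate, forcing any separating formula to use exactly the $\lX$-pattern encoding $H$. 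The correctness proof then amounts to two implications: from a hitting set $H$ of size $\leq k'$ exhibit the formula $\mathsf{Qt}(H)\cdot p$ and verify it separates the samples; conversely, from a separating formula apply the canonical-form reduction to recover an $H$ of size $\leq k'$ hitting every $C_i$.
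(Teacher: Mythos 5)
Your overall architecture matches the paper's: invoke Theorem~\ref{thm:unary_is_sufficient} with $k=1$, force separating formulas into the canonical shape $Q_1\lX\cdots Q_l\lX\, p$ so that the positions of the $\exists$-quantifiers encode $H\subseteq[1,\ldots,l]$, and use one positive structure per $C_i$ to encode $H\cap C_i\neq\emptyset$. However, your mechanism for Step~(c) is broken. You propose to bound $|H|$ by ``making the formula pay an extra sub-formula each time $Q_j=\exists$'', i.e.\ $|\phi|=l+|H|+O(1)$ with $B:=l+k'+O(1)$. Under the paper's size measure (number of sub-formulas, where $\msf{SubF}(\fanBr{A}(\lX\phi))=\{\fanBr{A}(\lX\phi)\}\cup\msf{SubF}(\phi)$), the formulas $\exists\lX\phi$ and $\forall\lX\phi$ have exactly the same size: every $\lX^l$-formula has size $l+1$ regardless of how many quantifiers are existential. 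No syntactic accounting can distinguish them, and you give no gadget that would. The constraint $|H|\leq k'$ has to be enforced \emph{semantically}: the paper does this with a dedicated negative Kripke structure $K^l_{\exists>k}$, a layered ($k+2$ levels) structure in which each $\exists\lX$ step can ascend one level and the winning $p$-state is reachable only after $k+1$ ascents, so that an $\lX^l$-formula accepts it iff it uses at least $k+1$ existential quantifiers (Lemma~\ref{lem:redcu_ctl_small_hitting_set}). Without such a counting gadget your reduction admits the all-$\exists$ formula for any instance, and soundness fails.

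A secondary problem is your Step~(d) gadget: a ``balanced binary-branching structure of depth $l$'' whose leaves are labeled according to whether the root-to-leaf choice set intersects $C_i$ has $2^l$ leaves, so it is not computable in logarithmic space (nor even polynomial size) as described. Since only the single bit ``has the path hit $C_i$ yet'' matters, this must be collapsed: the paper uses a linear-size chain $q_1^{\msf{lose}},\ldots,q_{l+1}^{\msf{lose}}$ with an escape edge to a self-looping winning state exactly at positions $i\in C$, so that $\phi(l,H)$ accepts iff $H\cap C\neq\emptyset$ (Lemma~\ref{lem:reduc_ctl_intersect}); a constant-width DAG tracking the intersection bit would also do, but you need to say so. Finally, your canonical-form argument (eliminating $\lF,\lG$ by label-uniform ``saturation'') is only sketched; the paper's Lemma~\ref{lem:not_use_F_and_G} achieves this with a pair of shifted chains $K_l^{\mathcal{P}}$ (positive) and $K_l^{\mathcal{N}}$ (negative) and an induction showing that any formula of size at most $l+1$ separating them must be an $\lX^l$-formula — uniformity of labels alone does not obviously rule out formulas mixing $\lF/\lG$ with fewer than $l$ nested $\lX$'s, so this step also needs a concrete gadget and proof.
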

	For the remainder of this subsection, we consider a set $\Ut \subseteq \Op{Un}{}$ of unary temporal operators and assume that $\lX \in \Ut$
	.
	
	\paragraph{Overview of the reduction.}
	We follow the steps described in Section~\ref{subsubsec:abstract_recipe}. However, Step~\ref{stepa} was already taken care of in the previous section. This lets us focus on $\CTL$-formulas using only unary operators (and a single proposition). First of all, we define Kripke structures that prevent the use of the $\lF$ and $\lG$ operators, as well as negations. Hence, the only operators remaining are $\forall \lX$ and $\exists \lX$. Our idea is that now a $\CTL$-formula of size $l+1$ is entirely defined by a subset $H \subseteq [1,\ldots,l]$: such a subset 
	defines the $\CTL$-formula $\phi(l,H)$ in which, for all $i \in [1,\ldots,l]$, the $i$-th operator of $\phi$ is $\exists \lX$ if and only if $i \in H$. 
	
	On the other hand, a subset $C \subseteq [1,\ldots,l]$ defines a positive Kripke structure $K^{l,C}$ that is defined as a sequence of $l+1$ states
	$\{ q_1^{\msf{lose}},\ldots,q_{l+1}^{\msf{lose}} \}$ such that no $\CTL$-formula of the shape $\phi(l,H)$ can satisfy the state $q_{l+1}^{\msf{lose}}$. Furthermore, all states $q_i^{\msf{lose}}$ are such that $q_{i+1}^{\msf{lose}} \in \msf{Succ}(q_i^{\msf{lose}})$. However, some states $q_i^{\msf{lose}}$ can branch out to the winning state $q^{\msf{win}}$ (that all $\CTL$-formulas of the shape $\phi(l,H)$ satisfy). This occurs for those indices $i \in [1\ldots,l]$ such that $i \in C$. In fact, with such a definition, we obtain that $K^{l,C} \models \phi(l,H)$ if and only if $H \cap C \neq \emptyset$. 
	
	The final step that we take is to define, for $k \leq l$, a negative Kripke structure $K_{\exists > k}^l$ that is satisfied by the $\CTL$-formula $\phi(l,H)$ 
	if and only if $\phi(l,H)$ uses at least $k+1$ times the operator $\exists \lX$, which is equivalent to $|H| > k$. The Kripke structure $K_{\exists > k}^l$ has $k+2$ levels, from the bottom up:
	\begin{itemize}
		\item with a single starting state at the bottommost level;
		\item where the formula $p$ is satisfied in no state of the bottom $k+1$ levels, but it is satisfied in 
		the only (self-looping) state $q^{\msf{win}}$ of the topmost level;
		\item where every state of the bottom $k+1$ levels has a successor one level higher.
	\end{itemize}
	
	\paragraph{Formal definitions and proofs.}
	Let us now formally define the Kripke structures that we will use in the reduction. We first define the reduction for $n = 0$ (i.e. when no binary operator is allowed) and prove its correctness. We then use it to exhibit a reduction for arbitrary $n \in \N$.
	
	We start with the Kripke structures that ensure that the candidate formulas are of a specific shape. 
	\begin{definition}
		Given $l \in \N_1$ we define the two Kripke structures $K^{\mathcal{P}}_l$ and $K^{\mathcal{N}}_l$ that can be seen in Figures~\ref{fig:simple_pos} and~\ref{fig:simple_neg}.
	\end{definition}
	
	\begin{figure}[t]
		\centering
		\includegraphics[scale=1]{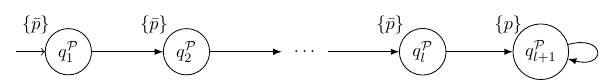}
		\caption{The Kripke structure $K^{\mathcal{P}}_{l}$.}
		\label{fig:simple_pos}
	\end{figure}
	\begin{figure}[t]
		\centering
		\includegraphics[scale=1]{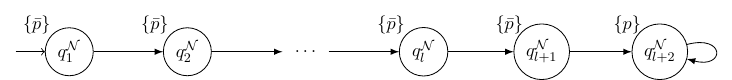}
		\caption{The Kripke structure $K^{\mathcal{N}}_{l}$.}
		\label{fig:simple_neg}
	\end{figure}
	
	Let us now consider the Kripke structures $K^{l,C}$ that encode a subset of $C \subseteq [1,\ldots,l]$.
	\begin{definition}
		Given some $l \in \N_1$ and $C \subseteq [1,\ldots,l]$, we define the Kripke structure $K_{(l,C)} = \langle Q,I,\{ p,\bar{p} \}, \pi, \msf{Succ} \rangle$ where:
		\begin{itemize}
			\item $Q := \{ q^{\msf{lose}}_1, \ldots, q^{\msf{lose}}_{l+1},q^{\msf{win}} \}$;
			\item $I := \{ q^{\msf{lose}}_1 \}$;
			\item for all $1 \leq i \leq l$, we have:
			\begin{align*}
				\msf{Succ}(q^{\msf{lose}}_i) := \begin{cases}
					\{ q^{\msf{lose}}_{i+1},q^{\msf{win}} \} & \text{ if }i \in C\\
					\{ q^{\msf{lose}}_{i+1} \} & \text{ if }i \notin C
				\end{cases}
			\end{align*}
			Furthermore, $\msf{Succ}(q^{\msf{win}}) := 	\{ q^{\msf{win}} \}$ and $\msf{Succ}(q^{\msf{lose}}_{l+1}) := 	\{ q^{\msf{lose}}_{l+1} \}$. 
			\item $\pi(q^{\msf{win}}) := \{ p \}$ and, for all $q \in Q \setminus \{ q^{\msf{win}} \}$, we have $\pi(q) := \{\bar{p}\}$.
		\end{itemize}
	\end{definition}
	An example of such a construction is depicted in Figure~\ref{fig:example_Kripke_C}. 
	\begin{figure}
		\centering
		\includegraphics[scale=1]{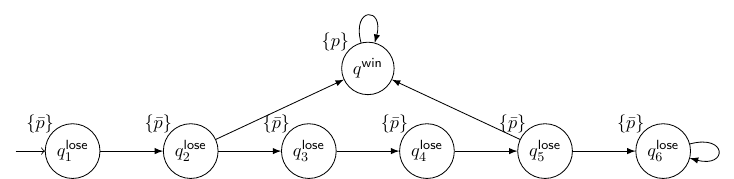}
		\caption{The Kripke structure $K_{(5,\{ 2,5 \})}$.}
		\label{fig:example_Kripke_C}
	\end{figure}

	Finally, we also consider a Kripke structure that prevents from using too many $\exists$ operators.
	\begin{definition}
		Given some $l \in \N_1$ and $k \leq l$, we define the Kripke structure $K_{\exists > k}^l = \langle Q,I,\{ p,\bar{p} \}, \pi, \msf{Succ} \rangle$ where:
		\begin{itemize}
			\item $Q := \{ q^{\msf{win}}\} \cup \{ q^{\msf{lose}}_{i,j} \mid 0 \leq i \leq k,\; i+1\leq j \leq l+1 \}$;
			\item $I := \{ q^{\msf{lose}}_{0,1} \}$;
			\item For all $0 \leq i \leq k,\; i+1\leq j \leq l$, we have:
			\begin{align*}
				\msf{Succ}(q^{\msf{lose}}_{i,j}) := \begin{cases}
					\{ q^{\msf{lose}}_{i,j+1},q^{\msf{lose}}_{i+1,j+1} \} & \text{ if }i \leq k-1 \\
					\{ q^{\msf{lose}}_{i,j+1},q^{\msf{win}} \} & \text{ if }i = k
				\end{cases}
			\end{align*}
			For all $0 \leq i \leq k$:
			\begin{align*}
				\msf{Succ}(q^{\msf{lose}}_{i,l+1}) := \{ q^{\msf{lose}}_{i,l+1} \}
			\end{align*}
			and $\msf{Succ}(q^{\msf{win}}) := \{ q^{\msf{win}} \}$. 
			\item $\pi(q^{\msf{win}}) := \{ p \}$ and, for all $q \in Q \setminus \{ q^{\msf{win}} \}$, we have $\pi(q) := \bar{p}$.
		\end{itemize}
	\end{definition}
	An example of such a Kripke structure is depicted in Figure~\ref{fig:example_Kripke_more_than_k}. 
	\begin{figure}
		\centering
		\includegraphics[scale=1]{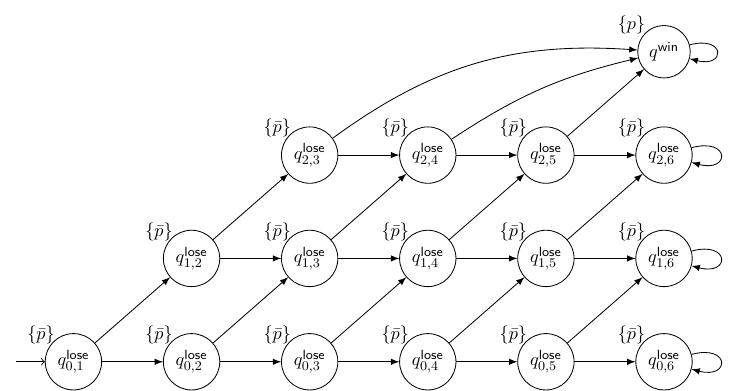}
		\caption{The Kripke structure $K^5_{\exists > 2}$.}
		\label{fig:example_Kripke_more_than_k}
	\end{figure}
	
	We can finally define the reduction from the hitting set problem that we consider. 
	\begin{definition}
		\label{def:reduction_CTL_X}
		Consider an instance $(l,C,k)$ of the hitting set problem $\msf{Hit}$. We consider:
		\begin{itemize}
			\item $\prop_0 = \{ p,\bar{p} \}$ as set of propositions;
			\item $\mathcal{P} := \{ K_{(l,C_i)} \mid 1 \leq i \leq n \} \cup \{ K^{\mathcal{P}}_{l} \}$;
			\item $\mathcal{N} := \{ K^l_{\exists > k},K_{l}^{\mathcal{N}} \}$;
			\item $B := l+1$.
		\end{itemize}
		Then, we define the input $\msf{In}^{\CTL(\lX),0}_{(l,C,k)} := (\prop,\mathcal{P},\mathcal{N},B)$
		.
	\end{definition}

	This reduction satisfies the lemma below. 
	\begin{lemma}
		\label{lem:reduc_CTL_X}
		The input $(l,C,k)$ is a positive instance of the hitting set problem $\msf{Hit}$ if and only if $\msf{In}^{\CTL(\lX),0}_{(l,C,k)}$ is a positive instance of $\CTL_\msf{Learn}(\Ut \setminus \{\neg\},\emptyset,\emptyset,0)$ if and only if $\msf{In}^{\CTL(\lX),0}_{(l,C,k)}$ is a positive instance of $\CTL_\msf{Learn}(\Ut \cup \{\lF,\lG\},\emptyset,\emptyset,0)$.
	\end{lemma}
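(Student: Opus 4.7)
The plan is to prove the triple equivalence via the two-step cycle: (i) a positive hitting-set instance implies a positive instance of $\CTL_\msf{Learn}(\Ut \setminus \{\neg\}, \emptyset, \emptyset, 0)$, and (ii) a positive instance of $\CTL_\msf{Learn}(\Ut \cup \{\lF, \lG\}, \emptyset, \emptyset, 0)$ implies a positive hitting-set instance. The remaining implication---positive for the small operator set implies positive for the large one---is immediate since $\Ut \setminus \{\neg\} \subseteq \Ut \cup \{\lF, \lG\}$. The formulas witnessing the correspondence are the following: for $H \subseteq [1,\ldots,l]$, set $\phi(l,H) := \diamond_1 \diamond_2 \cdots \diamond_l \, p$, where $\diamond_i := \exists \lX$ if $i \in H$ and $\diamond_i := \forall \lX$ otherwise. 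Such a formula has size exactly $l+1 = B$ and features only $\exists \lX$ and $\forall \lX$, hence lies in $\CTL(\Ut \setminus \{\neg\}, \emptyset, \emptyset, 0)$.

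For direction (i), given a hitting set $H$ of size at most $k$, use the formula $\phi(l,H)$. Evaluation on $K_{(l,C_i)}$ starting at $q_1^{\msf{lose}}$ is a direct induction on $j$: applying $\forall \lX \, \psi$ at $q_j^{\msf{lose}}$ always reduces to checking $\psi$ at $q_{j+1}^{\msf{lose}}$, because $q^{\msf{win}}$ is a self-loop whose unique label is $p$, so every negation-free subformula is satisfied there; applying $\exists \lX \, \psi$ at $q_j^{\msf{lose}}$ is immediately true whenever $j \in C_i$ (choose the branch to $q^{\msf{win}}$), and otherwise reduces to $\psi$ at $q_{j+1}^{\msf{lose}}$. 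Hence $K_{(l,C_i)} \models \phi(l,H)$ iff $H \cap C_i \neq \emptyset$, which holds since $H$ hits every $C_i$. A symmetric level-counting argument on $K^l_{\exists > k}$ shows that $\phi(l,H)$ is accepted there iff it uses strictly more than $k$ occurrences of $\exists \lX$, that is, iff $|H| > k$; since $|H| \leq k$, the structure is rejected. Acceptance of $K^{\mathcal{P}}_l$ and rejection of $K^{\mathcal{N}}_l$ follow from the design of these two control structures.

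The heart of the proof is direction (ii). Given an arbitrary $\phi \in \CTL(\prop_0, \Ut \cup \{\lF, \lG\}, \emptyset, \emptyset)$ of size at most $l+1$ that separates $\mathcal{P}$ and $\mathcal{N}$, we aim to extract a hitting set $H$ of size at most $k$. The essential step is a normalization argument: any such separating $\phi$ must be equivalent, on every Kripke structure in $\mathcal{P} \cup \mathcal{N}$, to a formula of the form $\phi(l,H)$ for some $H \subseteq [1,\ldots,l]$. This is where the auxiliary structures $K^{\mathcal{P}}_l$ and $K^{\mathcal{N}}_l$ play their decisive role: their design is calibrated to rule out, within the size budget $l+1$, the use of $\lF$, $\lG$, or $\neg$, and to force the formula to be a chain of exactly $l$ next-operators terminating in $p$ (as opposed to $\bar{p}$ or a shorter formula). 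Once the normal form is secured, the same inductive evaluations as in direction (i)---applied to each $K_{(l,C_i)}$ and to $K^l_{\exists > k}$---yield $H \cap C_i \neq \emptyset$ for every $i$ together with $|H| \leq k$, thus producing the desired hitting set. The main obstacle is making this normalization rigorous: one must verify case by case, guided by the figures for $K^{\mathcal{P}}_l$ and $K^{\mathcal{N}}_l$, that no combination of $\neg$, $\lF$, $\lG$, and the CTL path quantifiers within the size bound can separate the samples except through a formula of the targeted shape $\phi(l,H)$.
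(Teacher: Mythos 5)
Your overall plan coincides with the paper's: the same witness formulas $\phi(l,H)$, the same roles for the four kinds of structures, and the same cycle of implications. Your two evaluation arguments (on $K_{(l,C_i)}$ and on $K^l_{\exists > k}$) are exactly the paper's Lemmas~\ref{lem:reduc_ctl_intersect} and~\ref{lem:redcu_ctl_small_hitting_set}, and the paper likewise obtains the equivalence between the fragments $\Ut\setminus\{\neg\}$ and $\Ut\cup\{\lF,\lG\}$ (Corollary~\ref{coro:reduc_CTL_only_X}) from a single normalization statement about $K^{\mathcal{P}}_l$ and $K^{\mathcal{N}}_l$.

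The genuine gap is precisely the step you flag as the main obstacle and then leave undone: the claim that every separating formula of size at most $l+1$ over $\Ut\cup\{\lF,\lG\}$ (so possibly using $\neg$, $\lF$, $\lG$) which accepts $K^{\mathcal{P}}_l$ and rejects $K^{\mathcal{N}}_l$ must have the shape $\phi(l,H)$. Saying the two control structures are ``calibrated'' for this and proposing to ``verify case by case'' is not a proof, and a case split on the outermost operator alone cannot succeed, because after peeling one operator you are left with an arbitrary smaller formula evaluated at interior states of both chains; you need to control what all formulas of every smaller size can do at every relevant pair of states. The paper's Lemma~\ref{lem:not_use_F_and_G} handles this with an induction on $i$ carrying a strengthened invariant: a formula of size at most $i+1$ cannot distinguish $q_j^{\mathcal{P}}$ from $q_j^{\mathcal{N}}$ for any $j\le l-i$, and it distinguishes $q_{l+1-i}^{\mathcal{P}}$ from $q_{l+1-i}^{\mathcal{N}}$ exactly when it is an $\lX^i$-formula. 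The structural fact that makes the $\lF$ and $\lG$ cases go through is that $q_i^{\mathcal{P}}$ and $q_{i+1}^{\mathcal{N}}$ satisfy exactly the same CTL-formulas (one chain is a one-step-shifted copy of the other), so any witness position for an $\lF$- or $\lG$-quantification in one structure can be matched in the other up to that shift. Without some such strengthened induction your direction (ii) — the extraction of a hitting set from an arbitrary separating formula — is not established; the remainder of your argument is correct and matches the paper.
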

	
	We start with the formal definition below $\phi(l,H)$-formulas. 
	\begin{definition}
		Let $l \in \N$. A $\CTL$-formula $\phi \in \CTL(\{p,\bar{p}\},\Ut \cup \{\lF,\lG\},\emptyset,\emptyset,0)$ is an $\lX^l$-formula if $\phi \in \CTL(\{p,\bar{p}\},\{\lX\},\emptyset,\emptyset,0)$ and $\Size{\phi} = l+1$.
	
		The $\lX^l$-formulas differ by the indices where $\exists$ and $\forall$ quantifiers appear. For all $H \subseteq [1,\ldots,l]$, we let $\phi(l,H)$ denote the $\lX^l$-formula:
		\begin{equation*}
			\phi(l,H) := Q_1 \lX \ldots Q_l \lX p
		\end{equation*}
		where, for all $i \in [1,\ldots,l]$, we have:
		\begin{align*}
			Q_i := \begin{cases}
				\exists & \text{ if }i \in H\\
				\forall & \text{ if }i \in [1,\ldots,l] \setminus H
			\end{cases}
		\end{align*}
	\end{definition}

	Let us show that we can restrict ourselves to $\lX^l$-formulas by considering the Kripke structures $K_{l}^{\mathcal{P}},K_{l}^{\mathcal{N}}$. 
	\begin{lemma}
		\label{lem:not_use_F_and_G}
		Consider some $l \in \N_1$ and a $\CTL$ formula $\phi \in \CTL(\{p,\bar{p}\},\Ut,\emptyset,\emptyset,0)$ of size at most $l+1$. The formula $\phi$ accepts $K_{l}^{\mathcal{P}}$ and rejects $K_{l}^{\mathcal{N}}$ if and only if $\phi$ is an $\lX^l$-formula.
	\end{lemma}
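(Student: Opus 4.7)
The plan is to prove the two directions of the equivalence separately, with the hard work concentrated in the ``only if'' direction.

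For the ``if'' direction, the approach is a direct semantic verification: an $\lX^l$-formula $\phi(l,H) = Q_1 \lX \cdots Q_l \lX \, p$ must be checked on the two structures $K_l^{\mathcal{P}}$ and $K_l^{\mathcal{N}}$ by unfolding the $\CTL$ semantics along the (at most polynomially many) paths of depth $l$ starting from the initial state. The Kripke structures have been designed so that along every path of $K_l^{\mathcal{P}}$, the proposition $p$ holds at depth exactly $l$, while on $K_l^{\mathcal{N}}$ this invariant fails; hence any choice of quantifiers $Q_1,\ldots,Q_l$ preserves the separation.

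For the ``only if'' direction, my plan is to start by noting that since $\Size{\phi} \leq l+1$ and the set of propositions is $\{p,\bar p\}$, an analogue of Lemma~\ref{lem:at_least_that_many_subformulas} tells us $\phi$ mentions a single proposition, and by Lemma~\ref{lem:distinguish_ATL} applied to the leaf states of $K_l^{\mathcal{P}}$ and $K_l^{\mathcal{N}}$ one sees that this proposition must be $p$ (otherwise $\phi$ cannot distinguish any of the uniformly labeled structures). I would then perform a structural induction on $\phi$, analyzing its outermost operator. For $\phi = \neg \phi'$, the conclusion follows by swapping the roles of $K_l^{\mathcal{P}}$ and $K_l^{\mathcal{N}}$ and noting that neither is obtained from the other by a simple duality. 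For $\phi = \diamond \lX \, \phi'$ (with $\diamond \in \{\exists,\forall\}$), the idea is that this ``peels'' one step of $K_l^{\mathcal{P}}$ and $K_l^{\mathcal{N}}$: on the successor states one should recover analogues of $K_{l-1}^{\mathcal{P}}$ and $K_{l-1}^{\mathcal{N}}$, allowing a clean induction on $l$. For $\phi = \diamond \lF \, \phi'$ or $\diamond \lG \, \phi'$, the approach is to exploit the fact that $p$ occurs in both $K_l^{\mathcal{P}}$ and $K_l^{\mathcal{N}}$ (or is absent from both) in positions that make the reachability / invariance quantifications coincide on the two structures, so $\phi$ cannot separate them.

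The hard part will be the case of $\lF$ and $\lG$ operators: one must verify that no matter how they are combined with $\lX$ inside the budget $l+1$, they cannot mimic the exact ``depth-$l$'' reachability pattern that $\lX^l p$ encodes on $K_l^{\mathcal{P}}$ while failing on $K_l^{\mathcal{N}}$. The cleanest way to carry this out is probably to strengthen the inductive hypothesis, so that at each step one keeps track not only of whether $\phi$ separates the two top-level structures, but also of the exact level in each structure where the formula is being evaluated, and then use the explicit shape of the successor relations depicted in Figures~\ref{fig:simple_pos} and~\ref{fig:simple_neg} to rule out any intermediate formula containing an $\lF$ or $\lG$ node. The size bound $l+1$ is then used to ensure that the resulting skeleton, consisting entirely of $\lX$ operators followed by $p$, is of length exactly $l$, making $\phi$ an $\lX^l$-formula.
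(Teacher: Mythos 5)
Your overall route is the paper's: a level-indexed induction over the two chain structures, in which $\lF$ and $\lG$ cannot exploit the one-step depth difference and the size budget forces a pure $\lX$-skeleton of length exactly $l$ ending in $p$. However, two of your concrete steps fail as stated. First, the preliminary claim that the proposition occurring in $\phi$ must be $p$ ``otherwise $\phi$ cannot distinguish the structures'' is false: the formula $Q_1\lX\cdots Q_l\lX\,\bar p$ has size $l+1$ and does distinguish $K_l^{\mathcal{P}}$ from $K_l^{\mathcal{N}}$ --- it merely separates them in the wrong direction (rejecting $K_l^{\mathcal{P}}$ and accepting $K_l^{\mathcal{N}}$). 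What rules it out is the required direction of the separation, and that cannot be read off the leaf states via Lemma~\ref{lem:distinguish_ATL}; it has to come out of the main induction (also, the fact that $\phi$ mentions a single proposition is simply because the fragment forbids binary operators, not an analogue of Lemma~\ref{lem:at_least_that_many_subformulas}).

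Second, and more importantly, the strengthened hypothesis you gesture at (``track the exact level'') is missing its quantitative half, and without it two of your cases do not close. The paper's invariant, proved by induction on $i$, is twofold: (a) no formula of size at most $i+1$ can distinguish $q_j^{\mathcal{P}}$ from $q_j^{\mathcal{N}}$ for any $j \leq l-i$; and (b) at the critical position $l+1-i$, a formula of size at most $i+1$ distinguishes the two states (accepting the $\mathcal{P}$-side) iff it is an $\lX^i$-formula. Part (a) --- a size-versus-remaining-depth tradeoff, resting on the key observation that $q_r^{\mathcal{P}}$ and $q_{r+1}^{\mathcal{N}}$ satisfy exactly the same $\CTL$-formulas, i.e.\ the negative chain is the positive chain with one extra initial state --- is precisely what handles the steps you leave vague: your negation case (``neither is obtained from the other by a simple duality'' is not an argument; the real reason $\neg\phi'$ fails is that $\phi'$, being strictly smaller, cannot distinguish the relevant states at all), and the boundary subcase of $\lF$/$\lG$ where the witnessing state is the current one, which needs the induction hypothesis that the smaller subformula does not distinguish at that same position. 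If you replace your per-operator structural induction by the joint statement (a)+(b) and use the shift correspondence $q_r^{\mathcal{P}} \equiv q_{r+1}^{\mathcal{N}}$ in the $\lF$/$\lG$ cases, your plan becomes the paper's proof.
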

	\begin{proof}
		First of all, note that, for all $1 \leq i \leq l+1$, the states $q_{i}^{\mathcal{P}}$ and $q_{i+1}^{\mathcal{N}}$ satisfy exactly the same $\CTL$-formulas. Then, we show by induction on $0 \leq i \leq l$ the property $\mathcal{P}(i)$: 
		\begin{itemize}
			\item for all $1 \leq j \leq l-i$, any $\CTL(\{p,\bar{p}\},\Ut,\emptyset,\emptyset,0)$-formula of size at most $i+1$ cannot distinguish the states $q_j^{\mathcal{P}}$ and $q_j^{\mathcal{N}}$;
			\item a $\CTL(\{p,\bar{p}\},\Ut,\emptyset,\emptyset,0)$-formula of size at most $i+1$ distinguishes the states $q_{l+1-i}^{\mathcal{P}}$ and $q_{l+1-i}^{\mathcal{N}}$ if and only if it is an $\lX^{i}$-formula (in which case, it accepts $q_{l+1-i}^{\mathcal{P}}$ and rejects $q_{l+1-i}^{\mathcal{N}}$).
		\end{itemize}
		The property $\mathcal{P}(0)$ straightforwardly holds.
		
		Assume now that the property $\mathcal{P}(i)$ holds for some $0 \leq i \leq l-1$. Consider a $\CTL(\{p,\bar{p}\},\Ut,\emptyset,\emptyset,0)$-formula $\phi$ of size at most $i+2$. First of all, if $\phi$ is of size at most $i+1$, then $\mathcal{P}(i)$ gives that for all $1 \leq j \leq l+1-(i+1) = l-i$, the formula $\phi$ does not distinguish the states $q_j^{\mathcal{P}}$ and $q_j^{\mathcal{N}}$. Assume now that $\Size{\phi} = i+2$. Consider any $1 \leq j \leq l-i$. There are several cases.
		\begin{itemize}
			\item Assume that $\phi = \neg \phi'$, in which case $\Size{\phi'} = i+1$. By $\mathcal{P}(i)$, $\phi'$ does not distinguish the states $q_j^{\mathcal{P}}$ and $q_j^{\mathcal{N}}$, thus $\phi$ does not either. 
			\item Assume that $\phi = \msf{Q} \; \lF \phi'$, with $\msf{Q} \in \{\exists,\forall\}$, in which case $\Size{\phi'} = i+1$. Assume that $q_j^{\mathcal{P}} \models \phi$. Then, there is some $j \leq r \leq l+1$ such that $q_r^{\mathcal{P}} \models \phi'$, which is equivalent to $q_{r+1}^{\mathcal{N}} \models \phi'$, and thus $q_j^{\mathcal{N}} \models \phi$. On the other hand, assume that $q_j^{\mathcal{N}} \models \phi$, in which case there is some $j \leq r \leq l+2$ such that $q_r^{\mathcal{N}} \models \phi'$. If $r \geq j+1$, then we have $r-1 \geq j$ and $q_{r-1}^{\mathcal{P}} \models \phi'$, and thus $q_j^{\mathcal{P}} \models \phi$. Otherwise, $r = j$. Since, by $\mathcal{P}(i)$, we have that $\phi'$ does not distinguish $q_j^{\mathcal{P}}$ and $q_j^{\mathcal{N}}$, then it follows that we also have $q_j^{\mathcal{P}} \models \phi'$, and thus $q_j^{\mathcal{P}} \models \phi$. Hence, in any case the formula $\phi$ does not distinguish $q_j^{\mathcal{P}}$ and $q_j^{\mathcal{N}}$.
			\item Assume that $\phi = \msf{Q} \; \lG \phi'$, with $\msf{Q} \in \{\exists,\forall\}$. This case is analogous to the previous one.
			\item Assume finally that $\phi = \msf{Q} \; \lX \phi'$, with $\msf{Q} \in \{\exists,\forall\}$, in which case $\Size{\phi'} = i+1$. If $j < l - i$, then $j+1 \leq l-i$, and thus, by $\mathcal{P}(i)$, $\phi'$ does not distinguish the two states $q_{j+1}^{\mathcal{P}}$ and $q_{j+1}^{\mathcal{N}}$. Hence, $\phi$ does not distinguish the two states $q_{j}^{\mathcal{P}}$ and $q_{j}^{\mathcal{N}}$. If $j = l-i$, then $j+1 = l+1-i$, hence $\phi'$ distinguishes the states $q_{j+1}^{\mathcal{P}}$ and $q_{j+1}^{\mathcal{N}}$ if and only if $\phi'$ is an $\lX^i$-formula (in which case it accepts $q_{j+1}^{\mathcal{P}}$ and rejects $q_{j+1}^{\mathcal{N}}$), which is equivalent to $\phi$ being an $\lX^{i+1}$-formula (in which case it accepts $q_{j}^{\mathcal{P}}$ and rejects $q_{j}^{\mathcal{N}}$).
		\end{itemize}
		Therefore, the property $\mathcal{P}(i+1) $ holds. In fact, it holds for all $0 \leq i \leq l$. The lemma is then given by $\mathcal{P}(l)$.
	\end{proof}
	
	We deduce as a corollary.
	\begin{corollary}
		\label{coro:reduc_CTL_only_X}
		The input $\msf{In}^{\CTL(\lX),0}_{(l,C,k)}$ is a positive instance of $\CTL_\msf{Learn}(\Ut \setminus \{\neg\},\emptyset,\emptyset,0)$ if and only if $\msf{In}^{\CTL(\lX),0}_{(l,C,k)}$ is a positive instance of $\CTL_\msf{Learn}(\Ut \cup \{\lF,\lG\},\emptyset,\emptyset,0)$.
	\end{corollary}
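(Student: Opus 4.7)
The plan is to prove the two directions of the equivalence separately, with the nontrivial direction following immediately from Lemma~\ref{lem:not_use_F_and_G}.

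For the forward direction, I would observe that $\Ut \setminus \{\neg\} \subseteq \Ut \cup \{\lF,\lG\}$, so any $\CTL(\{p,\bar{p}\}, \Ut \setminus \{\neg\}, \emptyset, \emptyset, 0)$-formula is also a $\CTL(\{p,\bar{p}\}, \Ut \cup \{\lF,\lG\}, \emptyset, \emptyset, 0)$-formula of the same size. Hence a separating formula witnessing a positive instance of $\CTL_\msf{Learn}(\Ut \setminus \{\neg\}, \emptyset, \emptyset, 0)$ immediately witnesses a positive instance of $\CTL_\msf{Learn}(\Ut \cup \{\lF,\lG\}, \emptyset, \emptyset, 0)$.

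For the backward direction, suppose $\phi \in \CTL(\{p,\bar{p}\}, \Ut \cup \{\lF,\lG\}, \emptyset, \emptyset, 0)$ is a separating formula of size at most $B = l+1$. In particular, since $K_l^{\mathcal{P}} \in \mathcal{P}$ and $K_l^{\mathcal{N}} \in \mathcal{N}$, $\phi$ accepts $K_l^{\mathcal{P}}$ and rejects $K_l^{\mathcal{N}}$. The proof of Lemma~\ref{lem:not_use_F_and_G} performs a case analysis on the top-level operator of the candidate formula, and its reasoning covers all four possibilities $\bullet \in \{\neg, \lX, \lF, \lG\}$; therefore the conclusion of Lemma~\ref{lem:not_use_F_and_G} applies verbatim to the enlarged operator set $\Ut \cup \{\lF,\lG\} \subseteq \{\neg, \lX, \lF, \lG\}$. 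It yields that $\phi$ must be an $\lX^l$-formula. By definition, an $\lX^l$-formula uses only the operator $\lX$ and the quantifiers $\exists, \forall$; since $\lX \in \Ut$ by the assumption of the subsection, it follows that $\phi \in \CTL(\{p,\bar{p}\}, \{\lX\}, \emptyset, \emptyset, 0) \subseteq \CTL(\{p,\bar{p}\}, \Ut \setminus \{\neg\}, \emptyset, \emptyset, 0)$, and $\phi$ still separates $\mathcal{P}$ and $\mathcal{N}$ with size at most $B$. So $\msf{In}^{\CTL(\lX),0}_{(l,C,k)}$ is a positive instance of $\CTL_\msf{Learn}(\Ut \setminus \{\neg\}, \emptyset, \emptyset, 0)$.

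There is no real obstacle here: the whole point of Lemma~\ref{lem:not_use_F_and_G} was to force any small separating formula into the shape $Q_1 \lX \cdots Q_l \lX \, p$, and once this is established the corollary is essentially a reformulation, stating that enriching the permitted operator set with $\lF$, $\lG$ (and $\neg$) cannot yield any new separating formulas beyond those already in $\CTL(\{p,\bar{p}\}, \{\lX\}, \emptyset, \emptyset, 0)$. The only thing to verify carefully is that Lemma~\ref{lem:not_use_F_and_G}'s proof indeed accommodates $\lF$ and $\lG$ whether or not they belong to $\Ut$, which it does by construction.
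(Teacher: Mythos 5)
Your proof is correct and follows essentially the same route as the paper: the trivial fragment inclusion in one direction, and in the other direction Lemma~\ref{lem:not_use_F_and_G} forcing any size-$(l+1)$ formula that accepts $K_l^{\mathcal{P}}$ and rejects $K_l^{\mathcal{N}}$ to be an $\lX^l$-formula, hence to use only $\lX \in \Ut \setminus \{\neg\}$ (the paper's printed proof cites Lemma~\ref{lem:reduc_CTL_X}, but that is a mislabel and the argument it sketches is exactly yours). Your explicit check that the case analysis in Lemma~\ref{lem:not_use_F_and_G} already covers $\neg,\lF,\lG$ whether or not they belong to $\Ut$ is a sound precision that the paper leaves implicit.
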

	\begin{proof}
		This is straightforward consequence of Lemma~\ref{lem:reduc_CTL_X}: if a formula $\phi$ accepts $\mathcal{P}$ and rejects $\mathcal{N}$, then it accepts $K_l^\mathcal{P}$ and rejects $K_l^\mathcal{N}$. Thus, if its size is at most $l+1$, it only uses the operator $\lX \in \Ut$.
	\end{proof}
	
	Let us now consider at which conditions the $\CTL$ $\lX^l$-formulas $\phi(l,H)$ accepts the Kripke structure $K_{(l,C)}$.
	\begin{lemma}
		\label{lem:reduc_ctl_intersect}
		Consider some $l \in \N_1$ and $H,C \subseteq [1,\ldots,l]$. The $\CTL$ $\lX^l$-formula $\phi(l,H)$ accepts the Kripke structure $K_{(l,C)}$ if and only if $H \cap C \neq \emptyset$. 
	\end{lemma}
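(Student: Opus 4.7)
The plan is to unfold the semantics of $\phi(l,H)$ from the outermost quantifier inward, exploiting the fact that in $K_{(l,C)}$ the only way to reach a state satisfying $p$ within $l$ steps from $q^{\msf{lose}}_1$ is to branch into $q^{\msf{win}}$ at some index $i \in C$. Formally, for $1 \leq j \leq l+1$ I set $\phi_j(l,H) := Q_j \lX \cdots Q_l \lX p$ with the convention $\phi_{l+1}(l,H) := p$, so that $\phi_1(l,H) = \phi(l,H)$. Two trivial base facts are proved by a straightforward induction on $l+1-j$: since $q^{\msf{win}}$ self-loops and satisfies $p$, we have $q^{\msf{win}} \models \phi_j(l,H)$ for every $j$; dually, since $q^{\msf{lose}}_{l+1}$ self-loops and satisfies $\bar{p}$, we have $q^{\msf{lose}}_{l+1} \not\models \phi_j(l,H)$ for every $j$.

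The key computation is the recurrence, for all $1 \leq i,j \leq l$:
\[
q^{\msf{lose}}_i \models \phi_j(l,H) \iff \bigl(i \in C \text{ and } j \in H\bigr) \;\text{or}\; q^{\msf{lose}}_{i+1} \models \phi_{j+1}(l,H).
\]
I would establish it by case analysis on $\msf{Succ}(q^{\msf{lose}}_i)$ and the outer quantifier $Q_j$. When $i \notin C$ the only successor is $q^{\msf{lose}}_{i+1}$, so both $\exists \lX$ and $\forall \lX$ collapse to $q^{\msf{lose}}_{i+1} \models \phi_{j+1}$. When $i \in C$ and $j \in H$ (so $Q_j = \exists$), the successor $q^{\msf{win}}$ satisfies $\phi_{j+1}$ by the first base fact, so $\phi_j$ holds. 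When $i \in C$ and $j \notin H$ (so $Q_j = \forall$), the universal requires both successors to satisfy $\phi_{j+1}$; the $q^{\msf{win}}$ branch is automatic, so the condition reduces again to $q^{\msf{lose}}_{i+1} \models \phi_{j+1}$.

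To conclude, I would unroll this recurrence starting from $q^{\msf{lose}}_1 \models \phi_1(l,H)$, observing that each application increments $i$ and $j$ by one simultaneously, so that the two indices stay equal throughout. After at most $l$ unfoldings one obtains
\[
q^{\msf{lose}}_1 \models \phi_1(l,H) \iff \bigl(\exists\, m \in [1,l],\; m \in H \cap C\bigr) \;\text{or}\; q^{\msf{lose}}_{l+1} \models p,
\]
and the second disjunct fails by the second base fact. Since $I = \{q^{\msf{lose}}_1\}$, this gives the claimed equivalence $K_{(l,C)} \models \phi(l,H) \iff H \cap C \neq \emptyset$. I do not anticipate any real obstacle here: the structure $K_{(l,C)}$ has been engineered precisely so that the diagonal $i = j$ recurrence collapses to testing membership of a single index in both $H$ and $C$; the only care needed is to keep track of the fact that branching at $q^{\msf{win}}$ is equally good for $\exists \lX$ (witness) and for $\forall \lX$ (automatically satisfied), which forces the asymmetry between the two cases of the recurrence.
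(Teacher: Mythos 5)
Your proof is correct and follows essentially the same route as the paper: both arguments analyze the diagonal pairs (state $q^{\msf{lose}}_i$, suffix formula $\phi_i$) with the same case analysis, exploiting that the $q^{\msf{win}}$ successor satisfies every suffix formula so that $\forall\lX$ and $\exists\lX$ both reduce to the next diagonal entry unless $i \in C$ and the quantifier is $\exists$. The only difference is packaging: the paper builds the accumulated condition $H_i \cap C \neq \emptyset$ directly into a downward induction hypothesis, whereas you state the one-step recurrence and unroll it, which amounts to the same argument.
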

	\begin{proof}
		Let $H \subseteq [1,\ldots,l]$. We have $\phi(l,H) = Q_1 \lX \ldots Q_l \lX p$. For all $1 \leq i \leq l+1$, we let $H_i := H \cap [i,\ldots,l]$, $\phi_i(H) := Q_i \lX  \ldots Q_l \lX p$. In particular, $\phi_{l+1}(H) = p$ and $\phi_{1}(H) = \phi(l,H)$. 
		
		Let us show by induction on $l+1 \geq i \geq 1$ the property $\mathcal{P}(i)$: $q_i^{\msf{lose}} \models \phi_i(H)$ if and only if $H_i \cap C \neq \emptyset$.  The property $\mathcal{P}(l+1)$ straightforwardly holds since $\pi(q^{\msf{lose}}_{l+1}) = \emptyset$ and $H_{l+1} = \emptyset$. 
		
		Assume now that $\mathcal{P}(i)$ holds for some $l+1 \geq i \geq 2$. We have $\phi_{i-1}(H) := Q_{i-1} \lX \phi_{i}(H)$ with $Q_{i-1} = \exists$ if and only if $i-1 \in H$. Then:
		\begin{itemize}
			\item Assume that $H_{i-1} \cap C \neq \emptyset$. If we have $H_i \cap C \neq \emptyset$, then by $\mathcal{P}(i)$, $q_i^{\msf{lose}} \models \phi_i(H)$. Since $q^{\msf{win}} \models \phi_i(H)$ and $\msf{Succ}(q_{i-1}^{\msf{lose}}) \subseteq \{ q_i^{\msf{lose}},q^{\msf{win}} \}$, it follows that $q_{i-1}^{\msf{lose}} \models \phi_{i-1}(H)$ (regardless of whether $Q_{i-1} = \exists$ or $Q_{i-1} = \forall$). Otherwise, it must be that $i-1 \in H \cap C$. Thus, we have $Q_{i-1} = \exists$ and $q^{\msf{win}} \in \msf{Succ}(q_{i-1}^{\msf{lose}})$. Since $q^{\msf{win}} \models \phi_i(H)$, it follows that $q_{i-1}^{\msf{lose}} \models \phi_{i-1}(H)$.
			\item Assume now that $H_{i-1} \cap C  = \emptyset$. It follows that $q_i^{\msf{lose}} \not\models \phi_i(H)$ by $\mathcal{P}(i)$, since $H_i \cap C = \emptyset$. There are two cases:
			\begin{itemize}
				\item Assume that $i-1 \notin H$. In that case, $Q_{i-1} = \forall$. Furthermore, $q^{\msf{lose}}_i \in \msf{Succ}(q_{i-1}^{\msf{lose}})$ and $q_i^{\msf{lose}} \not\models \phi_i(H)$. Hence, $q_{i-1}^{\msf{lose}} \not\models \phi_{i-1}(H)$.
				\item Assume that $i-1 \notin C$. In that case, $\msf{Succ}(q_{i-1}^{\msf{lose}}) = \{ q^{\msf{lose}}_i \}$ and $q_i^{\msf{lose}} \not\models \phi_i(H)$. Hence, $q_{i-1}^{\msf{lose}} \not\models \phi_{i-1}(H)$.
			\end{itemize}
		\end{itemize}
		Therefore, the property $\mathcal{P}(i-1)$ holds. In fact, $\mathcal{P}(i)$ holds for all $1\leq i \leq l+1$. The lemma is then given by $\mathcal{P}(1)$.
	\end{proof}
	
	In addition, for a $\CTL$ $\lX^l$-formula $\phi(l,H)$ not to accept the Kripke structure $K^l_{\exists > k}$, it must have not too much existential quantifiers, as stated below.
	\begin{lemma}
		\label{lem:redcu_ctl_small_hitting_set}
		Consider some $l \in \N_1$, $k \leq l$ and $H \subseteq [1,\ldots,l]$. The $\CTL$ $\lX^l$-formula $\phi(l,H)$ accepts the Kripke structure $K^l_{\exists > k}$ if and only if $|H| > k$. 
	\end{lemma}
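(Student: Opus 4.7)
The plan is to prove a strengthened statement by downward induction on the formula position $j$, recovering the lemma as a corner case. For any $H \subseteq [1,\ldots,l]$, let $\phi_j(l,H) := Q_j \lX Q_{j+1} \lX \ldots Q_l \lX p$ with the convention $\phi_{l+1}(l,H) := p$, where $Q_m = \exists$ iff $m \in H$. The strengthened claim $\mathcal{P}(j)$ to prove is: for every $i$ with $0 \leq i \leq \min(k, j-1)$,
\[
q^{\msf{lose}}_{i,j} \models \phi_j(l,H) \quad \text{if and only if} \quad |H \cap [j,\ldots,l]| \geq k-i+1.
\]
Taking $j = 1$, $i = 0$ then gives exactly $q^{\msf{lose}}_{0,1} \models \phi(l,H) \Leftrightarrow |H| \geq k+1$, which is the lemma.

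Before the induction, I would record one easy observation: since $\msf{Succ}(q^{\msf{win}}) = \{q^{\msf{win}}\}$ and $p \in \pi(q^{\msf{win}})$, a trivial induction gives that $q^{\msf{win}} \models \psi$ for every formula of the form $Q_1 \lX \ldots Q_r \lX p$. This will let me simplify the case $i = k$, where $q^{\msf{win}}$ is a successor.

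The base case $\mathcal{P}(l+1)$ is immediate: $\pi(q^{\msf{lose}}_{i,l+1}) = \{\bar p\}$ so $q^{\msf{lose}}_{i,l+1} \not\models p$, while $|H \cap \emptyset| = 0 \geq k-i+1$ is false for $i \leq k$. For the inductive step assuming $\mathcal{P}(j+1)$, I would split into four subcases according to whether $i < k$ or $i = k$, and whether $j \in H$ (i.e.\ $Q_j = \exists$) or $j \notin H$ (i.e.\ $Q_j = \forall$). In the case $i < k$, the successors are $q^{\msf{lose}}_{i,j+1}$ and $q^{\msf{lose}}_{i+1,j+1}$; applying $\mathcal{P}(j+1)$ replaces the two model-checking conditions by $|H \cap [j+1,l]| \geq k-i+1$ and $|H \cap [j+1,l]| \geq k-i$, and one checks that the $\exists$-disjunction reduces to the weaker bound while the $\forall$-conjunction reduces to the stronger bound, which in both cases matches $|H \cap [j,l]| \geq k-i+1$ after accounting for whether $j \in H$ contributes $1$ to the count. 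The case $i = k$ is simpler: one successor is $q^{\msf{win}}$ (which satisfies $\phi_{j+1}$ by the preliminary observation), so an $\exists$-quantifier at position $j$ makes the formula hold automatically, whereas a $\forall$-quantifier makes the truth value equal to that of $q^{\msf{lose}}_{k,j+1} \models \phi_{j+1}$.

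No step is genuinely hard; the only risk is bookkeeping in the four subcases, in particular keeping track of the $+1$ contribution from $j \in H$ and ensuring that the bounds $i \leq k$ and $j \geq i+1$ are preserved when passing to the successor $q^{\msf{lose}}_{i+1,j+1}$. Assembling the four subcases completes $\mathcal{P}(j)$, the induction yields $\mathcal{P}(1)$, and specializing to $i = 0$ gives the lemma.
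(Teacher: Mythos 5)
Your proposal is correct and follows essentially the same approach as the paper: both prove the strengthened invariant that $q^{\msf{lose}}_{i,j}$ satisfies $\phi_j(l,H)$ iff $|H \cap [j,\ldots,l]| \geq k-i+1$, and recover the lemma at $i=0$, $j=1$. The only difference is organizational — you run a single downward induction on $j$ with $i$ quantified inside the claim and treat $q^{\msf{win}}$ via a separate observation, whereas the paper nests an inner induction on $j$ inside an outer induction on $i$ and folds $q^{\msf{win}}$ in as the level $i=k+1$ — and your case analysis (including the index bookkeeping for the successor $q^{\msf{lose}}_{i+1,j+1}$) checks out.
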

	\begin{proof}
		Let $H \subseteq [1,\ldots,l]$. As in the proof of Lemma~\ref{lem:reduc_ctl_intersect}, we have $\phi(l,H) = Q_1 \lX \ldots Q_l \lX p$ and, for all $1 \leq i \leq l+1$, we let $H_i := H \cap [i,\ldots,l]$ and $\phi_i(H) := Q_i \lX  \ldots Q_l \lX p$. Thus, $\phi_{l+1}(H) = p$ and $\phi_{1}(H) = \phi(H)$. Furthermore, for all $k+1 \geq i \geq 0$, $l+1 \geq j \geq i+1$, we let:
		\begin{align*}
			q(i,j) := 
			\begin{cases}
				q_{(i,j)}^{\msf{lose}} & \text{ if }i < k+1 \\
				q^{\msf{win}} & \text{ otherwise }
			\end{cases}
		\end{align*}
		Note that, for all $k \geq i \geq 0$, $l \geq j \geq i+1$, we have $\msf{Succ}(q(i,j)) = \{ q(i,j+1),q(i+1,j+1) \}$. 
		
		Let us show by induction on $k+1 \geq i \geq 0$ the property $\mathcal{P}(i)$: for all $l+1 \geq j \geq i+1$, $q(i,j) \models \phi_j(H)$ if and only if $|H_j| > k-i$. 
		
		The property $\mathcal{P}(k+1)$ states that for all $k+2 \leq j \leq l+1$, $q(k+1,j) = q^{\msf{win}} \models \phi_j(H)$ if and only if $|\emptyset| > -1$. This straightforwardly holds since $\phi_j(H)$ is a $\neg$-free formula. 
		
		Assume now that the property $\mathcal{P}(i)$ holds for some $k+1 \geq i \geq 1$. Let us show by induction on $l+1 \geq j \geq i$ the property $\mathcal{Q}(j)$: $q(i-1,j) \models \phi_j(H)$ if and only if $|H_j| > k-i+1$. The property $\mathcal{Q}(l+1)$ straightforwardly holds since $q(i-1,l+1) = q_{i-1,l+1}^{\msf{lose}} \not\models \phi_{l+1}(H)$ and $H_{l+1} = \emptyset$. Assume now that the property $\mathcal{Q}(j)$ holds for some $l+1 \geq j \geq i+1$. We have $\phi_{j-1}(H) := Q_{j-1} \lX \phi_{j}(H)$ with $Q_{j-1} = \exists$ if and only if $j-1 \in H$. 
		\begin{itemize}
			\item Assume that $|H_{j-1}| > k-i+1$. If we have $j-1 \in H$, then $Q_{j-1} = \exists$ and $|H_j| > k-i$. Therefore, by $\mathcal{P}(i)$, we have $q(i,j) \models \phi_j(H)$. Since  $q(i,j) \in \msf{Succ}(q(i-1,j-1))$, it follows that $q(i-1,j-1) \models \phi_{j-1}(H)$. 
			
			On the other hand, if $j-1 \notin H$, then $|H_j| > k-i+1 > k-i$. Hence, by $\mathcal{P}(i)$, we have $q(i,j) \models \phi_j(H)$ and by $\mathcal{Q}(j)$, we have $q(i-1,j) \models \phi_j(H)$. Since $\msf{Succ}(q(i-1,j-1)) = \{ q(i,j),q(i-1,j)\}$, it follows that $q(i-1,j-1) \models \phi_{j-1}(H)$.
			\item Assume now that $|H_{j-1}| \leq k-i+1$. If we have $j-1 \in H$, then $|H_j| \leq k-i < k-i+1$. Hence, by $\mathcal{P}(i)$, we have $q(i,j) \not\models \phi_j(H)$ and by $\mathcal{Q}(j)$, we have $q(i-1,j) \not\models \phi_j(H)$. Since $\msf{Succ}(q(i-1,j-1)) = \{ q(i,j),q(i-1,j)\}$, it follows that $q(i-1,j-1) \not\models \phi_{j-1}(H)$. 
			
			On the other hand, if $j-1 \notin H$, then $Q_{j-1} = \forall$ and $|H_j| \leq k-i+1$. Therefore, by $\mathcal{Q}(j)$, we have $q(i-1,j) \not\models \phi_j(H)$. Since $q(i-1,j) \in \msf{Succ}(q(i-1,j-1))$, it follows that $q(i-1,j-1) \not\models \phi_{j-1}(H)$.
		\end{itemize}
		Hence, $\mathcal{Q}(j-1)$ holds. In fact, for all $l+1 \geq j \geq i$, $\mathcal{Q}(j-1)$ holds and therefore $\mathcal{P}(i-1)$ holds. In fact, $\mathcal{P}(i)$ holds for all $0\leq i \leq k+1$. The lemma is then given by $\mathcal{P}(0)$ applied with $j = 1$.
	\end{proof}
	
	The proof of Lemma~\ref{lem:reduc_CTL_X} is now direct.
	\begin{proof}		
		Assume that $(l,C,k)$ is a positive instance of the hitting set problem $\msf{Hit}$. Consider a hitting set $H \subseteq [1,\ldots,l]$ with $|H| \leq k$. We let $\phi := \phi(l,H) \in \CTL(\{p,\bar{p}\},\Ut \setminus \{\neg\},\emptyset,\emptyset,0)$ since $\lX \in \Ut$. We have $\Size{\phi} = l+1$. By Lemma~\ref{lem:not_use_F_and_G}, $\varphi$ accepts $K_{l}^{\mathcal{P}}$ and rejects both $K_{l}^{\mathcal{N}}$. By Lemma~\ref{lem:redcu_ctl_small_hitting_set}, $\phi$ rejects $K^l_{\exists > k}$. Consider now some $1 \leq i \leq n$. Since $H$ is a hitting set, we have $H \cap C_i \neq \emptyset$. Hence, by Lemma~\ref{lem:reduc_ctl_intersect}, $\phi(l,H)$ accepts the Kripke structure $K_{(l,C_i)}$. In follows that the $\CTL$-formula $\phi$ accepts $\mathcal{P}$ and rejects $\mathcal{N}$. Hence, $\msf{In}^{\CTL(\lX),0}_{(l,C,k)}$ is a positive instance of the $\CTL_\msf{Learn}(\Ut \setminus \{\neg\},\emptyset,\emptyset,n)$ decision problem.
		
		Assume now that $\msf{In}^{\CTL(\lX),0}_{(l,C,k)}$ is a positive instance of the decision problem $\CTL_\msf{Learn}(\Ut \setminus \{\neg\},\emptyset,\emptyset,0)$. Consider a $\CTL$-formula $\phi \in \CTL(\{p,\bar{p}\},\Ut\setminus \{\neg\},\emptyset,\emptyset,0)$ of size at most $l+1$ that accepts $\mathcal{P}$ and rejects $\mathcal{N}$. Since $\phi$ accepts $K_{l}^{\mathcal{P}} \in \mathcal{P}$ and rejects $K_{l}^{\mathcal{N}}\in \mathcal{N}$, it follows by Lemma~\ref{lem:not_use_F_and_G} that $\phi$ is an $\lX^l$-formula. Let $H \subseteq [1,\ldots,l]$ be such that $\phi = \phi(l,H)$. Since $\phi$ rejects $K^l_{\exists > k} \in \mathcal{N}$, it follows, by Lemma~\ref{lem:redcu_ctl_small_hitting_set}, that $|H| \leq k$. Consider some $1 \leq i \leq n$. Since $\phi$ accepts $K_{(l,C_i)} \in \mathcal{P}$, it follows, by Lemma~\ref{lem:reduc_ctl_intersect}, that $H \cap C_i \neq \emptyset$. Therefore, $H$ is a hitting set and $(l,C,k)$ is a positive instance of the hitting set problem $\msf{Hit}$.
		
		We conclude with Corollary~\ref{coro:reduc_CTL_only_X}.
	\end{proof} 

	Theorem~\ref{thm:ctl_unary_X_NP_hard} follows.
	\begin{proof}
		
		This is direct consequence of Lemma~\ref{lem:reduc_CTL_X}, Corollary~\ref{coro:reduc_CTL_only_X}, of the fact that the instance $\msf{In}^{\CTL(\lX),0}_{(l,C,k)}$ can be computed in logarithmic space from $(l,C,k)$, and of Theorem~\ref{thm:unary_is_sufficient}
		.
	\end{proof}

	\subsubsection{Without the next operator $\lX$}
	In the previous subsection, we have seen that the $\CTL$ learning problem with a bounded amount of binary operator is $\msf{NP}$-complete. However, as can be seen, the proof of $\msf{NP}$-hardness heavily relies on the use of the operator $\lX$. In this subsection, we focus on the $\CTL$ learning problem where the operator $\lX$ is not allowed anymore. We show that this decision problem 
	in $\msf{NL}$, as stated in the theorem below. 
	\begin{theorem}
		\label{thm:ctl_unary_no_X_P}
		For all $\Ut \subseteq \{\lF,\lG,\neg \}$, $\Bl \subseteq \Op{Bin}{lg}$ and $n \in \N$, the $\CTL_\msf{Learn}(\Ut,\emptyset,\Bl,n)$ decision problem is in $\msf{NL}$. If $\lF \in \Ut$ or $\lG \in \Ut$, then the $\CTL_\msf{Learn}(\Ut,\emptyset,\Bl,n)$ decision problem is $\msf{NL}$-complete.
	\end{theorem}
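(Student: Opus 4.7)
The theorem has two parts: membership in $\msf{NL}$ for every $\Ut\subseteq\{\lF,\lG,\neg\}$, and $\msf{NL}$-hardness whenever $\Ut\cap\{\lF,\lG\}\neq\emptyset$.

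The upper bound would follow the template of Proposition~\ref{prop:ltl_unary_in_l}. I would first establish a $\CTL$-analogue of Observation~\ref{obser:equiv_ltl}: without the next operator, two consecutive copies of the same modality collapse, i.e.\ $\msf{Q}\lF\msf{Q}\lF\phi\equiv\msf{Q}\lF\phi$ and $\msf{Q}\lG\msf{Q}\lG\phi\equiv\msf{Q}\lG\phi$ for each $\msf{Q}\in\{\exists,\forall\}$, and alternations between $\lF$ and $\lG$ stabilise after a bounded number of modalities. Combined with $\neg\neg\equiv\epsilon$ and the classical CTL dualities $\neg\exists\lF\phi\equiv\forall\lG\neg\phi$ and $\neg\exists\lG\phi\equiv\forall\lF\neg\phi$ (a specialisation of Proposition~\ref{prop:equiv_negation_ATL_turn_based} to Kripke structures), this produces, for every $\Ut\subseteq\{\lF,\lG,\neg\}$, a constant-size set of canonical unary prefixes of length at most five. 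I would then define a unified $\CTL$-fragment analogous to $\LTL_\msf{Unif}$ of the previous subsection, whose maximal unary prefixes all lie in this canonical set, and observe that (a) an instance is positive iff it admits a separating unified formula of the given size, and (b) the number of unified formulas of size at most $k$ is polynomial in $|\prop|$ and $k$.

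A non-deterministic logspace procedure then guesses such a candidate formula sub-formula by sub-formula, and for every positive or negative Kripke structure evaluates it bottom-up. Because the canonical unary modalities only involve $\exists/\forall\lF$ and $\exists/\forall\lG$, the required model-checking sub-calls reduce to reachability and to detecting a reachable cycle contained in a given set of states, both of which are in $\msf{NL}$; the universal variants follow from $\msf{NL}=\msf{coNL}$. Since only a constant number of such sub-routines need to be composed per candidate, the whole procedure stays in $\msf{NL}$.

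For the lower bound, I reduce from directed $s$-$t$-reachability. Suppose first that $\lF\in\Ut$. Given $(G,s,t)$, build a Kripke structure $K_G$ whose states and transitions are those of $G$ (adding a self-loop at each sink), with initial state $s$, $t$ labelled by $p$, and every other state labelled by $\bar p$. Let $K_0$ be the single-state Kripke structure labelled $\bar p$ with a self-loop. Taking $\mathcal{P}:=\{K_G\}$, $\mathcal{N}:=\{K_0\}$ and $B:=2$ yields an instance of $\CTL_\msf{Learn}(\Ut,\emptyset,\Bl,n)$ in which, because $B<3$, no binary-operator formula fits the size bound; only the dozen or so unary formulas of size at most $2$ over $\{p,\bar p\}$ are candidates. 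When $t$ is reachable, $\exists\lF p$ separates $K_G$ from $K_0$; when $t$ is unreachable, every reachable state of $K_G$ carries $\bar p$, so a short case analysis shows that $K_G$ and $K_0$ agree on every small unary formula. When instead $\lG\in\Ut$ but $\lF\notin\Ut$, I reduce dually from non-reachability (NL-complete by $\msf{NL}=\msf{coNL}$), labelling $t$ with $\bar p$, all other states with $p$, and adding a self-loop at $s$; $K_0$ is then a two-state structure whose initial $p$-state carries a self-loop and a transition to a $\bar p$-sink with a self-loop, so that $\forall\lG p$ (of size two) separates $K_G$ from $K_0$ precisely when $t$ is not reachable from $s$.

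The main obstacle in the upper bound will be proving the $\CTL$-analogue of Observation~\ref{obser:equiv_ltl} -- especially the stabilisation of longer alternations -- since branching makes this more subtle than in the linear-time setting. In the lower bound, the delicate point is verifying, in each of the two reductions, that no spurious unary formula of size at most two accidentally distinguishes $K_G$ from $K_0$ in the no-instance; the careful choice of labellings (the initial state carrying the same proposition in $K_G$ and $K_0$, together with the self-loop on $s$ in the $\lG$-only case) is precisely what makes these finite case analyses go through for every allowed $\Ut$.
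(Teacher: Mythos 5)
Your overall route is the same as the paper's: collapse unary prefixes over $\{\exists,\forall\}\times\{\lF,\lG\}$ (plus negations, eliminated via the standard dualities) to a constant-size canonical set, deduce that only polynomially many candidate formulas of constant size need to be considered once the number of binary operators is capped by $n$, and decide satisfaction of each fixed candidate by $\msf{NL}$ reachability/lasso subroutines, invoking $\msf{NL}=\msf{coNL}$ for the universal quantifiers and negations (this is exactly Lemmas~\ref{lem:bound_size_CTL_formula_no_X}--\ref{lem:ctl_unary_no_X_nl}). The one point where you genuinely deviate is the $\lG$-only hardness case: you reduce from \emph{non}-reachability (legitimate, since $\msf{NL}=\msf{coNL}$ and the reduction is logspace) with $K_G$ positive and a two-state negative structure, whereas the paper (Definition~\ref{def:reduction_ctl_2_nl_complete}, Lemma~\ref{lem:ctl_2_reduction}) reduces from reachability itself and simply swaps the positive and negative samples, using $\forall\lG\bar p$; both work, the paper's variant just avoids appealing to $\msf{coNL}$ in the hardness direction and keeps a single reduction skeleton for both operators.

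Two caveats. First, the heart of the upper bound --- ``alternations between $\lF$ and $\lG$ stabilise after a bounded number of modalities'' --- is precisely where the paper spends its effort, and you only assert it: one needs the dominating-quantifier collapses $Q_1\lF Q_2\lF\phi\equiv\msf{Dom}_{\lF}(Q_1,Q_2)\lF\phi$ (Corollary~\ref{coro:equiv_ctl_dominating_quantifiers}), the four-block absorption of Corollary~\ref{coro:equiv_ctl}, the fact that anything squeezed between $\exists\lF$ and $\forall\lG$ (or $\forall\lG$ and $\exists\lF$) disappears (Lemma~\ref{coro:equiv_ctl_exist_F_forall_G}) with bounded flanks (Lemma~\ref{coro:small_sequence_outside_F_G}), and a separate argument when one of $\exists\lF,\forall\lG$ is absent (Lemma~\ref{lem:one_F_or_one_G}); the mixed-quantifier cases mean ``same modality twice collapses'' is not enough, and your bound of five is optimistic (the paper needs eight --- any constant suffices, but the constant must actually be established). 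Second, in the hardness reductions your claim that ``because $B<3$, no binary-operator formula fits the size bound'' is false under the paper's notion of size as the number of distinct sub-formulas: $p\lor p$, $p\mathbin{\prescript{\neg}{}{\wedge}}p$, $p\Leftrightarrow p$, etc.\ all have size $2$. This is harmless --- every such formula is equivalent to $p$, $\neg p$, $\msf{True}$ or $\msf{False}$, which your unary case analysis already covers, and this is exactly how the paper's proof of Lemma~\ref{lem:ctl_2_reduction} disposes of them --- but the exclusion argument as you state it would not survive; you must argue semantic triviality rather than size infeasibility.
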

	
	Proving that the decision problem is $\msf{NL}$-hard is rather straightforward and will handled as a second step. Let us first show that this decision problem is in $\msf{NL}$. 
	
	To prove this, we are going to proceed similarly to the $\LTL$ case (except that this case is more involved), i.e. we first consider $\CTL$-formulas that do not use binary operators at all, and then consider the case of $\CTL$-formulas using binary operators. 
	
	Let us first tackle the case of $\CTL$-formulas using no binary operators. Our goal is to successively restrict the set of $\CTL$-formulas that is sufficient to consider. More precisely, we show that sequentially using several operators in a row is useless. We start by showing that using twice in a row either of the operators $\lF$ or $\lG$ is useless. The formal statement that we give below is in the context of $\ATL$-formulas, more general than $\CTL$-formulas, because we will use this statement later on in this paper. 
	\begin{lemma}
		\label{lem:equiv_atl_dominating_quantifiers}
		Consider some $k \in \N_1$, $A \subseteq A' \subseteq [1,\ldots,k]$. Let $\phi$ be any $\ATL$-formula with $k$ agents. We have:
		\begin{equation*}
			\fanBr{A} \lG \fanBr{A'} \lG \phi \equiv \fanBr{A'} \lG \fanBr{A} \lG \phi \equiv \fanBr{A} \lG \phi
		\end{equation*}
		and dually 
		\begin{equation*}
			\fanBr{A} \lF \fanBr{A'} \lF \phi \equiv \fanBr{A'} \lF \fanBr{A} \lF \phi \equiv \fanBr{A'} \lF \phi
		\end{equation*}
	\end{lemma}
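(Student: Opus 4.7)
The plan is to prove all six implications via two elementary ingredients and a single strategy-composition construction. The ingredients are: (i) since the path semantics of $\lG$ and $\lF$ quantifies over $i \in \N_1$ starting at $i=1$, we have $\lG \psi \Rightarrow \psi$ and $\psi \Rightarrow \lF \psi$ at the current position; (ii) since $A \subseteq A'$, any $A$-strategy extends to an $A'$-strategy with a smaller outcome set, giving the coalitional monotonicity $\fanBr{A} \msf{H} \phi \Rightarrow \fanBr{A'} \msf{H} \phi$ for $\msf{H} \in \{\lG,\lF\}$.

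For the $\lG$-chain, starting from a witness $A$-strategy $s_A$ for $\fanBr{A} \lG \phi$, the key observation is that at any state $\pi[i]$ reached by $s_A$, the shifted strategy $\rho \mapsto s_A(\pi[:i-1] \cdot \rho)$ again witnesses $\fanBr{A} \lG \phi$ from $\pi[i]$, and hence by (ii) also $\fanBr{A'} \lG \phi$. Thus the same $s_A$ (possibly extended to $A'$) witnesses both $\fanBr{A} \lG \fanBr{A'} \lG \phi$ and $\fanBr{A'} \lG \fanBr{A} \lG \phi$. The converse implications are easy: $\fanBr{A'} \lG \fanBr{A} \lG \phi$ already gives $\fanBr{A} \lG \phi$ at the current state by taking $i=1$ of the outer $\lG$; and $\fanBr{A} \lG \fanBr{A'} \lG \phi$ gives $\fanBr{A} \lG \phi$ because at every reached state the inner $\fanBr{A'} \lG \phi$ collapses to $\phi$ via (i), so the same outer witness is retained.

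For the $\lF$-chain, one direction of each equivalence is immediate from (i) and (ii): $\fanBr{A'} \lF \phi \Rightarrow \fanBr{A} \lF \fanBr{A'} \lF \phi$ by taking $i=1$ of the outer $\lF$ under any $A$-strategy since $\fanBr{A'} \lF \phi$ already holds at the current state; and $\fanBr{A'} \lF \phi \Rightarrow \fanBr{A'} \lF \fanBr{A} \lF \phi$ because at any state where $\phi$ holds, $\fanBr{A} \lF \phi$ also holds by (i), so the same $s_{A'}$ witness works. The main obstacle is the converse direction, which requires stitching two $\lF$-witnesses into a single $A'$-strategy. Here the plan is a standard composition argument, enabled by the paper's memoryful strategy model ($s_a \colon Q^+ \to \N_1$): for each state $q$ witnessing the inner $\lF$-subformula, fix an $A'$-strategy $s^q$ ensuring $\lF \phi$ from $q$ (using (ii) to lift $\fanBr{A}$-witnesses to $A'$ when necessary); extend the outer $\lF$-witness to an $A'$-strategy and define $\sigma(\rho)$ to follow this extension until the history $\rho$ first reaches such a $q$ at some position $j$, and to play $s^q(\rho[j:])$ thereafter. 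Any $\sigma$-path is a path of the outer witness, so it reaches some such $q$, and from position $j$ onwards is a path of $s^q$, so it eventually meets $\phi$.

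The technical content is concentrated in this memoryful strategy-switching step for the $\lF$-equivalences; all other steps are routine manipulations of the semantics of $\fanBr{\cdot}$, $\lG$, $\lF$ and coalitional monotonicity, with the $i=1$ trick doing most of the work.
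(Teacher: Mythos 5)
Your proposal is correct and takes essentially the same route as the paper: the easy implications from $\fanBr{A'}\lG\phi \implies \phi$, $\phi \implies \fanBr{A}\lF\phi$ and coalition monotonicity, the shifted-strategy argument showing $\fanBr{A}\lG\phi \implies \fanBr{A}\lG\fanBr{A}\lG\phi$, and the memoryful strategy-switching for the $\lF$ converse are exactly the paper's steps (the paper organizes the switch via the set of histories that have already reached a state satisfying the inner formula, and proves the converse in the slightly more general form $\fanBr{A_1}\lF\fanBr{A_2}\lF\phi \implies \fanBr{A'}\lF\phi$ for $A_1,A_2 \subseteq A'$). The only cosmetic difference is that you fix a witness strategy per reached state rather than per outcome path, and your phrase \textquotedblleft any $\sigma$-path is a path of the outer witness\textquotedblright{} should be sharpened to the contradiction argument (a $\sigma$-path that never meets such a state coincides forever with the outer witness, which is impossible), which is precisely how the paper argues.
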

	\begin{proof}
		We start with the operator $\lG$. 
		\begin{itemize}
			\item By definition of the globally operator $\lG$, we have $\fanBr{A'} \lG \phi \implies \phi$, hence $\fanBr{A} \lG \fanBr{A'} \lG \phi \implies \fanBr{A} \lG \phi$. Also, we have $\fanBr{A'} \lG \fanBr{A} \lG \phi \implies \fanBr{A} \lG \phi$. 
			\item 
			Let us show that $\fanBr{A} \lG \phi \implies \fanBr{A} \lG \fanBr{A} \lG \phi$. Consider a concurrent game structure $C$ and any state $q \in Q$. Assume that $q \models \fanBr{A} \lG \phi$. Consider a strategy profile $s^A \in \msf{S}_A$ for the coalition of agents $A$ such that, for all $\rho \in \msf{Out}^Q(q,s^A)$, we have $\rho \models \lG \phi$. 
			We claim that we have for all $\rho \in \msf{Out}^Q(q,s^{A})$, $\rho \models \lG \fanBr{A} \lG \phi$. Indeed, consider any $\rho \in \msf{Out}^Q(q,s^{A})$ and $i \in \N_1$. Let $\tilde{s}^{A}$ be a strategy profile for the coalition $A$ that coincides with $s^A$ after $\rho[:i-1]$. That is, for all $a \in A$ and $\theta \in Q^+$, we have $\tilde{s}_a^{A}(\theta) := s_a^{A}(\rho[:i-1] \cdot \theta)$. Consider then any $\theta \in \msf{Out}^Q(\rho[i],\tilde{s}^{A})$. By definition of the strategy profiles $\tilde{s}^{A}$ and $s^{A}$, we have that $\rho[:i-1] \cdot \theta \in \msf{Out}^Q(q,s^A)$. Therefore, for all $j \in \N_1$, we have $(\rho[:i-1] \cdot \theta)[j] \models \phi$. Thus, for all $j \in \N_1$, we have $\theta[j] \models \phi$. That is, $\theta \models \lG \phi$. In fact, $\rho[i] \models \fanBr{A} \lG \phi$, which holds for all $i \in \N_1$. Thus, $\rho \models \lG \fanBr{A} \lG \phi$. We obtain that $\fanBr{A} \lG \phi \implies \fanBr{A} \lG \fanBr{A} \lG \phi$.
			
			Since $A \subseteq A'$, we have, for all $\ATL$-formulas $\phi'$, $\fanBr{A} \lG \phi' \implies \fanBr{A'} \lG \phi$. We can conclude that $\fanBr{A} \lG \phi \implies \fanBr{A'} \lG \fanBr{A} \lG \phi$ and $\fanBr{A} \lG \phi \implies \fanBr{A} \lG \fanBr{A'} \lG \phi$.
		\end{itemize}
		We now turn to the operator $\lF$, with dual arguments.
		\begin{itemize}
			\item By definition of the eventually operator $\lF$, we have $\phi \implies \fanBr{A} \lF \phi$, hence $\fanBr{A'} \lF \phi \implies \fanBr{A'} \lF \fanBr{A} \lF \phi$. Also, we have $\fanBr{A'} \lF \phi \implies \fanBr{A} \lF \fanBr{A'} \lF \phi$. 
			\item Consider now two coalitions $A_1,A_2 \subseteq A'$, a concurrent game structure $C$ and any state $q \in Q$. Assume that $q \models \fanBr{A_1} \lF \fanBr{A_2} \lF \phi$. Let us show that $q \models \fanBr{A'} \lF \phi$. Consider a strategy profile $s^{A_1} \in \msf{S}_{A_1}$ for the coalition of agents $A_1$ such that, for all $\rho \in \msf{Out}^Q(q,s^{A_1})$, we have $\rho \models \lF \fanBr{A_2} \lF \phi$. We let $\msf{Already}(\fanBr{A_2} \lF \phi) := \{  \rho \in Q^{+} \cup Q^{\omega} \mid \exists i_\rho \in \N_1,\; i_\rho \leq |\rho|,\; \rho[i_\rho] \models \fanBr{A_2} \lF \phi \}$. By definition of the strategy $s^{A_1}$, we have $\msf{Out}^Q(q,s^{A_1}) \subseteq \msf{Already}(\fanBr{A_2}\lF \phi)$. Then, for all such 
			$\rho \in \msf{Already}(\fanBr{A_2}\lF \phi)$, we let $i_\rho \in \N_1$ be the least index such that $\rho[i_\rho] \models \fanBr{A_2} \lF \phi$ and $\theta_\rho \in Q^* \cup Q^\omega$ be such that $\rho = \rho[:i_\rho-1] \cdot \theta_\rho$. We also let 
			 $s^{\rho,A_2} \in \msf{S}_{A_2}$ be a strategy profile for the coalition $A_2$ such that for all $\theta \in \msf{Out}^Q(\rho[i_\rho],s^{\rho,A_2})$, we have $\theta \models \lF \phi$.  
			
			We now define the strategy profile $s^{A'}$ such that, for all $\rho \in Q^+$:
			\begin{align*}
				s^{A'}(\rho) \text{ coincides with } \begin{cases}
					s^{A_1}(\rho) \text{ on }A_1& \text{ if }\rho \notin \msf{Already}(\fanBr{A_2}\lF \phi) \\
					s^{\rho,A_2}(\theta_\rho) \text{ on }A_2 & \text{ if } \rho \in \msf{Already}(\fanBr{A_2}\lF \phi)
				\end{cases}
			\end{align*}
			where, for all coalitions of agents $X,X_1,X_2$, we say that a strategy $s \in \msf{S}_{X_1}$ coincides with another strategy $s' \in \msf{S}_{X_2}$ on $X \subseteq X_1 \cap X_2$ if, for all $a \in X$, we have $s_a = s'_a$. 
			
			We claim that for all $\rho \in \msf{Out}^Q(q,s^{A'})$, $\rho \models \lF \phi$. Indeed, consider any $\rho \in \msf{Out}^Q(q,s^{A'})$. By definition, on finite paths not in $\msf{Already}(\fanBr{A_2}\lF \phi)$, the strategy $s^{A'}$ coincides with the strategy $s^{A_1}$ on $A_1$. Since $\msf{Out}^Q(q,s^{A_1}) \subseteq \msf{Already}(\fanBr{A_2}\lF \phi)$, it follows that $\rho \in \msf{Already}(\fanBr{A_2}\lF \phi)$. Therefore, we have $\rho = \rho[:i_\rho-1] \cdot \theta_\rho$. In addition, by definition of the strategy $s^{A'}$, we have $\theta_\rho \in \msf{Out}^Q(\rho[i_\rho],s^{\rho,A_2})$. By choice of the strategy $s^{\rho,A_2}$, this implies $\theta_\rho \models \lF \phi$. It follows that $\rho \models \lF \phi$. Therefore, for all $\rho \in \msf{Out}^Q(q,s^{A'})$, we have $\rho \models \lF \phi$. Thus, $q \models \fanBr{A'} \lF \phi$. Hence, we have proved that $\fanBr{A_1} \lF \fanBr{A_2} \lF \phi \implies \fanBr{A'} \lF \phi$. 
			
			We obtain the desired implications since $A,A' \subseteq A'$. 
		\end{itemize}
		We obtain the equivalences for both operators $\lF$ and $\lG$.
	\end{proof}
	
	Let us now come back to the more restrictive context of $\CTL$-formulas where there are only two different strategic quantifiers: $\fanBr{\emptyset}$ (i.e. $\forall$) and $\fanBr{\{1\}}$ (i.e. $\exists$). To properly express the above lemma in this context, we define below the notion of dominating quantifiers (between $\exists$ and $\forall$) used with the operators $\lF$ and $\lG$.
	\begin{definition}
		\label{def:dominating_quantifiers}
		For all $Q_1,Q_2 \in \{ \exists,\forall\}$, we define $\msf{Dom}_{\lF}(Q_1,Q_2) \in \{ Q_1,Q_2 \}$ and $\msf{Dom}_{\lG}(Q_1,Q_2) \in \{ Q_1,Q_2 \}$ as follows:
		\begin{align*}
			\msf{Dom}_{\lF}(Q_1,Q_2) := \begin{cases}
				\exists & \text{ if }Q_1 = \exists \text{ or }Q_2 = \exists \\
				\forall & \text{ otherwise }
			\end{cases}
		\end{align*}
		and 
		\begin{align*}
			\msf{Dom}_{\lG}(Q_1,Q_2) := \begin{cases}
				\forall & \text{ if }Q_1 = \forall \text{ or }Q_2 = \forall \\
				\exists & \text{ otherwise }
			\end{cases}
		\end{align*}
	\end{definition}

	Note that $\msf{Dom}_{\lG}$ is the dual of the operator $\msf{Dom}_{\lF}$, as stated in the observation below.
	\begin{observation}
		\label{obser:dominating_quantifiers}
		For all $\CTL$-formula $\phi$ and $Q,Q_1,Q_2 \in \{ \exists,\forall\}$, we have: 
		\begin{align}
			\neg \msf{Dom}_{\lF}(Q_1,Q_2) \lF & \phi \equiv \msf{Dom}_{\lG}(\neg Q_1,\neg Q_2) \lG \neg \phi \label{eqn:equiv_neg_F_G} \\
			\phi & \implies Q \lF \phi \label{eqn:simple_F}\\
			Q \lG \phi & \implies \phi \label{eqn:simple_G}
		\end{align} 
		where $\neg \forall := \exists$ and $\neg \exists := \forall$.
	\end{observation}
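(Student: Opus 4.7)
All three items are direct consequences of the CTL semantics (and, for item 1, of the standard CTL dualities between $\lF$ and $\lG$). The plan is to dispatch items 2 and 3 immediately from the definitions, and then to handle item 1 by a case analysis on $(Q_1,Q_2)$.

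For items 2 and 3, the plan is simply to unfold the semantics. For item 2, given a state $s$ with $s \models \phi$, any path $\pi$ starting at $s$ satisfies $\pi[1]=s \models \phi$, so $\pi \models \lF \phi$ by taking $i=1$ in the semantics of $\lF$; this shows both $s \models \forall \lF \phi$ and $s \models \exists \lF \phi$. For item 3, if $s \models Q \lG \phi$ then (for $Q \in \{\exists,\forall\}$) there is at least one path $\pi$ starting at $s$ along which $\pi[i] \models \phi$ for all $i \in \N_1$; taking $i=1$ and using $\pi[1]=s$ yields $s \models \phi$.

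For item 1, the plan is to first recall the well-known CTL dualities $\neg \exists \lF \phi \equiv \forall \lG \neg \phi$ and $\neg \forall \lF \phi \equiv \exists \lG \neg \phi$, which are immediate from the semantics (a path satisfies $\lF \phi$ iff it does not satisfy $\lG \neg \phi$, combined with the fact that the universal/existential quantifications over paths are dual). Then I would perform a case analysis on whether at least one of $Q_1,Q_2$ equals $\exists$:
\begin{itemize}
    \item If at least one of $Q_1,Q_2$ is $\exists$, then by Definition~\ref{def:dominating_quantifiers} we have $\msf{Dom}_{\lF}(Q_1,Q_2)=\exists$ and $\msf{Dom}_{\lG}(\neg Q_1,\neg Q_2)=\forall$, so the claimed equivalence reduces to $\neg \exists \lF \phi \equiv \forall \lG \neg \phi$.
    \item If $Q_1=Q_2=\forall$, then $\msf{Dom}_{\lF}(Q_1,Q_2)=\forall$ and $\msf{Dom}_{\lG}(\neg Q_1,\neg Q_2)=\exists$, so the claim reduces to $\neg \forall \lF \phi \equiv \exists \lG \neg \phi$.
\end{itemize}
In both cases the resulting equivalence is one of the CTL dualities recalled above, concluding the proof.

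There is no real obstacle here: the statement is essentially the bookkeeping needed to freely push negations through the $\msf{Dom}$-notation introduced in Definition~\ref{def:dominating_quantifiers}. The only mildly delicate point is to make sure that the definition of $\msf{Dom}_{\lG}$ is indeed the de~Morgan dual of $\msf{Dom}_{\lF}$ (which is the case by construction), so that the case analysis lines up with the standard CTL dualities.
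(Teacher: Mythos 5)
Your proposal is correct and follows essentially the same route as the paper: the implications (\ref{eqn:simple_F}) and (\ref{eqn:simple_G}) are read off directly from the semantics using the fact that every path from a state $q$ starts at $q$ (with non-empty successor sets guaranteeing at least one path), and (\ref{eqn:equiv_neg_F_G}) is reduced to the standard CTL dualities $\neg \exists \lF \phi \equiv \forall \lG \neg\phi$ and $\neg \forall \lF \phi \equiv \exists \lG \neg\phi$. Your explicit case split on whether one of $Q_1,Q_2$ is $\exists$ is just a slightly more detailed rendering of what the paper calls a straightforward consequence of those dualities.
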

	\begin{proof}
		The equivalence $(\ref{eqn:equiv_neg_F_G})$ is a straightforward consequence of the following equivalences, for $\phi'$ any $\CTL$-formula: $\neg \forall \lG \phi' \equiv \exists \lF \neg \phi'$, $\neg \exists \lF \phi' \equiv \forall \lG \neg\phi'$, $\neg \exists \lG \phi' \equiv \forall \lF \neg\phi'$ and $\neg \forall \lF \phi' \equiv \exists \lG \neg\phi'$.
		%
		%
		The implications $(\ref{eqn:simple_F})$ and $(\ref{eqn:simple_G})$ come from the definition of the operators $\lF$ and $\lG$ and the fact that, in all Kripke structures $K$, for all states $q \in Q$ and for all $\rho \in \msf{Out}^Q(q)$, we have $\rho[1] = q$.
	\end{proof}
	
	We can now state below the corollary of Lemma~\ref{lem:equiv_atl_dominating_quantifiers} with $\CTL$-formulas. (Which justifies the terminology defined above of \emph{dominating} quantifiers.)
	\begin{corollary}
		\label{coro:equiv_ctl_dominating_quantifiers}
		Let $\phi$ be any $\CTL$-formula and $Q_1,Q_2 \in \{ \exists,\forall\}$. We have:
		\begin{equation*}
			Q_1 \lG Q_2 \lG \phi \equiv Q_2 \lG Q_1 \lG \phi \equiv \msf{Dom}_{\lG}(Q_1,Q_2) \lG \phi
		\end{equation*}
		and dually 
		\begin{equation*}
			Q_1 \lF Q_2 \lF \phi \equiv Q_2 \lF Q_1 \lF \phi \equiv \msf{Dom}_{\lF}(Q_1,Q_2) \lF \phi
		\end{equation*}
	\end{corollary}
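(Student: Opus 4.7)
The plan is to derive this corollary by directly specializing Lemma~\ref{lem:equiv_atl_dominating_quantifiers} to the CTL setting, where only two strategic quantifiers exist: $\forall = \fanBr{\emptyset}$ and $\exists = \fanBr{\{1\}}$. Crucially, the coalitions associated with these two quantifiers are nested: $\emptyset \subseteq \{1\}$. So for any pair $(Q_1,Q_2) \in \{\exists,\forall\}^2$, the corresponding pair of coalitions $(A_1,A_2)$ always satisfies $A_1 \subseteq A_2$ or $A_2 \subseteq A_1$, which is exactly the hypothesis needed to invoke the lemma (after possibly swapping the roles of $A$ and $A'$).

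My proof would proceed by a short case analysis on $(Q_1,Q_2)$. If $Q_1 = Q_2$, pick the coalition $A = A'$ associated with this common quantifier; then Lemma~\ref{lem:equiv_atl_dominating_quantifiers} immediately gives $Q_1 \lG Q_2 \lG \phi \equiv Q_1 \lG \phi$ and $Q_1 \lF Q_2 \lF \phi \equiv Q_1 \lF \phi$, matching the definitions $\msf{Dom}_\lG(Q,Q) = \msf{Dom}_\lF(Q,Q) = Q$. If $Q_1 \neq Q_2$, then $\{A_1,A_2\} = \{\emptyset,\{1\}\}$; applying the lemma with $A := \emptyset$ and $A' := \{1\}$ yields that $\fanBr{A_1} \lG \fanBr{A_2} \lG \phi \equiv \fanBr{A_2} \lG \fanBr{A_1} \lG \phi \equiv \fanBr{\emptyset} \lG \phi = \forall \lG \phi$, and dually for $\lF$ the result is $\fanBr{\{1\}} \lF \phi = \exists \lF \phi$.

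To conclude, I verify that these outputs agree with the definitions from Definition~\ref{def:dominating_quantifiers}: when the quantifiers differ, one is $\forall$ and one is $\exists$, so $\msf{Dom}_\lG(Q_1,Q_2) = \forall$ and $\msf{Dom}_\lF(Q_1,Q_2) = \exists$, which is exactly what the lemma produces (the $\lG$ version keeps the smaller coalition $\emptyset$, and the $\lF$ version keeps the larger coalition $\{1\}$). Hence, in every case, the equivalences in the corollary hold.

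There is essentially no obstacle here; the work is pure bookkeeping to match the coalition-based statement of Lemma~\ref{lem:equiv_atl_dominating_quantifiers} with the quantifier-based notation of CTL. The only substantive observation is the nesting property $\emptyset \subseteq \{1\}$ that makes the lemma's subset hypothesis automatically satisfied, which is precisely what justifies the terminology \emph{dominating} quantifier: for $\lG$ the outcome is governed by the smaller coalition (so $\forall$ dominates), while for $\lF$ it is governed by the larger one (so $\exists$ dominates).
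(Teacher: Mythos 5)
Your proof is correct and follows exactly the paper's route: the paper also obtains this corollary as a direct specialization of Lemma~\ref{lem:equiv_atl_dominating_quantifiers}, using that $\forall$ stands for $\fanBr{\emptyset}$ and $\exists$ for $\fanBr{\{1\}}$, so the nesting hypothesis $A \subseteq A'$ is automatic. Your case analysis and the matching with Definition~\ref{def:dominating_quantifiers} are accurate; nothing is missing.
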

	\begin{proof}
		This is a direct consequence of Lemma~\ref{lem:equiv_atl_dominating_quantifiers} and the fact that $\forall$ stands for $\fanBr{\emptyset}$ and $\exists$ stands for $\fanBr{\{1\}}$.
	\end{proof}
	
	We deduce the corollary below stating equivalences over $\CTL$-formulas alternating the $\lF$ and $\lG$ operators. 
	\begin{corollary}
		\label{coro:equiv_ctl}
		Let $\phi$ be any $\CTL$-formula and $Q_1,Q_2,Q_3,Q_4 \in \{ \exists,\forall\}$. Let $Q_{\lF} := \msf{Dom}_{\lF}(Q_1,Q_3)$ and $Q_{\lG} := \msf{Dom}_{\lG}(Q_2,Q_4)$. Assume that $Q_1 = Q_{\lF}$ and $Q_4 = Q_{\lG}$. Then:
		\begin{equation*}
			Q_1 \lF Q_2 \lG Q_3 \lF Q_4 \lG \phi \equiv Q_{\lF} \lF Q_{\lG} \lG \phi
		\end{equation*}
		Dually, letting $Q_{\lG} := \msf{Dom}_{\lG}(Q_1,Q_3)$ and $Q_{\lF} := \msf{Dom}_{\lF}(Q_2,Q_4)$, if $Q_1 = Q_{\lG}$ and $Q_4 = Q_{\lF}$, then:
		\begin{equation*}
			Q_1 \lG Q_2 \lF Q_3 \lG Q_4 \lF \phi \equiv Q_{\lG} \lG Q_{\lF} \lF \phi
		\end{equation*}
	\end{corollary}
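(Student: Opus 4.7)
The plan is to prove the equivalence by showing both implications, relying on Corollary~\ref{coro:equiv_ctl_dominating_quantifiers} to collapse consecutive same-temporal-operator quantifications, and on the trivial implications $\psi \implies Q \lF \psi$ and $Q \lG \psi \implies \psi$ from Observation~\ref{obser:dominating_quantifiers} to insert or remove an intermediate quantifier. Throughout I will silently use monotonicity of $Q \lF$ and $Q \lG$ with respect to implication of the inner formula (if $\psi \implies \psi'$, then $Q \lF \psi \implies Q \lF \psi'$ and $Q \lG \psi \implies Q \lG \psi'$), which is immediate from the $\CTL$ semantics.

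For the forward direction $Q_1 \lF Q_2 \lG Q_3 \lF Q_4 \lG \phi \implies Q_{\lF} \lF Q_{\lG} \lG \phi$, I first apply (\ref{eqn:simple_G}) to the subformula $Q_2 \lG (Q_3 \lF Q_4 \lG \phi)$ to drop the outermost $Q_2 \lG$, which gives $Q_1 \lF Q_2 \lG Q_3 \lF Q_4 \lG \phi \implies Q_1 \lF Q_3 \lF Q_4 \lG \phi$. Then Corollary~\ref{coro:equiv_ctl_dominating_quantifiers} collapses the two consecutive $\lF$-quantifiers to $\msf{Dom}_{\lF}(Q_1,Q_3) \lF = Q_{\lF} \lF$, and because we are told $Q_1 = Q_{\lF}$ this equals the target formula (noting $Q_4 = Q_{\lG}$).

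For the backward direction $Q_{\lF} \lF Q_{\lG} \lG \phi \implies Q_1 \lF Q_2 \lG Q_3 \lF Q_4 \lG \phi$, I use the hypotheses $Q_1 = Q_{\lF}$ and $Q_4 = Q_{\lG}$ to rewrite the left-hand side as $Q_1 \lF Q_4 \lG \phi$. Since $Q_4 = \msf{Dom}_{\lG}(Q_2,Q_4)$, Corollary~\ref{coro:equiv_ctl_dominating_quantifiers} gives $Q_4 \lG \phi \equiv Q_2 \lG Q_4 \lG \phi$, so $Q_1 \lF Q_4 \lG \phi \equiv Q_1 \lF Q_2 \lG Q_4 \lG \phi$. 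Now (\ref{eqn:simple_F}) applied with the inner formula $Q_4 \lG \phi$ yields $Q_4 \lG \phi \implies Q_3 \lF Q_4 \lG \phi$, and propagating this through $Q_1 \lF Q_2 \lG$ by monotonicity yields $Q_1 \lF Q_2 \lG Q_3 \lF Q_4 \lG \phi$, as desired. The dual statement follows by exactly the same argument with the roles of $\lF$ and $\lG$ swapped and using (\ref{eqn:equiv_neg_F_G}), or equivalently by applying the proven direction to $\neg \phi$ and negating.

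I do not anticipate a real obstacle here: the only subtlety is checking that the direction of each one-sided implication in Observation~\ref{obser:dominating_quantifiers} matches the inclusion hypotheses $Q_1 = Q_{\lF}$ and $Q_4 = Q_{\lG}$ (so that Corollary~\ref{coro:equiv_ctl_dominating_quantifiers} can be applied as a full equivalence at the correct spot rather than only as an implication), which is what makes the hypotheses of the corollary non-redundant.
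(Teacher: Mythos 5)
Your proof is correct and follows essentially the same route as the paper's: both directions are obtained by combining Corollary~\ref{coro:equiv_ctl_dominating_quantifiers} (to collapse or expand consecutive $\lF$- or $\lG$-quantifications) with the implications (\ref{eqn:simple_F}) and (\ref{eqn:simple_G}) of Observation~\ref{obser:dominating_quantifiers} plus monotonicity, and the dual statement is handled via (\ref{eqn:equiv_neg_F_G}) exactly as in the paper. No gaps.
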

	\begin{proof}
		We prove the first equivalence, the second is then given by Observation~\ref{obser:dominating_quantifiers} (Eq. (\ref{eqn:equiv_neg_F_G})). 
		
		By Corollary~\ref{coro:equiv_ctl_dominating_quantifiers} and Observation~\ref{obser:dominating_quantifiers} (Eq. (\ref{eqn:simple_F})), we have:
		\begin{align*}
			Q_{\lG} \lG \phi \equiv Q_{2} \lG Q_{4} \lG \phi \implies Q_{2} \lG Q_3 \lF Q_{4} \lG \phi
		\end{align*}
		Since $Q_{\lF} = Q_1$, we have:
		\begin{equation*}
			 Q_{\lF} \lF Q_{\lG} \lG \phi \implies Q_1 \lF Q_{2} \lG Q_3 \lF Q_{4} \lG \phi
		\end{equation*}
		On the other hand, we have by Observation~\ref{obser:dominating_quantifiers} (Eq. (\ref{eqn:simple_G})):
		\begin{equation*}
			Q_{2} \lG Q_3 \lF Q_{4} \lG \phi \implies Q_3 \lF Q_{4} \lG \phi
		\end{equation*}
		Therefore, by Corollary~\ref{coro:equiv_ctl_dominating_quantifiers} and 
		since $Q_{\lG} = Q_4$: 
		\begin{equation*}
			Q_1 \lF Q_{2} \lG Q_3 \lF Q_{4} \lG \phi \implies Q_1 \lF Q_3 \lF Q_{4} \lG \phi \equiv Q_{\lF} \lF Q_{4} \lG \phi \implies Q_{\lF} \lF Q_{\lG} \lG \phi
		\end{equation*}
		We obtain the desired equivalence.
	\end{proof}
	
	We deduce that it is useless to use quantifiers between $\exists \lF$ and $\forall \lG$. 
	\begin{lemma}
		\label{coro:equiv_ctl_exist_F_forall_G}
		Let $\phi$ be any $\CTL$-formula and $\msf{Qt} \in ( \exists \lF,\forall \lF,\exists \lG,\forall \lG)^*$ be a sequence of quantifiers. We have:
		\begin{equation*}
			\exists \lF \; \msf{Qt} \; \forall \lG \phi \equiv \exists \lF \forall \lG \phi
		\end{equation*}
		and 
		\begin{equation*}
			\forall \lG \; \msf{Qt} \; \exists \lF \phi \equiv \forall \lG \exists \lF \phi
		\end{equation*}
	\end{lemma}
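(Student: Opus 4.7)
My overall plan is to establish the first equivalence by two separate implications, and then derive the second from the first via standard CTL negation dualities. For the duality step, pushing $\neg$ through any single quantifier-operator pair swaps it according to the involution $\exists \lF \leftrightarrow \forall \lG$ and $\exists \lG \leftrightarrow \forall \lF$ (these are exactly the equivalences used in the proof of Observation~\ref{obser:dominating_quantifiers}). Applying this componentwise, $\neg(\exists \lF \; \msf{Qt} \; \forall \lG \phi)$ becomes $\forall \lG \; \widetilde{\msf{Qt}} \; \exists \lF \; \neg \phi$ and $\neg(\exists \lF \forall \lG \phi)$ becomes $\forall \lG \exists \lF \neg \phi$, where $\widetilde{\msf{Qt}}$ denotes the componentwise involution. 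Since this involution is a bijection on $(\exists \lF, \forall \lF, \exists \lG, \forall \lG)^*$ and $\neg \phi$ ranges over all $\CTL$-formulas, the second equivalence follows from the first.

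For the forward implication $\exists \lF \; \msf{Qt} \; \forall \lG \phi \implies \exists \lF \forall \lG \phi$, I would prove the stronger claim $\msf{Qt} \; \forall \lG \phi \implies \exists \lF \forall \lG \phi$ by induction on $|\msf{Qt}|$; the desired implication then follows by monotonicity of $\exists \lF$ combined with the absorption $\exists \lF \exists \lF \equiv \exists \lF$ given by Corollary~\ref{coro:equiv_ctl_dominating_quantifiers}. The base case is Eq.~(\ref{eqn:simple_F}). For the inductive step $\msf{Qt} = Q \; H \; \msf{Qt}'$, I would lift the induction hypothesis monotonically through $Q \; H$ to get $Q \; H \; \msf{Qt}' \; \forall \lG \phi \implies Q \; H \; \exists \lF \forall \lG \phi$, and then collapse the right-hand side: if $H = \lF$, Corollary~\ref{coro:equiv_ctl_dominating_quantifiers} gives $Q \lF \exists \lF \equiv \exists \lF$; if $H = \lG$, Eq.~(\ref{eqn:simple_G}) gives $Q \lG \psi \implies \psi$.

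For the reverse implication $\exists \lF \forall \lG \phi \implies \exists \lF \; \msf{Qt} \; \forall \lG \phi$, it suffices by monotonicity of $\exists \lF$ to prove $\forall \lG \phi \implies \msf{Qt} \; \forall \lG \phi$, again by induction on $|\msf{Qt}|$. In the step $\msf{Qt} = Q \; H \; \msf{Qt}'$, Corollary~\ref{coro:equiv_ctl_dominating_quantifiers} gives $\forall \lG \phi \equiv \forall \lG \forall \lG \phi$, which combined with the induction hypothesis $\forall \lG \phi \implies \msf{Qt}' \forall \lG \phi$ through monotonicity of $\forall \lG$ yields $\forall \lG \phi \implies \forall \lG \; \msf{Qt}' \; \forall \lG \phi$; it remains to verify $\forall \lG \psi \implies Q \; H \; \psi$ for each of the four choices of $Q \; H$, which is a short case analysis from Eqs.~(\ref{eqn:simple_F}) and~(\ref{eqn:simple_G}) together with the fact that $\msf{Succ}(q) \neq \emptyset$ in any Kripke structure (so $\forall \lG \psi \implies \exists \lG \psi$). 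The main (and only mildly) technical obstacle is that the paper has not explicitly recorded the monotonicity of each $Q \; H$ in its argument; I would state and justify this as an auxiliary observation before the inductions, noting that it follows directly from the semantics since none of the four defining clauses for $Q \; H$ negates its argument.
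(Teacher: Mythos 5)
Your proposal is correct, but it takes a genuinely different route from the paper. The paper proves both equivalences directly by a strong induction on $|\msf{Qt}|$: whenever the sequence $\exists \lF \; \msf{Qt} \; \forall \lG$ repeats an $\lF$ or a $\lG$ consecutively it collapses the pair via Corollary~\ref{coro:equiv_ctl_dominating_quantifiers}, and in the fully alternating case it performs a case analysis on the first and last quantifiers of $\msf{Qt}$ and shortens the sequence using the four-quantifier collapse of Corollary~\ref{coro:equiv_ctl}; the second equivalence is handled by a symmetric argument rather than by duality. You instead split the first equivalence into two implications, each proved by peeling off one quantifier at a time, relying only on Corollary~\ref{coro:equiv_ctl_dominating_quantifiers}, the implications~(\ref{eqn:simple_F}) and~(\ref{eqn:simple_G}) of Observation~\ref{obser:dominating_quantifiers}, seriality of Kripke structures (for $\forall \lG \psi \implies \exists \lG \psi$, which is indeed available since the paper requires $\msf{Succ}(q) \neq \emptyset$), and monotonicity of the four quantifier--operator pairs; you then obtain the second equivalence by pushing negations through, exactly as in the proof of Equation~(\ref{eqn:equiv_neg_F_G}). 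Your route entirely avoids Corollary~\ref{coro:equiv_ctl}, which is the technical workhorse of the paper's argument, and is in that sense more elementary and modular; the price is the auxiliary monotonicity observation, which the paper never states but which, as you note, follows immediately from the semantics, so it is a legitimate one-line addition. The paper's route, by contrast, stays entirely within the equivalences it has already established and needs no new semantic facts. Both are sound; your version arguably also makes clearer that only $\exists\lF\exists\lF\equiv\exists\lF$, $\forall\lG\forall\lG\equiv\forall\lG$ and the trivial one-step implications are really needed.
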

	\begin{proof}
		We prove the result by induction on the size of $\msf{Qt}$. If $\msf{Qt}$ if the empty sequence, both equivalences are straightforward. Assume now that both equivalences hold for all sequences $\msf{Qt} \in ( \exists \lF,\forall \lF,\exists \lG,\forall \lG)^*$ of size at most $k$, for some $k \in \N$. Consider a sequence of quantifiers $\msf{Qt} \in ( \exists \lF,\forall \lF,\exists \lG,\forall \lG)^{k+1}$. Let us consider the sequence $\exists \lF \; \msf{Qt} \; \forall \lG$, the arguments are similar for the other one. If $k+1$ is odd, then the sequences $\exists \lF \; \msf{Qt} \; \forall \lG$ features an operator $\lF$ or $\lG$ used twice in a row. By Corollary~\ref{coro:equiv_ctl_dominating_quantifiers}, there is some $\msf{Qt}' \in ( \exists \lF,\forall \lF,\exists \lG,\forall \lG)^{*}$ with $|\msf{Qt}'| \leq k$ such that $\exists \lF \; \msf{Qt} \; \forall \lG \phi \equiv \exists \lF \; \msf{Qt}' \; \forall \lG \phi$. We can then apply our induction hypothesis. 
		
		Assume now that $k+1$ is even and that the sequence $\exists \lF \; \msf{Qt} \; \forall \lG$ does not feature an operator $\lF$ or $\lG$ used twice in a row. If $|\msf{Qt}| = 2$, we can apply Corollary~\ref{coro:equiv_ctl}. Assume now that $|\msf{Qt}| \geq 4$. Let us us write $\msf{Qt}$ as $\msf{Qt} = \msf{T}_1 \cdots \msf{T}_{2n}$ where $\msf{T}_i \in \{\exists \lF,\forall \lF,\exists \lG,\forall \lG\}$ for all $1 \leq i \leq 2n$. By assumption, for all even $1 \leq i \leq 2n$, $\msf{T}_i$ is an $\lF$ operator, whereas for all odd $1 \leq i \leq 2n$, $\msf{T}_i$ is an $\lG$ operator. We denote by $Q_i$ 
		the $\exists$ or $\forall$ quantifier associated with $\msf{T}_i$. There are three cases:
		\begin{itemize}
			\item Assume that $Q_1 = \exists$. Then, we necessarily have $\msf{Dom}_{\lG}(Q_1,Q_3) = Q_3$, and $\msf{Dom}_{\lF}(\exists,Q_2) = \exists$.  Therefore, by Corollary~\ref{coro:equiv_ctl}, we have 
			$\exists \lF \; \msf{Qt} \; \forall \lG \phi 
			\equiv \exists \lF \; \msf{T}_3 \cdots \msf{T}_{2n} \; \forall \lG \phi$. We can then apply our induction hypothesis.
			\item Similarly, assume that $Q_{2n} = \forall$. Then, we necessarily have $\msf{Dom}_{\lF}(Q_{2(n-1)},Q_{2n}) = Q_{2(n-1)}$, and $\msf{Dom}_{\lG}(Q_{2n-1},\forall) = \forall$.  Therefore, by Corollary~\ref{coro:equiv_ctl}, we have 
			$\exists \lF \; \msf{Qt} \; \forall \lG \phi 
			\equiv \exists \lF \; \msf{T}_1 \cdots \msf{T}_{2(n-1)} \; \forall \lG \phi$. We can then apply our induction hypothesis.
			\item Otherwise, we have $\msf{T}_1 = \forall \lG$ and $\msf{T}_{2n} = \exists \lF$. Therefore, by our induction hypothesis and Corollary~\ref{coro:equiv_ctl}, $\exists \lF \; \msf{Qt} \; \forall \lG \phi = \exists \lF \; \msf{T}_1 \cdots \msf{T}_{2n} \; \forall \lG \phi \equiv \exists \lF \; \forall \lG \; \exists \lF \; \forall \lG \phi \equiv \exists \lF \forall \lG \phi$.
		\end{itemize}
		Thus the equivalence holds for all sequences of operators $\msf{Qt} \in ( \exists \lF,\forall \lF,\exists \lG,\forall \lG)^*$ of size $k+1$. The lemma follows.
	\end{proof}

	We also deduce that it is useless to use too long sequences of operators next to $\exists \lF \; \forall \lG$.
	\begin{lemma}
		\label{coro:small_sequence_outside_F_G}
		Let $\phi$ be any $\CTL$-formula. For all $\msf{Qt},\msf{Qt}' \in ( \exists \lF,\forall \lF,\exists \lG,\forall \lG)^*$, there is $\msf{Qt}_s,\msf{Qt}_s' \in ( \exists \lF,\forall \lF,\exists \lG,\forall \lG)^{*}$ of size at most $3$ such that:
		\begin{equation*}
			\msf{Qt} \; \exists \lF \forall \lG \; \msf{Qt}' \phi \equiv \msf{Qt}_s \; \exists \lF \forall \lG \; \msf{Qt}_s' \; \phi
		\end{equation*}
		Similarly, for all $\msf{Qt},\msf{Qt}' \in ( \exists \lF,\forall \lF,\exists \lG,\forall \lG)^*$, there is $\msf{Qt}_s,\msf{Qt}_s' \in ( \exists \lF,\forall \lF,\exists \lG,\forall \lG)^{*}$ of size at most 3 such that:
		\begin{equation*}
			\msf{Qt} \; \forall \lG \exists \lF \; \msf{Qt}' \phi \equiv \msf{Qt}_s \; \forall \lG \exists \lF \; \msf{Qt}_s' \; \phi
		\end{equation*}
	\end{lemma}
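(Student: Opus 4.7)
The plan is to combine the dominating-quantifier simplification of Corollary~\ref{coro:equiv_ctl_dominating_quantifiers}, the absorption identities of Lemma~\ref{coro:equiv_ctl_exist_F_forall_G}, and the length-$4$ collapse of Corollary~\ref{coro:equiv_ctl}, first on the suffix $\msf{Qt}'$ and then, dually, on the prefix $\msf{Qt}$. I focus on bounding $\msf{Qt}_s'$; the bound on $\msf{Qt}_s$ follows by a symmetric argument using the other equivalence of Lemma~\ref{coro:equiv_ctl_exist_F_forall_G}, and the second part of the statement (middle $\forall\lG\exists\lF$) is fully dual.

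First I normalize $\msf{Qt}'$ via Corollary~\ref{coro:equiv_ctl_dominating_quantifiers} by merging consecutive $\lF$-operators (resp.\ $\lG$-operators), so that $\msf{Qt}'$ strictly alternates. If $\msf{Qt}'$ begins with an $\lG$-operator I absorb it into the preceding $\forall\lG$ of the middle (since $\forall$ dominates for $\lG$), so I may assume $\msf{Qt}'$ is empty or begins with an $\lF$-operator. I then iterate the following loop on the current form $\exists\lF\forall\lG\,R\,\phi$: locate the first $\exists\lF$ in $R$ and absorb the preceding block via $\forall\lG\,\msf{X}\,\exists\lF\phi\equiv\forall\lG\exists\lF\phi$; locate the next $\forall\lG$ in the remaining suffix and absorb via $\exists\lF\,\msf{X}\,\forall\lG\phi\equiv\exists\lF\forall\lG\phi$; after each round, Corollary~\ref{coro:equiv_ctl} collapses the accumulated alternating prefix $\exists\lF\forall\lG\exists\lF\forall\lG$ back to $\exists\lF\forall\lG$. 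The loop terminates in one of two canonical shapes:
\begin{itemize}
    \item \emph{Case 1:} $\exists\lF\forall\lG\,T\phi$, where $T$ contains no $\exists\lF$ (so $T\in\{\forall\lF,\exists\lG,\forall\lG\}^*$);
    \item \emph{Case 2:} $\exists\lF\forall\lG\exists\lF\,T\phi$, where $T$ contains no $\forall\lG$ (so $T\in\{\forall\lF,\exists\lG,\exists\lF\}^*$).
\end{itemize}

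It remains to bound $|T|$, which is the main technical point. I renormalize $T$ so that it alternates between $\lF$ and $\lG$. In Case~1, absorbing any leading $\lG$ of $T$ into the preceding $\forall\lG$, I may assume $T$ starts with $\lF$, hence with $\forall\lF$ (since every $\lF$-operator of $T$ is $\forall\lF$). Then every length-$4$ block $\forall\lG\,\forall\lF\,Q\lG\,\forall\lF$ at the front of $\forall\lG\,T$ satisfies the dual of Corollary~\ref{coro:equiv_ctl} regardless of $Q\in\{\exists,\forall\}$ (the two outer quantifiers are $\forall$, and $\forall$ dominates both $\lF$ and $\lG$ positions of the block), and collapses to $\forall\lG\forall\lF$. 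Iterating peels off two operators at a time, reducing $\forall\lG\,T$ to one of $\forall\lG$, $\forall\lG\forall\lF$, or $\forall\lG\forall\lF Q\lG$, so that $|\msf{Qt}_s'|\le 2$. In Case~2, dually, absorbing any leading $\lF$ of $T$ into the preceding $\exists\lF$, I may assume $T$ starts with $\exists\lG$; every length-$4$ block $\exists\lF\,\exists\lG\,Q\lF\,\exists\lG$ then meets the hypotheses of Corollary~\ref{coro:equiv_ctl} and collapses to $\exists\lF\exists\lG$, reducing $\exists\lF\,T$ to one of $\exists\lF$, $\exists\lF\exists\lG$, or $\exists\lF\exists\lG Q\lF$, whence $|\msf{Qt}_s'|\le 3$.

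The bound $|\msf{Qt}_s|\le 3$ on the left prefix is obtained by the entirely symmetric procedure, reading the formula right-to-left and exchanging the roles of the two equivalences of Lemma~\ref{coro:equiv_ctl_exist_F_forall_G}. The principal obstacle is arranging, at the termination of the absorption loop, that the residual tail $T$ contains at most one quantifier per operator-type; this is precisely what guarantees that every length-$4$ block appearing in the final simplification automatically meets the dominating-corner hypotheses of Corollary~\ref{coro:equiv_ctl}, so that the iterated collapse brings $|\msf{Qt}_s'|$ (and dually $|\msf{Qt}_s|$) down to at most $3$.
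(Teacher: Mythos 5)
Your proof is correct, but it is organized quite differently from the paper's. The paper proves only the prefix reduction (the suffix and the $\forall\lG\exists\lF$ variants being declared analogous), by a single induction on $|\msf{Qt}|$: after merging repeated $\lF$'s or $\lG$'s via Corollary~\ref{coro:equiv_ctl_dominating_quantifiers}, it examines the four operators of $\msf{Qt}$ adjacent to the middle $\exists\lF\forall\lG$ and, by a two-case application of Corollary~\ref{coro:equiv_ctl}, deletes two of them, so the length drops by two at each step; Lemma~\ref{coro:equiv_ctl_exist_F_forall_G} is never invoked. You instead run a global absorption loop driven by Lemma~\ref{coro:equiv_ctl_exist_F_forall_G} (swallowing everything between the middle's $\forall\lG$ and the next $\exists\lF$, and between that $\exists\lF$ and the next $\forall\lG$, then collapsing $\exists\lF\forall\lG\exists\lF\forall\lG$), which brings the suffix to one of two canonical shapes with a homogeneous tail $T$, and you then peel $T$ two operators at a time with Corollary~\ref{coro:equiv_ctl}; the prefix and the $\forall\lG\exists\lF$ case are handled by the symmetric/dual procedure, which does go through (the middle supplies the dominating quantifier at one end of each $4$-block, homogeneity of $T$ supplies it at the other). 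What your route buys is an explicit canonical form for the residual $\msf{Qt}_s'$ (at most $2$ or $3$ operators, of a known shape), at the price of more case analysis and of leaning on Lemma~\ref{coro:equiv_ctl_exist_F_forall_G}, which the paper itself proves by essentially the same kind of induction it uses here; the paper's local induction is shorter and self-contained. One small imprecision in your Case~1 justification: $\forall$ does \emph{not} dominate for $\lF$ (Definition~\ref{def:dominating_quantifiers} makes $\exists$ dominant there); the hypothesis $Q_4=\msf{Dom}_{\lF}(Q_2,Q_4)$ of Corollary~\ref{coro:equiv_ctl} holds in your block $\forall\lG\,\forall\lF\,Q\lG\,\forall\lF$ only because both $\lF$-positions carry $\forall$ (no $\exists\lF$ occurs in $T$), not because $\forall$ is dominating — the conclusion is unaffected, but the stated reason should be fixed.
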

	\begin{proof}
		We prove the result for $\msf{Qt} \in ( \exists \lF,\forall \lF,\exists \lG,\forall \lG)^*$ with the sequence $\exists \lF \; \forall \lG$, the three other cases are analogous.
		
		We prove by induction on the size of $\msf{Qt}
		$ that there is $\msf{Qt}_s \in ( \exists \lF,\forall \lF,\exists \lG,\forall \lG)^{*}$ of size at most $3$ such that: $\msf{Qt} \; \exists \lF \forall \lG \; \phi \equiv \msf{Qt}_s \; \exists \lF \forall \lG \; \; \phi$. This obviously holds if $\msf{Qt}$ is of size at most 3. Assume now that it holds for all sizes $i \leq k$ for some $k \geq 3$. Consider a sequence $\msf{Qt} \in ( \exists \lF,\forall \lF,\exists \lG,\forall \lG)^{k+1}$. We let $\msf{Qt}$ be equal to $\msf{Qt} = \msf{Qt}' \cdot \msf{T}_1 \cdot \msf{T}_2 \cdot \msf{T}_3 \cdot \msf{T_4}$ where, for all $1 \leq i \leq n$, we have $\msf{T}_i \in \{ \exists \lF,\forall \lF,\exists \lG, \forall \lG\}$. We assume that in the sequence $\msf{Qt} \; \exists \lF \forall \lG$, no $\lF$ or $\lG$ operator appears twice in a row, otherwise we can apply Corollary~\ref{coro:equiv_ctl_dominating_quantifiers} and our induction hypothesis. For $1 \leq i \leq 4$, we let $Q_i$ denote the $\exists$ or $\forall$ quantifier associated with the operator $\msf{T}_i$. There are two cases:
		\begin{itemize}
			\item If $\msf{T}_2 = \forall \lG$ or $\msf{T}_3 = \exists \lF$, Corollary~\ref{coro:equiv_ctl}  gives that $\msf{Qt} \cdot \exists \lF \forall \lG \phi \equiv \msf{Qt}' \cdot \msf{T}_1 \cdot \msf{T}_2 \cdot \exists \lF \forall \lG \phi$. We can then apply our induction hypothesis to  $\msf{Qt}' \cdot \msf{T}_1 \cdot \msf{T}_2 \cdot \exists \lF \forall \lG \phi$ since $|\msf{Qt}' \cdot \msf{T}_1 \cdot \msf{T}_2| \leq k$. 
			\item Otherwise, we have both $\msf{Dom}_{\lF}(Q_{1},Q_{3}) = Q_{1}$ and $\msf{Dom}_{\lG}(Q_{2},Q_{4}) = Q_{4}$. Therefore, by Corollary~\ref{coro:equiv_ctl}, we have 
			$\msf{Qt} \cdot \exists \lF \forall \lG \phi \equiv \msf{Qt}' \cdot \msf{T}_1 \cdot \msf{T}_4 \cdot \exists \lF \forall \lG \phi$. We can then apply our induction hypothesis to  $\msf{Qt}' \cdot \msf{T}_1 \cdot \msf{T}_4 \exists \lF \forall \lG \phi$ since $|\msf{Qt}' \cdot \msf{T}_1 \cdot \msf{T}_4| \leq k$. 
		\end{itemize}
		Overall, the equivalences also holds for all sequences $\msf{Qt} \in ( \exists \lF,\forall \lF,\exists \lG,\forall \lG)^{k+1}$. The lemma follows.
	\end{proof}

	Finally, let us consider the case where at most one of the two operators $\exists \lF$ or $\forall \lG$ is used. 
	\begin{lemma}
		\label{lem:one_F_or_one_G}
		Let $\phi$ be any $\CTL$-formula. For all $\msf{Qt} \in (\exists \lF,\forall \lF,\exists \lG)^*$, there is $\msf{Qt}_s \in (\exists \lF,\forall \lF,\exists \lG)^{*}$ of size at most $5$ such that $\msf{Qt} \; \phi \equiv \msf{Qt}_s \; \phi$. 

		Similarly, for all $\msf{Qt} \in ( \exists \lG,\forall \lG,\exists \lF)^*$, there is $\msf{Qt}_s \in (\exists \lF,\forall \lF,\exists \lG)^{*}$ of size at most $5$ such that $\msf{Qt} \; \phi \equiv \msf{Qt}_s \; \phi$.
	\end{lemma}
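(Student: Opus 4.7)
The plan is to proceed by induction on $|\msf{Qt}|$, with the base case $|\msf{Qt}| \leq 5$ trivial (take $\msf{Qt}_s := \msf{Qt}$). For the inductive step with $|\msf{Qt}| \geq 6$, I will exhibit a semantically equivalent $\msf{Qt}'$ that is strictly shorter than $\msf{Qt}$ and still lies in $(\exists \lF,\forall \lF,\exists \lG)^{*}$; the induction hypothesis then finishes the job.

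Two local rewritings are available. By Corollary~\ref{coro:equiv_ctl_dominating_quantifiers}, any two consecutive operators of the same temporal type can be merged into a single operator; one checks easily that inside the alphabet $\{\exists\lF,\forall\lF,\exists\lG\}$ this merge never produces $\forall\lG$. By Corollary~\ref{coro:equiv_ctl}, a 4-window $Q_1 \lF Q_2 \lG Q_3 \lF Q_4 \lG$ satisfying $Q_1 = \msf{Dom}_\lF(Q_1,Q_3)$ and $Q_4 = \msf{Dom}_\lG(Q_2,Q_4)$, or the symmetric 4-window starting with $\lG$, collapses to a 2-operator sequence. Because every $\lG$-operator is forced to carry $\exists$, we have $\msf{Dom}_\lG(Q_2,Q_4) = \exists$ always, so the collapse likewise never introduces $\forall\lG$ and keeps us inside the required alphabet.

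If $\msf{Qt}$ has two consecutive operators of the same temporal type, the first rewriting already shortens it. Otherwise $\msf{Qt}$ strictly alternates between $\lF$- and $\lG$-operators, and since $|\msf{Qt}| \geq 6$ it contains at least three overlapping 4-windows of the shape required by Corollary~\ref{coro:equiv_ctl}. The combinatorial heart of the proof is the observation that, in our alphabet, the only non-collapsible 4-windows are $\exists \lG \; \exists \lF \; \exists \lG \; \forall \lF$ and $\forall \lF \; \exists \lG \; \exists \lF \; \exists \lG$, which respectively force the two $\lF$-quantifiers appearing in the window to be $(\exists,\forall)$ and $(\forall,\exists)$. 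Consecutive such windows in the alternating sequence share an $\lF$-operator, so these constraints are incompatible; at least one of the three overlapping 4-windows must be collapsible and yields the required shortening. This is the step I expect to be most delicate, but it reduces to a short case split on the quantifiers of three consecutive $\lF$-operators.

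The second half of the lemma, for $\msf{Qt} \in (\exists \lG, \forall \lG, \exists \lF)^{*}$, follows by the symmetric analysis: now every $\lF$-operator carries $\exists$, so $\msf{Dom}_\lF$ is always $\exists$, the analogous alphabet-preservation check goes through, and the non-collapsible 4-windows impose incompatible constraints on consecutive $\lG$-quantifiers in any length-$6$ alternating sequence. The main obstacle throughout is not mathematical depth but careful bookkeeping of the case analysis and verifying that every rewriting stays within the restricted alphabet; the induction itself is routine.
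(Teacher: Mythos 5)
Your proof is correct and essentially the paper's own argument: induction on $|\msf{Qt}|$, merging two consecutive operators of the same temporal type via Corollary~\ref{coro:equiv_ctl_dominating_quantifiers}, and otherwise collapsing a $4$-window of the strictly alternating prefix via Corollary~\ref{coro:equiv_ctl}; the only difference is that the paper locates a collapsible window by splitting on whether $\exists\lF$ occurs among the first six operators, while you characterize the non-collapsible windows and show that three overlapping windows cannot all be non-collapsible. One small inaccuracy to fix when writing it up: two \emph{consecutive} non-collapsible windows are not always incompatible (e.g.\ $\forall\lF\,\exists\lG\,\exists\lF\,\exists\lG\,\forall\lF$ realizes both $\forall\lF\,\exists\lG\,\exists\lF\,\exists\lG$ and $\exists\lG\,\exists\lF\,\exists\lG\,\forall\lF$); the genuine conflict is between the windows at positions $i$ and $i+2$ (or $i$ and $i+1$ when the prefix starts with $\exists\lG$), so the short case split on the quantifiers of three consecutive $\lF$-operators that you invoke is exactly what closes the argument.
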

	\begin{proof}
		We prove the result for the first case, the second one is analogous. We prove the result by induction on the size of $\msf{Qt} \in ( \exists \lF,\forall \lF,\exists \lG)^*$ that there is $\msf{Qt}_s \in ( \exists \lF,\forall \lF,\exists \lG)^{*}$ of size at most $5$ such that $\msf{Qt} \; \phi \equiv \msf{Qt}_s \; \phi$. This obviously holds if $\msf{Qt}$ is of size at most 5. Assume now that it holds for all sizes $i \leq k$ of $\msf{Qt}$ for some $k \geq 5$. Consider a sequence $\msf{Qt} \in ( \exists \lF,\forall \lF,\exists \lG)^{k+1}$. We denote $\msf{Qt}$ as follows $\msf{Qt} = \msf{T}_1 \cdot \msf{T}_2 \cdot \msf{T}_3 \cdot \msf{T}_4 \cdot \msf{T}_5 \cdot \msf{T}_6 \cdot \msf{Qt}'$. We assume that in $\msf{Qt}$, no $\lF$ or $\lG$ operator appears twice in a row, otherwise we can apply Corollary~\ref{coro:equiv_ctl_dominating_quantifiers} and our induction hypothesis. For $1 \leq i \leq 6$, we let $Q_i$ denote the $\exists$ or $\forall$ quantifier associated with the operator $\msf{T}_i$. There are two cases. 
		\begin{itemize}
			\item Assume that there is some $1 \leq i \leq 6$ such that $\msf{T}_i = \exists \lF$. If $i \leq 3$, 
			then we have $\msf{Dom}_{\lF}(Q_i,Q_{i+2}) = Q_{i}$ and $\msf{Dom}_{\lG}(Q_{i+1},Q_{i+3}) = Q_{i+3}$ since $Q_{i+1} = Q_{i+3} = \exists$. Thus, by Corollary~\ref{coro:equiv_ctl}, for all CTL-formulas $\phi'$, we have $\msf{T}_i \cdot \msf{T}_{i+1} \cdot \msf{T}_{i+2} \cdot \msf{T}_{i+3} \cdot \phi' \equiv \msf{T}_i \cdot \msf{T}_{i+3} \cdot \phi'$. We can then apply our induction hypothesis to conclude. If $i \geq 4$, then we have $\msf{Dom}_{\lF}(Q_{i-2},Q_{i}) = Q_{i}$ and $\msf{Dom}_{\lG}(Q_{i-3},Q_{i-1}) = Q_{i-3}$ since $Q_{i-3} = Q_{i-1} = \exists$. Thus, by Corollary~\ref{coro:equiv_ctl}, for all CTL-formulas $\phi'$, we have $\msf{T}_{i-3} \cdot \msf{T}_{i-2} \cdot \msf{T}_{i-1} \cdot \msf{T}_{i} \cdot \phi' \equiv \msf{T}_{i-3} \cdot \msf{T}_{i} \cdot \phi'$. We can then apply our induction hypothesis to conclude.
			\item Otherwise, we have $\msf{Qt} \in (\forall \lF,\exists \lG)^{k+1}$ and thus we can apply Corollary~\ref{coro:equiv_ctl} and our induction hypothesis to conclude.
		\end{itemize}
		Therefore, the result holds for sequences $\msf{Qt} \in ( \exists \lF,\forall \lF,\exists \lG)^{k+1}$. The lemma follows.
	\end{proof}

	Overall, we obtain the lemma below, bounding the size of CTL-formulas (using only unary operators) that is sufficient to consider.
	\begin{lemma}
		\label{lem:bound_size_CTL_formula_no_X}
		There is a bound $B \in \N$ such that, 
		for all $\Ut \subseteq \{ \neg,\lF,\lG \}$, for all CTL-formulas $\phi = \msf{Qt} \cdot \phi' \in \CTL(\prop,\Ut,\emptyset,\emptyset,0)$, there is a sequence of $\CTL$-operators $\msf{Qt}'$ such that $\phi'' = \msf{Qt}' \cdot \phi' \in \CTL(\prop,\Ut,\emptyset,\emptyset,0)$, $|\msf{Qt}| \leq B$, and $\phi \equiv \phi''$.
	\end{lemma}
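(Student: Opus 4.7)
The plan is to establish, for each $\Ut \subseteq \{\neg, \lF, \lG\}$, a uniform bound on the length of an equivalent replacement sequence $\msf{Qt}'$ staying in the same operator set, and then take $B$ to be the maximum of the case-specific bounds. The strategy combines three ingredients: (i) the un-negation transformation of Lemma~\ref{lem:unnegate_unary} to remove negations from the interior of the sequence; (ii) the reduction results Corollary~\ref{coro:equiv_ctl_dominating_quantifiers}, Lemma~\ref{coro:equiv_ctl_exist_F_forall_G}, Lemma~\ref{coro:small_sequence_outside_F_G}, and Lemma~\ref{lem:one_F_or_one_G} to bound the length of the resulting negation-free sequence; and (iii) when $\Ut$ is asymmetric, a translation step using the $\CTL$-dualities $\forall\lG \equiv \neg\exists\lF\neg$ and $\exists\lG \equiv \neg\forall\lF\neg$ (and their duals) to convert stray operators back into $\Ut$.

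The proof proceeds by case analysis on $\Ut$. When $\Ut \cap \{\lF, \lG\} = \emptyset$, the sequence consists of negations only and collapses to length at most $1$ via $\neg\neg \equiv \epsilon$. When $\neg \notin \Ut$, the approach is to first combine consecutive $\lF$-operators and consecutive $\lG$-operators via Corollary~\ref{coro:equiv_ctl_dominating_quantifiers}, and then to split on whether the sequence contains a pair $\exists\lF \ldots \forall\lG$ (or the dual $\forall\lG \ldots \exists\lF$): if so, Lemma~\ref{coro:equiv_ctl_exist_F_forall_G} collapses the intermediate operators while Lemma~\ref{coro:small_sequence_outside_F_G} bounds the prefix and suffix each by $3$; otherwise, Lemma~\ref{lem:one_F_or_one_G} applies whenever the sequence avoids $\forall\lG$ or $\forall\lF$, and the remaining sub-subcase of purely universal sequences in $\{\forall\lF, \forall\lG\}^*$ is handled by iterated use of Corollary~\ref{coro:equiv_ctl}, whose premises hold automatically since all quantifiers equal $\forall$. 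In every subcase the resulting sequence uses only operators already present in the original, hence remains in $\Ut$. The case $\neg \in \Ut$ with $\{\lF, \lG\} \subseteq \Ut$ then reduces to the previous one by first invoking Lemma~\ref{lem:unnegate_unary} to obtain a negation-free $\msf{Qt}_0$ equivalent to $\msf{Qt} \cdot \phi'$ up to at most one outer negation, which can simply be prepended to $\msf{Qt}'$.

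The hard part is the asymmetric case $\Ut = \{\neg, \lF\}$ (and the dual $\{\neg, \lG\}$), where the un-negated sequence produced by Lemma~\ref{lem:unnegate_unary} may employ $\lG$-operators absent from $\Ut$. The plan is to run the reduction on the un-negated sequence nonetheless, this time in the extended operator set $\{\exists\lF, \forall\lF, \exists\lG, \forall\lG\}$, obtaining a short sequence of length at most the constant $B_0$ from the previous case; each stray $\lG$-operator is then translated back into $\Ut$ using $\forall\lG \equiv \neg\exists\lF\neg$ and $\exists\lG \equiv \neg\forall\lF\neg$, each replacement inflating the length by a factor of $3$. The resulting $\msf{Qt}'$ has length at most $3 B_0 + O(1)$ and lies entirely in $\Ut$, while the chain of equivalences is preserved throughout. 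Setting $B$ to the maximum of the case-specific bounds completes the proof.
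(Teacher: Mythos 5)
Your overall route is the paper's own: reduce a negation-free block of quantified operators using Corollary~\ref{coro:equiv_ctl_dominating_quantifiers}, Lemma~\ref{coro:equiv_ctl_exist_F_forall_G}, Lemma~\ref{coro:small_sequence_outside_F_G} and Lemma~\ref{lem:one_F_or_one_G}, splitting on whether both $\exists\lF$ and $\forall\lG$ occur, and then dispose of negations via the dualities $\lF \equiv \neg\lG\neg$ and $\lG \equiv \neg\lF\neg$. Your treatment of the asymmetric alphabets $\Ut=\{\neg,\lF\}$ and $\{\neg,\lG\}$ (reduce in the full alphabet $\{\exists\lF,\forall\lF,\exists\lG,\forall\lG\}$, then re-express each stray operator through the dualities at the cost of a constant factor) is a correct and more explicit version of what the paper dismisses as straightforward; one small slip is that the residual negation produced by the un-negation step sits innermost, next to $\phi'$, so it must be appended to $\msf{Qt}'$ rather than \emph{prepended}.

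The genuine gap is in your \textquotedblleft otherwise\textquotedblright{} branch when $\neg\notin\Ut$. If the sequence does not contain both $\exists\lF$ and $\forall\lG$, the cases left over after invoking Lemma~\ref{lem:one_F_or_one_G} as you read it are \emph{not} only the purely universal sequences in $\{\forall\lF,\forall\lG\}^*$: a sequence may avoid $\exists\lF$ yet contain all of $\forall\lF$, $\forall\lG$ and $\exists\lG$, e.g.\ $\exists\lG\,\forall\lF\,\forall\lG\,\forall\lF\,\exists\lG\cdots$. Such a sequence avoids neither $\forall\lG$ nor $\forall\lF$, so neither half of the lemma under your reading applies, and it is not purely universal, so your iterated use of Corollary~\ref{coro:equiv_ctl} does not reach it either; as written, this case is simply uncovered. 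The culprit is that the second half of Lemma~\ref{lem:one_F_or_one_G} appears to be mis-typed: the dual of its first half (what its \textquotedblleft analogous\textquotedblright{} proof gives, and what the paper's own use of it requires) concerns sequences over $\{\forall\lF,\exists\lG,\forall\lG\}$, i.e.\ those avoiding $\exists\lF$, not those avoiding $\forall\lF$. With that corrected reading, the two halves exactly cover the \textquotedblleft not both $\exists\lF$ and $\forall\lG$\textquotedblright{} branch, your purely-universal sub-subcase becomes superfluous, and your argument coincides with the paper's proof; otherwise you would have to prove the missing case separately (it can be done by combining Corollary~\ref{coro:equiv_ctl_dominating_quantifiers} with Corollary~\ref{coro:equiv_ctl}), which your proposal neither identifies nor does.
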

	\begin{proof}
		Let us first consider any $\phi = \msf{Qt} \cdot \phi' \in \CTL(\prop,\Ut,\emptyset,\emptyset,0)$ with $\msf{Qt}$ as follows $\msf{Qt} = \msf{T}_1 \cdots \msf{T}_n$ where, for all $1 \leq i \leq n$, we have $\msf{T}_i \in \{ \exists \lF,\forall \lF,\exists \lG,\forall \lG \}$. There are two cases:
		\begin{itemize}
			\item Assume that $\phi$ uses both operators $\exists \lF$ and $\forall \lG$. 
			Consider $1 \leq i < j \leq n$ such that $\msf{T}_i,\msf{T}_j \in \{ \exists \lF,\forall \lG\}$ and $\msf{T}_i \neq \msf{T}_j$. By Lemma~\ref{coro:equiv_ctl_exist_F_forall_G}, we have $\phi \equiv \msf{T}_1 \cdots \msf{T}_i \cdot \msf{T}_j \cdots \msf{T}_n \; \phi'$. Furthermore, by Lemma~\ref{coro:small_sequence_outside_F_G}, there is $\msf{Qt}_s,\msf{Qt}_s' \in ( \exists \lF,\forall \lF,\exists \lG,\forall \lG)^{*}$ of size at most $3$ such that: $\msf{T}_1 \cdots \msf{T}_i \cdot \msf{T}_j \cdots \msf{T}_n \; \phi' \equiv \msf{Qt}_s \cdot \msf{T}_i \cdot \msf{T}_j \cdot \msf{Qt}'_s \; \phi$. We let $\msf{Qt}' := \msf{Qt}_s \cdot \msf{T}_i \cdot \msf{T}_j \cdot \msf{Qt}'_s \; \phi' \in \CTL(\prop,\Ut,\emptyset,\emptyset)$. We have $|\msf{Qt}'| \leq 8$ and $\msf{Qt}' \cdot \phi' \equiv \msf{Qt} \cdot \phi'$. 
			\item Assume $\phi$ does not use both operators $\exists \lF$ or $\forall \lG$. Then, by Lemma~\ref{lem:one_F_or_one_G}, there is $\msf{Qt}'$ of size at most 5 such that $\msf{Qt}' \cdot \phi' \equiv \msf{Qt} \cdot \phi'$. 
		\end{itemize}
		Thus, in both cases, there is a sequence of $\CTL$-operators $\msf{Qt}'$ of size at most 8such that $\msf{Qt}' \cdot \phi' \equiv \msf{Qt} \cdot \phi'$ with at most 8 quantifiers. 
	
		Then, it is straightforward to handle the cases where the sequence of quantifiers $\msf{Qt}$ uses negations, since 
		%
		we have the equivalence, for all $\CTL$-formulas $\phi$, $\lG \phi \equiv \neg \lF \neg \phi$ and $\lF \phi \equiv \neg \lG \neg \phi$
		. 
		%
		%
	\end{proof}

	We obtain a similar statement with formulas that can use (a bounded amount of) binary operators.	
	\begin{lemma}
		\label{lem:bound_size_CTL_formula_no_X_with_binary}
		For all $\Ut \subseteq \{ \neg,\lF,\lG \}$, $\Bl \subseteq \Op{Bin}{lg}$ and $n \in \N$, there is a bound $B_n \in \N$ such that, for all CTL-formulas $\phi \in \CTL(\prop,\Ut,\emptyset,\Bl,n)$, there is a $\CTL$-formula $\phi' \in \CTL(\prop,\Ut,\emptyset,\Bl,n)$ such that  $\Size{\phi'} \leq B_n$ and $\phi' \equiv \phi$.
	\end{lemma}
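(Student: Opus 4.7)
The plan is to prove this by induction on $n$, using Lemma~\ref{lem:bound_size_CTL_formula_no_X} as a black box to control every ``run'' of unary operators that appears between (or above) the binary-operator nodes of the formula tree. Let $B$ be the bound provided by Lemma~\ref{lem:bound_size_CTL_formula_no_X}.

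For the base case $n = 0$, any $\CTL$-formula in $\CTL(\prop,\Ut,\emptyset,\Bl,0)$ is of the form $\msf{Qt}\cdot p$ for some $p \in \prop$ and some sequence $\msf{Qt}$ of unary operators. Lemma~\ref{lem:bound_size_CTL_formula_no_X} directly yields an equivalent $\msf{Qt}'\cdot p$ with $|\msf{Qt}'|\le B$, so setting $B_0 := B+1$ suffices.

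For the inductive step, assume the statement holds for $n-1$ with bound $B_{n-1}$, and consider $\phi \in \CTL(\prop,\Ut,\emptyset,\Bl,n)$. Write $\phi = \msf{Qt} \cdot \psi$, where $\msf{Qt}$ is the maximal sequence of top-level unary operators and $\psi$ is either a proposition or of the form $\phi_1 \bullet \phi_2$ for some $\bullet \in \Bl$ and $\phi_1,\phi_2 \in \CTL(\prop,\Ut,\emptyset,\Bl,n-1)$ (with $|\phi_1|_{\msf{Bin}} + |\phi_2|_{\msf{Bin}} + 1 \le n$). In the first case we conclude as in the base case. In the second case, the induction hypothesis produces $\phi_1',\phi_2'$ of size at most $B_{n-1}$ with $\phi_i' \equiv \phi_i$, hence $\psi' := \phi_1' \bullet \phi_2' \equiv \psi$. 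Applying Lemma~\ref{lem:bound_size_CTL_formula_no_X} to $\msf{Qt}\cdot \psi'$ yields an equivalent formula $\msf{Qt}'\cdot \psi'$ with $|\msf{Qt}'|\le B$, and this formula is equivalent to $\phi$.

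The total size is then bounded by $|\msf{Qt}'| + 1 + 2B_{n-1} \le B + 1 + 2B_{n-1}$, so we may set $B_n := B+1+2B_{n-1}$, which is a well-defined constant depending only on $n$, $\Ut$ and $\Bl$. There is no real obstacle here: the only mildly delicate point is to verify that when we substitute the equivalent subformulas $\phi_1',\phi_2'$ for $\phi_1,\phi_2$ the whole formula remains in $\CTL(\prop,\Ut,\emptyset,\Bl,n)$ (it does, since $|\phi_1'\bullet\phi_2'|_{\msf{Bin}} \le |\phi_1|_{\msf{Bin}} + |\phi_2|_{\msf{Bin}} + 1 \le n$ and $\msf{Qt}'$ consists of operators from $\Ut$), and that congruence for $\equiv$ with respect to substitution into binary contexts and prefixing by unary operators holds, which is a routine induction on the semantics.
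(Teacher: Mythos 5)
Your proof is correct and follows essentially the same route as the paper's: induction on $n$ with base bound $B+1$ and recurrence $B_n = 2B_{n-1}+B+1$, decomposing $\phi$ at its topmost binary operator, applying the induction hypothesis to the two children and Lemma~\ref{lem:bound_size_CTL_formula_no_X} to the top run of unary operators. The only imprecision is your parenthetical justification that the rebuilt formula stays in $\CTL(\prop,\Ut,\emptyset,\Bl,n)$ (the inequality $|\phi_1|_{\msf{Bin}}+|\phi_2|_{\msf{Bin}}+1\le n$ need not hold when subformulas are shared, and the induction hypothesis as stated only bounds $|\phi_i'|_{\msf{Bin}}$ by $n-1$; the clean fix is to observe that the construction only rewrites unary runs and hence never increases the number of binary occurrences), but the paper's own proof glosses over exactly the same point.
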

	\begin{proof}
		Let $B$ denote the bound from Lemma~\ref{lem:bound_size_CTL_formula_no_X}. 
		We proceed by induction on $n$ (note that the bound $B_n$ depends on $n$). For the case $n = 0$, it suffices to consider $B_0 := B + 1$. Assume now that it holds for some $n \in \N$. 
		We let $B_{n+1} := 2 B_n + B + 1 \geq B_n$. Consider a formula $\phi \in \CTL(\prop,\Ut,\emptyset,\Bl,n+1) \setminus \CTL(\prop,\Ut,\emptyset,\Bl,n)$. This formula can be written as $\phi = \msf{Qt} \cdot (\phi_1 \bullet \phi_2)$ where $\msf{Qt}$ is a sequence of unary operators, $\bullet \in \Bl$ is a binary operator and $\phi_1,\phi_2 \in \CTL(\prop,\Ut,\emptyset,\Bl,n)$. By our induction hypothesis, there are formulas $\phi_1',\phi_2' \in \CTL(\prop,\Ut,\emptyset,\Bl,n)$ such that $\Size{\phi_1'},\Size{\phi_2'} \leq B_n$ and $\phi_1 \equiv \phi_1'$,$\phi_2 \equiv \phi_2'$. In addition, by Lemma~\ref{lem:bound_size_CTL_formula_no_X}, there is a sequence of operators $\msf{Qt}'$ of size at most $B$ such that, for all $\CTL$-formulas $\phi'$, we have $\msf{Qt} \cdot \phi' \equiv \msf{Qt}' \cdot \phi'$ and $\msf{Qt}' \cdot (\phi_1 \bullet \phi_2) \in \CTL(\prop,\Ut,\emptyset,\Bl,n+1)$. Overall, we have $\msf{Qt}' \cdot (\phi_1' \bullet \phi_2') \in \CTL(\prop,\Ut,\emptyset,\Bl,n+1)$ with $\Size{\msf{Qt}' \cdot (\phi_1' \bullet \phi_2')} \leq B_{n+1}$ and $\msf{Qt}' \cdot (\phi_1' \bullet \phi_2') \equiv \msf{Qt} \cdot (\phi_1 \bullet \phi_2)$. Hence, our inductive property holds also for $n+1$. The lemmas follows.
	\end{proof}	
	
	We can establish that deciding the learning $\CTL$ decision problem without the operator $\lX$ can be done in non-deterministic logarithmic space.
	\begin{lemma}
		\label{lem:ctl_unary_no_X_nl}
		For all $\Ut \subseteq \{\lF,\lG,\neg \}$, $\Bl \subseteq \Op{Bin}{lg}$ and $n \in \N$, the problem $\CTL_\msf{Learn}(\Ut,\emptyset,\Bl,n)$ can be decided in non-deterministic logarithmic space.  
	\end{lemma}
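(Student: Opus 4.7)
The plan is to combine Lemma~\ref{lem:bound_size_CTL_formula_no_X_with_binary}---which bounds the size of candidate formulas by a constant depending only on $\Ut,\Bl,n$---with a guess-and-verify procedure that exploits the logspace nature of $\CTL$ model checking in the absence of the $\lX$ operator.

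First I would fix $\Ut,\Bl,n$ and let $B_n$ be the constant provided by Lemma~\ref{lem:bound_size_CTL_formula_no_X_with_binary}. For any input $(\prop,\mathcal{P},\mathcal{N},B)$, the equivalence from that lemma shows: if some $\CTL(\prop,\Ut,\emptyset,\Bl,n)$-formula of size at most $B$ separates $\mathcal{P}$ and $\mathcal{N}$, then an equivalent formula of size at most $B_n$ does as well. Hence the input is a positive instance iff there exists a separating formula of size at most $\min(B,B_n) \leq B_n$, a constant. Each such formula is determined by $O(B_n)$ symbols, each an operator (finitely many) or a proposition in $\prop$, giving at most polynomially many candidates (in $|\prop|$) and an encoding of length $O(B_n \log |\prop|) = O(\log |\prop|)$ on the work tape.

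Next, the NL algorithm nondeterministically guesses such a representation for a candidate formula $\varphi$, and then verifies that $\varphi$ accepts every $X \in \mathcal{P}$ and rejects every $X \in \mathcal{N}$. Iteration through $\mathcal{P} \cup \mathcal{N}$ uses only a logspace counter. For each structure $X$, checking $X \models \varphi$ for a constant-size $\varphi$ is carried out by recursing along the syntactic structure of $\varphi$. The base cases (propositions) and the boolean combinations are immediate; the temporal subformulas $\exists \lF \psi$, $\forall \lF \psi$, $\exists \lG \psi$, and $\forall \lG \psi$ each reduce to a reachability or repeated-reachability question in the subgraph induced by states (not) satisfying $\psi$, which is in $\msf{NL}$. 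Negations are absorbed via $\msf{NL} = \msf{coNL}$. Because $\lX$ is forbidden, there is no need to track a current time step beyond the current state; this is crucial to keep the per-state witness in $O(\log |Q|)$ space.

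The main obstacle will be keeping the nested quantifier alternations within logspace: the outer algorithm existentially guesses a formula, then for all models (a universal quantifier) invokes model checking which itself interleaves existential and universal subqueries (e.g.\ $\exists \lF$ vs.\ $\forall \lG$, and complementation). Because $B_n$ is a constant, the recursion depth of the model-checking procedure is constant, and each recursive level only needs to store a finite number of state indices of size $O(\log |Q|)$; the Immerman--Szelepcs\'enyi theorem then allows each universal layer and each negation to be collapsed into the NL computation without blowing up the auxiliary space. The overall algorithm therefore runs in nondeterministic logarithmic space, proving the upper bound.
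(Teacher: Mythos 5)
Your proposal is correct and follows essentially the same route as the paper: bound the candidate formulas by the constant $B_n$ from Lemma~\ref{lem:bound_size_CTL_formula_no_X_with_binary}, enumerate (or guess) the polynomially many instantiations with propositions from $\prop$, and model-check each constant-size, $\lX$-free formula in $\msf{NL}$ by treating each temporal operator as a reachability-type query of bounded nesting depth, collapsing negations and universal layers via the Immerman--Szelepcs\'enyi theorem. The only cosmetic difference is that you guess the candidate formula nondeterministically whereas the paper loops deterministically over the finitely many formula structures and their proposition tuples, which changes nothing in the space analysis.
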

	\begin{proof}
		By Immerman-Szelepcsényi's theorem \cite{immerman1988nondeterministic}, we have $\msf{NL} = \msf{coNL}$. In other words, any problem that can decided by a logarithmic-space Turing machine using only existential (TM) states can also be decided by a logarithmic-space Turing machine using only universal (TM) states.
	
		
		Now, consider the bound $B_n$ from Lemma~\ref{lem:bound_size_CTL_formula_no_X_with_binary}. We let $\widetilde{\CTL}(\Ut,\emptyset,\Bl,n)$ denote a set of $\CTL$-formula structures, i.e. $\CTL$-formulas where the propositions are left unspecified. Each one of these formula structure can be seen, for some $k \in \N$, as functions $\prop^k \rightarrow \CTL(\prop,\Ut,\emptyset,\Bl,n)$, for all sets of propositions $\prop$. The set $\widetilde{\CTL}(\Ut,\emptyset,\Bl,n)$ corresponds to the set of all $\CTL$-formula structures such that, for all non-empty sets of propositions $\prop$, there are propositions in $\prop$ specifying them such that the obtained formula is in $\CTL(\prop,\Ut,\emptyset,\Bl,n)$ and of size at most $B_n$. Note that the set $\widetilde{\CTL}(\Ut,\emptyset,\Bl,n)$ is finite. For all $\widetilde{\phi} \in \widetilde{\CTL}(\Ut,\emptyset,\Bl,n)$, let us exhibit an $\msf{NL}$-algorithm $\msf{SatKripke}_{\widetilde{\phi}}$ that decides, given the propositions specifying the formula structure $\widetilde{\phi}$ and a state in a Kripke structure, whether that state satisfies the obtained $\CTL$-formula. 
		
		First, checking that a state satisfies a proposition is straightforward. Assume now that we have designed an $\msf{NL}$-algorithm $\msf{SatKripke}_{\widetilde{\phi}}$ for a $\CTL$-formula structure $\widetilde{\phi}$. The case of the formula structure $\widetilde{\psi} = \neg \widetilde{\phi}$ is straightforward since $\msf{NL} = \msf{coNL}$. Consider now the formula structure $\widetilde{\psi} = \exists \lF \widetilde{\phi}$. Consider any propositions specifying $\widetilde{\psi}$ into a $\CTL$-formula $\psi$ (and therefore $\widetilde{\phi}$ into a $\CTL$-formula $\phi$). Checking that a state $q$ satisfies $\psi$ amounts to guessing a path from $q$ of size at most $|Q|$ and checking, with a logarithmic space Turing machine using only existential (TM) states, that it satisfies $\phi$ (by calling the algorithm $\msf{SatKripke}_{\widetilde{\phi}}$). This induces an $\msf{NL}$-algorithm. Consider now the formula $\widetilde{\psi} = \forall \lF \widetilde{\phi}$. Since $\msf{NL} = \msf{coNL}$, there is a $\msf{coNL}$-algorithm $\msf{SatKripke}_{\widetilde{\phi}}'$ for the formula structure $\widetilde{\phi}$
		. As above, consider any propositions specifying $\widetilde{\psi}$ into a $\CTL$-formula $\psi$ (and therefore $\widetilde{\phi}$ into a $\CTL$-formula $\phi$). Then, checking that a state $q$ satisfies the formula $\psi$ amounts to exploring, with universal (TM) states, paths of lengths at most $|Q|$ from $q$ and check that we encounter a state that satisfies $\varphi$ (by calling $\msf{SatKripke}_{\widetilde{\phi}}'$). This induces a $\msf{coNL}$-algorithm, and therefore an $\msf{NL}$-algorithm as well. The arguments are similar for the operators $\exists \lG$ and $\forall \lG$. Furthermore, consider a formula structure $\widetilde{\psi} = \widetilde{\phi_1} \bullet \widetilde{\phi_2}$ for some binary operator $\bullet \in \Bl$, and assume that we have designed $\msf{NL}$-algorithms $\msf{SatKripke}_{\widetilde{\phi_i}},\msf{SatKripke}_{\neg \widetilde{\phi_i}}$ for the $\CTL$-formula structures $\widetilde{\phi_i},\neg\widetilde{\phi_i}$, for $i \in \{1,2\}$. In that case, we may rewrite $\bullet$ in disjunctive normal form (for instance $x_1 \Leftrightarrow x_2 
		\equiv (x_1 \wedge x_2) \lor (\neg x_1 \wedge \neg x_2)$). Then, an $\msf{NL}$-algorithm  $\msf{SatKripke}_{\widetilde{\psi}}$ could guess which clause to satisfy and run at most two of the algorithms $\msf{SatKripke}_{\widetilde{\phi_1}},\msf{SatKripke}_{\widetilde{\phi_2}},\msf{SatKripke}_{\neg\widetilde{\phi_1}},\msf{SatKripke}_{\neg\widetilde{\phi_2}}$. That way, we obtain an $\msf{NL}$-algorithm $\msf{SatKripke}_{\widetilde{\psi}}$. 
		
		Overall, we have an $\msf{NL}$-algorithm for all $\CTL$-formula structures (and their negations since $\msf{NL} = \msf{coNL}$) in $\widetilde{\CTL}(\Ut,\emptyset,\Bl,n)$. Let us now design an $\msf{NL}$-algorithm for the $\CTL_\msf{Learn}(\Ut,\emptyset,\Bl,n)$ decision problem. That algorithm could do the following, on an input $(\prop,\mathcal{P},\mathcal{N},B)$:
		\begin{enumerate}
			\item Loop over all $\CTL$-formula structures $\widetilde{\phi}$ in $\widetilde{\CTL}(\Ut,\emptyset,\Bl,n)$ 
			\item Letting $k$ be the number of unspecified propositions in $\widetilde{\phi}$, loop over all tuples $t$ in $\prop^k$ for which the $\CTL$-formula obtained from $\widetilde{\phi}$ with $t$ has size at most $\min(B_n,B)$
			\item Loop over:
			\begin{itemize}
				\item all starting states $q$ of all positive structures and run the $\msf{NL}$-algorithm $\msf{SatKripke}_{\widetilde{\phi}}$ on $t$ and $q$
				\item all negative structures $K \in \mathcal{N}$, guess a starting state $q$ in $K$ and run the $\msf{NL}$-algorithm $\msf{SatKripke}_{\neg \widetilde{\phi}}$ on $t$ and $q$
			\end{itemize}
			Accept if all calls return positive answers
		\end{enumerate}
		If the algorithm does not accept the input, then it rejects it. The first loop is entered a bounded number of times (independent of the input). Since of formula structures in $\widetilde{\CTL}(\Ut,\emptyset,\Bl,n)$ use at most $n$ occurrences of binary operators, the number $k$ of unspecified variables in $\widetilde{\phi}$ is at most $2^n$, thus the second loop is entered at most $|\prop|^{2^n}$ times, which is polynomial in $|\prop|$. Therefore, the algorithm that we have designed above runs in non-deterministic logarithmic space. Furthermore, it decides the problem $\CTL_\msf{Learn}(\Ut,\emptyset,\Bl,n)$ by Lemma~\ref{lem:bound_size_CTL_formula_no_X_with_binary}.
		
	\end{proof}
	
	\paragraph{$\msf{NL}$-hardness}
	To establish $\msf{NL}$-hardness, we are going to exhibit a reduction from the problem of reachability in a graph. We introduce in the definition below that decision problem as a sub-case of a more general decision problem of reachability in a two-player game, that we will use in the next section.
	\begin{definition}[Reachability Game]
		\label{def:reach_decision_problem}
		We denote by $\msf{Reach}$ the following decision problem:
		\begin{itemize}
			\item Input: two propositions $p,\bar{p}$ and a $\{1,2\}$-game structure $G$ on $\{ p,\bar{p} \}$, such that all state in $G$ satisfy exactly one of the two propositions $p$ and $\bar{p}$, and with a single starting state;
			\item Output: yes iff $q \models \fanBr{\{1\}} \lF p$. 
		\end{itemize}
		
		Similarly, we consider the decision problem 1-$\msf{Reach}$, where the game structure taken as input is in fact a $\{1\}$-game structure, i.e. it is a Kripke structure. 
	\end{definition}
	
	\begin{theorem}[\cite{ImmermanReachPComplete},\cite{papadimitriou1994computational}]
		The decision problem $\msf{Reach}$ is $\msf{P}$-complete and the decision problem 1-$\msf{Reach}$ is $\msf{NL}$-complete under logarithmic space reductions. 
	\end{theorem}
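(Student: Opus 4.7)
The plan is to treat both parts as classical results and sketch the standard arguments. For 1-\msf{Reach}, I would show membership in \msf{NL} by the standard guess-and-check: starting from the unique initial state $q_0$, maintain in logarithmic space a counter bounded by $|Q|$ together with a pointer to the current state; at each step, nondeterministically guess a successor (which can be verified by reading the input transition relation), and accept if the currently stored state satisfies $p$. Since $q_0 \models \fanBr{\{1\}} \lF p$ iff there is such a path of length at most $|Q|$, this procedure is correct. For \msf{NL}-hardness, I would give the textbook logspace reduction from an arbitrary language $L \in \msf{NL}$: on input $x$, construct the configuration graph of the machine deciding $L$ on $x$, with the initial configuration as the starting state, labelling the (unique) accepting configuration with $p$ and all others with $\bar{p}$. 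The graph has polynomially many nodes and edges, each node and edge is identifiable by a logarithmic number of bits, and $x \in L$ iff the accepting configuration is reachable.

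For \msf{Reach}, membership in \msf{P} follows from the standard attractor / fixpoint computation for two-player reachability games: iteratively compute $W_0 := \{q \mid q \models p\}$ and $W_{i+1} := W_i \cup \{q \mid \msf{AgSt}(q)=1 \wedge \msf{Succ}(q) \cap W_i \neq \emptyset\} \cup \{q \mid \msf{AgSt}(q)=2 \wedge \msf{Succ}(q) \subseteq W_i\}$. This stabilises in at most $|Q|$ rounds, each round costs polynomial time, and $q_0 \models \fanBr{\{1\}} \lF p$ iff $q_0 \in W_\infty$ by positional determinacy of turn-based reachability games.

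For \msf{P}-hardness I would reduce in logspace from the Monotone Circuit Value Problem, which is known to be \msf{P}-complete. Given a monotone circuit $C$ with AND/OR gates, input bits, and a designated output gate $g_{\msf{out}}$, construct a $\{1,2\}$-turn-based structure whose states are the gates of $C$: each OR gate becomes a state owned by agent $1$ (with successors its two input gates), each AND gate becomes a state owned by agent $2$ (with successors its two input gates), each input gate becomes a self-looping state labelled $p$ if it evaluates to $1$ and $\bar{p}$ if it evaluates to $0$; internal (non-input) gates are labelled $\bar{p}$, and the initial state is $g_{\msf{out}}$. A straightforward induction on the gate depth shows that a state $g$ is winning for agent $1$ (i.e.\ $g \models \fanBr{\{1\}} \lF p$) iff the gate $g$ evaluates to $1$ in $C$, so $C$ outputs $1$ iff the constructed instance is a positive instance of \msf{Reach}. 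The reduction is clearly computable in logspace: we iterate over the gates of $C$ and emit the corresponding states, transitions, and labels one by one.

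The main obstacle is essentially bookkeeping: verifying that the constructed structures meet the exact syntactic shape required by Definition~\ref{def:reach_decision_problem} (single starting state, every state satisfying exactly one of $p,\bar{p}$, and for \msf{Reach} being a $\{1,2\}$-turn-based structure rather than a general CGS). For the circuit reduction one also needs to handle the technicality that AND/OR gates may have fan-in or fan-out differing from two, which is easily resolved by standard fan-in normalisation (replacing a $k$-ary gate with a balanced binary tree of the same operator), a transformation itself doable in logspace.
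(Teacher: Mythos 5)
Your proposal is correct, but it takes a different (more self-contained) route than the paper. The paper's own proof of this theorem is essentially a two-line appeal to the literature: it observes that $\msf{Reach}$ is just the problem of solving two-player turn-based reachability games, cited as $\msf{P}$-complete, and that 1-$\msf{Reach}$ is s-t connectivity in a graph, cited as $\msf{NL}$-complete from Papadimitriou. You instead re-derive the classical facts: $\msf{NL}$ membership of 1-$\msf{Reach}$ by guessing a path of length at most $|Q|$ with a logspace counter, $\msf{NL}$-hardness via the configuration-graph reduction, $\msf{P}$ membership of $\msf{Reach}$ via the attractor fixpoint (correct, including that positional determinacy is not really needed for soundness of the fixpoint itself), and $\msf{P}$-hardness via the standard Monotone Circuit Value reduction with OR gates as Agent-1 states and AND gates as Agent-2 states. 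All of these arguments are sound; the only bookkeeping items beyond what you already flagged are that this paper's structures require every state to have a nonempty successor set (so dead configurations in the configuration graph, and any sink you introduce, need self-loops) and a unique accepting configuration for the $\msf{NL}$-hardness reduction, both of which are standard normalisations computable in logspace. In short, your version buys a self-contained proof of the classical results, while the paper simply delegates them to references; both establish the theorem.
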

	\begin{proof}
		The decision problem $\msf{Reach}$ is equivalent to solving two-player reachability games, which is $\msf{P}$-complete \cite{ImmermanReachPComplete}. The decision problem 1-$\msf{Reach}$ is equivalent to solving reachability in a graph (\textquotedblleft s-t connectivity\textquotedblright{}), which is $\msf{NL}$-complete \cite[Theorem 16.2]{papadimitriou1994computational}.
	\end{proof}
	
	Let us define the reduction that we consider. 
	\begin{definition}
		\label{def:reduction_ctl_2_nl_complete}
		Consider any input $p,\bar{p},K$ of the decision problem 1-$\msf{Reach}$. We let $K_{\bar{p}}$ be a single-state Kripke structure whose only state is labeled by $\{\bar{p}\}$. 

		We define the inputs $\msf{In}^{\CTL,\lF}_{(p,\bar{p},K)} := (\{p,\bar{p}\},\{K\},\{K_{\bar{p}}\},2)$ and  $\msf{In}^{\CTL,\lG}_{(p,\bar{p},K)} := (\{p,\bar{p}\},\{K_{\bar{p}}\},\{K\},2)$ of a $\CTL$ learning problem. 
	\end{definition}
	
	The definition above satisfies the lemma below.
	\begin{lemma}
		\label{lem:ctl_2_reduction}
		Consider some set of unary operators $\Ut \subseteq \{ \lX,\lF,\lG,\neg \}$, $\Bl \subseteq \Op{Bin}{lg}$,  $n \in \N$ and an input $(p,\bar{p},K)$ of the 1-$\msf{Reach}$ decision problem. Let $\msf{H} \in \{\lF,\lG\}$. If $\msf{H} \in \Ut$, then the input $(p,\bar{p},K)$ is a positive instance of the 1-$\msf{Reach}$ decision problem if and only if $\msf{In}^{\CTL,\msf{H}}_{(p,\bar{p},K)}$ is a positive instance of the  $\CTL_\msf{Learn}(\Ut,\emptyset,\Bl,n)$ decision problem. 
	\end{lemma}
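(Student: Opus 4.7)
The plan is to prove the equivalence by exhibiting, for the forward direction, an explicit size-2 separating formula, and, for the backward direction, by running a finite case analysis over all $\CTL$-formulas of size at most $2$, leveraging the extreme rigidity of $K_{\bar{p}}$ (its single self-looping state satisfies $\bar{p}$ and no path reaches $p$).

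For the forward direction, assume that $(p,\bar{p},K)$ is a positive instance of $1$-$\msf{Reach}$, i.e.\ $K \models \exists \lF p$. If $\msf{H} = \lF$, consider the $\CTL$-formula $\exists \lF p$, which has size~$2$, uses only $\lF \in \Ut$, and clearly accepts $K$ while rejecting $K_{\bar{p}}$ since no state of $K_{\bar{p}}$ carries $p$. If $\msf{H} = \lG$, consider instead $\forall \lG \bar{p}$: it accepts $K_{\bar{p}}$ trivially, and since $K$ has a path that eventually reaches a $p$-state (which, by the assumption on $K$, is not a $\bar{p}$-state), it rejects $K$.

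For the backward direction, I would first observe that any $\CTL(\Ut,\emptyset,\Bl,n)$-formula $\phi$ of size at most $2$ has one of the following shapes: a proposition ($p$ or $\bar{p}$), a negated proposition ($\neg p$ or $\neg \bar{p}$ when $\neg \in \Ut$), a formula $\fanBr{A}\; \bullet \; r$ with $\bullet \in \{\lX,\lF,\lG\} \cap \Ut$, $A \in \{\emptyset, \{1\}\}$ and $r \in \{p,\bar{p}\}$, or one of the handful of size-$2$ binary expressions $r \bullet r$ (with $r \in \{p,\bar p\}$), which are all semantically equivalent to $\top$, $\bot$, $r$, or $\neg r$ and thus reduce to the previous cases. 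The proof then enumerates these candidates: for $\msf{H}=\lF$ (separating $\{K\}$ from $\{K_{\bar{p}}\}$), every $\bar{p}$-based path formula is satisfied by $K_{\bar{p}}$ and can be discarded, leaving only $p$, $\neg \bar{p}$ and $\fanBr{A}\; \bullet\; p$-shaped formulas, each of which, when accepting $K$, forces the starting state of $K$ either to satisfy $p$ (hence $K \models \exists \lF p$ by taking $i=1$) or to have a successor/descendant satisfying $p$ (again giving $K \models \exists \lF p$). The dual case $\msf{H}=\lG$ follows an analogous enumeration, where every $p$-based candidate fails to accept $K_{\bar{p}}$, and each remaining candidate (either a $\bar p$-based propositional formula or a $\fanBr{A}\bullet \bar{p}$-formula) that rejects $K$ yields, using the fact that all states of $K$ satisfy exactly one of $p,\bar p$, the existence of a path from $K$'s initial state that reaches a $p$-state.

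The main obstacle is purely bookkeeping: making sure no case is missed, in particular the subtle interaction with $\neg$ and with size-$2$ binary-operator formulas. Both are handled uniformly by the semantic reduction of size-$2$ binary formulas to trivial ones and by the systematic use of the characterization \textquotedblleft $K_{\bar{p}}$ satisfies a formula if and only if its negation does not\textquotedblright, so every negated size-$2$ formula falls back into the same enumeration.
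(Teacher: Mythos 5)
Your proposal is correct and follows essentially the same route as the paper: the forward direction exhibits the size-2 formulas $\exists\lF p$ (resp.\ $\forall\lG\bar{p}$), and the backward direction enumerates the size-2 candidates (propositions, negated propositions, $Q\,\bullet\,r$ formulas, and the degenerate $r\bullet r$ binary cases), using that every state satisfies exactly one of $p,\bar{p}$ to show each surviving candidate forces $K\models\exists\lF p$. Your explicit treatment of the $\lG$ case merely spells out what the paper dismisses as "dual," so no substantive difference.
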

	\begin{proof}
		Let us first consider the case $\msf{H} = \lF$. Assume that $(p,\bar{p},K)$ is a positive instance of 1-$\msf{Reach}$. We let $\varphi := \exists \lF p$. We have $\Size{\varphi} = 2$. Furthermore, $K_{\bar{p}} \not\models \varphi$ and, by assumption, $K \models \varphi$. Hence, $\msf{In}^{\CTL^,\msf{F}}_{(p,\bar{p},K)}$ is a positive instance of the  $\CTL_\msf{Learn}^2(\Ut,\emptyset,\Bl,n)$ decision problem. 
		
		On the other hand, assume that $\msf{In}^{\CTL,\msf{F}}_{(p,\bar{p},K)}$ is a positive instance of the  $\CTL_\msf{Learn}(\Ut,\emptyset,\Bl,n)$ decision problem. Consider a separating formula $\varphi$ of size at most 2 that accepts the Kripke structure $K$ and rejects the Kripke structure $K_{\bar{p}}$. The structures $K$ and $K_{\bar{p}}$ satisfy the following property: for all states $q$, $q \models p$ if and only if $q \models \neg \bar{p}$. Let us show that on these structures, we have $\varphi \implies \exists \lF p$. If $\varphi$ uses a negation, then $\varphi = \neg \bar{p}$ and $\varphi \equiv p \implies \exists \lF p$ on such structures. If $\varphi$ uses a binary operator, then it is equivalent to either $p,\bar{p},\msf{True},\msf{False}$. The only possibility is that $\varphi$ is equivalent to $p$, in which case $\varphi \implies \exists \lF p$. Otherwise, if $\varphi$ does not use a negation, then necessarily it uses the proposition $p$. Therefore, $\varphi \in \{ p,\exists \lF p,\forall \lF p,\exists \lG p,\forall \lG p,\exists \lX p,\forall\lX p\}$. One can then check that, in all these cases, we have $\varphi \implies \exists \lF p$. Therefore, since $K \models \varphi$, we also have $K \models \exists \lF p$. Hence, $(p,\bar{p},K)$ is a positive instance of 1-$\msf{Reach}$.
		
		The case $\msf{H} = \lG$ is dual: in that case, we consider the formula $\varphi := \forall \lG \bar{p}$. 
	\end{proof}
	
	With the two above lemmas, Theorem~\ref{thm:ctl_unary_no_X_P} follows.
	\begin{proof}
		The fact that the decision problem $\CTL_\msf{Learn}(\Ut,\emptyset,\Bl,n)$ is in $\msf{NL}$ comes from Lemma~\ref{lem:ctl_unary_no_X_nl}. The fact that, if $\lF \in \Ut$ or $\lG \in \Ut$, then it is also $\msf{NL}$-hard comes from Lemma~\ref{lem:ctl_2_reduction} and the fact that the reduction given in Definition~\ref{def:reduction_ctl_2_nl_complete} can be computed in logarithmic space.
	\end{proof}

	\subsection{$\ATL$ learning without the operator $\lX$}
	We have seen in the previous section that $\CTL$ learning with the operator $\lX$ is $\msf{NP}$-complete, while it can be solved in non-deterministic logarithmic space 
	if this operator is not allowed anymore. In this section, we study the learning problem for $\ATL$-formulas, that do not use the operator $\lX$, and concurrent game structures, with two or three agents. 
	
	The cases of $\ATL$ learning with two or three agents are different. However, we start by giving some central definitions and establishing central lemmas that will be used in both cases for the $\msf{NP}$-hardness proof. 
	
	\subsubsection{Alternating $\ATL$-formulas and turn-based structures} 	
	
	We have introduced the notion of turn-based structures in Section~\ref{subsubsec:handling_binary_operators}. We will also use them in this subsection. As exemplified in Figure~\ref{fig:example_1_2_3}, whenever we draw turn-based game structures, we will use the following conventions:
	\begin{itemize}
		\item Blue diamond-shaped states are Agent-1 states;
		\item Red rectangle-shaped states are Agent-2 states;
		\item Violet octagon-shaped states are Agent-3 states.
	\end{itemize}
	
	\begin{figure}
		\centering
		\begin{minipage}{0.2\linewidth}
			\includegraphics[scale=1.4]{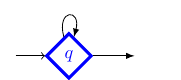}
		\end{minipage}
		\begin{minipage}{0.05\linewidth}
			\phantom{a}
		\end{minipage}
		\begin{minipage}{0.2\linewidth}
			\includegraphics[scale=1.2]{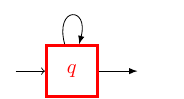}
		\end{minipage}
		\begin{minipage}{0.05\linewidth}
			\phantom{a}
		\end{minipage}
		\begin{minipage}{0.2\linewidth}
			\includegraphics[scale=1.3]{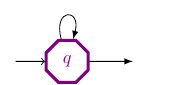}
		\end{minipage}
		\caption{A Agent-1 state on the left, an Agent-2 state in the middle, and an Agent-3 state on the right.}
		\label{fig:example_1_2_3}
	\end{figure}

	Let us now introduce several useful definition on turn-based game structures. First of all, we consider self-looping turn-based structures.
	\begin{definition}
		\label{def:self-looping} Let $\alpha \in \{2,3\}$, $\Ag := [1\,\ldots,\alpha]$ and consider a $(0,\emptyset)$-proper $\Ag$-turn-based structure $T$. This structure is \emph{self-looping} if, for all states $q$ in $T$, we have that $q \in \msf{Succ}(q)$. 
		
		Furthermore, we will often use the states $q^{\msf{win}}$ and $q^{\msf{lose}}$ which will always satisfy the following: $\msf{Succ}(q^{\msf{win}}) := \{q^{\msf{win}}\}$,  $\msf{Succ}(q^{\msf{lose}}) := \{q^{\msf{lose}}\}$, $\pi(q^{\msf{win}}) := \{p\}$, and $\pi(q^{\msf{lose}}) := \{\bar{p}\}$.
	\end{definition}

	Such structures satisfy the lemma below.
	\begin{lemma}
		\label{lem:useless_op}
		Let $k \in \{2,3\}$ and $\Ag := [1\,\ldots,k]$. For all $\ATL^k$-formulas $\phi$, the three $\ATL$-formulas $\phi$, $\fanBr{\emptyset} \lF \phi$, and $\fanBr{\Ag} \lG \phi$ are equivalent on all $\Ag$-turn-based self-looping structures. 
	\end{lemma}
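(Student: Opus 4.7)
The plan is to prove the stronger statement that for every state $q$ of a $(0,\emptyset)$-proper self-looping $\Ag$-turn-based structure $T$, the three truth values $q\models\phi$, $q\models\fanBr{\emptyset}\lF\phi$ and $q\models\fanBr{\Ag}\lG\phi$ coincide; the lemma then follows by specializing to the initial states. The key observation driving the whole argument is that, because $T$ is self-looping, the constant infinite path $q^\omega$ is a valid play from $q$: formally, $q^\omega\in\msf{Out}^Q(q,\emptyset)$ and, by choosing the strategy profile that makes the owner of $q$ (and of every reachable state on this play) pick the self-loop successor, we also have $q^\omega\in\msf{Out}^Q(q,s)$ for that strategy profile $s\in\msf{S}_\Ag$.

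I would first show the equivalence between $\phi$ and $\fanBr{\emptyset}\lF\phi$. For the forward direction, assume $q\models\phi$ and pick any $\rho\in\msf{Out}^Q(q,\emptyset)$; since $\rho[1]=q\models\phi$, the witness $i=1$ gives $\rho\models\lF\phi$, hence $q\models\fanBr{\emptyset}\lF\phi$. For the converse, assume $q\models\fanBr{\emptyset}\lF\phi$; the constant path $q^\omega$ lies in $\msf{Out}^Q(q,\emptyset)$ by self-looping, so there exists $i\in\N_1$ with $q^\omega[i:]\models\phi$, and since $q^\omega[i:]=q^\omega$ starts at $q$, an easy induction (or appeal to Lemma~\ref{lem:distinguish_ATL} restricted to a single state) shows $q\models\phi$.

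Next I would prove the equivalence between $\phi$ and $\fanBr{\Ag}\lG\phi$ in a dual manner. If $q\models\phi$, define a strategy profile $s\in\msf{S}_\Ag$ that, at every history ending in $q$, prescribes the action leading back to $q$ (which exists by self-looping); on any other history, let $s$ play arbitrarily. The unique outcome in $\msf{Out}^Q(q,s)$ is the constant $q^\omega$, which satisfies $\lG\phi$ because every suffix $q^\omega[i:]=q^\omega$ begins at $q$ and therefore satisfies $\phi$; hence $q\models\fanBr{\Ag}\lG\phi$. For the converse, suppose $q\models\fanBr{\Ag}\lG\phi$ and let $s\in\msf{S}_\Ag$ be a witnessing strategy profile; the singleton $\msf{Out}^Q(q,s)$ contains some path $\rho$ with $\rho\models\lG\phi$, so in particular $\rho[1:]=\rho\models\phi$, and using $\rho[1]=q$ we conclude $q\models\phi$.

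I do not anticipate any real obstacle: the only subtlety is being careful that the constant play $q^\omega$ really belongs to $\msf{Out}^Q(q,\cdot)$ in the two relevant senses (empty coalition and full coalition), which is immediate from the self-looping hypothesis and the definition of outcomes in a turn-based game. The rest is a direct unfolding of the $\ATL$ semantics.
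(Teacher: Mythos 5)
Your proposal is correct and follows essentially the same route as the paper: both rely on the generic implications $\fanBr{\Ag}\lG\phi \implies \phi \implies \fanBr{\emptyset}\lF\phi$ (since any play from $q$ starts at $q$) and use the self-loop at $q$ to obtain the constant play $q^\omega$, which closes the remaining implications; you merely spell this out as two explicit bi-implications where the paper argues via a cycle of implications. The only cosmetic quibble is that the converse for $\fanBr{\Ag}\lG$ needs only non-emptiness of $\msf{Out}^Q(q,s)$, not that it is a singleton, and the appeal to Lemma~\ref{lem:distinguish_ATL} is unnecessary.
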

	\begin{proof}
		By definition of the operators $\lF$ and $\lG$, we have $\fanBr{\Ag}\lG \phi \implies \phi \implies \fanBr{\emptyset} \lF \phi$. Furthermore, in any $\Ag$-turn-based structure, for any state $q$ satisfying $q \in \msf{Succ}(q)$, there is a strategy $s$ for the coalition of agents $\Ag$ such that $\msf{Out}^Q(q,s) = \{ q^\omega \}$. It follows that $q \models \emptyset \lF \phi$ implies $q \models \phi$, which itself implies $q \models 1,2 \lG \phi$.
	\end{proof}
	
	Let us now introduce below another notion on turn-based structures: (alternating) winning path. In a subsequent definition, we will also introduce the notion of alternating formulas and relate them with alternating winning paths.
	\begin{definition}
		\label{def:winning_path}
		Let $\alpha \in \{2,3\}$, $\Ag := [1\,\ldots,\alpha]$ and consider a $(0,\emptyset)$-proper self-looping $\Ag$-turn-based structure $T$. Consider a state $q \in Q$. A \emph{winning path} in $T$ from $q$ is a sequence of states $\rho \in Q^{n+1}$, for some $n \in \N$, such that: $\rho[1] = q$, $\pi(\rho[n+1]) = \{p\}$, and for all $1 \leq i \leq n$, we have $\pi(\rho[n+1]) = \{\bar{p}\}$ and $\rho_{i+1} \in \msf{Succ}(\rho_i)$. We let $\msf{WinPath}_T(q)$ denote the set of all winning paths from $q$. 
		
		The winning path $\rho \in \msf{WinPath}_T(q)$ is \emph{safe} if: for all $1 \leq i \leq n$, we have $\msf{Succ}(\rho[i]) = \{\rho[i],\rho[i+1]\}$ and $\msf{Succ}(\rho[n+1]) = \{\rho[n+1]\}$. Note that, in that case, we have $|\msf{WinPath}_T(q)| = 1$. 
		
		For any two coalitions $A,A' \subseteq \Ag$ such that $A \cap A' = \emptyset$, we say that the winning path $\rho \in \msf{WinPath}_T(q)$ is \emph{$(A,A',n)$-alternating} if: for all $1 \leq i \leq n$, we have $\msf{AgSt}(\rho[i]) \in A \cup A'$ and for all $1 \leq i \leq n-1$, we have $\msf{AgSt}(\rho[i]) \in A$ if and only if $\msf{AgSt}(\rho[i+1]) \in A'$. 
	\end{definition} 

	Let us now define the notion of alternating $\ATL$-formulas.
	\begin{definition}
		\label{def:alternating_formulas}
		Let $\alpha \in \{2,3\}$ and $\Ag := [1\,\ldots,\alpha]$. Consider two coalitions $A,A' \subseteq \Ag$ such that $A \cap A' = \emptyset$ (the union is disjoint). For all $\msf{O} \subseteq \{\lF,\lG\}$, we let $\msf{Op}_{A,A'}(\msf{O})$ denote the set $\msf{Op}_{A,A'}(\msf{O}) := \{ \fanBr{B} \msf{H} \mid B \cap A' = \emptyset \text{ or } B \cap A = \emptyset,\; \msf{H} \in \msf{O} \}$. Then, $\msf{Op}_{A,A'}(\msf{O})$-formulas refer to the set of $\ATL(\{p\},\msf{O},\emptyset,\emptyset,0)$-formulas $\phi$ that only use operators in $\msf{Op}_{A,A'}(\msf{O})$. 
		
		Then, we say that an $\msf{Op}_{A,A'}(\msf{O})$-formula $\phi$ is $(A,A',n)$-alternating if (recall Notation~\ref{nota:atl_formulas}):
		\begin{equation*}
			\phi \in \msf{Op}_{A,A'}(\msf{O})^* \cdot \fanBr{B_1} \lF \cdot \; \msf{Op}_{A,A'}(\msf{O})^* 
			\cdots \; \msf{Op}_{A,A'}(\msf{O})^* \cdot \fanBr{B_n} \lF \cdot \; \msf{Op}_{A,A'}(\msf{O})^* \cdot p
		\end{equation*}
		where, for all odd $i \leq n$, we have $A' \cap B_i \neq \emptyset$, and for all even $i \leq n$, we have $A \cap B_i \neq \emptyset$. 
	\end{definition}

	We state below a lemma relating alternating formulas and turn-based structures with alternating winning paths.
	\begin{lemma}
		\label{lem:ATL_formula_necessary_sufficient}
		Let $k \in \{2,3\}$ and $\Ag := [1\,\ldots,k]$. Consider a self-looping structure $T$ and let $n \in \N$. We have, for all states $q \in Q$, and two non-empty coalitions $A,A' \subseteq \Ag$ such that $A \cap A' = \emptyset$:
		\begin{itemize}
			\item[a)] If all winning paths $\rho \in \msf{WinPath}_T(q)$ are $(A,A',n)$-alternating, then, all $\msf{Op}_{A,A'}(\{\lF,\lG\})$-formulas $\phi$ that accept the state $q$ are $(A,A',n)$-alternating formulas.
			\item[b)] If $|A| = |A'| = 1$ and there is a safe winning path from $q$ that is $(A,A',n)$-alternating, then any $\msf{Op}_{A,A'}(\{\lF,\lG\})$-formula that is $(A,A',n)$-alternating accepts the state $q$.
		\end{itemize}
	\end{lemma}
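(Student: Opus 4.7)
The plan is to prove both parts simultaneously by induction on $n$, exploiting the symmetry that peeling off the leading alternating $\fanBr{B_1}\lF$ from an $(A,A',n)$-alternating formula yields an $(A',A,n-1)$-alternating formula, and that deleting the first state of an $(A,A',n)$-alternating winning path yields an $(A',A,n-1)$-alternating path. The base case $n=0$ is mostly vacuous: in (a) the conclusion holds trivially, and in (b) the safe winning path has length one so $q=q^{\msf{win}}$, in which case a routine secondary induction on $\phi$ (using Lemma~\ref{lem:useless_op} and the self-loop at $q^{\msf{win}}$) shows every $\msf{Op}_{A,A'}(\{\lF,\lG\})$-formula is accepted at $q^{\msf{win}}$.

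For the inductive step of (a) with $n\ge 1$, the hypothesis forces $q\ne q^{\msf{win}}$ and $\msf{AgSt}(q)\in A'$, and I would then do a secondary induction on the structure of $\phi$. The case $\phi=p$ is vacuous; for $\phi=\fanBr{B}\lG\phi'$ the implication $\fanBr{B}\lG\phi' \Rightarrow \phi'$ (evaluated at position~1 of every compatible path) gives $q\models\phi'$ and the secondary IH concludes. The crucial case is $\phi=\fanBr{B}\lF\phi'$. If $\msf{AgSt}(q)\notin B$, the opponent can loop at $q$ forever, which forces $q\models\phi'$ and the secondary IH concludes. Otherwise $B$ contains the owner of $q$, so $B\cap A'\ne\emptyset$ and $B\cap A=\emptyset$. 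I take $B$'s winning strategy $\sigma$ for $\lF\phi'$ and play it against the opponent strategy ``stay whenever possible''; the earliest state $q''$ on the induced play at which $\phi'$ holds is (in the non-trivial case $q\not\models\phi'$) necessarily a one-step successor of $q$ reached by $\sigma$. The auxiliary fact that acceptance of an $\msf{Op}_{A,A'}(\{\lF,\lG\})$-formula always entails the existence of a winning path from the accepting state (by a routine induction on the formula) places $q''$ on a winning path from $q$, making $q''$ an $A$-owned state. Every winning path from $q''$ lifts (by prepending $q$) to a winning path from $q$ and so is $(A',A,n-1)$-alternating; the primary IH with $A$ and $A'$ swapped then yields $\phi'$ as $(A',A,n-1)$-alternating, so $\phi$ is $(A,A',n)$-alternating.

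For the inductive step of (b), the assumption $|A|=|A'|=1$ means every coalition in $\msf{Op}_{A,A'}$ either fully controls or fully cedes each $A$- or $A'$-owned state. Safety of the winning path $\rho=q_1\cdots q_{n+1}$ gives each $q_j$ only its self-loop and one forward edge, so any operator $\fanBr{B}\msf{H}$ whose coalition does not own the current state reduces to the identity there (the opponent simply stays). The leading alternating operator $\fanBr{B_1}\lF$, for which $a'\in B_1$, lets $B_1$ steer from $q_1$ to $q_2$; by the primary IH with swapped roles, the suffix of $\phi$ (which, after peeling, is $(A',A,n-1)$-alternating and admits the safe $(A',A,n-1)$-alternating winning path $q_2\cdots q_{n+1}$) accepts $q_2$; the leading $\msf{Qt}_0$ operators cause no harm by the identity reduction above.

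The main obstacle is the central case of (a) when $B$ strictly controls $q$: one must select a \emph{one-step} successor $q''$ of $q$ with all needed properties. The delicate ingredients are the ``opponent stays whenever possible'' device, which ensures $\phi'$ is witnessed as early as possible along $B$'s play, together with the auxiliary lemma that acceptance of an $\msf{Op}_{A,A'}(\{\lF,\lG\})$-formula yields an actual winning path, which is what pins $q''$ on a winning path from $q$ and forces $q''$ to be $A$-owned.
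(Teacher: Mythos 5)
Your part (a) follows the paper's proof essentially step for step: the auxiliary observation that acceptance of a negation-free $\msf{Op}_{A,A'}(\{\lF,\lG\})$-formula forces the existence of a winning path, the simultaneous induction on $n$ with a secondary structural induction, the park-the-opponent device showing that the coalition of the outermost effective $\lF$ must contain the owner of $q$ and that a one-step successor of $q$ satisfies the stripped formula, and then the role swap with decremented parameter before invoking the primary induction hypothesis. (Your choice of which of $A,A'$ owns $q$ is the mirror image of the one used in the paper's proof; this only reflects a tension between the paper's definition of alternating formulas and its later use, and your version is internally coherent.)

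Part (b), however, has a genuine gap. The claim that ``any operator $\fanBr{B}\msf{H}$ whose coalition does not own the current state reduces to the identity there (the opponent simply stays)'' is false for $\msf{H}=\lG$: the semantics of $\fanBr{B}\lG\chi$ quantifies over \emph{all} behaviours of the agents outside $B$, not only the staying one, so when $B$ does not own $q_1$ (e.g.\ $B=\emptyset$ or $B=A$ at the $A'$-owned state $q_1$) the truth of $\chi$ at $q_1$ does not yield $\fanBr{B}\lG\chi$ at $q_1$ --- the opponent may advance along the path to a state where $\chi$ has not been shown to hold. This failing direction is exactly the one you need in order to absorb the leading $\msf{Qt}_0$ block of an alternating formula, and such $\lG$-operators are legitimately allowed there. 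The paper closes this with a strengthening you omit: it first proves that the peeled formula $\psi=\fanBr{B_1}\lF\,\phi'$ holds at \emph{every} state $\rho[i]$ of the safe path (each suffix of the path is itself a safe alternating winning path with swapped orientation and smaller parameter, and $\psi$ remains alternating for those smaller parameters, so the induction hypothesis applies at each $\rho[i]$); since safety makes the path states exactly the set of states reachable from $q$, every prefix operator --- in particular any $\fanBr{B}\lG$ over any coalition --- then preserves truth at $q$. Your proof needs this step (or an equivalent one); as written, the treatment of the $\lG$-operators in the leading block does not go through.
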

	\begin{proof}
		Let us first argue the following: if an $\msf{Op}_{A,A'}(\{\lF,\lG\})$-formula $\phi$ accepts a state reachable from some state $q' \in Q$, then $\msf{WinPath}_T(q') \neq \emptyset$. Indeed, if $\msf{WinPath}_T(q') = \emptyset$, then all states $q''$ reachable from $q'$ are such that $\pi(q'') = \{\bar{p}\}$. Therefore, the $\msf{Op}_{A,A'}(\{\lF,\lG\})$-formula $\phi$ --- that does not use negations and that is such that $\prop(\phi) = p$ --- does not accept any state reachable from $q'$.
		
		Now, we prove both items of the lemma by induction on $n \in \N$
		. Item a) holds when $n = 0$ since all $\msf{Op}_{A,A'}(\{\lF,\lG\})$-formulas $\phi$ are $(A,A',0)$-alternating. Furthermore, if there is a safe winning path from $q$ that is $(A,A',n)$-alternating, it means that $\pi(q) = \{p\}$ and $\msf{Succ}(q) = \{q\}$. Thus, since any $\msf{Op}_{A,A'}(\{\lF,\lG\})$-formula $\phi$ is such that $\prop(\phi) = \{p\}$ and does not use any negations, Item b) follows.
		
		Assume now that Items a) and b) hold for some $n \in \N$. Let us first consider Item a). Assume that all winning paths $\rho \in \msf{WinPath}_T(q)$ are $(A,A',n+1)$-alternating. Let us assume that $\msf{WinPath}_T(q) \neq \emptyset$, otherwise no $\msf{Op}_{A,A'}(\{\lF,\lG\})$-formula accepts $q$. In particular, it must be that $\msf{AgSt}(q) \in A$ and $\pi(q) = \{\bar{p}\}$. Let us show by induction on  $\msf{Op}_{A,A'}(\{\lF,\lG\})$-formulas $\phi$ the following property $\mathcal{P}(\phi)$:  if $q \models \phi$, then there is a state $q' \in \msf{Succ}(q) \setminus \{q\}$ and some $\ATL$-formula $\phi'$ such that $q' \models \phi'$ and $\phi \in \msf{Op}_{A,A'}(\{\lF,\lG\})^* \cdot \fanBr{B} \lF \cdot \phi'$ such that $A \cap B \neq \emptyset$. The property $\mathcal{P}(p)$ holds since $q \not\models p$. Assume now that $\mathcal{P}(\phi)$ holds for some $\msf{Op}_{A,A'}(\{\lF,\lG\})$-formula $\phi$. Let $\psi := \msf{O} \cdot \phi$ for some $\msf{O} \in \msf{Op}_{A,A'}(\{\lF,\lG\})$. Assume that $q \models \psi$. Then, there are two cases:
		\begin{itemize}
			\item Assume that $\msf{O} = \fanBr{B} \lF$ for some coalition $B \subseteq \Ag$. If $q \models \phi$, we can deduce $\mathcal{P}(\psi)$ from $\mathcal{P}(\phi)$. Assume now that it is not the case, i.e. $q \not\models \phi$. If $\msf{AgSt}(q) \notin B$, for all strategies $s_B$ of the coalition $B$, we have $q^\omega \in \msf{Out}^Q(q,s_B)$. This is not possible since $q \models \psi$ and $q \not \models \phi$. In fact, $\msf{AgSt}(q) \in B$. Since $\msf{AgSt}(q) \in A$, this implies $A \cap B \neq \emptyset$. Now, since $q \models \psi$, there is a strategy $s_B$ for the coalition $B$ such that, for all $\rho \in \msf{Out}^Q(q,s_B)$, we have $\rho \models \lF \phi$. Consider some $\rho \in \msf{Out}^Q(q,s_B)$ and let $i \in \N_1$ be the least index such that $\rho[i] \models \phi$. We have $\rho[i] \in \msf{Succ}(q) \setminus \{q\}$. Since there is some $j \geq i$ such that $\rho[j] \models \phi$ (and $\rho[j]$ is reachable from $\rho[i]$), then we have $\msf{WinPath}_T(\rho[i]) \neq \emptyset$. 
			Therefore, since all winning paths from $q$ are $(A,A',n+1)$-alternating, it follows that $\msf{AgSt}(\rho[i]) \in A'$. Since $A \cap B \neq \emptyset$, we have $A' \cap B = \emptyset$ (since $\fanBr{B} \lF \in \msf{Op}_{A,A'}(\{\lF,\lG\})$), and thus $\msf{AgSt}(\rho[i]) \notin B$. Furthermore, since $\rho \in \msf{Out}^Q(q,s_B)$, it follows that the strategy $s_B$ chooses to go to $\rho[i]$ after looping $i-1$ times on $q$. Thus, since $\rho[i] \in \msf{Succ}(\rho[i])$, we have $q^{i-1} \cdot (\rho[i])^\omega \in \msf{Out}^Q(q,s_B)$ and $q^{i-1} \cdot (\rho[i])^\omega \models \lF \phi$. Hence, since $q \not \models \phi$, it follows that $\rho[i] \models \phi$. That is, the property $\mathcal{P}(\psi)$ holds.
			\item Assume that $\msf{O} = \fanBr{B} \lG$ for some coalition $B \subseteq \Ag$. Then, $\psi \implies \phi$, thus $q \models \phi$ and we can the deduce $\mathcal{P}(\psi)$ from $\mathcal{P}(\phi)$. 
		\end{itemize}
		We deduce that $\mathcal{P}(\phi)$ holds for all $\msf{Op}_{A,A'}(\{\lF,\lG\})$-formulas $\phi$. Therefore, for all $\msf{Op}_{A,A'}(\{\lF,\lG\})$-formulas $\phi$ such that $q \models \phi$, we have that there is a state $q' \in \msf{Succ}(q) \setminus \{q\}$ and some $\ATL$-formula $\phi'$ such that $q' \models \phi'$ and $\phi \in \msf{Op}_{A,A'}(\{\lF,\lG\})^* \cdot \fanBr{B} \lF \cdot \phi'$ such that $A \cap B \neq \emptyset$. Since all the winning paths from $q$ are $(A,A',n)$-alternating, it follows that all the winning paths from $q'$ are $(A',A,n)$-alternating. Thus, by our induction hypothesis, we have that $\phi'$ is an $(A',A,n)$-alternating formula, and therefore $\phi$ is an $(A,A',n+1)$-alternating formula. Hence, Item a) also holds at index $n+1$. 

		Consider now Item b). Assume that $|A| = |A'| = 1$ and that there is a safe winning path from $q$ that is $(A,A',n+1)$-alternating. We let $\rho \in \msf{WinPath}_T(q)$. Note that $\rho[1] = q$ with $\msf{AgSt}(q) \in A$. Consider any $\ATL$-formula $\phi$ that is $(A',A,n)$-alternating. Let $\psi := \fanBr{B} \lF \cdot \phi$ for some coalition $B$ such that $A \cap B \neq \emptyset$. Since $|A| = 1$, this implies $\msf{AgSt}(q) \in B$. Let us show that $q \models \psi$. The coalition of agents $B$ has a strategy $s_B$ to ensure that, for all $\rho' \in \msf{Out}^Q(q,s_B)$, we have $\rho'[2] = \rho[2]$. Therefore, since the winning path $\rho[2:]$ is safe and is $(A',A,n)$-alternating, it follows by our induction hypothesis that $\rho[2] \models \phi$. Hence, $q = \rho[1] \models \psi$. Furthermore, for all $2 \leq i \leq n+2$, there is a safe winning path from $\rho[i]$ that is $(A,A',n+2-i)$-alternating or $(A',A,n+2-i)$-alternating. Hence, since $\psi$ is both $(A,A',n+2-i)$-alternating and $(A',A,n+2-i)$-alternating, by our induction hypothesis, we have $\rho[i] \models \psi$. Furthermore, since the winning path $\rho \in \msf{WinPath}_T(q)$ is safe, it follows that the set of states reachable from $q$ in $T$ is equal to $\{\rho[i] \mid 1 \leq i \leq n+2\}$. Hence, from all states $q'$ reachable from $q$, we have $q' \models \psi$. We can then deduce that, for all $\msf{O} \in \msf{Op}_{A,A'}(\{\lF,\lG\})^*$, we have $q \models \msf{O} \cdot \psi$. Therefore, Item b) also holds at index $n+1$. The lemma follows.
	\end{proof}
	
	We conclude this section with a definition of alternating turn-based structures that we will use in the both proof of $\msf{NP}$-hardness for $\ATL$ learning. We give the formal definition below, it is illustrated on Figures~\ref{fig:two_pl_alt_2} and~\ref{fig:two_pl_alt_1}.
	\begin{definition}
		\label{def:turn_based_q_alt}
		Let $i \neq j \in \{1,2,3\}$. We let: 
		\begin{itemize}
			\item 
			For all $l \in \N_1$, $Q^{l: \; i,j} := \{ q_{h}^{i,j} \mid 1 \leq h \leq l \} \cup \{q^{\msf{win}}\}$;
			\item For all $h \in \N_1$, $\msf{AgSt}(q^{i,j}_{2h-1}) := j$ and $\msf{AgSt}(q^{i,j}_{2h}) := i$;
			\item For all $h \in \N_1$, we have:
			\begin{align*}
				\msf{Succ}(q_{h}^{i,j}) := \begin{cases}
					\{ q^{i,j}_{h},q^{i,j}_{h-1} \} & \text{ if }h > 1 \\
					\{ q^{i,j}_{1},q^{\msf{win}} \} & \text{ if }h = 1 \\
				\end{cases}
			\end{align*}
		\end{itemize}
		Then, for all $l \in \N_1$, we define the turn-based structure $T^{l: \; i,j} = \langle Q^{l: \; i,j},I_{i,j,l},2,\{p\},\pi,\msf{AgSt},\msf{Succ} \rangle$ where $I_{i,j,l} := \{ q_{l}^{i,j} \}$.
	\end{definition} 
	
	\begin{figure}
		\centering
		\includegraphics[scale=0.8]{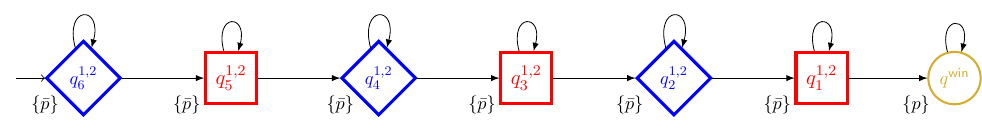}
		\caption{The turn based game structure $T^{6: \; 1,2}$.}
		\label{fig:two_pl_alt_2}
	\end{figure}
	
	\begin{figure}
		\centering
		\includegraphics[scale=0.8]{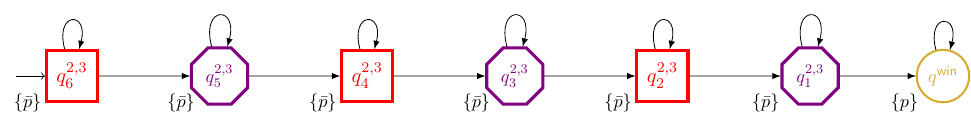}
		\caption{The turn based game structure $T^{6: \; 2,3}$.}
		\label{fig:two_pl_alt_1}
	\end{figure}
	
	\subsubsection{$\ATL$ learning with two agents and operators $\lF$ and $\lG$}

	In this section, we focus on the case of $\ATL$ learning with two agents (Agent 1 and Agent 2). The goal of this subsection is to show the theorem below. 
	
	\begin{theorem}
		\label{thm:atl_2_NP_complete}
		Consider a set $\Ut \subseteq \Op{Un}{}$ of unary temporal operators such that $\{\lF,\lG\} \subseteq \Ut \subseteq \{\lF,\lG,\neg\}$. Then, for all sets $\Bl \subseteq \Op{Bin}{lg}$ and $n \in \N$, the decision problem $\ATL^2_\msf{Learn}(\Ut,\emptyset,\Bl,n)$ is $\msf{NP}$-complete.
	\end{theorem}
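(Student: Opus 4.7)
The plan is to verify membership in $\msf{NP}$ directly from Proposition \ref{prop:easy}, and to establish $\msf{NP}$-hardness by invoking Theorem \ref{thm:unary_is_sufficient} for $k = 2$ with $\Ut \supseteq \{\lF,\lG\}$. Thus it suffices to produce a logspace-computable function $f_2$ taking an instance $(l,C,k')$ of the hitting set problem $\msf{Hit}$ and returning an instance of $\ATL^2_\msf{Learn}(\{\lF,\lG\},\emptyset,\emptyset,0)$ over the proposition set $\prop_0 = \{p,\bar p\}$ using $(0,\emptyset)$-proper $\{1,2\}$-turn-based structures, such that the three equivalences listed in the theorem statement hold. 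I will follow the abstract recipe of Section \ref{subsubsec:abstract_recipe}, steps (\ref{stepb})--(\ref{stepd}).

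For step (\ref{stepb}), the idea is to force every separating $\ATL$-formula (using only unary operators in $\Ut \cup \{\lF,\lG\}$) to be an $(\{1\},\{2\},l)$-alternating $\msf{Op}_{\{1\},\{2\}}(\{\lF,\lG\})$-formula of length exactly $l$, in the sense of Definition \ref{def:alternating_formulas}. I will use a turn-based structure built along the pattern of $T^{l:\,1,2}$ from Definition \ref{def:turn_based_q_alt}, placed as a positive example against suitable negative variants of shorter alternating depth, so that Lemma \ref{lem:ATL_formula_necessary_sufficient}(a) forces any separating formula to be $(\{1\},\{2\},l)$-alternating: this identifies a separating formula with the choice of a subset $H \subseteq [1,\ldots,l]$, where $i \in H$ iff the $i$-th strategic quantifier is $\fanBr{\{1\}}$. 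Negations are eliminated in the ``formula-extraction'' direction by Lemma \ref{lem:unnegate_unary} and Proposition \ref{prop:equiv_negation_ATL_turn_based}, replacing each $\neg \fanBr{A} \msf{H}$ by $\fanBr{\{1,2\}\setminus A}\,\overline{\msf{H}}$, which preserves membership in $\msf{Op}_{\{1\},\{2\}}(\{\lF,\lG\})$ and does not increase size.

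For step (\ref{stepc}), I will construct a single negative turn-based structure $T^l_{>k'}$ that an $(\{1\},\{2\},l)$-alternating formula $\phi_H$ accepts iff $|H| \geq k'+1$. The rough idea mirrors the CTL structure $K^l_{\exists > k}$: arrange a tree of depth $l$ with $k'+1$ levels indexed by ``how often Agent 1 has been able to force a move''; Agent-2 states only allow looping or progressing, while Agent-1 states are necessary at the exact positions where the formula uses the coalition $\{1\}$. Counting these positions via Lemma \ref{lem:ATL_formula_necessary_sufficient}(a) gives the desired threshold. For step (\ref{stepd}), for each $C_i$, I construct a positive structure $T_{(l,C_i)}$, adapted from $K_{(l,C_i)}$, as a safe alternating winning path of length $l$ from the initial state to $q^{\msf{win}}$, with the extra feature that at every position $j \in C_i$ the state is owned by the agent currently in control and admits a direct edge to $q^{\msf{win}}$; all other positions are safe alternating. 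Lemma \ref{lem:ATL_formula_necessary_sufficient}(b) then shows that $\phi_H$ accepts $T_{(l,C_i)}$ iff $H \cap C_i \neq \emptyset$. Setting the bound $B := 2l-1$ (one quantifier block plus the proposition $p$) and collecting all positive and negative structures gives the reduction.

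The main obstacle will be step (\ref{stepc}): without the $\lX$ operator, bounding the number of occurrences of $\fanBr{\{1\}}\lF$ in an alternating formula requires a genuinely game-theoretic gadget, because $\fanBr{\{1\}}\lF$ can ``skip'' arbitrarily many loop self-transitions. Getting the counting argument to line up exactly with $|H|$, while simultaneously being robust against the extra operators $\fanBr{\emptyset}\lF$ and $\fanBr{\{1,2\}}\lG$ (which Lemma \ref{lem:useless_op} tells us are semantically trivial on self-looping structures) and against the insertion of useless outer operators, is the technical heart of the proof. The equivalence between the $\Ut \setminus \{\neg\}$ and $\Ut \cup \{\lF,\lG\}$ versions required by Theorem \ref{thm:unary_is_sufficient} is then obtained by noting that the hitting set $H$ extracted from any separating formula in the larger fragment still yields, via the canonical $(\{1\},\{2\},l)$-alternating $\phi_H$, a separating formula in $\{\lF,\lG\}$ without negation, closing the reduction.
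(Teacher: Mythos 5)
There is a genuine gap, and it sits in your step (\ref{stepb}) rather than in step (\ref{stepc}), which you flag as the technical heart. You propose to force every separating formula to be an $(\{1\},\{2\},l)$-alternating $\msf{Op}_{\{1\},\{2\}}(\{\lF,\lG\})$-formula and then to encode the hitting set by ``$i \in H$ iff the $i$-th strategic quantifier is $\fanBr{\{1\}}$''. But with only two agents this encoding has no degrees of freedom: in an $(\{1\},\{2\},l)$-alternating formula the coalitions of the $\lF$-blocks are forced to alternate (odd blocks must intersect $\{2\}$, even blocks must intersect $\{1\}$, and $\msf{Op}_{\{1\},\{2\}}$ forbids coalitions containing both agents), so the sequence of quantifiers on the $\lF$-skeleton is completely determined by the alternation you yourself impose via $T^{l:\,1,2}$ and Lemma~\ref{lem:ATL_formula_necessary_sufficient}(a). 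The freedom you are implicitly relying on (choosing, at position $i$, between two different coalitions containing Agent 2) only exists with a third agent --- it is exactly the $\fanBr{\{2\}}$ versus $\fanBr{\{2,3\}}$ choice used in the paper's $\ATL^3$ reduction. Consequently your step (\ref{stepc}) gadget, which is supposed to count occurrences of $\fanBr{\{1\}}\lF$, would be counting a quantity that cannot vary, and your bound $B := 2l-1$ does not match any meaningful formula class (an alternating formula with $l$ blocks has size $l+1$).

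The paper's two-agent reduction encodes $H$ differently: the forced skeleton is $(\{1\},\{2\},2l)$-alternating, i.e.\ $l$ blocks of the form $1\lF\,\msf{Qt}_i\,2\lF$, and $i \in H$ iff the optional operator $\msf{Qt}_i = 1\lG$ is \emph{absent}. The size bound $B = 3l+1-k$ leaves room for exactly $l-k$ insertions of $1\lG$, and a dedicated negative structure $T_{\msf{no}\;1\lG\geq k+1}$ (Lemma~\ref{lem:ATL_reject_G_concise}) is what caps the number of ``bare'' $1\lF\,2\lF$ blocks at $k$, so that separating formulas are exactly the $(l,k)$-concise formulas $\phi^{\ATL(2)}(l,H)$ with $|H|=k$. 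You would also need the additional gadgets that confine the operator set in the first place: your plan handles $\fanBr{\emptyset}\lF$ and $\fanBr{\{1,2\}}\lG$ via Lemma~\ref{lem:useless_op}, but says nothing about $\fanBr{\emptyset}\lG$, $\fanBr{\{2\}}\lG$ and $\fanBr{\{1,2\}}\lF$, which the paper eliminates with the positive structures $T_p$, $T_{\msf{no}\;\emptyset\lG,2\lG}$ and the negative structure $T^{2l+1:\,2,1}$; without this, the hypothesis of Lemma~\ref{lem:ATL_formula_necessary_sufficient} (that the formula is an $\msf{Op}_{\{1\},\{2\}}(\{\lF,\lG\})$-formula) is not available. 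As it stands, the proposed reduction cannot extract a hitting set of size at most $k'$ from a separating formula, so the hardness argument does not go through.
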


	\paragraph{Overview of the reduction.}
	As for the $\CTL$ reduction, we follow the steps described in Section~\ref{subsubsec:abstract_recipe}, with Step~\ref{stepa} already taken care of in Section~\ref{subsubsec:handling_binary_operators}. Thus, we focus on $\ATL^2$-formulas using only unary operators (and a single proposition). First, we define turn-based structures ensuring that: the proposition used is $p$, and the operators $\fanBr{\{1,2\}} \lF,\fanBr{\emptyset} \lG,\fanBr{\{2\}} \lG$ are not used. Since all the structures that we use are self-looping, Lemma~\ref{lem:useless_op} gives that the operators $\fanBr{\{1,2\}} \lG,\fanBr{\emptyset} \lF$ are useless. Hence, we can focus on formulas using only the operators $\fanBr{\{1\}} \lF,\fanBr{\{2\}} \lF,\fanBr{\{1\}} \lG$ and the proposition $p$, which are $\msf{Op}_{\{1\},\{2\}}(\{\lF,\lG\})$-formulas. 
	
	Fix an instance $(l,C,k)$ of the hitting set problem. We consider the bound $B := 3l+1-k$. Our idea is to focus on $(\{1\},\{2\},2l)$-alternating formulas. To do so, we consider $T^{2l: \; 1,2}$ as positive structure and use Lemma~\ref{lem:ATL_formula_necessary_sufficient} (Item a). Note that these $(\{1\},\{2\},2l)$-alternating formulas feature at least $l$ occurrences of the operator $\fanBr{\{1\}} \lF$ and $l$ occurrences of the operator $\fanBr{\{2\}} \lF$. With the proposition occurring in the formula of size at most $B$, there remains $l-k$ operators to use. In fact, we define a negative structure $T_{\msf{no} \; \fanBr{\{1\}} \lG\geq k+1}$ (see Figure~\ref{fig:example_turn_based_more_than_k}) that is accepted by any $\msf{Op}_{\{1\},\{2\}}(\{\lF,\lG\})$-formula featuring at least $k+1$ sequences $\fanBr{\{1\}} \lF \cdot \fanBr{\{2\}} \lF$ where the operators $\fanBr{\{1\}} \lF$ and $\fanBr{\{2\}} \lF$ are not separated by an operator $\fanBr{\{1\}} \lG$. That way, we ensure that the remaining $l-k$ operators are  $\fanBr{\{1\}} \lG$ operators separating $\fanBr{\{1\}} \lF$ and  $\fanBr{\{2\}} \lF$. Hence, 
	the formulas $\phi^{\ATL(2)}(l,H)$ that we consider are $(\{1\},\{2\},2l)$-alternating, features exactly $l-k$ operators $\fanBr{\{1\}} \lG$. The exact positions of these operators are given by the subset $H \subseteq [1,\ldots,l]$. Note that, for this reduction, we use the fact that if there is a hitting set of size at most $k$, then there is one of size exactly $k$.
	
	Then, there remains to define, given a subset $C \subseteq [1,\ldots,l]$, a positive turn-based structure $T_{l,C,2}$ such that $\phi^{\ATL(2)}(l,H)$ accepts $T_{l,C,2}$ if and only if $H \cap C \neq \emptyset$. The structure $T_{l,C,2}$ (see Figure~\ref{fig:example_turn_based_from_C}) is analogous to the structure $T^{2l: \; 1,2}$ except that the final state reached is $q^{\msf{lose}}$ instead of $q^{\msf{win}}$. However, the Agent-1 states corresponding to the indices $i \in C$ may not only continue towards the state $q^{\msf{lose}}$, but also branch to a Agent-2 testing state that can branch to the losing state $q^{\msf{lose}}$, or that can branch to the actual structure $T^{2(i-1): \; 1,2}$. That way, these testing states are rejected by all $\msf{Op}_{\{1\},\{2\}}(\{\lF,\lG\})$-formulas starting with the operator $\fanBr{\{1\}} \lG$. Overall, we do obtain the desired equivalence.
	
	\textbf{Formal definitions and proofs.} For readability, we will use the notations below.	
	\begin{notation}
		For the coalition of agent $A = \{ i \}$ with $i \in \{ 1,2\}$, the operators $\fanBr{A} \lF$ and $\fanBr{A} \lG$ will be denoted $i \lF$ and $i \lG$ respectively.
	\end{notation}

	Now, the first step that we take is to define two simple turn-based game structures that will restrict the set of operators that we need to consider. 
	\begin{definition}
		\label{def:simple_games_F_G}		
		We define the trivial game structure $T_p := \langle \{ q^{\msf{win}} \},\{ q^{\msf{win}} \},2,\{p\},\pi,\msf{Ag},\msf{Succ} \rangle$.
		
		
		We also define the two-state turn-based game $T_{\msf{no} \; \emptyset \lG,2 \lG} := \langle \{ q,q^{\msf{lose}} \},\{ q \},2,\{p,\bar{p}\},\pi,\msf{Ag},\msf{Succ} \rangle$ where $\msf{AgSt}(q) := 1$, $\msf{AgSt}(q^{\msf{lose}}) := 1$,  $\msf{Succ}(q) := \{ q,q^{\msf{lose}} \}$, $\pi(q) := \{ p \}$.
	\end{definition}

	\begin{figure}
		\centering
		\includegraphics[scale=1.2]{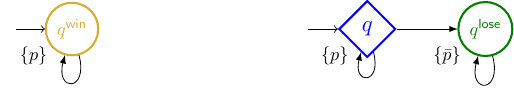}
		\caption{The game $T_p$ on the left and the game $T_{\msf{no} \; \emptyset \lG,2 \lG}$ on the right.}
		\label{fig:simpleGames}
	\end{figure}
	These two games are depicted in Figure~\ref{fig:simpleGames}, and they satisfy the property below.
	\begin{lemma}
		\label{lem:ATL_no_use_bad_lG}
		Consider any $\ATL$-formula $\phi \in \ATL^2(\prop_0,\{ \lF,\lG \},\emptyset,\emptyset,0)$. It accepts both $T_p$ and $T_{\msf{no} \; \emptyset \lG,2 \lG}$ if and only if $\prop(\phi) = \{p\}$ and it does not use the operators $\emptyset \lG$ or $2 \lG$.
	\end{lemma}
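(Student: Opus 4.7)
The plan is a direct structural analysis. Since $\phi \in \ATL^2(\prop_0,\{\lF,\lG\},\emptyset,\emptyset,0)$ admits neither binary operators nor negations, $\phi$ has the unique shape $\phi = \msf{O}_1 \cdots \msf{O}_n \cdot x$, where $x \in \{p,\bar p\}$ is a single terminal proposition and each $\msf{O}_i = \fanBr{A_i}\msf{H}_i$ with $A_i \subseteq \{1,2\}$ and $\msf{H}_i \in \{\lF,\lG\}$. I would treat both implications of the lemma by induction on this structure.

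For the forward direction, I would first observe that $\prop(\phi) = \{p\}$ follows from $T_p \models \phi$: the unique self-looping state $q^{\msf{win}}$ of $T_p$ is labeled $\{p\}$, and on such a structure every strategic operator collapses onto its argument, so by a trivial induction $T_p \models \phi$ iff $x \in \pi(q^{\msf{win}})$, i.e.\ $x=p$. To exclude the operators $\fanBr{\emptyset}\lG$ and $\fanBr{\{2\}}\lG$, I would establish two subclaims by induction on subformulas $\psi$. (i) In $T_{\msf{no} \; \emptyset \lG,2 \lG}$, no such $\psi$ holds at $q^{\msf{lose}}$, since $q^{\msf{lose}}$ is a self-loop labeled $\{\bar p\}$ on which every strategic operator collapses. (ii) Since Agent~$2$ owns no state of $T_{\msf{no} \; \emptyset \lG,2 \lG}$, $\fanBr{\{2\}}\msf{H}\psi$ is semantically identical to $\fanBr{\emptyset}\msf{H}\psi$ on this structure. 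Combining (i) and (ii), any subformula $\chi = \fanBr{A}\lG\psi$ with $A \in \{\emptyset,\{2\}\}$ is false at every state of $T_{\msf{no} \; \emptyset \lG,2 \lG}$: at $q^{\msf{lose}}$ by (i), and at $q$ because the path $q\cdot(q^{\msf{lose}})^\omega$ forces $\psi$ to hold at $q^{\msf{lose}}$, again contradicting (i). Since every state has at least one outgoing path, if $\chi$ is false everywhere then so are $\fanBr{B}\lF\chi$ and $\fanBr{B}\lG\chi$ for any $B$, and a short outward induction propagates this ``false everywhere'' property to conclude $q \not\models \phi$, contradicting $T_{\msf{no} \; \emptyset \lG,2 \lG} \models \phi$.

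For the backward direction, assuming $\prop(\phi)=\{p\}$ and $\phi$ uses no $\fanBr{\emptyset}\lG$ or $\fanBr{\{2\}}\lG$, I would show by simultaneous induction on $\phi$ that $q^{\msf{win}} \models \phi$ in $T_p$ and $q \models \phi$ in $T_{\msf{no} \; \emptyset \lG,2 \lG}$. The base $\phi=p$ is immediate. Acceptance by $T_p$ in the inductive step is automatic by the collapse observation above. Acceptance at $q$ in $T_{\msf{no} \; \emptyset \lG,2 \lG}$ exploits that Agent~$1$ owns $q$ and may elect the self-loop strategy: when $\msf{O}=\fanBr{A}\lG$ with $1\in A$ (the only $\lG$-operators permitted by the hypothesis), this strategy produces the sole path $q^\omega$, so $\fanBr{A}\lG\psi$ reduces at $q$ to $\psi$ at $q$; and when $\msf{O}=\fanBr{A}\lF$ for any $A$, the inductive hypothesis $q \models \psi$ already certifies $\lF\psi$ along every path from $q$. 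The main obstacle I anticipate is a clean bookkeeping of the false-everywhere propagation in the forward direction, but since $\phi$ is negation-free and every state of $T_{\msf{no} \; \emptyset \lG,2 \lG}$ has an outgoing path, this collapses to a one-line induction on the outer prefix of unary operators.
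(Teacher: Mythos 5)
Your proposal is correct and follows essentially the same route as the paper: establish $\prop(\phi)=\{p\}$ from acceptance of $T_p$ (since the formula is a unary chain over a single proposition), then analyze the chain over $T_{\msf{no}\;\emptyset\lG,2\lG}$ using the facts that no negation-free $p$-formula holds at $q^{\msf{lose}}$, that Agent~1 owning $q$ lets the adversary force the move to $q^{\msf{lose}}$ against $\fanBr{\emptyset}\lG$ and $\fanBr{\{2\}}\lG$, and that the allowed operators are satisfied at $q$ via the self-loop strategy ($\lG$ with $1\in A$) or immediate satisfaction ($\lF$). Your ``false-everywhere propagation'' is just a repackaging of the paper's inductive case where the offending operator occurs strictly inside the formula, so the two arguments coincide in substance.
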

	\begin{proof}
		If the proposition $\prop(\phi) = \{\bar{p}\}$, then $\phi \not\models T_p$. Hence, $\prop(\phi) = \{\bar{p}\}$ (note that $|\prop(\phi)| = 1$ because $\phi$ only uses unary operators).
		
		Let us now show by induction the following property: an $\ATL$-formula $\phi \in \ATL^2(\{p\},\{ \lF,\lG \},\emptyset,\emptyset,0)$ accepts $T_{\msf{no} \; \emptyset \lG,2 \lG}$ if and only if it does not use the operators $\emptyset \lG$ or $2 \lG$. This holds straightforwardly for $\phi = p$. Assume now that it holds for some formula $\phi \in \ATL^2(\{p\},\{ \lF,\lG \},\emptyset,\emptyset,0)$ accepts $T_{\msf{no} \; \emptyset \lG,2 \lG}$. Let $\phi' = \fanBr{A} \textbf{H} \phi$ with $A \subseteq \{ 1,2 \}$ and $\textbf{H} \in \{ \lF,\lG \}$. 
		\begin{itemize}
			\item Assume that either $\emptyset \lG$ or $2 \lG$ occurs in $\phi'$. If $\emptyset \lG$ or $2 \lG$ occurs in $\phi$, then by our induction hypothesis, we have $q \not\models \phi$. Since we also have $q^{\msf{lose}} \not\models \phi$ (since $\prop(\phi) = \{p\}$) it follows that $q \not \models \phi'$. Otherwise, we have $\phi' = \fanBr{A} \lG \phi$ with $1 \notin A$. Since $\msf{AgSt}(q) = 1$, for all strategies $s$ for the coalition $A$, we have $q \cdot (q^{\msf{lose}})^\omega \in \msf{Out}^Q(q,s)$ with $q^{\msf{lose}} \not\models \phi$. Therefore, $q \not\models \phi'$.
			\item Assume that neither $\emptyset \lG$ nor $2 \lG$ occurs in $\phi'$. By our induction hypothesis, we have $q \models \phi$. Hence, if $\textbf{H} = \lF$, we have $q \models \phi'$. Otherwise, $1 \in A$. Since the Agent-1 strategy $s_1$ that always loops on $q$ is such that $\msf{Out}^Q(q,s) = \{ q^\omega \}$, 
			it follows that $q \models \phi'$.
		\end{itemize}
		Thus, the property also holds for $\phi'$. In fact it holds for all formulas in $\ATL^2(\{p\},\{ \lF,\lG \},\emptyset,\emptyset,0)$. The lemma follows.
	\end{proof}

	Furthermore, we have seen in Lemma~\ref{lem:useless_op} that the operators $\emptyset \lF$ and $1,2 \lG$ are useless on self-looping structures. Thus, we restrict ourselves to promising formulas, i.e. formulas that only use the operators $1 \lF,2\lF$ and $1\lG$. Note that we have not handled the operator $1,2 \lF$\footnote{That is, we have not shown that we can restrict ourselves to formulas that do not use this operator.} yet. It will be done on the fly later.  
	\begin{definition}
		An $\ATL^2(\prop_0,\{ \lF,\lG \},\emptyset,\emptyset,0)$ formula is \emph{promising} if $\prop(\phi) = \{p\}$ and it only uses the operators $1 \lF,2\lF$, and $1\lG$. 
	\end{definition}	

	Let us now consider the exact kinds of formulas that we consider, along with the structures that we define to encode the hitting set problem.
	
	\paragraph{Bounding the size of hitting sets}
	As mentioned at the beginning of this subsection, the way we encode a hitting set is by considering when the operator $1\lG$ is used between the operators $1\lF$ and $2\lF$. More precisely, we define the notion of concise formulas, a special kind of alternating formulas
	.
	\begin{definition}
		Let $l \in \N_1$. For all $H \subseteq [1,\ldots,l]$, we denote by $\phi^{\ATL(2)}(l,H)$ the promising $\ATL$-formula defined by:
		\begin{equation*}
			\phi^{\ATL(2)}(l,H) := 1 \lF \; \msf{Qt}_l \; 2 \lF \; 1 \lF \msf{Qt}_{l-1} \; 2 \lF \ldots \; 1 \lF \msf{Qt}_1 \; 2 \lF p
		\end{equation*}
		%
		where, for all $i \in [1,\ldots,l]$, we have $\msf{Qt}_k \in \{ \epsilon,1 \lG \}$ and $\msf{Qt}_i = \epsilon$ if and only if $i \in H$. 
		
		For all $0 \leq k \leq l$, we say that a promising $\ATL$-formula $\phi$ is $(l,k)$-\emph{concise} if it is equal to $\phi^{\ATL(2)}(l,H)$ for some $H \subseteq [1,\ldots,l]$ with $|H| = k$. (In which case, $\Size{\phi} = 3l+1-k$.)
	\end{definition}
	
	Let us now define the turn-based structure that will force the use of a minimal number of $1 \lG$ operators, an example of which is depicted in Figure~\ref{fig:example_turn_based_more_than_k}. This corresponds to the fact that, in the reduction, hitting sets have a bounded size
	. 
	\begin{definition}
		Let $i \in \{ 1,2 \}$. For all $k \in \N_1$, we let $T_{\msf{no} \; 1 \lG\geq k} := \langle Q^{1\lG}_k,I_k,2,\{p\},\pi,\msf{Ag},\msf{Succ} \rangle$ where:
		\begin{itemize}
			\item $Q^{1\lG}_k := \{ q_{h}^{1\lG} \mid 1 \leq h \leq 2k \} \cup \{q^{\msf{win}},q^{\msf{lose}} \}$;
			\item $I_k := \{ q_{2k}^{1\lG} \}$;
			\item For all $1 \leq h \leq k$, $\msf{AgSt}(q^{1\lG}_{2h-1}) := 2$ and $\msf{AgSt}(q^{1\lG}_{2h}) := 1$;
			\item For all $1 \leq h \leq k$, we have:
			\begin{equation*}
				\msf{Succ}(q_{2h}^{1\lG}) :=
					\{ q^{1\lG}_{2h},q^{1\lG}_{2h-1} \}
			\end{equation*}
			\begin{align*}
				\msf{Succ}(q_{2h-1}^{1\lG}) := \begin{cases}
					\{ q_{2h-1}^{1\lG},q_{2h-2}^{1\lG},q^{\msf{lose}} \} & \text{ if }h > 1 \\
					\{ q_{2h-1}^{1\lG},q^{\msf{win}},q^{\msf{lose}} \} & \text{ if }h = 1 \\
				\end{cases}
			\end{align*}
			\item For all $q \in Q^{1\lG}_k \setminus \{q^{\msf{win}}\}$, $\pi(q) := \{\bar{p}\}$.
		\end{itemize}
	\end{definition}
	
	\begin{figure}
		\hspace*{-0.75cm}
		\centering
		\includegraphics[scale=0.8]{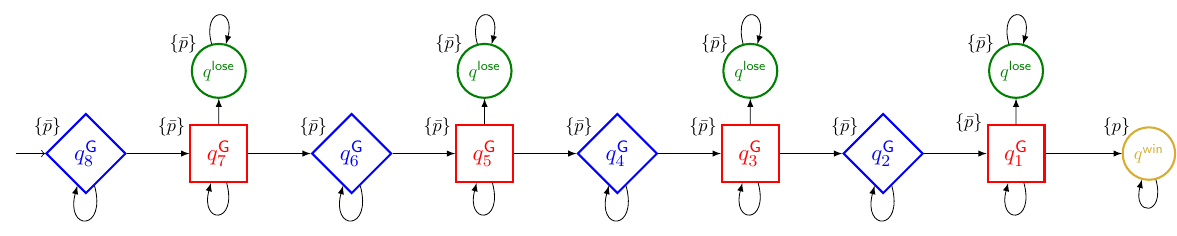}
		\caption{The turn-based game structure $T_{\msf{no} \; 1\lG \geq 4}$.}
		\label{fig:example_turn_based_more_than_k}
	\end{figure}

	Let us consider the lemma linking concise formulas and the turn-based structures $T_{\msf{no}\; 1\lG \geq k}$.
	\begin{lemma}
		\label{lem:ATL_reject_G_concise}
		Let $l \in \N_1$. Consider a promising $\ATL$-formula $\phi$ that is $(\{1\},\{2\},2l)$-alternating. Let $k \leq l$. If $\Size{\phi} \leq 3l+1-k$, we have the following equivalence:
		\begin{center}
			$q^{1\lG}_{2(k+1)} \not \models \phi$ if and only if $\phi$ is $(l,k)$-concise
		\end{center}
	\end{lemma}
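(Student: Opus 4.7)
The equivalence is established by analysing, in both directions, how promising $(\{1\},\{2\},2l)$-alternating formulas evaluate at the topmost state $q^{1\lG}_{2(k+1)}$ of the staircase structure $T_{\msf{no} \; 1\lG \geq k+1}$. The guiding intuition is that the $k+1$ pairs of Agent-1/Agent-2 states of the staircase form a ``resource'', and each $1\lF\cdot 2\lF$ block lacking an interleaved $1\lG$ consumes one pair, while an inserted $1\lG$ allows Agent 1 to stall in place.

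\textbf{Preliminary semantic facts.} I first record a few easy but essential observations about the behaviour of promising operators on the states of $T_{\msf{no} \; 1\lG \geq k+1}$. (i) At an Agent-1 state $q^{1\lG}_{2h}$ whose successors are $\{q^{1\lG}_{2h},q^{1\lG}_{2h-1}\}$, Agent 1 may loop, so both $1\lG\,\psi$ and $2\lF\,\psi$ are equivalent to $\psi$ at that state, and only $1\lF$ can actually propagate the evaluation down the staircase. (ii) At an Agent-2 state $q^{1\lG}_{2h-1}$, Agent 2 may escape to $q^{\msf{lose}}$, where no promising formula holds; hence any sub-formula starting with $1\lG$ evaluated at an Agent-2 state fails immediately. (iii) Consequently, the pattern $1\lF\cdot 2\lF$ without a separating $1\lG$, when evaluated at some $q^{1\lG}_{2h}$, forces Agent 1 to descend one full pair of levels before it can hope to succeed.

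\textbf{Forward direction: concise implies rejection.} For any $(l,k)$-concise formula $\phi^{\ATL(2)}(l,H)$ with $|H|=k$, I would show by induction on the block index $i=1,\dots,l$, processed from the innermost $2\lF\,p$ outward, that the partial formula can only be satisfied at states $q^{1\lG}_{2h}$ for which at least $|H\cap[1,i]|+1$ Agent-1 levels remain below; at the outermost block this bound becomes $h\geq k+1$, and a case analysis of the outermost $1\lF$ at exactly level $2(k+1)$ shows that even there Agent 2 can always escape to $q^{\msf{lose}}$ at the last Agent-2 state visited, so the formula is rejected.

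\textbf{Backward direction: rejection implies concise.} By hypothesis $\phi$ is promising and $(\{1\},\{2\},2l)$-alternating, so it contains exactly $l$ outermost $1\lF$ and $l$ outermost $2\lF$ operators, contributing together with the innermost $p$ at least $2l+1$ sub-formulas. The size bound $\Size{\phi}\leq 3l+1-k$ leaves room for at most $l-k$ further operators, and the promising restriction forces each of them to be $1\lG$. It remains to show that these $l-k$ occurrences of $1\lG$ sit in the concise slots $\msf{Qt}_i$. I would argue by contradiction using the preliminary observations: a $1\lG$ placed outside a concise slot either sits at an Agent-1 state and is absorbed, freeing one level of the staircase and letting $\phi$ be satisfied at $q^{1\lG}_{2(k+1)}$ (contradicting rejection), or it sits at an Agent-2 state and forces $\psi$ at $q^{\msf{lose}}$, collapsing $\phi$ to $\msf{False}$ and contradicting the fact that it is violated only at $q^{1\lG}_{2(k+1)}$ but not, say, at $q^{\msf{win}}$.

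\textbf{Main obstacle.} The main difficulty is the combinatorial bookkeeping linking the nesting depth of each $1\lG$ occurrence in $\phi$ to the level of the staircase at which it is evaluated. Because $T_{\msf{no} \; 1\lG \geq k+1}$ has exactly $k+1$ Agent-1 levels and $\phi$ has exactly $l-k$ occurrences of $1\lG$, a tight invariant of the form ``after evaluating the outermost $i$ blocks, the current state is at staircase level $\geq 2\bigl(k+1-(i-|H\cap[1,i]|)\bigr)$'' must be maintained throughout the induction. Formalising this invariant and using it uniformly in both directions is the delicate part; the rest of the proof is a routine case analysis exploiting observations (i)--(iii).
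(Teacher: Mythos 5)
Your preliminary facts (i)--(iii) and the counting intuition behind the staircase are sound, and your forward direction (concise implies rejection) is essentially the argument the paper runs inside its induction. The genuine gap is in the backward direction. You claim that, because $\phi$ is promising, $(\{1\},\{2\},2l)$-alternating and of size at most $3l+1-k$, the at most $l-k$ operators beyond the $2l$ alternating $\lF$-occurrences are ``forced to be $1\lG$''. That is false: promising formulas may use $1\lF$, $2\lF$ and $1\lG$, and the definition of $(\{1\},\{2\},2l)$-alternating only designates $2l$ alternating $\lF$-occurrences while allowing arbitrary promising operators in the intervening $\msf{Op}_{\{1\},\{2\}}(\{\lF,\lG\})^*$ blocks, in particular extra $1\lF$'s or $2\lF$'s. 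So you cannot conclude syntactically that $\phi$ has the shape $1\lF\,\msf{Qt}_l\,2\lF\cdots 2\lF\,p$ with $\msf{Qt}_i\in\{\epsilon,1\lG\}$. For instance, with $(l,k)=(1,0)$, the formulas $1\lF\,1\lF\,2\lF\,p$ and $1\lG\,1\lF\,2\lF\,p$ are promising, $(\{1\},\{2\},2)$-alternating, of size $4=3l+1-k$, and not $(1,0)$-concise; showing that such formulas are \emph{satisfied} at $q^{1\lG}_{2(k+1)}$ is exactly what must be proved, not assumed. Your contradiction step does not close this gap either: it invokes satisfaction at $q^{\msf{win}}$ (``violated only at $q^{1\lG}_{2(k+1)}$ but not, say, at $q^{\msf{win}}$''), which is not among the hypotheses of this lemma --- the only semantic hypothesis available is rejection at the single state $q^{1\lG}_{2(k+1)}$ --- and the claim that an absorbed $1\lG$ ``frees one level'' is never justified.

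For comparison, the paper proves both directions simultaneously by induction on $l+k$: it writes $\phi = \msf{Qt}_1\,1\lF\,\msf{Qt}_2\,2\lF\,\phi'$ with $\msf{Qt}_1\in(1\lG,2\lF)^*$ and $\msf{Qt}_2\in(1\lF,1\lG)^*$, distinguishes whether $\msf{Qt}_2$ contains a $1\lG$, and reduces rejection of $\phi$ at $q^{1\lG}_{2(k+1)}$ to rejection of $\phi'$ at $q^{1\lG}_{2k}$ (respectively $q^{1\lG}_{2(k+1)}$). Crucially, the induction hypothesis, being an equivalence, also pins down the exact size $3(l-1)+1-(k-1)$ of the concise $\phi'$, and it is this exact size together with $\Size{\phi}\le 3l+1-k$ that forces $\msf{Qt}_1=\msf{Qt}_2=\epsilon$ (or $\msf{Qt}_2=1\lG$) and thereby eliminates the spurious extra $\lF$-operators. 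If you keep your two-direction plan, you must prove a statement of that strength --- rejection at level $2(k+1)$ simultaneously forces the placement of the $1\lG$'s and the absence of any additional operators --- which essentially amounts to redoing the paper's combined induction.
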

	\begin{proof}
		We prove by induction on $n \in \N_1$ the property $\mathcal{P}(n)$ stating the lemma for pairs $(l,k)$ such that $l \in \N_1$, $0 \leq k \leq l$, and $l+k \leq n$. We start with the case $n = 0$. The only possible pair is $l = 1$, and $k = 0$. Consider a promising $\ATL$-formula $\phi$ that is $(\{1\},\{2\},2)$-alternating and such that $\Size{\phi} \leq 4$. We have $\phi = \msf{Qt}_1 \; 1\lF \; \msf{Qt}_2 \; 2\lF \; \msf{Qt}_3 \; p$, with $\msf{Qt}_1,\msf{Qt}_2,\msf{Qt}_3 \in \{\epsilon,1\lF,1\lG,2\lF\}$ and at most one of them not equal to $\epsilon$. One can then check that $q_2^{\lG} \not \models\phi$ if and only $\msf{Qt}_2 = 1\lG$, and thus $\mathcal{P}(1)$ follows.

		Consider now some $n \in \N_1$ and assume that $\mathcal{P}(n)$ holds. Let us show that $\mathcal{P}(n+1)$ holds. Consider any pair $(l,k)$ such that $l \in \N_1$, $0 \leq k \leq l$, and $l+k = n+1$. Consider a promising $\ATL$-formula $\phi$ that is $(\{1\},\{2\},2l)$-alternating and such that $\Size{\phi} \leq 3l+1-k$. We let:
		\begin{equation*}
			\phi = \msf{Qt}_1 \; 1\lF \; \msf{Qt}_2 \; 2\lF \; \phi'
		\end{equation*}
		for some $\msf{Qt}_1 \in (1\lG,2\lF)^*$, $\msf{Qt}_2 \in (1\lF,1\lG)^*$, and sub-formula $\phi' \in \msf{SubF}(\phi)$. By definition, $\phi'$ is $(\{1\},\{2\},2(l-1))$-alternating. First of all, note that, since $\msf{AgSt}(q_{2(k+2)}^{\lG}) = 1$, letting $\psi := 1\lF \; \msf{Qt}_2 \; 2\lF \; \phi'$, we have that $q_{2(k+1)}^{\lG} \models \phi$ if and only if $q_{2(k+1)}^{\lG} \models \psi$. There are two cases. 
		\begin{itemize}
			\item Assume that $\msf{Qt}_2 \in (1\lF)^*$. If $q_{2(k+1)}^{\lG} \not\models \psi$, then we have $k \geq 1$ (since $q^{\msf{win}} \models \phi'$) and it must be that $q_{2k}^{\lG} \not\models \phi'$. Furthermore, $\phi'$ is $(\{1\},\{2\},2(l-1))$-alternating, and $\Size{\phi'} \leq \Size{\phi} - 2 \leq 3l+1-k = 3(l-1)+1-(k-1)$. Therefore, by $\mathcal{P}(n)$ (applied to the pair $(l-1,k-1)$), we have that $\phi'$ is $(l-1,k-1)$-concise, and of size $\Size{\phi'} = 3(l-1)+1-(k-1)$. Since  $\Size{\phi} \leq 3l+1-k$, it follows that $\msf{Qt}_1 = \msf{Qt}_2 = \epsilon$, and $\phi = 1\lF 2\lF \phi'$ is $(l,k)$-concise. 
			
			On the other hand, if $\psi$ is $(l,k)$-concise, since $\msf{Qt}_2 \neq 1\lG$, it follows that $k \geq 1$, $\phi = 1\lF 2\lF \phi'$ with $\phi'$ a formula that is $(l-1,k-1)$-concise. Hence, by $\mathcal{P}(n)$ (applied to the pair $(l-1,k-1)$), we have $q_{2k}^{\lG} \not\models \phi'$. This implies that all sub-formulas $\phi'' \in \msf{SubF}(\phi')$ are also such that $q_{2k}^{\lG} \not\models \phi''$ (because $\msf{AgSt}(q_{2k}^{\lG}) = 1$, and $\phi' \in (1\lF,1\lG,2\lF)^* \cdot \phi''$). It follows that it is also the case of all the sub-formulas of $\phi$. Therefore, since $\pi(q_{2(k+1)}^{\lG}) = \pi(q_{2k+1}^{\lG}) = \{\bar{p}\}$, and $\prop(\phi) = \{p\}$, we have $q_{2(k+1)}^{\lG} \not\models \phi$.
			\item Assume now $\msf{Qt}_2 \notin (1\lF)^*$. First of all, this implies $k \leq l-1$. Indeed, if $k=l$, we have $\Size{\phi} \leq 2l+1$, and thus, since it is a $(\{1\},\{2\},2l)$-alternating formula, it is equal to $\phi = (1\lF 2\lF)^* p$, and thus $\msf{Qt}_2 = \epsilon$. In fact, we do have $k \leq l-1$. Furthermore, since $\msf{AgSt}(q_{2(k+1)}^{\lG}) = 1$ and $\msf{AgSt}(q_{2k+1}^{\lG}) = 2$ and $q^{\msf{lose}} \not\models \phi'$, we have $q_{2(k+1)}^{\lG} \models \psi$ if and only if $q_{2(k+1)}^{\lG} \not\models \phi'$. By $\mathcal{P}(n)$ (applied to the pair $(l-1,k)$), we have $q_{2(k+1)}^{\lG} \not\models \phi'$ if and only if $\phi'$ is $(l-1,k)$-concise. 
			Furthermore, if $\phi'$ is $(l-1,k)$-concise, we have $\Size{\phi'} = 3(l-1)+1-k$. Since $\Size{\phi} \leq 3l+1-k$ and $\msf{Qt}_2 \notin (1\lF)^*$, it follows that $\phi = 1\lF 1\lG 2\lF \phi'$. In fact, $\phi'$ is $(l-1,k)$-concise if and only if $\phi$ is $(l,k)$-concise. Overall, we obtain that $q_{2(k+1)}^{\lG} \not\models \phi$ if and only if $\phi$ is $(l,k)$-concise.  
		\end{itemize}
		Hence, we have established $\mathcal{P}(n+1)$. In fact, $\mathcal{P}(n)$ holds for all $n \in \N$, and the lemma follows.
		
		%
	\end{proof}

	\paragraph{Hitting sets should intersect the sets $C_i$}
	Let us now define the positive turn-based structures that encode the fact that the positions where there is a lack of operators $1\lG$ should match the subsets of integers $C_i \subseteq [1,\ldots,l]$ from the hitting set problem. We give the formal definition below, it is illustrated in Figure~\ref{fig:example_turn_based_from_C}. 
	\begin{definition}
		\label{def:turn_based_game_atl_2_intersect}
		Let $l \in \N_1$ and $C \subseteq [1,\ldots,l]$. We let $T_{l,C,2} := \langle Q^{l,C},I_{l},2,\{p\},\pi,\msf{Ag},\msf{Succ} \rangle$ where (recall that the states $q^{1,2}_{h}$ come from the turn-based structure $T_{2l: \; 1,2}$ from Definition~\ref{def:turn_based_q_alt}):
		\begin{itemize}
			\item $Q^{l,C,2} := \{ q_{i} \mid 1 \leq i \leq 2l \} \cup \{ q^{1,2}_{h} \mid 1 \leq h \leq 2l \} \cup \{ q^\msf{Test}_{2i-1} \mid i \in C \} \cup \{q^{\msf{lose}},q^{\msf{win}}\}$;
			\item $I_l := \{ q_{2l} \}$;
			\item For all $1 \leq i \leq l$, $\msf{AgSt}(q_{2i}) := 1$ and $\msf{AgSt}(q_{2i-1}) := 2$. 
			For all $i \in C$, we have $\msf{AgSt}(q^{\msf{Test}}_{2i-1}) := 2$
			\item For all $1 \leq i \leq l$, we have:
			\begin{equation*}
				\msf{Succ}(q_{2i}) :=
				\begin{cases}
					\{ q_{2i},q_{2i-1} \} & \text{ if }i \notin C \\
					\{ q_{2i},q_{2i-1},q^{\msf{Test}}_{2i-1} \} & \text{ if }i \in C \\
				\end{cases}
			\end{equation*}
			and
			\begin{equation*}
				\msf{Succ}(q_{2i-1}) :=
				\begin{cases}
					\{ q_{2i-1},q_{2(i-1)} \} & \text{ if }i > 1 \\
					\{ q_{2i-1},q^{\msf{lose}} \} & \text{ if }i = 1 \\
				\end{cases}
			\end{equation*}
			and, for all $i \in C$:
			\begin{align*}
				\msf{Succ}(q^{\msf{Test}}_{2i-1}) :=
				\begin{cases}
					\{ q^{\msf{Test}}_{2i-1},q^{\msf{lose}},q_{2(i-1)}^{1,2} \} & \text{ if }i > 1 \\
					\{ q^{\msf{Test}}_{2i-1},q^{\msf{lose}},q^{\msf{win}} \} & \text{ if }i = 1 \\
				\end{cases}
			\end{align*}
			\item For all $q \in Q^{l,C,2} \setminus \{q^{\msf{win}}\}$, $\pi(q) := \{\bar{p}\}$.
		\end{itemize}
	\end{definition}

	\begin{figure}
		\centering
		\includegraphics[scale=0.8]{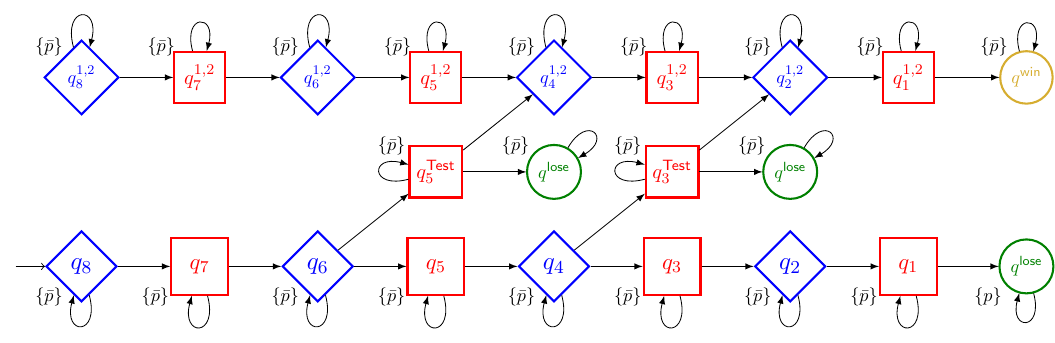}
		\caption{The turn-based game structure $T_{4,\{2,3\},2}$.}
		\label{fig:example_turn_based_from_C}
	\end{figure}

	The above definition satisfies the lemma below.
	\begin{lemma}
		\label{lem:ATL_2_intersect}
		Consider any $l \in \N_1$ and $C,H \subseteq [1,\ldots,l]$. We have:
		\begin{equation*}
			T_{l,C,2} \models \phi^{\ATL(2)}(l,H) \text{ if and only if }H \cap C \neq \emptyset
		\end{equation*}
	\end{lemma}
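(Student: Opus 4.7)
Pick $i^{*}\in H\cap C$, and exhibit explicit cooperative strategies witnessing the formula. Agent~1 steps down the main ladder from $q_{2l}$ state by state (taking $q_{2i}\mapsto q_{2i-1}$ for $i>i^{*}$); at $q_{2i^{*}}$ it branches to $q^{\msf{Test}}_{2i^{*}-1}$, possible since $i^{*}\in C$; and once inside the copy $T^{2(i^{*}-1):\,1,2}$ reached afterward, it plays the safe step-down strategy along the unique winning path. Agent~2 acts symmetrically at its ladder states and chooses $q^{\msf{Test}}_{2i^{*}-1}\mapsto q^{1,2}_{2(i^{*}-1)}$ (or directly $q^{\msf{win}}$ if $i^{*}=1$). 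A downward induction on $i$ from $l$ to $i^{*}$ then verifies that $q_{2i}\models 1\lF\,\msf{Qt}_i\,2\lF\cdots 2\lF\,p$ and $q_{2i-1}\models \msf{Qt}_i\,2\lF\cdots 2\lF\,p$ under these strategies. The $\msf{Qt}_i=1\lG$ case ($i\notin H$, $i>i^{*}$) is handled by observing that every state reachable from $q_{2i-1}$ under the committed strategies (namely $q_{2(i-1)},\,q_{2(i-1)-1},\ldots,q_{2i^{*}},\,q^{\msf{Test}}_{2i^{*}-1}$ and the states of $T^{2(i^{*}-1):\,1,2}$) still carries a safe alternating winning path of the appropriate length, so each satisfies the inner $2\lF$-suffix by Lemma~\ref{lem:ATL_formula_necessary_sufficient}~(b); this lemma, applied at the base case $i=i^{*}$ inside $T^{2(i^{*}-1):\,1,2}$, furnishes the whole argument.

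\textbf{Direction $(\Rightarrow)$.} I argue by contrapositive: assume $H\cap C=\emptyset$ and show $q_{2l}\not\models\phi^{\ATL(2)}(l,H)$ by evaluating suffixes of $\phi^{\ATL(2)}(l,H)$ from the innermost $p$ outward across all states of $T_{l,C,2}$. The central observation is that every path from $q_{2l}$ reaching $q^{\msf{win}}$ must traverse some test state $q^{\msf{Test}}_{2j-1}$ with $j\in C$, and Agent~2 controls this test state while having $q^{\msf{lose}}$ as an escape. Hence the suffix of the formula handed to $q^{\msf{Test}}_{2j-1}$ must be dischargeable with the cooperation of Agent~2 alone, that is, it must begin with a $2\lF$ operator and not with $1\lG$ -- in $\phi^{\ATL(2)}(l,H)$ this occurs at block $j$ precisely when $\msf{Qt}_j=\epsilon$, i.e.\ when $j\in H$. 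Together with $j\in C$ this yields $j\in H\cap C$, contradicting the assumption. The remaining alternative scenarios (Agent~1 discharging a $1\lG$-block while staying on the main ladder, or branching at a non-hitting $j\in C\setminus H$ and playing for several further blocks inside $T^{2(j-1):\,1,2}$) are ruled out by a reachable-state analysis: inside $T^{2(j-1):\,1,2}$ the unique safe winning path has alternation length exactly $2(j-1)$, which matches the length of the remaining formula after block $j$ only if that remainder contains no $1\lG$-blocks, so branching at $j\notin H$ forces a strict mismatch between the number of alternations demanded and those available.

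\textbf{Main obstacle.} The converse is the technical bottleneck. The delicate point is ruling out plays where Agent~1 uses a $1\lG$-block on the main ladder to simultaneously "absorb" several alternations of the formula, or branches at a non-hitting $j\in C\setminus H$ while hoping to rely on a non-obvious cooperation of Agent~2 inside the copy $T^{2(j-1):\,1,2}$. I expect the cleanest route is an induction on $l$ that strengthens the statement by computing the exact set of states satisfying each suffix, together with the length-parity bookkeeping implicit in the proof of Lemma~\ref{lem:ATL_formula_necessary_sufficient}~(a) that tracks how many $\lF$-alternations remain to be consumed along any candidate winning path.
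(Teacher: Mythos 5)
Your proposal has two genuine gaps, one in each direction.

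In the $(\Leftarrow)$ direction, the treatment of the $1\lG$ blocks does not work as stated. You commit Agent~1 to a single global ``step down and branch at $i^{*}$'' strategy and claim that every state reachable from $q_{2i-1}$ under it satisfies the inner $2\lF$-suffix because it ``still carries a safe alternating winning path of the appropriate length''. This is false on two counts. First, the test state $q^{\msf{Test}}_{2i^{*}-1}$ belongs to Agent~2 and has $q^{\msf{lose}}$ among its successors, so under your committed Agent-1 strategy the state $q^{\msf{lose}}$ \emph{is} reachable from $q_{2i-1}$, and it satisfies no suffix of $\phi^{\ATL(2)}(l,H)$ (the suffixes mention only $p$, positively); hence the $\lG$-requirement fails for that strategy. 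Second, the intermediate ladder states between block $i$ and block $i^{*}$ do not admit \emph{safe} winning paths at all, since their winning paths pass through a test state with three successors, so Lemma~\ref{lem:ATL_formula_necessary_sufficient}~(b) cannot be invoked for them. The missing idea is that each $1\lG$ must be witnessed by a \emph{fresh} Agent-1 strategy, and the right choice is the stalling strategy that loops forever at the next even ladder state $q_{2(i-1)}$: then the only states reachable from $q_{2i-1}$ are $q_{2i-1}$ and $q_{2(i-1)}$, and both satisfy $2\lF\phi^{\ATL(2)}(i-1,H_{i-1})$ once the inductive hypothesis gives $q_{2(i-1)}\models\phi^{\ATL(2)}(i-1,H_{i-1})$. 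This is exactly how the paper's proof discharges the $1\lG$ case.

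In the $(\Rightarrow)$ direction you have the right central intuition (a $1\lG$-block cannot be discharged at an Agent-2 test state because of the $q^{\msf{lose}}$ escape, so branching is only useful at indices $j\in H$), but, as you acknowledge yourself, the argument ruling out misalignment --- Agent~1 ``absorbing'' several blocks on the ladder, or branching at $j\in C\setminus H$ and consuming the remaining formula inside the copy $T^{2(j-1):\,1,2}$ --- is not carried out, and this is where the actual work lies. Note in particular that inside a safe copy a formula with \emph{more} alternations than the copy's depth can still hold (leading $\lF$'s can be ``wasted''), so a naive length-counting argument does not suffice; what pins the alignment down is Lemma~\ref{lem:ATL_formula_necessary_sufficient}~(a) applied at every level: since all winning paths from $q_{2(i+1)}$ (resp.\ $q_{2i+1}$) are alternating of the appropriate degree, no \emph{strict} sub-formula of the current suffix can accept these states, which forces the evaluation to proceed block by block. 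The paper packages both directions into a single induction on the ladder prefix, proving the exact biconditional $q_{2i}\models\phi^{\ATL(2)}(i,H_i)$ iff $H_i\cap C\neq\emptyset$, with the case split $i+1\in H\cap C$ versus $i+1\notin H\cap C$; strengthening your statement to this biconditional and adding the stalling-strategy and item-(a) arguments is what is needed to complete your proof.
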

	\begin{proof}
		For all $1 \leq i \leq l$, we let $H_i := H \cap [1,\ldots,l]$. 
		Then, in the turn-based structure $T_{l,C,2}$, we prove by induction on $1 \leq i \leq l$ the property $\mathcal{P}(i)$: $q_{2i} \models \phi^{\ATL(2)}(i,H_i)$ if and only if $H_i \cap C \neq \emptyset$. We start with the case $i = 1$. There are two cases.
		\begin{itemize}
			\item Assume that $H_1 \cap C = \{ 1 \}$. Then, we have $\phi^{\ATL(2)}(1,H_1) = 1\lF 2\lF p$. Furthermore, $q^{\msf{Test}}_{1} \in Q^{l,C,2}$ with $q^{\msf{Test}}_{1} \models 2\lF p$ since $q^{\msf{win}} \in \msf{Succ}(q^{\msf{Test}}_{1})$. Therefore, $q_2 \models \phi^{\ATL(2)}(1,H_1)$ since $q^{\msf{Test}}_{1} \in \msf{Succ}(q_2)$.
			\item Assume now that $H_1 \cap C = \emptyset$. If $1 \in C$, we have $1 \notin H$, thus $\phi^{\ATL(2)}(1,H_1) = 1\lF 1\lG 2\lF p$. Since $q^{\msf{lose}} \in \msf{Succ}(q^{\msf{Test}}_{1})$, we have $q^{\msf{Test}}_{1} \not \models 1\lG 2\lF p$ and $q_{1} \not \models 1\lG 2\lF p$. Hence, $q_2 \not \models \phi^{\ATL(2)}(1,H_1)$. On the other hand, if $1 \notin C$, there is no winning path from $q_2$, thus $q_2 \not \models \phi^{\ATL(2)}(1,H_1)$ by Lemma~\ref{lem:ATL_formula_necessary_sufficient} (Item a). 
		\end{itemize}
		Hence, the property $\mathcal{P}(1)$ holds. Assume now that $\mathcal{P}(i)$ holds for some $1 \leq i \leq l-1$. 
	%
		We have:
		\begin{equation*}
			\phi^{\ATL(2)}(i+1,H_{i+1}) = 1\lF \; \msf{Qt}_{i+1} \; 2\lF \phi^{\ATL(2)}(i,H_{i})
		\end{equation*}
		with $\msf{Qt}_{i+1} = 1 \lG$ if $i+1 \notin H$ and $\msf{Qt}_{i+1} = \epsilon$ otherwise. As above, there are two cases.
		\begin{itemize}
			\item Assume that $i+1 \in H \cap C$. The only winning path from $q_{2i}^{1,2}$ is $(\{1\},\{2\},2i)$-alternating and safe. Therefore, by Lemma~\ref{lem:ATL_formula_necessary_sufficient} (Item b), we have $q_{2i}^{1,2} \models \phi^{\ATL(2)}(i,H_{i})$
			. Hence, $q^{\msf{Test}}_{2i+1} \in Q^{l,C}$ with $q^{\msf{Test}}_{2i+1} \models 2\lF \phi^{\ATL(2)}(i,H_{i})$. Therefore, $q_{2(i+1)} \models \phi^{\ATL(2)}(i+1,H_{(i+1)})$
			.
			\item Assume now that $i+1 \notin H \cap C$. Let us show that $q_{2(i+1)} \models \phi^{\ATL(2)}(i+1,H_{i+1})$ if and only if $q_{2i} \models \phi^{\ATL(2)}(i,H_{i})$. First, if $q_{2i} \models \phi^{\ATL(2)}(i,H_{i})$, then $q_{2i+1} \models 2\lF \phi^{\ATL(2)}(i,H_{i})$ and $q_{2i+1} \models 1\lG 2\lF \phi^{\ATL(2)}(i,H_{i})$. Thus, we have $q_{2(i+1)} \models \phi^{\ATL(2)}(i+1,H_{i+1})$. Assume now that $q_{2(i+1)} \models \phi^{\ATL(2)}(i+1,H_{i+1})$. Note that, the winning paths from $q_{2(i+1)}$ are all $(\{1\},\{2\},2(i+1))$-alternating, therefore, by Lemma~\ref{lem:ATL_formula_necessary_sufficient} (Item a), no strict sub-formula of $\phi^{\ATL(2)}(i+1,H_{i+1})$ accept the state $q_{2(i+1)}$. Similarly, the winning paths from $q_{2i+1}$ are all $(\{2\},\{1\},2i+1)$-alternating, therefore, by Lemma~\ref{lem:ATL_formula_necessary_sufficient} (Item a), no strict sub-formula of $2\lF\phi^{\ATL(2)}(i,H_{i})$ accept the state $q_{2i+1}$.
			Then, there are two cases.
			\begin{itemize}
				\item If $i+1 \in C$, we have $i+1 \notin H$, thus $\phi^{\ATL(2)}(i+1,H_{i+1}) = 1\lF 1\lG 2\lF \phi^{\ATL(2)}(i,H_{i})$. Since $q^{\msf{lose}} \in \msf{Succ}(q^{\msf{Test}}_{2i+1})$, we have $\msf{Succ}(q^{\msf{Test}}_{2i+1}) \not \models 1\lG 2\lF \phi^{\ATL(2)}(i,H_{i})$. Hence, with what we have argued above, we have $q_{2i} \models \phi^{\ATL(2)}(i,H_{i})$. 
				\item If $i+1 \notin C$, then $\msf{Succ}(q_{2(i+1)}) = \{ q_{2(i+1)},q_{2i+1} \}$, hence as for the previous item, we have that $q_{2i} \models \phi^{\ATL(2)}(i,H_{i})$. 
			\end{itemize}
			We have established that $q_{2(i+1)} \models \phi^{\ATL(2)}(i+1,H_{i+1})$ if and only if $q_{2i} \models \phi^{\ATL(2)}(i,H_{i})$. Furthermore, by $\mathcal{P}(i)$, we have $q_{2i} \models \phi^{\ATL(2)}(i,H_{i})$ if and only if $H_i \cap C \neq \emptyset$. Since $i+1 \notin H \cap C$, it follows that $H_{i+1} \cap C = H_i \cap C$. Hence, we do obtain that $q_{2(i+1)} \models \phi^{\ATL(2)}(i+1,H_{i+1})$ if and only if $H_{i+1} \cap C \neq \emptyset$.
		\end{itemize}
		Hence, $\mathcal{P}(i+1)$ holds. In fact, $\mathcal{P}(i)$ holds for all $1 \leq i \leq l$. The lemma follows.
	\end{proof}

	\paragraph{Definition of the reduction}
	We can finally define the reduction that we consider. 
	\begin{definition}
		\label{def:reduction_ATL_2}
		Consider an instance $(l,C,k)$ of the hitting set problem $\msf{Hit}$. We define:
		\begin{itemize}
			\item $\mathcal{P} := \{ T_p,T_{\msf{no} \; \emptyset \lG,2\lG} \} \cup \{ T_{2l: 1,2} \} \cup \{ T_{(l,C_i,2)} \mid 1 \leq i \leq n \}$;
			\item $\mathcal{N} := \{ T_{\msf{no} \; 1\lG \geq k+1} \} \cup \{ T_{2l+1: 2,1} \}$.
		\end{itemize}
		
		Then, we define the input $\msf{In}^{\ATL(2),\lF,\lG}_{(l,C,k)} := (\prop_0,\mathcal{P},\mathcal{N},3l+1-k)$
		.
	\end{definition}
	
	This definition satisfies the lemma below.
	\begin{lemma}
		\label{lem:reduc_ATL_2_equiv}
		Let $\Ut \subseteq \Op{Un}{}$ be a set of unary temporal operators such that $\{\lF,\lG\} \subseteq \Ut \subseteq \{\lF,\lG,\neg\}$. 
		Consider an input $(l,C,k)$ of the hitting set problem $\msf{Hit}$ and the corresponding input $\msf{In}^{\ATL(2)}_{(l,C,k)}$. Then, $(l,C,k)$ is a positive instance of the decision problem $\msf{Hit}$ if and only if $\msf{In}^{\ATL(2)}_{(l,C,k)}$ is a positive instance of the decision problem $\ATL^2_\msf{Learn}(\Ut \setminus \{\neg\},\emptyset,\emptyset,0)$ if and only if $\msf{In}^{\ATL(2)}_{(l,C,k)}$ is a positive instance of the decision problem $\ATL^2_\msf{Learn}(\Ut,\emptyset,\emptyset,0)$.
	\end{lemma}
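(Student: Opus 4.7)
The plan is to prove the three-way equivalence as a cycle. The middle implication ``$\ATL^2_\msf{Learn}(\Ut\setminus\{\neg\},\ldots)$ positive $\Rightarrow$ $\ATL^2_\msf{Learn}(\Ut,\ldots)$ positive'' is trivial, as any separating formula in the smaller fragment remains separating in the larger one. The two substantive directions both rely on the four structural lemmas established earlier in this subsection.

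For the forward direction (``$\msf{Hit}$ positive $\Rightarrow$ $\ATL^2_\msf{Learn}(\Ut\setminus\{\neg\},\ldots)$ positive''), I start from a hitting set $H\subseteq[1,\ldots,l]$ with $|H|\leq k$, pad it arbitrarily to exactly $k$ elements (possible since $k\leq l$), and propose the separating formula $\phi := \phi^{\ATL(2)}(l,H)$. By construction $\phi$ uses only $\fanBr{\{1\}}\lF$, $\fanBr{\{2\}}\lF$, and $\fanBr{\{1\}}\lG$, all in $\Ut\setminus\{\neg\}$, has size exactly $3l+1-k = B$, is $(l,k)$-concise, and is alternating of length $2l$ in the sense of Definition~\ref{def:alternating_formulas}. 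Separation is verified case by case: $T_p$ is accepted trivially; $T_{\msf{no} \; \emptyset \lG,2 \lG}$ by Lemma~\ref{lem:ATL_no_use_bad_lG}; $T^{2l: \; 1,2}$ by Lemma~\ref{lem:ATL_formula_necessary_sufficient} Item b); each $T_{l,C_i,2}$ by Lemma~\ref{lem:ATL_2_intersect} (using $H\cap C_i\neq\emptyset$); $T_{\msf{no} \; 1\lG \geq k+1}$ is rejected by Lemma~\ref{lem:ATL_reject_G_concise}; and $T^{2l+1: \; 2,1}$ by Lemma~\ref{lem:ATL_formula_necessary_sufficient} Item a) applied contrapositively, since its winning paths are alternating of length $2l+1$ while $\phi$, having only $l$ occurrences of each single-agent $\lF$-quantifier, cannot be.

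For the backward direction, I take a separating formula $\phi\in\ATL^2(\prop_0,\Ut,\emptyset,\emptyset,0)$ of size at most $B$ and first normalize it: Lemma~\ref{lem:unnegate_unary} (valid since $\{\lF,\lG\}\subseteq\Ut$) removes negations without increasing size; Lemma~\ref{lem:ATL_no_use_bad_lG} applied via $T_p$ and $T_{\msf{no} \; \emptyset \lG,2 \lG}$ in $\mathcal{P}$ forces $\prop(\phi)=\{p\}$ and forbids $\fanBr{\emptyset}\lG$ and $\fanBr{\{2\}}\lG$; and Lemma~\ref{lem:useless_op}, applicable since every structure in $\mathcal{P}\cup\mathcal{N}$ is self-looping, erases all occurrences of $\fanBr{\emptyset}\lF$ and $\fanBr{\{1,2\}}\lG$. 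After additionally eliminating $\fanBr{\{1,2\}}\lF$ (the main obstacle, addressed below), $\phi$ becomes a promising $\msf{Op}_{\{1\},\{2\}}(\{\lF,\lG\})$-formula. Acceptance of $T^{2l: \; 1,2}$ and Lemma~\ref{lem:ATL_formula_necessary_sufficient} Item a) then force $\phi$ to be alternating of length $2l$; rejection of $T_{\msf{no} \; 1\lG \geq k+1}$ together with Lemma~\ref{lem:ATL_reject_G_concise} forces $\phi = \phi^{\ATL(2)}(l,H)$ for some $H$ with $|H|=k$; and acceptance of each $T_{l,C_i,2}$ combined with Lemma~\ref{lem:ATL_2_intersect} yields $H\cap C_i\neq\emptyset$ for every $i$, so $H$ is a hitting set of size $k$.

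The main obstacle, flagged in the subsection overview, is the elimination of $\fanBr{\{1,2\}}\lF$: neither Lemma~\ref{lem:ATL_no_use_bad_lG} nor Lemma~\ref{lem:useless_op} removes it, yet $\msf{Op}_{\{1\},\{2\}}$ excludes it and Lemma~\ref{lem:ATL_formula_necessary_sufficient} is stated only for that class. The intended plan is to handle this ``on the fly'' by an ad-hoc argument: on the self-looping two-agent turn-based structures at hand, $\fanBr{\{1,2\}}\lF\psi$ at a state $q$ collapses to the purely existential statement ``some path from $q$ reaches a $\psi$-state'', and a direct case analysis on each of $T^{2l: \; 1,2}$, $T_{l,C_i,2}$, $T^{2l+1: \; 2,1}$, $T_{\msf{no} \; 1\lG \geq k+1}$ (together with the size bound $\Size{\phi}\leq 3l+1-k$) shows that either $\fanBr{\{1,2\}}\lF$ can be soundly replaced by $\fanBr{\{1\}}\lF$ or $\fanBr{\{2\}}\lF$ without altering acceptance, or else a formula using $\fanBr{\{1,2\}}\lF$ of size $\leq B$ cannot simultaneously accept $T^{2l: \; 1,2}$ and reject $T_{\msf{no} \; 1\lG \geq k+1}$. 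Once $\fanBr{\{1,2\}}\lF$ is dispatched in this way, chaining the four lemmas above completes the backward direction.
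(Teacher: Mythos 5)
Your overall architecture coincides with the paper's: same witness formula $\phi^{\ATL(2)}(l,H)$ (after padding $H$ to size exactly $k$) and the same four lemmas for the forward direction, the same trivial middle implication, and the same normalization-plus-shape-analysis chain (Lemma~\ref{lem:unnegate_unary}, Lemma~\ref{lem:ATL_no_use_bad_lG}, Lemma~\ref{lem:useless_op}, then Lemmas~\ref{lem:ATL_formula_necessary_sufficient}, \ref{lem:ATL_reject_G_concise} and~\ref{lem:ATL_2_intersect}) for the backward direction. The one place where you diverge is exactly the step you flag as the main obstacle, the elimination of $\fanBr{\{1,2\}}\lF$, and there your proposal has a genuine gap: you leave it as an unproven ``ad-hoc case analysis'' whose claimed dichotomy is both unsubstantiated and aimed at the wrong structures. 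The replacement branch (``$\fanBr{\{1,2\}}\lF$ can be soundly replaced by $\fanBr{\{1\}}\lF$ or $\fanBr{\{2\}}\lF$ without altering acceptance'') is not innocuous: since the normalized formula is negation-free, shrinking a coalition strengthens the formula monotonically, so rejection of the negative structures is preserved but acceptance of the positive ones ($T^{2l:\;1,2}$ and especially the $T_{l,C_i,2}$, whose Agent-2 test states and Agent-1 chain states react differently to $\fanBr{\{1\}}\lF$ versus $\fanBr{\{2\}}\lF$) may be destroyed; you give no argument that one of the two replacements always works. The fallback branch (``a formula using $\fanBr{\{1,2\}}\lF$ of size at most $B$ cannot simultaneously accept $T^{2l:\;1,2}$ and reject $T_{\msf{no}\;1\lG\geq k+1}$'') is also not proved, and it is not the obstruction that actually does the job.

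The structure that kills $\fanBr{\{1,2\}}\lF$ is the negative structure $T^{2l+1:\;2,1}$, and the argument is short: after normalization the formula is negation-free with $\prop(\phi)=\{p\}$, so every sub-formula holds at the self-looping $p$-labelled state $q^{\msf{win}}$. In $T^{2l+1:\;2,1}$ every state is owned by Agent~1 or Agent~2 and $q^{\msf{win}}$ is reachable from every state, so for any such sub-formula $\psi$ the formula $\fanBr{\{1,2\}}\lF\,\psi$ holds at \emph{every} state; moreover, prefixing a formula that holds at every state by any sequence of operators $\fanBr{A}\lF$ or $\fanBr{A}\lG$ keeps it true at every state, in particular at the initial state. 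Hence any candidate separating formula containing $\fanBr{\{1,2\}}\lF$ would accept $T^{2l+1:\;2,1}\in\mathcal{N}$, a contradiction; this is how the paper disposes of the operator before invoking Lemma~\ref{lem:ATL_formula_necessary_sufficient}. You should replace your dichotomy with this argument (or supply a complete proof of your replacement claim on every structure of the sample); as written, the backward direction is incomplete at precisely the step on which the whole reduction hinges.
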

	\begin{proof}
		Assume that $(l,C,k)$ is a positive instance of the decision problem $\msf{Hit}$. Consider a hitting set $H \subseteq [1,\ldots,l]$ of size at most $k$. We let $H \subseteq H' \subseteq [1,\ldots,l]$ be another hitting set of size exactly $k$. We let $\phi := \phi^{\ATL(2)}(l,H')$. We have $\phi \in \ATL^2(\prop_0,\Ut \setminus \{\neg\},\emptyset,\emptyset,0)$ and $\Size{\phi} = 3l+1-k$. Furthermore, since $\prop(\phi) = \{p\}$ and $\phi$ does not use the operators $\emptyset \lG,2\lG$, by Lemma~\ref{lem:ATL_no_use_bad_lG}, it accepts  both structures $T_p,T_{\msf{no} \; \emptyset \lG, 2 \lG} \in \mathcal{P}$. In addition, $\phi$ is $(\{1\},\{2\},2l)$-alternating, hence by Lemma~\ref{lem:ATL_formula_necessary_sufficient} (Item b), it accepts $T_{2l: 1,2} \in \mathcal{P}$ (since there is a safe winning path from $q_{2l}^{1,2}$ --- in $T_{2l: 1,2}$ --- that is $(\{1\},\{2\},2l)$-alternating). On the other hand, all winning paths from $q_{2l+1}^{2,1}$ are $(\{2\},\{1\},2l+1)$-alternating, while $\phi$ is not $(\{2\},\{1\},2l+1)$-alternating, hence by Lemma~\ref{lem:ATL_formula_necessary_sufficient} (Item a), $\phi$ rejects the structure $T_{2l+1: 2,1} \in \mathcal{N}$. In addition, by Lemma~\ref{lem:ATL_reject_G_concise}, since $\phi$ is $(l,k)$-concise, it rejects $T_{\msf{no} \; 1\lG \geq k+1} \in \mathcal{N}$. Finally, consider any $1 \leq i \leq n$. Since we have $C_i \cap H' \neq \emptyset$, it follows, by Lemma~\ref{lem:ATL_2_intersect}, that $\phi$ accepts $T_{(l,C_i,2)} \in \mathcal{P}$. Overall, the $\ATL$-formula $\phi$ accepts $\mathcal{P}$ and rejects $\mathcal{N}$. Hence, $\msf{In}^{\ATL(2)}_{(l,C,k)}$ is a positive instance of the decision problem $\ATL^2_\msf{Learn}(\Ut \setminus \{\neg\},\emptyset,\emptyset,2)$. 
		
		Furthermore, clearly, if $\msf{In}^{\ATL(2)}_{(l,C,k)}$ is a positive instance of the decision problem $\ATL^2_\msf{Learn}(\Ut \setminus \{\neg\},\emptyset,\emptyset,0)$, then it is also a positive instance of $\msf{In}^{\ATL(2)}_{(l,C,k)}$ is a positive instance of the decision problem $\ATL^2_\msf{Learn}(\Ut,\emptyset,\emptyset,0)$.
		
		Assume now that $\msf{In}^{\ATL(2)}_{(l,C,k)}$ is a positive instance of the decision problem $\ATL^2_\msf{Learn}(\Ut,\emptyset,\emptyset,0)$. Consider a formula $\phi \in \ATL^2(\prop_0,\Ut,\emptyset,\emptyset,0)$ with $\Size{\phi} \leq 3l+1-k$ that accepts $\mathcal{P}$ and rejects $\mathcal{N}$. We let $\phi = \msf{Qt} \cdot x$ for some $\msf{Qt} \in (\msf{Op}(2,\Ut))^*$ and $x \in \prop_0$. Let $(\msf{Qt}',x') := \msf{UnNeg}(\msf{Qt},0)$. We let $y \in \prop_0$ be such that $y = x$ if and only if $x' = 0$. Finally, we let $\phi' := \msf{Qt}' \cdot y$. By Lemma~\ref{lem:unnegate_unary}, we have:
		\begin{itemize}
			\item $|\msf{Qt}'| \leq |\msf{Qt}|$, therefore $\Size{\phi'} \leq \Size{\phi} \leq 3l+1-k$;
			\item $\msf{Qt}' \in \msf{Op}(2,\Ut \setminus \{\neg\})$, therefore $\phi' \in \ATL^2(\prop_0,\Ut \setminus \{\neg\},\emptyset,\emptyset,0)$;
			\item Since all the structures in $\mathcal{P}$ and $\mathcal{N}$ are $(0,\emptyset)$-proper structures (and therefore, on those structures, $p$ and $\neg \bar{p}$ are equivalent), $\phi'$ also accepts $\mathcal{P}$ and rejects $\mathcal{N}$. 
		\end{itemize}
		
		In addition, all the turn-based structures in $\mathcal{P}$ and $\mathcal{N}$ are self-looping, hence by Lemma~\ref{lem:useless_op}, we may assume that neither of the operators $\emptyset \lF$ or $1,2 \lG$ occur in $\phi'$. Furthermore, since $\phi'$ accepts the games $T_p \in \mathcal{P}$ and $T_{\msf{no} \; \emptyset \lG,2 \lG} \in \mathcal{P}$, by Lemma~\ref{lem:ATL_no_use_bad_lG}, we have $\prop(\phi') = \{p\}$ and $\phi'$ does not use the operators $\emptyset \lG,2\lG$. Furthermore, since $\phi'$ rejects the game structure $T_{2l+1: 1,2} \in \mathcal{N}$, it does not use the operator $1,2 \lF$ (since all sub-formulas of $\phi'$ accept the state $q^{\msf{win}}$). In fact, the formula $\phi'$ is promising. 
		
		Since $\phi'$ accepts $T_{2l: 1,2} \in \mathcal{P}$, and all the winning paths from the state $q_{2l}^{1,2}$ are $(\{1\},\{2\},2l)$-alternating, the formula $\phi'$ is $(\{1\},\{2\},2l)$-alternating by Lemma~\ref{lem:ATL_formula_necessary_sufficient} (Item a). Furthermore, we have $\Size{\phi'} \leq 3l+1-k$. Hence, since $\phi'$ rejects $T_{\msf{no} \; 1\lG \geq k+1} \in \mathcal{N}$, by Lemma~\ref{lem:ATL_reject_G_concise}, $\phi'$ is necessarily $(l,k)$-concise. We let $H \subseteq [1,\ldots,l]$ be such that $|H| = k$ and $\phi' = \phi^{\msf{ATL}(2)}(l,H)$. Then, consider some $1 \leq i \leq n$. Since $\phi'$ accepts $T_{(l,C_i,2)} \in \mathcal{P}$, it follows, by Lemma~\ref{lem:ATL_2_intersect}, that $H \cap C_i \neq \emptyset$. This holds for all $1 \leq i \leq n$. Therefore, $H$ is a hitting set and $(l,C,k)$ is a positive instance of the hitting set problem $\msf{Hit}$.
	\end{proof}

	Theorem~\ref{thm:atl_2_NP_complete} follows.
	\begin{proof}
		This is direct consequence of Lemma~\ref{lem:reduc_ATL_2_equiv}, of the fact that the instance $\msf{In}^{\ATL^2(\lX)}_{(l,C,k)}$ can be computed in logarithmic space from $(l,C,k)$, and of Theorem~\ref{thm:unary_is_sufficient}.
	\end{proof}

	We believe that the result of Theorem~\ref{thm:atl_2_NP_complete} would also hold with $\Ut = \{\lF,\neg\}$ or $\Ut = \{\lG,\neg\}$, but handling these cases would induce many additional technical difficulties. In particular, Theorem~\ref{thm:unary_is_sufficient} can be not applied as is.

	\subsubsection{$\ATL$ learning with two agents and with one of the two operators $\lF$ and $\lG$}
	Let us now focus to the case of $\ATL$ learning with two agents where only one of the two operators $\lF$ ot $\lG$ is allowed, without any negations. The goal of this subsection is to show the theorem below. 
	\begin{theorem}
		\label{thm:atl_2_F_and_G_P_complete}
		For all $\Bl \subseteq \Op{Bin}{lg}$ and $n \in \N$, both decision problems $\ATL^2_\msf{Learn}(\{\lF\},\emptyset,\Bl,n)$ and $\ATL^2_\msf{Learn}(\{\lG\},\emptyset,\Bl,n)$ are $\msf{P}$-complete.
	\end{theorem}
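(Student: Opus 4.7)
The theorem splits into membership in $\msf{P}$ and $\msf{P}$-hardness; we outline the $\{\lF\}$ case below, the $\{\lG\}$ case being completely symmetric.

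For membership, the plan is an exhaustive enumeration of canonical formulas followed by model checking. By Lemma~\ref{lem:equiv_atl_dominating_quantifiers}, whenever two consecutive operators $\fanBr{A}\lF \fanBr{A'}\lF$ of a unary chain satisfy $A\subseteq A'$, they collapse into the outer one. With two agents, the four possible coalitions $\emptyset,\{1\},\{2\},\{1,2\}$ are pairwise $\subseteq$-comparable except for the pair $\{1\},\{2\}$. Hence every $\ATL^2(\prop,\{\lF\},\emptyset,\Bl,n)$-formula is equivalent, without any size increase, to one whose maximal unary chains either have length~$\leq 1$ (at most $4$ choices of operator) or strictly alternate between $\fanBr{\{1\}}\lF$ and $\fanBr{\{2\}}\lF$ (at most $2$ choices for each possible length). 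A formula with exactly $m\leq n$ binary operators has one of constantly many tree skeletons, $m+1$ proposition leaves, $m$ binary-operator slots, and $2m+1$ unary-chain slots, each carrying at most $B$ operators. Summing over $m\leq n$ yields only polynomially many canonical candidates in $B+|\prop|$ when $n$ is fixed. Since $\ATL^k$ model checking is polynomial~\cite{AlurATL}, the whole learning procedure runs in $\msf{P}$.

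For hardness, I would reduce from the $\msf{P}$-complete problem $\msf{Reach}$ of Definition~\ref{def:reach_decision_problem}. Given an instance $(p,\bar{p},G)$, construct the learning instance $\msf{In}^{\lF}_G := (\{p,\bar{p}\},\{G\},\{T_{\bar{p}},G'\},2)$, where $T_{\bar{p}}$ is the trivial one-state structure labeled $\{\bar{p}\}$ and $G'$ is a two-state agent-$2$-owned gadget: a $\bar{p}$-labeled initial state with a self-loop and a second transition to an absorbing $p$-labeled state. If $G$ is a positive instance of $\msf{Reach}$, then $\fanBr{\{1\}}\lF p$ has size~$2$, accepts $G$, and rejects both $T_{\bar{p}}$ (no $p$-state reachable) and $G'$ (agent~$2$ can stay forever on the initial state). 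Conversely, since $B=2$ forbids binary operators, any separating formula must lie in the explicit list $\{p,\bar{p}\}\cup\{\fanBr{A}\lF p,\fanBr{A}\lF\bar{p}\mid A\subseteq\{1,2\}\}$: rejecting $T_{\bar{p}}$ eliminates $\bar{p}$ and every $\fanBr{A}\lF\bar{p}$; rejecting $G'$ eliminates $\fanBr{\{2\}}\lF p$ and $\fanBr{\{1,2\}}\lF p$; and each of the remaining candidates $p$, $\fanBr{\emptyset}\lF p$, $\fanBr{\{1\}}\lF p$ implies $\fanBr{\{1\}}\lF p$, so $G$ is positive. The $\{\lG\}$ case is handled by the dual reduction, swapping the positive and negative sides and replacing $G'$ by a symmetric safety-style gadget.

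The main obstacle is the design of $G'$: without it, the formulas $\fanBr{\{2\}}\lF p$ or $\fanBr{\{1,2\}}\lF p$ would spuriously separate negative $\msf{Reach}$-instances, destroying the reduction; $G'$ is tailored so that these two formulas accept it while $\fanBr{\{1\}}\lF p$ does not. For the membership part, the delicate point is verifying that the canonical-form rewriting applies locally within each maximal unary chain without interacting with the surrounding binary-operator structure, so that the polynomial count of canonical representatives really does dominate the full fragment for each fixed $n$.
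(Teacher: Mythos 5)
Your proposal follows essentially the same route as the paper: for membership, the paper also uses Lemma~\ref{lem:equiv_atl_dominating_quantifiers} to collapse unary chains into the canonical family $\msf{Quant}_{\msf{Alt}}^{\lF}$ (empty chain, $\fanBr{\emptyset}\lF$, $\fanBr{\{1,2\}}\lF$, or strict $\fanBr{\{1\}}\lF/\fanBr{\{2\}}\lF$ alternations), enumerates polynomially many formula sketches for fixed $n$, and model-checks; for hardness, the paper reduces from $\msf{Reach}$ with the same two key structures ($T_{\bar p}$ and your $G'$, which is exactly its $T_{\msf{no}\,\fanBr{\{2\}}}$), its extra positive gadget $T_{\msf{no}\lG}$ being needed only because its reduction lemma is stated for larger operator sets $\Ut$. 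Two small corrections. First, $B=2$ does not literally forbid binary operators: since size is the number of distinct subformulas, $x\bullet x$ has size $2$; these formulas are equivalent to $p$, $\bar p$, $\msf{True}$ or $\msf{False}$ and are eliminated by the same case analysis (the paper disposes of them in one line), so the conclusion stands, but your explicit candidate list should include them. Second, for the $\{\lG\}$ case do not replace $G'$: swapping $\mathcal{P}$ and $\mathcal{N}$ while keeping the very same structures works (the witness is $\fanBr{\{2\}}\lG\bar p$, and $\fanBr{\emptyset}\lG\bar p$, $\fanBr{\{1\}}\lG\bar p$ are killed because they reject $G'$, now positive). A literally ``symmetric'' gadget whose initial state is owned by Agent~1 would break the converse direction: $\fanBr{\{1\}}\lG\bar p$ would accept it, and its rejecting $G$ only certifies $G\models\fanBr{\{2\}}\lF p$, which does not imply $G\models\fanBr{\{1\}}\lF p$.
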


	The idea is that with only the operator $\lF$ (it is the same with only the operator $\lG$), given a bound $B \in \N_1$ and a set of propositions $\prop$, it is sufficient to consider only polynomially many $\ATL$-formulas. The reason why comes from Lemma~\ref{lem:equiv_atl_dominating_quantifiers} (that we have stated when considering $\CTL$-formulas). Indeed, this lemma states that, for all $\ATL$-formulas $\phi$, for two coalitions of agents $A_1,A_2$, as soon as $A_1 \subseteq A_2$ or $A_2 \subseteq A_1$, then $\fanBr{A_1} \lF \fanBr{A_2} \lF \phi \equiv \fanBr{A} \lF \phi$, with $A \in \{ A_1,A_2 \}$ the largest of the two coalitions. However, since there are only two agents, there are only four coalitions: $\emptyset,\{1\},\{2\},\{1,2\}$. Furthermore, we can make the following observations, for formulas without binary operators:
	\begin{itemize}
		\item If the operator $1,2 \lF$ is used, then all other $\lF$-operators are useless;
		\item If at least one $\lF$-operator is used, using additional $\emptyset \lF$ is useless;
		\item Using twice in a row the operator $1\lF$ or the operator $2\lF$ is useless.
	\end{itemize}
	
	This implies that it is sufficient to consider only polynomially many $\ATL$-formulas. We define below the set of $\ATL$-formulas to consider.
	\begin{definition}
		Consider a set of propositions $\prop$. We define the sets $\ATL_{\lF}(\prop,2)$ and $\ATL_{\lG}(\prop,2)$ below. For $\msf{H} \in \{ \lF,\lG\}$, we define:
		\begin{equation*}
			\msf{Quant}_{\msf{Alt}}^{\lF} := \{ \epsilon,\emptyset \lF,1,2 \lF, (1\lF \cdot 2\lF)^*,(1\lF \cdot 2\lF)^* \cdot 1\lF,(2\lF \cdot 1\lF)^*,(2\lF \cdot 1\lF)^* \cdot 2\lF \}
		\end{equation*}
		and
		\begin{equation*}
			\msf{Quant}_{\msf{Alt}}^{\lG} := \{ \epsilon,1,2 \lG,\emptyset \lG, (1\lG \cdot 2\lG)^*,(1\lG \cdot 2\lG)^* \cdot 1\lG,(2\lG \cdot 1\lG)^*,(2\lG \cdot 1\lG)^* \cdot 2\lG \}
		\end{equation*}
	\end{definition}
	
	This definition satisfies the lemma below.
	\begin{lemma}
		\label{lem:ATL_2_only_F_or_G}
		Consider any set of propositions $\prop$, $\msf{H} \in \{ \lF,\lG\}$. For all sequences of quantifiers $\msf{Qt} \in \msf{Op}(\{\msf{H}\},2)^*$, there is some $\msf{Qt}' \in 	\msf{Quant}_{\msf{Alt}}^{\msf{H}}$ such that: 
		\begin{itemize}
			\item $|\msf{Qt}| \leq |\msf{Qt}'|$;
			\item for all $\ATL^2$-formulas $\phi$, $\msf{Qt} \cdot \phi \equiv \msf{Qt}' \cdot \phi$.
		\end{itemize} 
	\end{lemma}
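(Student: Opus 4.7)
The plan is to argue by iterated rewriting: starting from $\msf{Qt}$, I successively apply Lemma~\ref{lem:equiv_atl_dominating_quantifiers} specialized to two agents in order to drive $\msf{Qt}$ toward an element of $\msf{Quant}_{\msf{Alt}}^{\msf{H}}$. Since the coalitions over $\{1,2\}$ form a four-element lattice under inclusion whose only incomparable pair is $\{1\},\{2\}$, any two adjacent quantifiers $\fanBr{A}\msf{H} \cdot \fanBr{A'}\msf{H}$ that are \emph{not} one of $\{1\}\msf{H}\cdot\{2\}\msf{H}$ or $\{2\}\msf{H}\cdot\{1\}\msf{H}$ satisfy $A \subseteq A'$ or $A' \subseteq A$; Lemma~\ref{lem:equiv_atl_dominating_quantifiers} then gives $\fanBr{A}\msf{H}\fanBr{A'}\msf{H}\phi \equiv \fanBr{A'}\msf{H}\fanBr{A}\msf{H}\phi \equiv \fanBr{\max(A,A')}\msf{H}\phi$ for the $\lF$ case and the dual for $\lG$. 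Exhaustive collapse of such pairs terminates in a sequence $\overline{\msf{Qt}}$ whose only remaining consecutive pairs are strict alternations, hence $\overline{\msf{Qt}} \in \msf{Quant}_{\msf{Alt}}^{\msf{H}}$ by inspection of the list of patterns.

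To address the length requirement $|\msf{Qt}| \leq |\msf{Qt}'|$, I would \emph{not} take $\msf{Qt}' := \overline{\msf{Qt}}$ directly (since the collapses above only show $|\overline{\msf{Qt}}| \leq |\msf{Qt}|$), but instead define $\msf{Qt}'$ by re-inflating $\overline{\msf{Qt}}$ back up to length at least $|\msf{Qt}|$. The inflation uses the reverse half of Lemma~\ref{lem:equiv_atl_dominating_quantifiers}: for any $A \subseteq A'$, one has $\fanBr{A'}\msf{H}\phi \equiv \fanBr{A}\msf{H}\fanBr{A'}\msf{H}\phi \equiv \fanBr{A'}\msf{H}\fanBr{A}\msf{H}\phi$, so a dominated quantifier can be freely inserted on either side of a dominating one. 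I would proceed by induction on $|\msf{Qt}|$: at each collapse step that shortened $\msf{Qt}$ by one, record the pair $(\fanBr{A}\msf{H}, \fanBr{A'}\msf{H})$ that was merged and, in $\overline{\msf{Qt}}$, re-insert the dominated quantifier at the matching position. A case analysis on the shape of $\overline{\msf{Qt}}$ (one of $\epsilon$, $\emptyset\msf{H}$, $\fanBr{\{1,2\}}\msf{H}$, or a pure $\{1\}\msf{H}/\{2\}\msf{H}$-alternation) shows that the resulting inflated sequence is still in $\msf{Quant}_{\msf{Alt}}^{\msf{H}}$: in the three singleton cases arbitrarily long equivalent sequences exist by iterating the reverse rule, and in the alternating case the dominated quantifier is one of $\{1\}\msf{H},\{2\}\msf{H}$ and slots in without breaking alternation.

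The main obstacle is exactly this re-inflation step while staying inside $\msf{Quant}_{\msf{Alt}}^{\msf{H}}$: the patterns $(1\msf{H}\cdot 2\msf{H})^*$, $(1\msf{H}\cdot 2\msf{H})^*\cdot 1\msf{H}$, and their mirror images forbid inserting $\emptyset\msf{H}$ or $\fanBr{\{1,2\}}\msf{H}$ in the middle of an alternating block, so I have to verify that whenever $\overline{\msf{Qt}}$ is a genuine alternation, the only collapsed pairs present in $\msf{Qt}$ that need re-inserting are themselves alternation-compatible (i.e. another $\{1\}\msf{H}$ or $\{2\}\msf{H}$), and that all collapses involving $\emptyset\msf{H}$ or $\fanBr{\{1,2\}}\msf{H}$ occur only when $\overline{\msf{Qt}}$ is itself one of the singleton patterns $\emptyset\msf{H}$ or $\fanBr{\{1,2\}}\msf{H}$. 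The combinatorial core of the argument is thus a careful tracking, through the rewriting, of which coalition dominates which, done symmetrically for the $\lG$ case.
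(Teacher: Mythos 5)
Your first paragraph --- exhaustively collapsing every comparable adjacent pair via Lemma~\ref{lem:equiv_atl_dominating_quantifiers} until only the incomparable alternation of $\{1\}$ and $\{2\}$ survives --- is essentially the paper's proof (the paper phrases it as a case split: $\msf{Qt}$ uses only $\fanBr{\emptyset}\msf{H}$, or it uses $\fanBr{\{1,2\}}\msf{H}$, or neither, in which case one deletes the dominated $\fanBr{\emptyset}\msf{H}$'s and shrinks repeated $1\msf{H}$/$2\msf{H}$ blocks). That half is correct.

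The re-inflation step, however, is where the proposal goes wrong. You are taking the printed inequality $|\msf{Qt}| \leq |\msf{Qt}'|$ at face value, but read literally the lemma is false, so no amount of case analysis can complete that step: take $\msf{Qt} := (\fanBr{\emptyset}\lF)^5$. Then $\msf{Qt}\cdot\phi \equiv \fanBr{\emptyset}\lF\,\phi$, and $\fanBr{\emptyset}\lF\,\phi$ is not in general equivalent to $\fanBr{\{1,2\}}\lF\,\phi$ nor to any nonempty $(1\lF\cdot 2\lF)$-alternation applied to $\phi$ (those are implied by, but do not imply, the universal formula); so the only elements of $\msf{Quant}_{\msf{Alt}}^{\lF}$ equivalent to $\msf{Qt}$ have length $1 < 5$. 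Your own insertion rule confirms the obstruction: the only quantifier that may be inserted next to $\fanBr{\emptyset}\lF$ while preserving equivalence is another $\fanBr{\emptyset}\lF$, and $(\fanBr{\emptyset}\lF)^2 \notin \msf{Quant}_{\msf{Alt}}^{\lF}$. The inequality in the statement is a typo for $|\msf{Qt}'| \leq |\msf{Qt}|$: that is what your collapse already produces, and it is the direction actually needed downstream --- Lemma~\ref{lem:ATL_2_binary_operators_polynomial} must replace a formula of size at most $B$ by an equivalent canonical one of size at most $B$, so a lengthening replacement would be useless. Drop the second and third paragraphs and simply record that the collapsed sequence is no longer than the original.
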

	\begin{proof}
		We prove the result for $\msf{H} = \lF$, the case $\msf{H} = \lG$ is analogous, and also relies entirely on Lemma~\ref{lem:equiv_atl_dominating_quantifiers}. 
		There are several cases:
		\begin{itemize}
			\item If $\msf{Qt} = \epsilon$, then $\msf{Qt} \in \msf{Quant}_{\msf{Alt}}^{\msf{H}}$. 
			\item Otherwise, assume that $\msf{Qt}$ only features the operator $\emptyset \lF$. In that case, for all $\ATL^2$-formulas $\phi$, we have $\msf{Qt} \cdot \phi \equiv \emptyset \lF \cdot \phi$.
			\item Otherwise, assume that $\msf{Qt}$ features the operator $1,2 \lF$. In that case, for all $\ATL^2$-formulas $\phi$, we have $\msf{Qt} \cdot \phi \equiv 1,2 \lF \cdot \phi$.
			\item Otherwise, $\msf{Qt}$ does not feature the operator $1,2 \lF$ and features the operator $1\lF$ or the operator $2\lF$. In that case, consider the sequence $\msf{Qt}'$ obtained from $\msf{Qt}$ by:
			\begin{itemize}
				\item removing all $\emptyset \lF$ operators;
				\item shrinking all sequences in $(1 \lF)^+$ into $1 \lF$ and shrinking all sequences in $(2 \lF)^+$ into $2 \lF$.
			\end{itemize}
			By Lemma~\ref{lem:equiv_atl_dominating_quantifiers}, for all $\ATL^2$-formulas $\phi$, the formula $\msf{Qt} \cdot \phi$ is equivalent to $\msf{Qt}' \cdot \phi$, and $|\msf{Qt}| \leq |\msf{Qt}'|$.
		\end{itemize}
	\end{proof}
	
	We handle in the definition below the case of formulas that may use a bounded amount of binary operators. 
	\begin{definition}
		\label{def:ATL_2_sketch}
		Consider some $\Bl \subseteq \Op{Bin}{lg}$ and some $\msf{H} \in \{\lF,\lG\}$. We define inductively on $n \in \N$ the set $\msf{Rel}^{\ATL(2)}(\Bl,\msf{H},n)$ as follows:
		\begin{itemize}
			\item $\msf{Rel}^{\ATL(2)}(\Bl,\msf{H},0) := \{ \msf{Qt} \cdot ? \mid \msf{Qt} \in 	\msf{Quant}_{\msf{Alt}}^{\msf{H}} \}$;
			\item For all $n \in \N$, $\msf{Rel}^{\ATL(2)}(\Bl,\msf{H},n+1) := \{ \msf{Qt} \cdot (\gamma_1 \bullet \gamma_2) \mid \msf{Qt} \in 	\msf{Quant}_{\msf{Alt}}^{\msf{H}},\; \bullet \in \Bl,\; \gamma_1,\gamma_2 \in \msf{Rel}^{\ATL(2)}(\Bl,\msf{H},n) \}$.
		\end{itemize}
		For all sets of propositions $\prop$, $n \in \N$, and $\gamma \in \msf{Rel}^{\ATL(2)}(\Bl,\msf{H},n)$, we say that an $\ATL^2$-formula $\phi$ is \emph{$\prop$-from} $\gamma$ if it is equal to $\gamma$, up to replacing every $?$ with some proposition in $\prop$. 
		
		For all bounds $B \in \N$, we let $\msf{Rel}^{\ATL(2)}(\Bl,\msf{H},n,B)$ denote the set of elements in $\msf{Rel}^{\ATL(2)}(\Bl,\msf{H},n)$ from which we can obtain a formula of size at most $B$. 
	\end{definition}
	
	This definition satisfies the lemma below. 
	\begin{lemma}
		\label{lem:ATL_2_binary_operators_polynomial}
		Consider some $\Bl \subseteq \Op{Bin}{lg}$  and some $\msf{H} \in \{\lF,\lG\}$. For all $n \in \N$, there is a polynom $q_n$ such that, for all $B \in \N$, $|\msf{Rel}^{\ATL(2)}(\Bl,\msf{H},n,B)| \leq q_n(n)$. 
		
		Furthermore, for all non-empty sets of propositions $\prop$, for all $\ATL^2$-formulas $\phi$ of size at most $B$, there is some $\gamma \in \msf{Rel}^{\ATL(2)}(\Bl,\msf{H},n,B)$ and a formula $\phi'$ of size at most $B$ that is $\prop$-obtained from $\gamma$ and such that $\phi \equiv \phi'$. 
	\end{lemma}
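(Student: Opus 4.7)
The plan is to prove both assertions by induction on $n$. For the cardinality bound, I would first observe that $\msf{Quant}_{\msf{Alt}}^{\msf{H}}$ consists of a constant number (at most eight) of template families, and for each template the number of concrete sequences of length at most $k$ is linear in $k$. At $n=0$, every element of $\msf{Rel}^{\ATL(2)}(\Bl,\msf{H},0,B)$ has the form $\msf{Qt}\cdot{?}$ with $|\msf{Qt}|\le B-1$, so the base case bound is $q_0(B)=O(B)$. For the inductive step, an element of $\msf{Rel}^{\ATL(2)}(\Bl,\msf{H},n+1,B)$ is determined by a prefix $\msf{Qt}\in\msf{Quant}_{\msf{Alt}}^{\msf{H}}$ with $|\msf{Qt}|\le B$, a binary operator $\bullet\in\Bl$, and two elements $\gamma_1,\gamma_2$ whose sizes sum to at most $B-|\msf{Qt}|-1$. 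Using the inductive bound, the number of such triples is at most
\begin{equation*}
O(B)\cdot|\Bl|\cdot\sum_{B_1+B_2\le B-1}q_n(B_1)\,q_n(B_2),
\end{equation*}
which is again polynomial in $B$ and provides the definition of $q_{n+1}$.

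For the representation claim, I would induct on the structure of $\phi\in\ATL^2(\prop,\{\msf{H}\},\emptyset,\Bl,n)$. Since all unary operators occurring in $\phi$ lie in $\msf{Op}(2,\{\msf{H}\})$, write $\phi=\msf{Qt}\cdot\phi'$ with $\msf{Qt}\in\msf{Op}(2,\{\msf{H}\})^*$ a maximal prefix of unary operators, so that $\phi'$ is either a proposition $p$ or a binary formula $\phi_1\bullet\phi_2$. Lemma~\ref{lem:ATL_2_only_F_or_G} supplies some $\msf{Qt}'\in\msf{Quant}_{\msf{Alt}}^{\msf{H}}$ with $|\msf{Qt}'|\le|\msf{Qt}|$ and $\msf{Qt}\cdot\psi\equiv\msf{Qt}'\cdot\psi$ for every $\ATL^2$-formula $\psi$. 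If $\phi'=p$, take $\gamma:=\msf{Qt}'\cdot{?}$ and let $\phi''$ be obtained by substituting $p$ for the placeholder. Otherwise $\phi_1$ and $\phi_2$ each use at most $n-1$ binary operators, so the inductive hypothesis yields $\gamma_i\in\msf{Rel}^{\ATL(2)}(\Bl,\msf{H},n-1)$ and formulas $\phi_i'\equiv\phi_i$ with $\Size{\phi_i'}\le\Size{\phi_i}$ that are $\prop$-from $\gamma_i$; then set $\gamma:=\msf{Qt}'\cdot(\gamma_1\bullet\gamma_2)$ and $\phi'':=\msf{Qt}'\cdot(\phi_1'\bullet\phi_2')$. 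In either case $\phi\equiv\phi''$ and $\Size{\phi''}\le\Size{\phi}\le B$, so indeed $\gamma\in\msf{Rel}^{\ATL(2)}(\Bl,\msf{H},n,B)$ and $\phi''$ is the desired $\prop$-formula obtained from $\gamma$.

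The only subtlety worth watching is the legitimacy of normalising the unary prefix independently of what sits beneath it; but Lemma~\ref{lem:ATL_2_only_F_or_G} gives $\msf{Qt}\cdot\psi\equiv\msf{Qt}'\cdot\psi$ uniformly in $\psi$, so the rewriting composes through the binary recursion without trouble. There is no genuine obstacle beyond tracking the polynomial degree, which grows with $n$ owing to the convolution appearing in the recurrence for $q_{n+1}$.
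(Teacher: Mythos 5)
Your proof is correct and follows essentially the same route as the paper: the paper's own argument simply observes that $\msf{Quant}_{\msf{Alt}}^{\msf{H}}$ contains only polynomially many sequences of size at most $B$ and that the representation claim follows from Lemma~\ref{lem:ATL_2_only_F_or_G}, which is exactly what your two inductions (on $n$ for the counting, on the structure of $\phi$ for the normalisation) spell out in detail. The only difference is that you make explicit the convolution recurrence for $q_{n+1}$ and the compositionality of the prefix rewriting, both of which the paper leaves implicit.
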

	\begin{proof}
		The first part of the lemma is a direct consequence of the fact that the number of sequences in $\msf{Quant}_{\msf{Alt}}^{\msf{H}}$ of size at most $B \in \N$ is bounded by a polynomial in $B$. 
		
		The second part of the lemma is a direct consequence of Lemma~\ref{lem:ATL_2_only_F_or_G}. 
	\end{proof}
	
	We deduce that the learning problem for ATL with two agents and only one of the operators $\lF,\lG$ can be decided in polynomial time. 
	\begin{lemma}
		\label{lem:atl_2_F_and_G_in_P}
		For all $\Bl \subseteq \Op{Bin}{lg}$ and $n \in \N$, both decision problems $\ATL^2_\msf{Learn}(\{\lF\},\emptyset,\Bl,n)$ and $\ATL^2_\msf{Learn}(\{\lG\},\emptyset,\Bl,n)$ can be decided in polynomial time.
	\end{lemma}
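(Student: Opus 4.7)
The plan is to give an explicit polynomial-time algorithm that, on input $(\prop,\mathcal{P},\mathcal{N},B)$, enumerates a polynomially-sized set of candidate $\ATL^2$-formulas and, for each candidate, checks via model checking whether it separates $\mathcal{P}$ and $\mathcal{N}$. The output is yes iff some candidate succeeds. Correctness will follow from Lemma~\ref{lem:ATL_2_binary_operators_polynomial}: up to equivalence, every $\ATL^2$-formula of size at most $B$ using only $\msf{H} \in \{\lF,\lG\}$ and the binary operators in $\Bl$ is $\prop$-obtained from some $\gamma \in \msf{Rel}^{\ATL(2)}(\Bl,\msf{H},n,B)$, so restricting the search to such formulas loses no generality.

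More concretely, the first step is to enumerate all $\gamma \in \msf{Rel}^{\ATL(2)}(\Bl,\msf{H},n,B)$. By the first part of Lemma~\ref{lem:ATL_2_binary_operators_polynomial}, for fixed $n$ the size of this set is bounded by a polynomial $q_n(B)$, and each element can itself be written down in polynomial space. The second step, for each skeleton $\gamma$, is to enumerate all ways of substituting propositions from $\prop$ for the $?$-placeholders of $\gamma$. Because $n$ is fixed and $\gamma$ has at most $n+1$ placeholders (the leaves of its syntax tree, which has $n$ internal binary nodes), the number of such substitutions is at most $|\prop|^{n+1}$, hence polynomial in $|\prop|$.

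The third step is, for each resulting $\ATL^2$-formula $\phi$ of size at most $B$, to verify that $X \models \phi$ for every $X \in \mathcal{P}$ and $X \not\models \phi$ for every $X \in \mathcal{N}$. As recalled before Proposition~\ref{prop:easy}, $\ATL$ model checking is decidable in polynomial time in the combined size of the formula and the structure, so each check runs in time polynomial in $|\phi| + |X|$, hence in the input size. Combining the three steps, the total running time is bounded by $q_n(B) \cdot |\prop|^{n+1}$ times a polynomial in the input, which is polynomial in $(|\prop|+|\mathcal{P}|+|\mathcal{N}|+B)$ for fixed $n$. Together with the soundness and completeness guaranteed by the second part of Lemma~\ref{lem:ATL_2_binary_operators_polynomial}, this yields a polynomial-time decision procedure for both $\ATL^2_\msf{Learn}(\{\lF\},\emptyset,\Bl,n)$ and $\ATL^2_\msf{Learn}(\{\lG\},\emptyset,\Bl,n)$.

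There is no real obstacle left: the heavy lifting sits entirely in Lemma~\ref{lem:ATL_2_binary_operators_polynomial}, which already reduces the space of candidate formulas to a polynomial-sized family of skeletons with only $O(n)$ proposition slots. The only thing to be a little careful about is that the skeleton enumeration and the placeholder-filling must themselves be carried out within polynomial time (not merely that the resulting set is polynomial in size), but this is routine since each $\gamma \in \msf{Rel}^{\ATL(2)}(\Bl,\msf{H},n,B)$ can be generated by the straightforward recursion implicit in Definition~\ref{def:ATL_2_sketch}.
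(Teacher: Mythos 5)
Your proposal is essentially the paper's own proof: enumerate the polynomially many skeletons in $\msf{Rel}^{\ATL(2)}(\Bl,\msf{H},n,B)$, fill in the placeholders with propositions from $\prop$, model-check each resulting formula against $\mathcal{P}$ and $\mathcal{N}$, and invoke Lemma~\ref{lem:ATL_2_binary_operators_polynomial} for both the polynomial skeleton count and the completeness of the restricted search. The only inaccuracy is your claim that a skeleton has at most $n+1$ placeholders: by the recursion of Definition~\ref{def:ATL_2_sketch} a level-$n$ skeleton can have up to $2^n$ placeholders (the paper accordingly bounds the substitutions by $|\prop|^{2^n}$), but since $n$ is fixed this is still polynomial in $|\prop|$ and the argument goes through unchanged.
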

	\begin{proof}
		Let $\msf{H} \in \{\lF,\lG\}$, $\Bl \subseteq \Op{Bin}{lg}$ and $n \in \N$. Given an instance $\msf{In} = (\prop,\mathcal{P},\mathcal{N},B)$ of the decision problem $\ATL^2_\msf{Learn}(\{\msf{H}\},\emptyset,\Bl,n)$, one can follow the following steps to check if $\msf{In}$ is a positive instance:
		\begin{enumerate}
			\item Loop over all elements $\gamma \in \msf{Rel}^{\ATL(2)}(\Bl,\msf{H},n,B)$;
			\item Loop over all formulas $\phi$ that can be $\prop$-obtained from $\gamma$ of size at most $B$;
			\item Check whether or not this formula $\phi$ accepts $\mathcal{P}$ and rejects $\mathcal{N}$.
		\end{enumerate} 
		Following these steps is enough to decide if $\msf{In}$ is a positive instance of $\ATL^2_\msf{Learn}(\{\msf{H}\},\emptyset,\Bl,n)$ by Lemma~\ref{lem:ATL_2_binary_operators_polynomial}. Furthermore, they can be executed in polynomial time. Indeed, by Lemma~\ref{lem:ATL_2_binary_operators_polynomial}, the loop of step 1) is entered polynomially many times. Furthermore, all formulas with at most $n$ occurrences of binary operators use at most $2^n$ propositions. Hence, the loop of step 2) is entered at most $|\prop|^{2^n}$ times, which is polynomial in $\prop$ since $n$ is fixed. Finally, the last step can be done in polynomial time as well. Thus, we do obtain a polynomial-time procedure deciding the problem $\ATL^2_\msf{Learn}(\{\msf{H}\},\emptyset,\Bl,n)$.
	\end{proof}
	
	\textbf{$\msf{P}$-hardness} As mentioned for the $\msf{NL}$-hardness proof of the $\CTL$ learning problem without $\lX$, to establish the $\msf{P}$-hardness, we are going to exhibit a reduction from the reachability problem in two-player games $\msf{Reach}$, introduced in Definition~\ref{def:reach_decision_problem}. We define the reduction that we consider. Note the three turn-based structures that we define below are depicted in Figure~\ref{fig:simpleGamesATL2}.
	\begin{definition}
		\label{def:reduc_atl_2_p_hard}
		Consider two propositions $p,\bar{p}$ and a proper $\{1,2\}$-turn-based structure $T$ on $\{p,\bar{p}\}$. We let: 
		\begin{itemize}
			\item $T_{\bar{p}}$ be a trivial structure whose only state is labeled by the proposition $\bar{p}$; 
			\item $T_{\msf{no} \lG}$ be a two-state $(0,\emptyset)$-proper $\{1,2\}$-turn-based structure whose only starting state, labeled by $\bar{p}$, belongs to Agent 1, with two outgoing edges, one that loops, and one that goes to a self-looping sink labeled by $p$; 
			\item $T_{\msf{no} \; \fanBr{\{2\}}}$ be a two-state $(0,\emptyset)$-proper $\{1,2\}$-turn-based structure similar to $T_{\msf{no} \lG}$, except that the starting state belongs to Agent 2, instead of Agent 1.
			\item $\mathcal{P} := \{T,T_{\msf{no} \lG}\}$ and $\mathcal{N} := \{T_{\bar{p}},T_{\msf{no} \; \fanBr{\{2\}}}\}$.
		\end{itemize}
		
		Then, we define the inputs $\msf{In}^{\ATL^2(2),\lF}_{(p,\bar{p},T)} := (\{p,\bar{p}\},\mathcal{P},\mathcal{N},2)$ and $\msf{In}^{\ATL^2(2),\lG}_{(p,\bar{p},T)} := (\{p,\bar{p}\},\mathcal{N},\mathcal{P},2)$.
	\end{definition}
	
	\begin{figure}
		\centering
		\includegraphics[scale=1.2]{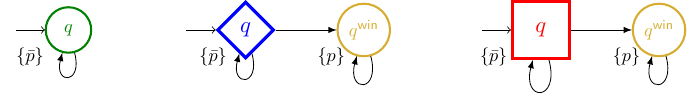}
		\caption{The game $T_{\bar{p}}$ on the left, the game $T_{\msf{no} \lG}$ in the middle and the game $T_{\msf{no} \; \fanBr{\{2\}}}$ on the right.}
		\label{fig:simpleGamesATL2}
	\end{figure}
	
	The definition above satisfies the lemma below.
	\begin{lemma}
		\label{lem:atl_2_reduc}
		Consider some set of unary operators $\Ut \subseteq \{\lX,\lF,\lG,\neg\}$, some set of binary operators $\Bl \subseteq \Op{Bin}{lg}$, some $n \in \N$ and an input $(p,\bar{p},T)$ of the decision problem $\msf{Reach}$. Let $\msf{H} \in \{\lF,\lG\}$. If $\msf{H} \in \Ut$, the input $(p,\bar{p},T)$ is a positive instance of the $\msf{Reach}$ decision problem if and only if $\msf{In}^{\ATL^2(2),\msf{H}}_{(p,\bar{p},T)}$ is a positive instance of the  $\ATL_\msf{Learn}^2(\Ut,\emptyset,\Bl,n)$ decision problem. 
	\end{lemma}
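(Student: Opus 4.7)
The plan is to follow the template of Lemma~\ref{lem:ctl_2_reduction}, adapting it to the two-agent turn-based setting. Since the size bound is $B=2$ and any binary operator contributes size at least one on top of its two arguments, no separating formula of size at most $2$ can contain a binary operator, so the value of $\Bl$ and $n$ are irrelevant. Hence every candidate separator is either a single proposition or a unary operator applied to a single proposition, leaving a short finite list to inspect on the four gadget structures $T$, $T_{\bar p}$, $T_{\msf{no}\lG}$, $T_{\msf{no}\;\fanBr{\{2\}}}$.

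For the forward direction, assume $T \models \fanBr{\{1\}}\lF p$. When $\msf{H}=\lF$ take $\phi := \fanBr{\{1\}}\lF p$; when $\msf{H}=\lG$ take $\phi := \fanBr{\{2\}}\lG \bar p$, which by Proposition~\ref{prop:equiv_negation_ATL_turn_based} is equivalent to $\neg \fanBr{\{1\}}\lF p$ on $\{1,2\}$-turn-based structures over $\{p,\bar p\}$. In either case $\Size{\phi}=2$. A direct check of the three fixed gadgets establishes separation: $T_{\bar p}$ has only $\bar p$ reachable, so it rejects $\fanBr{\{1\}}\lF p$ and accepts $\fanBr{\{2\}}\lG \bar p$; in $T_{\msf{no}\lG}$ Agent~$1$ owns the branching state and can pick the $p$-successor, so the structure accepts $\fanBr{\{1\}}\lF p$ and rejects $\fanBr{\{2\}}\lG \bar p$; in $T_{\msf{no}\;\fanBr{\{2\}}}$ Agent~$2$ owns the branching state and can loop in $\bar p$ forever, so the structure rejects $\fanBr{\{1\}}\lF p$ and accepts $\fanBr{\{2\}}\lG \bar p$. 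The behaviour on $T$ itself follows from the hypothesis, again using Proposition~\ref{prop:equiv_negation_ATL_turn_based} in the $\lG$ case.

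For the reverse direction, assume there is a separating formula $\phi$ of size at most $2$. Because all four gadget structures are $(0,\emptyset)$-proper, $p$ and $\neg \bar p$ agree on every state, so Lemma~\ref{lem:unnegate_unary} lets us replace $\phi$ with an equivalent negation-free formula $\phi'$ of size at most $2$ whose unary operators lie in $\msf{Op}(2,\{\lX,\lF,\lG\})$. Any such $\phi'$ is either a proposition or has the shape $\fanBr{A}\,\msf{O}\,x$ with $A\subseteq\{1,2\}$, $\msf{O}\in\{\lX,\lF,\lG\}$ and $x\in\{p,\bar p\}$. Evaluating each candidate on the three fixed gadgets, the requirement to accept $T_{\msf{no}\lG}$ while rejecting both $T_{\bar p}$ and $T_{\msf{no}\;\fanBr{\{2\}}}$ (in the $\msf{H}=\lF$ case) selects only candidates of the form $\fanBr{B}\,\msf{O}\,p$ where $1\in B$ and $2\notin B$ and $\msf O\in\{\lX,\lF\}$; each of these formulas entails $\fanBr{\{1\}}\lF p$, so $T\models\phi'$ yields $T\models\fanBr{\{1\}}\lF p$. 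The $\msf{H}=\lG$ case is dual, with the role of $\mathcal{P}$ and $\mathcal{N}$ swapped and the conclusion obtained by applying Proposition~\ref{prop:equiv_negation_ATL_turn_based} to turn acceptance of $\fanBr{B}\,\msf{O}\,\bar p$ formulas (with $2\in B$, $1\notin B$) into rejection of $\fanBr{\{1\}}\lF p$.

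The main obstacle is the bookkeeping in the reverse direction: one must systematically rule out the several extra candidates that arise from the $\lX$ operator and from the strategic quantifiers $\fanBr{A}$ for each $A \subseteq \{1,2\}$. None of them is difficult individually, but compared to the $\CTL$ proof the enumeration is noticeably longer, and a careful argument is needed to show that every surviving candidate semantically implies $\fanBr{\{1\}}\lF p$ on $T$ (or its negation, for the $\lG$ case). Once this enumeration is done, the lemma follows by combining both directions.
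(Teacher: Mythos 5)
Your overall strategy coincides with the paper's: exhibit $\fanBr{\{1\}}\lF p$ (resp.\ $\fanBr{\{2\}}\lG \bar p$, justified via Proposition~\ref{prop:equiv_negation_ATL_turn_based}) for the forward direction, and for the converse enumerate the size-$2$ candidates, use $T_{\bar p}$, $T_{\msf{no}\lG}$ and $T_{\msf{no}\;\fanBr{\{2\}}}$ to prune them, and argue that every survivor implies $\fanBr{\{1\}}\lF p$. Your pruning is even slightly sharper than the paper's (the paper keeps $\fanBr{\emptyset}\lX p$ and $\fanBr{\emptyset}\lF p$ in the list and only needs that they imply $\fanBr{\{1\}}\lF p$), and your treatment of negation via $\msf{UnNeg}$ is a fine substitute for the paper's direct case analysis.

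There is, however, one step that is wrong as stated: the claim that ``no separating formula of size at most $2$ can contain a binary operator, so the value of $\Bl$ and $n$ are irrelevant.'' The paper measures size as the number of \emph{distinct} sub-formulas, $\Size{\phi} = |\msf{SubF}(\phi)|$, so formulas such as $p \lor p$ or $p \Leftrightarrow p$ have size exactly $2$ and are legitimate candidates whenever $\Bl \neq \emptyset$. Your reverse-direction enumeration therefore misses them, and since the lemma is quantified over all $\Bl$ and $n$, this is a genuine (if small) hole in the case analysis rather than something that can be waved away. The repair is exactly what the paper does: any size-$2$ formula built with a binary operator must be of the form $x \bullet x$ with $x \in \{p,\bar p\}$, hence on $(0,\emptyset)$-proper structures it is equivalent to one of $p$, $\bar p$, $\msf{True}$, $\msf{False}$; the last three cannot separate the sample (e.g.\ $\bar p$ accepts $T_{\bar p}$, $\msf{True}$ accepts everything, $\msf{False}$ rejects everything), and $p$ implies $\fanBr{\{1\}}\lF p$, so the conclusion is unaffected. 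With that case added (and the same remark applied to the dual $\lG$ instance), your argument matches the paper's proof.
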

	\begin{proof}
		First assume that $\msf{H} = \lF$. Assume that $(p,\bar{p},T)$ is a positive instance of $\msf{Reach}$. We let $\varphi := \fanBr{\{1\}} \lF p$. We have $\Size{\varphi} = 2$. By assumption, we have $T \models \varphi$. In addition, $T_{\msf{no} \lG} \models \varphi$, $T_{\bar{p}} \not \models \varphi$ and $T_{\msf{no} \; \fanBr{\{2\}}} \not\models \varphi$. Hence, $\msf{In}^{\ATL^2(2),\msf{F}}_{(p,\bar{p},T)}$ is a positive instance of the  $\ATL_\msf{Learn}^2(\Ut,\emptyset,\Bl,n)$ decision problem. 
		
		On the other hand, assume that $\msf{In}^{\ATL^2(2),\msf{F}}_{(p,\bar{p},T)}$ is a positive instance of the $\ATL_\msf{Learn}^2(\Ut,\emptyset,\Bl,n)$ decision problem. Consider $\phi$ a separating formula of size at most 2 that accepts $\mathcal{P}$ and rejects $\mathcal{N}$. Let us show that, on $(0,\emptyset)$-proper structures, we have $\phi \implies \fanBr{\{1\}} \lF p$. If $\phi$ is a proposition, uses a negation or a binary operator, then it is equivalent to either $p,\bar{p},\msf{True},\msf{False}$ on $(0,\emptyset)$-proper structures. Since $\phi$ accepts $T_{\bar{p}}$ and rejects some negative structures, it necessarily is equivalent to $p$, and therefore it implies $\fanBr{\{1\}} \lF p$. Assume now that $\phi$ uses an operator in $\{\lX,\lF,\lG\}$ (in which we necessarily have $\prop(\phi) = \{p\}$). Since $\phi$ accepts the structure $T_{\msf{no} \lG}$, it does not use the operator $\lG$. Therefore, if it does use an operator, it is not an $\lG$-operator, and therefore the coalition of agents used cannot contain Agent 2 since $\varphi$ rejects the structure $T_{\msf{no} \; \fanBr{\{2\}}}$. Therefore, we have $\varphi \in \{\fanBr{\{1\}} \lX p,\fanBr{\{1\}} \lF p,\fanBr{\emptyset} \lX p,\fanBr{\emptyset} \lF p \}$. One can then check that, in all these cases, we have $\varphi \implies \fanBr{\{1\}} \lF p$. Therefore, since $T \models \varphi$, we also have $T \models \fanBr{\{1\}} \lF p$. Hence, $(p,\bar{p},T)$ is a positive instance of $\msf{Reach}$.
		
		The case $\msf{H} = \lG$ is dual: in that case, we consider the formula $\varphi := \fanBr{\{2\}} \lG \bar{p}$ (recall Proposition~\ref{prop:equiv_negation_ATL_turn_based}).
	\end{proof}
	
	The proof of Theorem~\ref{thm:atl_2_F_and_G_P_complete} is now direct.
	\begin{proof}
		Let $\msf{H} \in \{\lF,\lG\}$. The decision problem $\ATL^2_\msf{Learn}(\{\msf{H}\},\emptyset,\Bl,n)$ can be decided in polynomial time by Lemma~\ref{lem:atl_2_F_and_G_in_P}.
		
		Furthermore, the decision problem $\ATL^2_\msf{Learn}(\{\msf{H}\},\emptyset,\Bl,n)$ is $\msf{P}$-hard by Lemma~\ref{lem:atl_2_reduc} and the fact that the reduction given in Definition~\ref{def:reduc_atl_2_p_hard} can be computed in logarithmic space. 
	\end{proof}

	\subsubsection{$\ATL$ learning with three agents}
	Let us now consider the case of $\ATL$ learning with three agents. The goal of this subsection is to show the lemma below. 
	\begin{theorem}
		\label{thm:atl_3_F_or_G}
		For all sets $\Bl \subseteq \Op{Bin}{lg}$ and $n \in \N$, both decision problems $\ATL^3_\msf{Learn}(\{\lF\},\emptyset,\Bl,n)$ and $\ATL^3_\msf{Learn}(\{\lG\},\emptyset,\Bl,n)$ are $\msf{NP}$-complete.
	\end{theorem}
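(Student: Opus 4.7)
The plan is to apply Theorem~\ref{thm:unary_is_sufficient} with $k=3$ and $\Ut \in \{\{\lF\},\{\lG\}\}$: it suffices, in each case, to provide a logarithmic-space reduction from $\msf{Hit}$ to $\ATL^3_\msf{Learn}(\Ut,\emptyset,\emptyset,0)$ that produces only $(0,\emptyset)$-proper $[1,2,3]$-turn-based structures, and such that the produced instance is positive for $\ATL^3_\msf{Learn}(\Ut,\emptyset,\emptyset,0)$ if and only if it is positive for $\ATL^3_\msf{Learn}(\Ut \cup \{\lF,\lG\},\emptyset,\emptyset,0)$ if and only if the input is a positive instance of $\msf{Hit}$. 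Since membership in $\msf{NP}$ is given by Proposition~\ref{prop:easy}, this yields $\msf{NP}$-completeness. The $\lG$ case is obtained from the $\lF$ case by dualizing every structure via Proposition~\ref{prop:equiv_negation_ATL_turn_based}, so I will describe only the reduction for $\Ut = \{\lF\}$.

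The reduction mirrors the construction of Definition~\ref{def:reduction_ATL_2}, but uses the third agent to simulate the $1\lG$ operator that was essential in the two-agent proof. Concretely, from an instance $(l,C,k)$, I would build turn-based structures in which the main \textquotedblleft alternating corridor\textquotedblright{} is a $3$-agent analog of $T^{2l:\, 1,2}$ whose states alternate between Agents $1$ and $2$, but where every Agent-$1$ state and every Agent-$2$ state is followed by an Agent-$3$ state with two outgoing transitions: one continuing the corridor and one branching to a $q^{\msf{lose}}$ sink. In this setting, pushing through such an Agent-$3$ state with an $\lF$-operator $\fanBr{B}\lF$ requires $3 \in B$, so the formula must alternate between a \textquotedblleft traversal\textquotedblright{} coalition that traverses corridor states (e.g.\ $\fanBr{\{1,3\}}\lF$ and $\fanBr{\{2,3\}}\lF$) and a coalition that \emph{does not} include Agent~$3$ (e.g.\ $\fanBr{\{1\}}\lF$ or $\fanBr{\{2\}}\lF$) that plays the role of $1\lG$ by forcing the current state itself to already satisfy the sub-formula. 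This identifies, analogously to $\phi^{\ATL(2)}(l,H)$, a family of $\ATL^3$-formulas $\phi^{\ATL(3),\lF}(l,H)$ of size $3l+1-k$ in bijection with subsets $H \subseteq [1,\ldots,l]$ of size $k$. The positive set would include: trivial fixing structures (analogs of $T_p$ and $T_{\msf{no}\;\emptyset\lG,2\lG}$) that restrict the set of usable coalitions and pin $\prop(\phi)=\{p\}$; the alternating corridor that forces the $(\{1,3\},\{2,3\},2l)$-alternating shape via Lemma~\ref{lem:ATL_formula_necessary_sufficient}; and for each $C_i$ a structure obtained from $T_{l,C_i,2}$ by inserting Agent-$3$ states, whose acceptance by $\phi^{\ATL(3),\lF}(l,H)$ is equivalent, by the same argument as in Lemma~\ref{lem:ATL_2_intersect}, to $H \cap C_i \neq \emptyset$. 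The negative set would include an analog of $T_{\msf{no}\;1\lG\geq k+1}$ that accepts exactly those formulas using the \textquotedblleft $1\lG$-simulating\textquotedblright{} coalition at fewer than $l-k$ positions, and a structure forcing the $(\{1,3\},\{2,3\})$-alternation rather than its reverse.

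The correctness proof would then go through three layers, mirroring the proof of Lemma~\ref{lem:reduc_ATL_2_equiv}: first using the simple fixing structures and Lemma~\ref{lem:useless_op} to restrict to formulas that only use the allowed alternating coalitions over $\{p\}$; then using the corridor structure and Lemma~\ref{lem:ATL_formula_necessary_sufficient} (Item a) to force the alternating shape; and finally using a $3$-agent version of Lemma~\ref{lem:ATL_reject_G_concise} to show that rejecting the cardinality-bounding structure exactly characterizes the formulas $\phi^{\ATL(3),\lF}(l,H)$. The converse direction, building $\phi^{\ATL(3),\lF}(l,H)$ from a hitting set $H$ of size exactly $k$, uses Lemma~\ref{lem:ATL_formula_necessary_sufficient} (Item b) on the safe alternating paths of the positive corridor and $C_i$-structures, and critically produces a formula that uses \emph{only} $\lF$-operators so that the same formula witnesses positivity for both $\ATL^3_\msf{Learn}(\{\lF\},\emptyset,\emptyset,0)$ and $\ATL^3_\msf{Learn}(\{\lF,\lG\},\emptyset,\emptyset,0)$, satisfying the hypothesis of Theorem~\ref{thm:unary_is_sufficient}.

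The main obstacle is the design of the cardinality-bounding negative structure: in the two-agent proof, $T_{\msf{no}\;1\lG\geq k+1}$ crucially exploits the fact that $1\lG$ \textquotedblleft freezes\textquotedblright{} the path by forcing Agent~$1$ to loop. In the three-agent/$\lF$-only setting, the analog must be an Agent-$3$ gadget whose acceptance counts the number of positions where the formula uses the Agent-$3$-excluding (i.e.\ $1\lG$-simulating) coalition rather than the traversal coalition, and ensure that this count cannot be reduced by \textquotedblleft cheating\textquotedblright{} with yet other coalitions such as $\fanBr{\{1,2,3\}}\lF$ or $\fanBr{\{3\}}\lF$. Ruling out these additional coalitions either requires adding further simple fixing structures (in the spirit of $T_{\msf{no}\;\emptyset\lG,2\lG}$) or, more cleanly, redoing the alternation argument so that every coalition not of the intended two shapes either fails on the corridor or fails on the cardinality gadget, which is the delicate combinatorial bookkeeping step.
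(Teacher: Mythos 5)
Your high-level strategy (reduce via Theorem~\ref{thm:unary_is_sufficient} to the binary-operator-free case, build turn-based structures, and force the formula shape with Lemma~\ref{lem:ATL_formula_necessary_sufficient}) matches the paper, but your concrete reduction has a genuine gap exactly where you flag it: the $\lF$-only cardinality gadget replacing $T_{\msf{no}\;1\lG\geq k+1}$, and the auxiliary structures needed to exclude the ``cheating'' coalitions ($\fanBr{\{3\}}\lF$, $\fanBr{\{1,2,3\}}\lF$, etc.), are never constructed. This is not routine bookkeeping: in the two-agent proof the counting structure works because $1\lG$ genuinely freezes the play, whereas in your design the freezing effect of an Agent-$3$-excluding $\lF$-operator depends on the adversary being able to move to $q^{\msf{lose}}$, and one has to verify that a malicious formula cannot re-enter the corridor or exploit longer/shorter sequences of traversal coalitions to shift where the ``frozen'' positions occur; none of the analogues of Lemmas~\ref{lem:ATL_no_use_bad_lG} and~\ref{lem:ATL_reject_G_concise} are proved. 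Moreover, Theorem~\ref{thm:unary_is_sufficient} requires soundness against separating formulas in $\ATL^3(\{\lF,\lG\},\emptyset,\emptyset,0)$ (after \textsf{UnNeg}), so your shape-forcing argument must also rule out genuine $\lG$-operators interleaved in the formula; you only address the easy direction (the witness uses only $\lF$). As it stands, the proposal is a plausible plan, not a proof.

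It is worth noting that the paper sidesteps the whole difficulty with a different encoding, which also shows why your hard step is avoidable: instead of using Agent~$3$ to simulate $1\lG$, it keeps a pure $(\{1\},\{2\})$-alternating corridor $T^{2l:\,1,2}$ as positive and bounds the formula size by $2l+1$, so every admissible formula is exactly $1\lF\,\fanBr{A_l}\lF\cdots 1\lF\,\fanBr{A_1}\lF\,p$ with $A_i\in\{\{2\},\{2,3\}\}$; the set $H$ is read off from the positions where $A_i=\{2,3\}$, the intersection test uses $T_{l,C_i,3}$ (Agent-$3$ testing states), and the bound $|H|\leq k$ is enforced simply by adding the $(1,3)$-alternating corridor $T^{2(k+1):\,1,3}$ as a \emph{negative} structure, since any formula using the $\{2,3\}$ coalition more than $k$ times would be $(\{1\},\{3\},2(k+1))$-alternating and accept it (Lemma~\ref{lem:ATL_formula_necessary_sufficient}, Item~b). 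The tight size budget $2l+1$ simultaneously kills stray $\lG$-operators and stray coalitions, so no counting gadget or extra fixing structures are needed. Also, for the $\lG$ case the paper does not dualize the structures: it keeps the same structures and swaps the roles of $\mathcal{P}$ and $\mathcal{N}$, relying on determinacy (Proposition~\ref{prop:equiv_negation_ATL_turn_based}) only to rewrite the negated witness in the $\lG$-fragment.
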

	
	\paragraph{Overview of the reduction.}
	We present the reduction with the operator $\lF$. The reduction with the operator $\lG$ is the same, up to reversing the sets of positive and negative structures. First, as for the $\CTL$ and $\ATL^2$ reductions, we follow the steps described in Section~\ref{subsubsec:abstract_recipe}, with Step~\ref{stepa} already taken care of in Section~\ref{subsubsec:handling_binary_operators}. Thus, we focus on $\ATL^3$-formulas using only the operator $\lF$ (and a single proposition). First, we define turn-based structures ensuring that: the proposition used is $p$ (with a trivial positive structure), and the operators $\fanBr{A} \lF$ such that $1 \in A$ and $\{2,3\} \cap A \neq \emptyset$ are not used (with two negative structures). Furthermore, since all the structures that we use are self-looping, Lemma~\ref{lem:useless_op} gives that the operator $\fanBr{\emptyset} \lF$ is useless. Hence, we can focus on formulas using only the operators $\fanBr{\{1\}} \lF,\fanBr{\{2\}} \lF,\fanBr{\{3\}} \lF,\fanBr{\{2,3\}} \lF$ and the proposition $p$, which are $\msf{Op}_{\{1\},\{2,3\}}(\{\lF\})$-formulas. 
	
	Fix an instance $(l,C,k)$ of the hitting set problem. We consider the bound $B := 2l+1$. Our idea is to focus on $(\{1\},\{2\},2l)$-alternating formulas. As for the $\ATL^2$ case, we consider $T^{2l: \; 1,2}$ as positive structure and use Lemma~\ref{lem:ATL_formula_necessary_sufficient} (Item a). Then, these $(\{1\},\{2\},2l)$-alternating formulas feature at least $l$ occurrences of the operator $\fanBr{\{1\}} \lF$ and $l$ occurrences of operators $\fanBr{A} \lF$ with $2 \in A$. Therefore the operator $\fanBr{\{3\}} \lF$ is not used. 
	
	Overall, the only thing that differs between the formulas that we consider is when the operators are $\fanBr{\{2\}} \lF$ and when the operators are $\fanBr{\{2,3\}} \lF$. Thus, the formulas $\phi^{\ATL(3)}(l,H)$ that we consider are such that the set $H$ entirely determines at which index $i$ an operator $\fanBr{\{2,3\}} \lF$ occurs (when $i \in H$), and at which index $i$ an operator $\fanBr{\{2\}} \lF$ occurs (when $i \notin H$). To ensure that $|H| \leq k$, we consider $T^{2(k+1): \; 1,3}$ as negative structure and we use Lemma~\ref{lem:ATL_formula_necessary_sufficient} (Item b). 
	
	Then, there remains to define, given a subset $C \subseteq [1,\ldots,l]$, a positive turn-based structure $T_{l,C,3}$ such that $\phi^{\ATL(3)}(l,H)$ accepts $T_{l,C,3}$ if and only if $H \cap C \neq \emptyset$. The structure $T_{l,C,3}$ (see Figure~\ref{fig:example_turn_based_3_from_C}) is similar to the structure $T_{l,C,2}$, except that the testing states are Agent-3 states.

	\textbf{Formal definitions and proofs.} With the alternating turn-based structures that we have already defined (recall Definition~\ref{def:turn_based_q_alt}) it is actually sufficient for the reduction define one additional type of turn-based structures to encode the fact that hitting set intersect all sets. Before we define it, let us first define the shape of the $\ATL$-formulas that we will consider. 
	\begin{definition}
		Let $l \in \N$. For all $H \subseteq [1,\ldots,l]$, an $\ATL$-formula is 
		an $\phi^{\ATL(3)}(l,H)$-formula if:
		\begin{equation*}
			\phi = 1 \lF \; \fanBr{A_l} \lF \cdots 1\lF \; \fanBr{A_1} \lF p
		\end{equation*}
		where for all $i \in [1,\ldots,l]$, we have $A_i \in \{ \{2\}.\{2,3\} \}$ and $A_i = \{2,3\}$ if and only if $i \in H$
		. 
	\end{definition}
	
	Let us now define the turn-based structure of interest whose definition is illustrated in Figure~\ref{fig:example_turn_based_3_from_C}. This definition, and the subsequent lemma and proof are very similar to what we did with the turn-based structure $T_{l,C,2}$ from Definition~\ref{def:turn_based_game_atl_2_intersect}.
	\begin{definition}
		Let $l \in \N_1$ and $C \subseteq [1,\ldots,l]$. We let $T_{l,C,3} := \langle Q^{l,C,3},I_{l,3},2,\{p\},\pi,\msf{AgSt},\msf{Succ} \rangle$ where (recall that the states $q^{1,2}_{h}$ come from the turn-based structure $T_{2l: \; 1,2}$ from Definition~\ref{def:turn_based_q_alt}):
		\begin{itemize}
			\item $Q^{l,C,3} := \{ q_{i} \mid 1 \leq i \leq 2l \} \cup \{ q^{1,2}_{h} \mid 1 \leq h \leq 2l \} \cup \{ q^\msf{Test}_{2i-1} \mid i \in C \} \cup \{q^{\msf{lose}},q^{\msf{win}}\}$;
			\item $I_l := \{ q_{2l} \}$;
			\item For all $2 \leq i \leq l+1$, $\msf{AgSt}(q_{2i}) := 1$ and $\msf{AgSt}(q_{2i-1}) := 2$. For all $i \in C$, we have $\msf{AgSt}(q^{\msf{Test}}_{2i-1}) := 3$.
			\item For all $1 \leq i \leq l$, we have:
			\begin{equation*}
				\msf{Succ}(q_{2i}) :=
				\begin{cases}
					\{ q_{2i},q_{2i-1} \} & \text{ if }i \notin C \\
					\{ q_{2i},q_{2i-1},q^{\msf{Test}}_{2i-1} \} & \text{ if }i \in C \\
				\end{cases}
			\end{equation*}
			and
			\begin{equation*}
				\msf{Succ}(q_{2i-1}) :=
				\begin{cases}
					\{ q_{2i-1},q_{2(i-1)} \} & \text{ if }i > 1 \\
					\{ q_{2i-1},q^{\msf{lose}} \} & \text{ if }i = 1 \\
				\end{cases}
			\end{equation*}
			and, for all $i \in C$:
			\begin{equation*}
				\msf{Succ}(q^{\msf{Test}}_{2i-1}) :=
				\begin{cases}
					\{ q^{\msf{Test}}_{2i-1},q^{1,2}_{2(i-1)} \} & \text{ if }i > 1 \\
					\{ q^{\msf{Test}}_{2i-1},q^{\msf{win}} \} & \text{ if }i =1 \\
				\end{cases}
			\end{equation*}
			\item For all $q \in Q^{l,C,3} \setminus \{q^{\msf{win}}\}$, we have $\pi(q) := \{\bar{p}\}$. 
		\end{itemize}
	\end{definition}
	
	\begin{figure}
		\centering
		\includegraphics[scale=0.8]{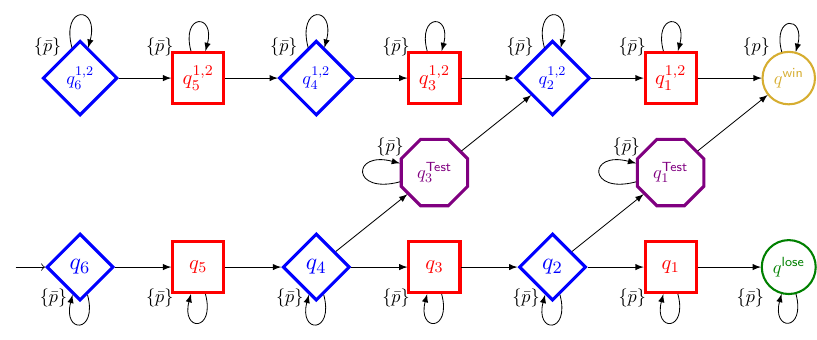}
		\caption{The turn-based game structure $T_{3,\{1,2\},3}$.}
		\label{fig:example_turn_based_3_from_C}
	\end{figure}
	
	The above definition satisfies the lemma below.
	\begin{lemma}
		\label{lem:ATL_3_intersect}
		Consider any $l \in \N_1$ and $C,H \subseteq [1,\ldots,l]$ and an $\phi^{\ATL(3)}(l,H)$-formula $\phi$. We have:
		\begin{equation*}
			T_{l,C,3} \models \phi \text{ if and only if }H \cap C \neq \emptyset
		\end{equation*}
	\end{lemma}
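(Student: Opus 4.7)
The plan is to mimic the inductive scheme of Lemma~\ref{lem:ATL_2_intersect}, adapting the two-agent reasoning to exploit Agent~3's presence at the test states. Define $H_i := H \cap [1,\ldots,i]$ and prove by induction on $1 \leq i \leq l$ the property $\mathcal{P}(i)$: for any $\phi^{\ATL(3)}(i,H_i)$-formula $\phi_i$, we have $q_{2i} \models \phi_i$ in the structure $T_{l,C,3}$ if and only if $H_i \cap C \neq \emptyset$. Because $\phi_l$ is precisely the formula $\phi$ of the statement, $\mathcal{P}(l)$ yields the lemma.

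The base case $i=1$ splits on whether $1 \in C$ and on whether $A_1 = \{2\}$ or $A_1 = \{2,3\}$; at $q_2$ (an Agent-$1$ state) one checks directly, using only that Agent~$3$ owns $q^{\msf{Test}}_1$ with $\msf{Succ}(q^{\msf{Test}}_1) = \{q^{\msf{Test}}_1, q^{\msf{win}}\}$, whether coalition $A_1$ can force reaching the sub-formula $p$. For the inductive step, write $\phi_i = 1\lF\,\fanBr{A_i}\lF\,\phi_{i-1}$ where $\phi_{i-1}$ is an $\phi^{\ATL(3)}(i-1,H_{i-1})$-formula; such a $\phi_{i-1}$ is $(\{1\},\{2\},2(i-1))$-alternating in the sense of Definition~\ref{def:alternating_formulas} (its outer coalitions alternate between $\{1\}$ and a superset of $\{2\}$), and sits in $\msf{Op}_{\{1\},\{2\}}(\{\lF\})$. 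This makes both directions of Lemma~\ref{lem:ATL_formula_necessary_sufficient} applicable to it on the substructure rooted at any $q^{1,2}_{2h}$.

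For the direction $\Leftarrow$, assume $H_i \cap C \neq \emptyset$. If $i \in H \cap C$, then $A_i = \{2,3\}$; Agent~$1$ picks $q^{\msf{Test}}_{2i-1}$, then coalition $\{2,3\}$ (which owns that state through Agent~$3$) moves to $q^{1,2}_{2(i-1)}$ for $i>1$ or to $q^{\msf{win}}$ for $i=1$, where by Lemma~\ref{lem:ATL_formula_necessary_sufficient}(b) the safe alternating winning path guarantees $\phi_{i-1}$ holds. If instead $H_{i-1}\cap C \neq \emptyset$, Agent~$1$ moves to $q_{2i-1}$; there, since $2 \in A_i$, coalition $A_i$ forces the edge to $q_{2(i-1)}$, and the induction hypothesis gives $q_{2(i-1)} \models \phi_{i-1}$. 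For the direction $\Rightarrow$, suppose $q_{2i} \models \phi_i$ and $i \notin H \cap C$. The Agent-$1$ witness strategy either eventually exits through $q_{2i-1}$, leading (since Agent~$2$ controls and $2 \in A_i$, while by Lemma~\ref{lem:ATL_formula_necessary_sufficient}(a) the state $q_{2i-1}$ itself cannot satisfy $\fanBr{A_i}\lF\phi_{i-1}$ by looping, as the only winning path from it has length $>1$ and is $(\{2\},\{1\},2i-1)$-alternating) to the requirement $q_{2(i-1)} \models \phi_{i-1}$; or it diverts to $q^{\msf{Test}}_{2i-1}$, which is only possible when $i \in C$, hence $i \notin H$ and $A_i = \{2\}$. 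In that case Agent~$3$ (outside the coalition) can loop forever on $q^{\msf{Test}}_{2i-1}$, and by Lemma~\ref{lem:ATL_formula_necessary_sufficient}(a) no $\msf{Op}_{\{1\},\{2\}}(\{\lF\})$-formula holds at $q^{\msf{Test}}_{2i-1}$ itself (no winning path of positive length exists there that is $(\{2\},\{1\},\cdot)$-alternating), so this branch cannot witness the formula. Hence the inductive hypothesis applies to $q_{2(i-1)}$ and gives $H_{i-1}\cap C \neq \emptyset$, and since $i \notin H \cap C$ we conclude $H_i \cap C \neq \emptyset$.

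The main obstacle will be the $\Rightarrow$ direction when $i \in C \setminus H$: we must rule out that Agent~$1$ gains anything by diverting to the test state. The argument hinges on combining the ownership of $q^{\msf{Test}}_{2i-1}$ by Agent~$3$ (who is not in $A_i = \{2\}$ and can stall indefinitely) with a careful invocation of Lemma~\ref{lem:ATL_formula_necessary_sufficient}(a) to certify that neither $q^{\msf{Test}}_{2i-1}$ itself nor the loop $(q^{\msf{Test}}_{2i-1})^\omega$ witnesses the eventuality; writing this cleanly will require examining separately the sub-formulas of $\phi_{i-1}$ at $q^{\msf{Test}}_{2i-1}$, exactly as done in the two-agent proof when discarding the $q^{\msf{lose}}$ branch. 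Once this verification is in place, $\mathcal{P}(i)$ follows and the induction closes.
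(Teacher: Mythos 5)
Your plan follows essentially the same route as the paper's proof: the same induction on $i$ with the property $\mathcal{P}(i)$ at the states $q_{2i}$, the same case split on whether $i \in H \cap C$, and the same invocations of Lemma~\ref{lem:ATL_formula_necessary_sufficient} (Item b along the $q^{1,2}$-chain, Item a to rule out that the test state, the loops, or strict sub-formulas witness the eventuality), with Agent~3's ownership of $q^{\msf{Test}}_{2i-1}$ playing the role that the $q^{\msf{lose}}$ branch played in the two-agent argument. One detail to fix when writing it out: winning paths from $q_{2i-1}$ and from $q^{\msf{Test}}_{2i-1}$ pass through Agent-3 states, so the alternation arguments must be applied with the coalition pair $(\{2,3\},\{1\})$ (as the paper does, since $\fanBr{\{2,3\}}\lF$ may occur in $\phi_{i-1}$), not $(\{2\},\{1\})$ — with $(\{2\},\{1\})$ the hypothesis of Item a is simply not met at those states.
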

	\begin{proof}
		For all $1 \leq i \leq l$, we let $H_i := H \cap [1,\ldots,l]$. Then, in the turn-based structure $T_{l,C,3}$, we prove by induction on $1 \leq i \leq l$ the property $\mathcal{P}(i)$:  $q_{2i} \models \phi^{\ATL(3)}(i,H_i)$ if and only if $H_i \cap C \neq \emptyset$. We first handle the case $i = 1$. Consider the $\phi^{\ATL(3)}(1,H_1)$-formula $\phi = 1\lF \fanBr{A_1} \lF p$. There are two cases.
		\begin{itemize}
			\item Assume that $H_1 \cap C = \{ 1 \}$. Then, we have $A_1 = \{2,3\}$. Furthermore, $q^{\msf{Test}}_{1} \in Q^{l,C,3}$ with 
			$q^{\msf{Test}}_{1} \models 2,3\lF p$ since $\msf{AgSt}(q^{\msf{Test}}_{1}) = 3$ and $q^{\msf{win}} \in \Succc(q^{\msf{Test}}_{1})$. Thus, $q_2 \models \phi$ since $\msf{AgSt}(q_2) = 1$. 
			\item Assume now that $H_1 \cap C = \emptyset$. If $1 \in C$, we have $1 \notin H$, thus $A_1 = \{2\}$. Since $\msf{AgSt}(q^{\msf{Test}}_{1}) = 3$ and $\pi(q^{\msf{Test}}_{1}) = \{\bar{p}\}$, we have $q^{\msf{Test}}_{1} \not\models \fanBr{A_1} \lF p$. Hence, $q_2 \not \models \phi$. On the other hand, if $1 \notin C$, we have $\msf{Succ}(q_2) = \{q_2,q_1\}$ and $\msf{Succ}(q_1) = \{q_1,q^{\msf{lose}}\}$. Therefore, $q_2 \not \models \phi$.
		\end{itemize}
		Hence, the property $\mathcal{P}(1)$ holds.  Assume now that $\mathcal{P}(i)$ holds for some $1 \leq i \leq l-1$. Consider the $\phi^{\ATL(3)}(i+1,H_{i+1})$-formula $\phi$ defined by:
		\begin{equation*}
			\phi := 1\lF \; \fanBr{A_{i+1}}\lF \phi'
		\end{equation*}
		with $A_{i+1} = \{2,3\}$ if $i+1 \in H$ and $A_{i+1} = \{2\}$ otherwise, and $\phi' = \phi^{\ATL(3)}(i,H_{i})$. As above, there are two cases.
		\begin{itemize}
			\item Assume that $i+1 \in H \cap C$. Since there is a safe winning path from $q_{2i}^{1,2}$ that is $(\{1\},\{2\},2i)$-alternating and the formula $\phi'$ is $(\{1\},\{2\},2i)$-alternating, it follows that $q_{2i}^{1,2} \models \phi'$, by Lemma~\ref{lem:ATL_formula_necessary_sufficient} (Item b). Hence, we have $q^{\msf{Test}}_{2i+1} \in Q^{l,C,3}$ with $q^{\msf{Test}}_{2i+1} \models 2,3 \lF \phi'$. Therefore, $q_{2(i+1)} \models 1\lF \fanBr{A_{i+1}}\lF \phi'$ since $\msf{AgSt}(q_{2(i+1)}) = 1$.
			\item Assume now that $i+1 \notin H \cap C$. Let us show that $q_{2(i+1)} \models \phi$ if and only if $q_{2i} \models \phi'$. First, if $q_{2i} \models \phi'$, then $q_{2i+1} \models \fanBr{A_{i+1}}\lF \phi'$ since $\msf{AgSt}(q_{2i+1}) = 2 \in A_{i+1}$. Thus, we have $q_{2(i+1)} \models \phi$. Assume now that $q_{2(i+1)} \models \phi$. Note that, the winning paths from $q_{2(i+1)}$ are all $(\{1\},\{2,3\},2(i+1))$-alternating, therefore, by Lemma~\ref{lem:ATL_formula_necessary_sufficient} (Item a), no strict sub-formula of $\phi$ accept the state $q_{2(i+1)}$. Similarly, the winning paths from $q_{2i+1}$ are all $(\{2,3\},\{1\},2i+1)$-alternating, therefore, by Lemma~\ref{lem:ATL_formula_necessary_sufficient} (Item a), no strict sub-formula of $\fanBr{A_{i+1}}\lF\phi^{\ATL(2)}(i,H_{i})$ accept the state $q_{2i+1}$. Then, there are two cases.
			\begin{itemize}
				\item If $i+1 \in C$, we have $i+1 \notin H$, thus $\phi = 1\lF 2\lF \phi'$. Furthermore, the winning paths from $q_{2i}^{1,2}$ are all $(\{1\},\{2,3\},2(i+1))$-alternating, therefore, by Lemma~\ref{lem:ATL_formula_necessary_sufficient} (Item a), no strict sub-formula of $\phi'$ accept the state $q_{2i}^{1,2}$. Thus, since $\msf{AgSt}(q^{\msf{Test}}_{2i+1}) = 3$, it follows that $q^{\msf{Test}}_{2i+1} \not\models 2\lF \phi'$. Since $\msf{Succ}(q_{2(i+1)}) = \{q_{2(i+1)},q_{2i+1},q^{\msf{Test}}_{2i+1}\}$ and $\msf{Succ}(q_{2i+1}) = \{q_{2i+1},q_{2i} \}$, we have $q_{2i} \models \phi'$. 
				\item If $i+1 \notin C$, then $\msf{Succ}(q_{2(i+1)}) = \{ q_{2(i+1)},q_{2i+1} \}$, we have that $q_{2i} \models \phi'$. 
			\end{itemize}
			We have established that $q_{2(i+1)} \models \phi$ if and only if $q_{2i} \models \phi'$, with $\phi' = \phi^{\ATL(3)}(i,H_{i})$. Furthermore, by $\mathcal{P}(i)$, we have $q_{2i} \models \phi'$ if and only if $H_i \cap C \neq \emptyset$. Since $i+1 \notin H \cap C$, it follows that $H_{i+1} \cap C = H_i \cap C$. Hence, we do obtain that $q_{2(i+1)} \models \phi$ if and only if $H_{i+1} \cap C \neq \emptyset$.
		\end{itemize}
		Hence, $\mathcal{P}(i+1)$ holds. In fact, $\mathcal{P}(i)$ holds for all $1 \leq i \leq l$. The lemma follows.
	\end{proof}

	\paragraph{Definition of the reduction}
	We can now define the reductions that we consider for the two cases $\Ut = \{ \lF \}$ and $\Ut = \{ \lG \}$. 
	\begin{definition}
		\label{def:reduction_ATL_3}
		Consider an instance $(l,C,k)$ of the hitting set problem $\msf{Hit}$. We define:
		\begin{itemize}
			\item $\mathcal{P} := \{ T_p, T_{2l: 1,2} \} \cup \{ T_{(l,C_i,3)} \mid 1 \leq i \leq n \}$;
			\item $\mathcal{N} := \{ T_{2(l+1): 1,2}, T_{2(k+1): 1,3} \}$.
		\end{itemize}
		
		Then, we define the inputs $\msf{In}^{\ATL(3),\lF}_{(l,C,k)} := (\prop_0,\mathcal{P},\mathcal{N},2l+3)$ and $\msf{In}^{\ATL(3),\lG}_{(l,C,k)} := (\prop_0,\mathcal{N},\mathcal{P},2l+3)$.
	\end{definition}
	
	This definition satisfies the lemma below.
	\begin{lemma}
		\label{lem:reduc_ATL_3_F}
		Let $\msf{H} \in \{\lF,\lG\}$. An instance $(l,C,k)$ of the hitting set problem is positive if and only if $\msf{In}^{\ATL(3),\msf{H}}_{(l,C,k)}$ is a positive instance of the $\ATL^3_\msf{Learn}(\{ \msf{H} \},\emptyset,\emptyset,0)$ decision problem if and only if $\msf{In}^{\ATL(3),\msf{H}}_{(l,C,k)}$ is a positive instance of the $\ATL^3_\msf{Learn}(\{ \lF,\lG \},\emptyset,\emptyset,0)$ decision problem.
	\end{lemma}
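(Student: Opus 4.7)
The plan is to prove, for each $\msf{H} \in \{\lF,\lG\}$, the chain: (i) $(l,C,k)$ is $\msf{Hit}$-positive $\Rightarrow$ (ii) $\msf{In}^{\ATL(3),\msf{H}}_{(l,C,k)}$ is $\ATL^3_\msf{Learn}(\{\msf{H}\},\emptyset,\emptyset,0)$-positive $\Rightarrow$ (iii) $\msf{In}^{\ATL(3),\msf{H}}_{(l,C,k)}$ is $\ATL^3_\msf{Learn}(\{\lF,\lG\},\emptyset,\emptyset,0)$-positive $\Rightarrow$ (i). The middle implication is immediate because enlarging the allowed operator set only adds candidate separators. I will carry out the case $\msf{H}=\lF$ in detail; the $\msf{H}=\lG$ case then follows by duality: given a $\lG$-only separator, pushing negations through with Proposition~\ref{prop:equiv_negation_ATL_turn_based} yields an $\lF$-only formula with complemented coalitions and $p$ replaced by $\bar{p}$ that separates the opposite partition, which is exactly $\msf{In}^{\ATL(3),\lF}_{(l,C,k)}$ by construction.

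For the forward direction I would exhibit, from a hitting set $H$ with $|H|\le k$, the witness $\phi := \phi^{\ATL(3)}(l,H)$. This formula has size $2l+1 \le 2l+3$, uses only $\lF$, the proposition $p$, and coalitions contained in $\{\{1\},\{2\},\{2,3\}\}$, so it lies simultaneously in $\msf{Op}_{\{1\},\{2\}}(\{\lF\})$ and $\msf{Op}_{\{1\},\{3\}}(\{\lF\})$. Acceptance of $T_p$ is immediate; acceptance of each $T_{l,C_i,3}$ follows from Lemma~\ref{lem:ATL_3_intersect} using $H\cap C_i\neq\emptyset$; acceptance of $T_{2l:1,2}$ follows from Lemma~\ref{lem:ATL_formula_necessary_sufficient}(b) applied to the safe $(\{1\},\{2\},2l)$-alternating winning path from $q^{1,2}_{2l}$, since $\phi$ is $(\{1\},\{2\},2l)$-alternating. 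For rejection of $T_{2(l+1):1,2}$, all winning paths are $(\{1\},\{2\},2(l+1))$-alternating while $\phi$ contains only $2l$ temporal operators, so Item~(a) of the same lemma forbids acceptance. For rejection of $T_{2(k+1):1,3}$, a $(\{1\},\{3\},2(k+1))$-alternating formula would require $k+1$ operators whose coalition meets $\{3\}$, i.e.\ $k+1$ indices $i$ with $A_i=\{2,3\}$, contradicting $|H|\le k$; Item~(a) then delivers the rejection.

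For the reverse direction I take a separator $\phi \in \ATL^3(\prop_0,\{\lF,\lG\},\emptyset,\emptyset,0)$ of size at most $2l+3$. Acceptance of $T_p$ together with the absence of negations in $\Ut$ forces $\prop(\phi)=\{p\}$; since all input structures are self-looping, Lemma~\ref{lem:useless_op} allows me to drop every occurrence of $\fanBr{\emptyset}\lF$ and $\fanBr{\{1,2,3\}}\lG$ without changing semantics or increasing size. The crucial step is then to show that no operator $\fanBr{B}$ in $\phi$ satisfies both $1\in B$ and $B\cap\{2,3\}\neq\emptyset$, placing $\phi$ simultaneously in $\msf{Op}_{\{1\},\{2\}}(\{\lF,\lG\})$ and in $\msf{Op}_{\{1\},\{3\}}(\{\lF,\lG\})$. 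Once this is established, Lemma~\ref{lem:ATL_formula_necessary_sufficient}(a) applied to $T_{2l:1,2}$ forces $\phi$ to be $(\{1\},\{2\},2l)$-alternating; together with the size bound $2l+3$ and $\prop(\phi)=\{p\}$, this pins $\phi$ down, up to semantic equivalence on the relevant structures, as $\phi^{\ATL(3)}(l,H)$ for some $H\subseteq[1,\ldots,l]$ recording the positions at which the alternating operator is $\fanBr{\{2,3\}}$ rather than $\fanBr{\{2\}}$. Applying Item~(a) to $T_{2(k+1):1,3}$ then forces $|H|\le k$ exactly as in the forward argument, and Lemma~\ref{lem:ATL_3_intersect} applied to each $T_{l,C_i,3}$ yields $H\cap C_i\neq\emptyset$, so $H$ is a hitting set of size at most $k$.

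The hardest step, and where I expect the bulk of the technical work, is ruling out the mixed coalitions $\{1,2\}$, $\{1,3\}$, $\{1,2,3\}$ within $\phi$. The intuition is that such a coalition merges agent $1$ with one of agents $2,3$, making $\phi$ insensitive to the precise alternation length that is the only feature separating $T_{2l:1,2}$ from $T_{2(l+1):1,2}$, and analogously for the $(1,3)$-pair; yet each negative structure is placed exactly one alternation step beyond its positive counterpart, so $\phi$ must see these discrepancies. Converting this intuition into a clean subformula induction, in the spirit of the $\ATL^2$ case and using Lemma~\ref{lem:equiv_atl_dominating_quantifiers} to absorb adjacent quantifiers and isolate any offending operator, is where the main difficulty of the reverse direction will concentrate.
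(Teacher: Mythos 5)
Your skeleton matches the paper's proof: the same chain of implications with the trivial middle step, the same witness $\phi^{\ATL(3)}(l,H)$ in the forward direction (you even supply the check, omitted in the paper, that it rejects $T_{2(l+1):\,1,2}$), the same reliance on Lemmas~\ref{lem:ATL_formula_necessary_sufficient} and~\ref{lem:ATL_3_intersect} and on duality for $\lG$. The genuine gap is that you leave the decisive step of the reverse direction unproven: you only announce that operators $\fanBr{A}$ with $1\in A$ and $A\cap\{2,3\}\neq\emptyset$ must be excluded, defer it to a ``subformula induction'' built on Lemma~\ref{lem:equiv_atl_dominating_quantifiers}, and motivate it by saying mixed coalitions make $\phi$ insensitive to the one-step difference in alternation length between positive and negative chains. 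Neither the tool nor the intuition is the right mechanism. The argument that works is short and direct: since $\phi$ is negation-free with $\prop(\phi)=\{p\}$, every subformula of $\phi$ holds at $q^{\msf{win}}$; in $T_{2(l+1):\,1,2}$ every state is owned by Agent $1$ or Agent $2$, so any coalition $A\supseteq\{1,2\}$ can force the play down the chain to $q^{\msf{win}}$ from every state, hence $\fanBr{A}\lF\psi$ holds at every state of that structure, and prefixing a formula that holds at all states with further $\lF/\lG$ modalities keeps it true at all states; so a single occurrence of such an operator would make $\phi$ accept the negative structure $T_{2(l+1):\,1,2}$. Symmetrically, $\{1,3\}\subseteq A$ is killed by $T_{2(k+1):\,1,3}$. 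That is precisely the role of these two negative structures; they are not there to detect an off-by-one in alternation depth. Without this argument (extended, as you correctly require, to $\lG$-operators as well, so that $\phi$ really is an $\msf{Op}_{\{1\},\{2\}}(\{\lF,\lG\})$-formula), Item a) of Lemma~\ref{lem:ATL_formula_necessary_sufficient} cannot be invoked and the reverse direction does not get off the ground.

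Two smaller points. To derive $|H|\le k$ from the rejection of $T_{2(k+1):\,1,3}$ you need Item b) of Lemma~\ref{lem:ATL_formula_necessary_sufficient} (an alternating formula accepts a structure with a safe alternating winning path, so rejection implies $\phi$ is not $(\{1\},\{3\},2(k+1))$-alternating), not Item a) ``as in the forward argument'': Item a) only yields acceptance $\Rightarrow$ alternating, which is the wrong direction here. Also, your claim that the size bound pins $\phi$ down to the shape $\phi^{\ATL(3)}(l,H)$ is too quick: being $(\{1\},\{2\},2l)$-alternating forces $2l$ specific $\lF$-operators but, under the bound $2l+3$ stated in Definition~\ref{def:reduction_ATL_3}, leaves room for an extra modality; the paper's pinning-down step works with the bound $2l+1$, and Lemma~\ref{lem:ATL_3_intersect} is stated only for formulas of exactly that shape, so this slack must be dealt with explicitly rather than ``up to semantic equivalence''.
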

	\begin{proof}
		We consider the case where $\msf{H} = \lF$, the case $\msf{H} = \lG$ is analogous. 
		
		Assume that $(l,C,k)$ is a positive instance of the hitting set problem $\msf{Hit}$. Consider a hitting set $H \subseteq [1,\ldots,l]$ and the $\phi := \phi^{\ATL(3),\lF}(l,H) \in\ATL^3(\{p\},\{\lF\},\emptyset,\Bl,0)$. We have $\Size{\phi} = 2l+1$. Furthermore:
		\begin{itemize}
			\item the proposition used in $\phi$ is $p$, therefore $\phi$ accepts the structure $T_p$;
			\item $\phi$ is $(\{1\},\{2\},2l)$-alternating, thus it accepts the structure $T_{2l: 1,2}$ by Lemma~\ref{lem:ATL_formula_necessary_sufficient} (Item b);
			\item for $1 \leq i \leq n$, we have $C_i \cap H \neq \emptyset$, hence by Lemma~\ref{lem:ATL_3_intersect}, $\phi$ accepts the structure $T_{(l,C_i,3)}$;
			\item $|H| \leq k$, hence there are at most $k$ times an operator $\fanBr{A} \lF$ used in $\phi$ with $3 \in A$. Hence, $\phi$ is not $(\{1\},\{3\},2(k+1))$-alternating. Thus, it rejects the structure $T_{2(k+1): 1,3}$ by Lemma~\ref{lem:ATL_formula_necessary_sufficient} (Item a).
		\end{itemize}
		Therefore, $\msf{In}^{\ATL(3),\lF}_{(l,C,k)}$ is a positive instance of the $\ATL^3_\msf{Learn}(\{ \lF\},\emptyset,\emptyset,0)$ decision problem.
		
		Straightforwardly, if $\msf{In}^{\ATL(3),\lF}_{(l,C,k)}$ is a positive instance of  $\ATL^3_\msf{Learn}(\{ \lF\},\emptyset,\emptyset,0)$, then it is also a positive instance of $\ATL^3_\msf{Learn}(\{ \lF,\lG\},\emptyset,\emptyset,0)$.
		
		Assume now that $\msf{In}^{\ATL(3),\lF}_{(l,C,k)}$ is a positive instance of $\ATL^3_\msf{Learn}(\{ \lF,\lG \},\emptyset,\emptyset,0)$. Consider a formula $\phi \in \ATL^3(\prop_0,\{\lF,\lG\},\emptyset,\emptyset,0)$ with $\Size{\phi} \leq 2l+1$ that accepts $\mathcal{P}$ and rejects $\mathcal{N}$. We let $\phi = \msf{Qt} \cdot x$ for some $\msf{Qt} \in (\msf{Op}(3,\{\lF,\lG\}))^*$ and $x \in \prop_0$. Let $(\msf{Qt}',x') := \msf{UnNeg}(\msf{Qt},0)$. We let $y \in \prop_0$ be such that $y = x$ if and only if $x' = 0$. Finally, we let $\phi' := \msf{Qt}' \cdot y$. By Lemma~\ref{lem:unnegate_unary}, we have:
		\begin{itemize}
			\item $|\msf{Qt}'| \leq |\msf{Qt}|$, therefore $\Size{\phi'} \leq \Size{\phi} \leq 2l+1$;
			\item $\msf{Qt}' \in \msf{Op}(3,\{\lF,\lG\})$, therefore $\phi' \in \ATL^3(\prop_0,\{\lF,\lG\},\emptyset,\emptyset,0)$;
			\item Since all the structures in $\mathcal{P}$ and $\mathcal{N}$ are $(0,\emptyset)$-proper structures (and therefore, on those structures, $p$ and $\neg \bar{p}$ are equivalent), $\phi'$ also accepts $\mathcal{P}$ and rejects $\mathcal{N}$. 
		\end{itemize}
		
		The formula $\phi'$ accepts the structure $T_p$, therefore $\prop(\phi') = \{p\}$. Furthermore, the formula $\phi$ rejects both structures $T_{2(l+1): 1,2}$ and $T_{2(k+1): 1,3}$. Hence, for all operators $\fanBr{A} \lF$ used in $\phi$, if $1 \in A$, then $\{2,3\} \cap A = \emptyset$. Therefore, since $\phi$ accepts the structure $T_{2l: 1,2}$, by Lemma~\ref{lem:ATL_formula_necessary_sufficient} (Item a), the formula $\phi$ is $(\{1\},\{2\},2l)$-alternating. Since we have $\Size{\phi} \leq 2l+1$, this implies that 
		\begin{equation*}
			\phi = \fanBr{A^1_1} \lF \fanBr{A^2_1} \lF  \ldots \fanBr{A^1_{l}} \lF \fanBr{A^2_{l}} \lF p
		\end{equation*}
		where, for all $1 \leq i \leq l$, we have $A^1_i = \{1\}$ and $2 \in A^2_i, 1 \notin A^2_i$. However, since $\phi$ rejects the structure $T_{2(k+1): 1,3}$, by Lemma~\ref{lem:ATL_formula_necessary_sufficient} (Item b), we have that $\phi$ is not $(\{1\},\{3\},2(k+1))$-alternating. Hence, there are at most $k$ indices $1 \leq i \leq l$ such that $3 \in A^2_i$. This implies that $\phi= \phi^{\ATL(3)}(l,H)$ for a set $H \subseteq [1,\ldots,l]$ such that $|H| \leq k$. Consider then any $1 \leq i \leq n$. Since the formula $\phi$ accepts the structure $T_{(l,C_i,3)}$, it follows by Lemma~\ref{lem:ATL_3_intersect} that $C_i \cap H \neq \emptyset$. In fact, $H$ is a hitting set and $(l,C,k)$ is a positive instance of the hitting set problem $\msf{Hit}$.
		
		The case $\msf{H} = \lG$ is analogous. (It suffices to consider the negation of the separating formula.)
	\end{proof}

	The proof of Theorem~\ref{thm:atl_3_F_or_G} is now direct.
	\begin{proof}
		This is a direct consequence of Lemmas~\ref{lem:reduc_ATL_3_F}, the fact that the reductions from Definition~\ref{def:reduction_ATL_3} can be computed in logarithmic space and Theorem~\ref{thm:unary_is_sufficient}.
	\end{proof}
	
	\section{Conclusion and future Work}
	\label{sec:FutureWork}
	In this work, we undertake an in-depth complexity analysis of the passive learning problems for $\LTL$, $\CTL$ and $\ATL$. Our results are gathered in Table~\ref{tab:summary}, and could be roughly summarized as follows. When the number of occurrences of binary operators is unbounded, all the learning problems are $\msf{NP}$-complete. On the other hand, when the number of occurrences of binary operators is bounded, discrepancies between the behaviors of $\LTL$, $\CTL$, and $\ATL$ learning appear: there are subsets of operators for which the learning problem is tractable with some number of agents, while it becomes untractable with more agents.	
	
	Overall, this paper essentially tackles reductions and hardness proofs, while the arguments that specific problems are in $\msf{L},\msf{NL},\msf{P}$ are (relatively) more straightforward. However, this is made possible by the fact that the bound in the size of the formula is given in unary. We have argued in Section~\ref{subsubsec:discussion} why we believe that it makes sense to consider such a setting. Nonetheless, the decision problems that would arise with a bound given in binary would certainly be interesting and challenging research questions (just like it was in \cite{arXivFijalkow}). Another interesting direction, which can be combined with the above one, could be, as is done in \cite{arXivFijalkow}, to study the existence of tractable approximation algorithms.
	
	\bibliographystyle{splncs04}
	\bibliography{paper}
\end{document}